\documentclass[letterpaper,11pt]{article}

\usepackage{graphicx,xcolor} 
\usepackage[mathscr]{eucal}
\usepackage[fleqn]{amsmath}
\usepackage{amssymb,bbm,amsthm}
\usepackage{thm-restate,thmtools}
\usepackage{subcaption}
\usepackage[utf8]{inputenc}
\usepackage{graphicx}
\usepackage{hyperref}
\usepackage[capitalize]{cleveref}

\newcommand{\bigO}{\mathcal{O}}

\newcommand{\G}{\mathcal{G}}
\usepackage{comment,color}
\usepackage{mathtools}
\DeclarePairedDelimiter\ceil{\lceil}{\rceil}

\usepackage{amsthm}
\usepackage{float}
\usepackage{cancel}
\usepackage{algorithm}
\usepackage[noend]{algpseudocode}
\usepackage[margin=1in, papersize={8.5in,11in}]{geometry}
\usepackage{enumitem}
\usepackage{todonotes}

\declaretheorem[numberwithin=section]{theorem}
\declaretheorem[unnumbered, name=Theorem]{theorem*}
\declaretheorem[numberlike=theorem]{lemma}

\declaretheorem[numberlike=theorem]{proposition}

\declaretheorem[numberlike=theorem]{corollary}

\declaretheorem[unnumbered]{claim}

\declaretheorem[numberlike=theorem]{observation}

\declaretheorem[numberlike=theorem]{remark}
\declaretheorem[numberlike=theorem, name=Definition]{definition}
\declaretheorem[unnumbered, name=Definition]{definition*}
\declaretheorem[
  numberlike=theorem,
  shaded={rulecolor=black, rulewidth=1pt},
  name=Definition,
]{boxed-definition}

\declaretheorem[unnumbered, name=Construction]{construction*}

\declaretheorem[unnumbered, name=Conjecture]{conjecture*}
\declaretheorem[numberlike=theorem, name=Hypothesis]{hypothesis}
\declaretheorem[unnumbered, name=Hypothesis]{hypothesis*}

\declaretheorem[numberlike=theorem,name=Problem]{problem}

\newcommand{\FOP}[1]{$\mathrm{FOP}_k$}

\makeatletter
\renewcommand\paragraph{%
  \@startsection{paragraph}
    {4}
    {\z@}
    {3.25ex \@plus1ex \@minus.2ex}
    {-1em}
    {\normalfont\normalsize\bfseries\addperiod}}
\newcommand{\addperiod}[1]{#1\@addpunct{.}}

\newcommand{\var}[1]{{\rm var}(#1)}

\usepackage{tikz}
\usetikzlibrary{calc,fit,matrix,positioning,quotes,shapes,shapes.symbols}

\pgfdeclarelayer{bg}
\pgfsetlayers{bg,main}

\DeclarePairedDelimiter\set{\lbrace}{\rbrace}
\DeclareMathOperator\im{im}

\newcommand\Rat{\mathbb Q}

\newcommand{\conference}{0} 

\newcommand{\conferenceversion}[2]{
  \ifnum\conference=1
    #1%
  \else
    #2%
  \fi
}
\title{Classifying Identities: Subcubic Distributivity Checking and Hardness from Arithmetic Progression Detection}
\author{Bartłomiej Dudek\thanks{University of Wrocław, Wrocław, Poland} \and Nick Fischer\thanks{Max Planck Insitute for Informatics, Saarbrücken, Germany} \and Geri Gokaj\thanks{Karlsruhe Institute of Technology, Karlsruhe, Germany} \and Ce Jin\thanks{University of California at Berkeley, Berkeley, USA} \and Marvin Künnemann\footnotemark[3] \and Xiao Mao\thanks{Stanford University, Stanford, USA} \and Mirza Redzic\footnotemark[3]}
\date{}

\newenvironment{subproof}[1][\proofname]{%
  \begin{proof}[#1]%
}{%
  \end{proof}%
}

\begin{document}

\maketitle

\begin{abstract}
\noindent
We revisit the complexity of verifying basic identities, such as  associativity and distributivity, on a given finite algebraic structure. In particular, while Rajagopalan and Schulman (FOCS'96, SICOMP'00) gave a surprising randomized algorithm to verify associativity of an operation $\odot: S\times S\to S$ in optimal time $O(|S|^2)$, they left the open problem of finding any subcubic algorithm for verifying distributivity of given operations $\odot,\oplus: S\times S\to S$.

Our results are as follows:
\begin{enumerate}
    \item We resolve the open problem by Rajagopalan and Schulman by devising an algorithm verifying distributivity in strongly subcubic time $O(|S|^\omega)$, together with a matching conditional lower bound based on the Triangle Detection Hypothesis.
    \item We propose arithmetic progression detection in small universes as a consequential algorithmic challenge: We show that unless we can detect $4$-term arithmetic progressions in a set $X\subseteq\{1,\dots, N\}$ in time $O(N^{2-\epsilon})$, then (a) the 3-uniform 4-hyperclique hypothesis is true, and (b) verifying certain identities requires running time~$|S|^{3-o(1)}$.
    \item A careful combination of our algorithmic and hardness ideas allows us to \emph{fully classify} a natural subclass of identities: Specifically, any 3-variable identity over binary operations in which no side is a subexpression of the other is either: (1) verifiable in randomized time $O(|S|^2)$, (2) verifiable in randomized time $O(|S|^\omega)$ with a matching lower bound from triangle detection, or (3) trivially verifiable in time $O(|S|^3)$ with a matching lower bound from hardness of 4-term arithmetic progression detection.
    \item We obtain near-optimal algorithms for verifying whether a given algebraic structure forms a field or ring, and show that \emph{counting} the number of distributive triples is conditionally harder than verifying distributivity.
\end{enumerate}

\end{abstract}

\thispagestyle{empty}
\newpage

\tableofcontents
\thispagestyle{empty}
\newpage

\setcounter{page}{1}

\section{Introduction}

We consider fundamental questions about a given algebraic structure -- which consists of a ground set $S$ together with one or more operations $\oplus, \odot, \otimes$, etc.\  over $S$ -- such as: Does $(S,\oplus)$ form a group? Does $(S,\oplus, \odot)$ form a field? Does $(S,\oplus,\odot)$ form a ring? Here, we consider the case that $S$ is a finite set and each operation is binary, represented by its Cayley table.

Answering such questions essentially boils down to deciding whether (a collection of) certain identities hold for given Cayley tables, such as the associative law $(ab) c = a(bc)$ or the distributive law $a(b+c)= ab+ac$. Interestingly, already verifying associativity of a given operation turns out to be non-trivial: It was not until 1996 that Rajagopalan and Schulman \cite{rajagopalan00} showed how to beat the trivial $O(n^3)$-time baseline by a surprising randomized algorithm that checks associativity with constant probability in time $O(n^2)$, where $n=|S|$. (Note that since the Cayley table of a binary operation on an $n$-element set $S$ has size $\Theta(n^2)$, this algorithm is optimal, albeit randomized.) In particular, this leads to a randomized $O(n^2)$-time algorithm to check whether $(S,\oplus)$ forms a group\footnote{They also obtain a deterministic $\bigO(n^2\log n)$ algorithm, by making use of F.W. Light's associativity checking observation (see \cite{Clifford1961}) also discussed in Section \ref{sec:further-results}.}. Using a different approach, a recent breakthrough result due to Evra, Gadot, Klein, and Komargodski~\cite{groupverifictaion2023} obtains a \emph{deterministic} $O(n^2)$-time algorithm for this task (see Section~\ref{sec:further-results} below for further details). 

Rajagopalan and Schulman's algorithm for verifying associativity generalizes to verifying any \emph{read-once} identity in optimal time, where a read-once identity is any identity with an arbitrary number of operations and variables in which on each side of the equation, each variable appears exactly once. However, the complexity of verifying non-read-once identities (most importantly, distributivity) remains wide open.\footnote{Rajagopalan and Schulman~\cite{rajagopalan00} write: ``It remains to make any progress on verification of identities which are not read-once. A key example is the 'distributive' identity $a \cap (b \cup c) = (a \cap b) \cup (a \cap c)$; it is not known whether this can be verified in less than cubic time.''} Note that verifying the distributive law is a key task, e.g., for verifying whether a given structure $(S, \oplus, \otimes)$ forms a field or a ring. We thus ask the following driving question:

\begin{center}
\emph{Is there a complexity dichotomy of ``easy'' identities, checkable in time $O(n^2)$, \\\and ``hard'' identities, for which naive running time is (conditionally) best possible?}
\end{center}
In light of Rajagopalan and Schulman's work, the set of ``easy'' identities must contain all read-once formulas. However, we are not aware of published superquadratic conditional lower bounds for \emph{any} identity.

\subsection{Our Results}

We focus on identities with three variables, but remark that some of our results can be generalized to identities with four or more variables.

\subsubsection{Conditionally Optimal Verification of Distributivity}

As our first main result, we obtain a conditionally optimal $O(n^\omega)$-time randomized algorithm for verifying distributivity. Here, $\omega<2.37134$ denotes the exponent of matrix multiplication~\cite{AlmanDWXXZ25}.

\begin{theorem}[Distributivity Verification]\label{thm:distributivity}
Given a set $S$ and two binary operations $\oplus, \odot: S\times S\to S$, there is a randomized $\bigO(n^{\omega})$-time algorithm that decides with high probability whether for all $a,b,c\in S$, we have $a\odot (b\oplus c) = (a \odot b) \oplus (a\odot c)$. Any $O(n^{\omega-\epsilon})$-time algorithm with $\epsilon > 0$ for this task would refute the Triangle Detection Hypothesis.
\end{theorem}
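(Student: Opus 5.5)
The plan has two parts: a randomized $\bigO(n^{\omega})$ algorithm based on random linear projections (in the style of Freivalds' check), and a reduction from triangle detection. Throughout, write $P[a,b]=a\odot b$ and $Q[b,c]=b\oplus c$, and fix a prime field $\mathbb F$ of size $\mathrm{poly}(n)$.

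\textbf{Algorithm.} Pick a random hash $h:S\to\mathbb F$ and random vectors $u,v,w\in\mathbb F^{S}$. Consider the difference tensor
\[ D[a,b,c]=h\bigl(P[a,Q[b,c]]\bigr)-h\bigl(Q[P[a,b],P[a,c]]\bigr). \]
Viewed as a polynomial in the formal variables $h(\cdot),u,v,w$, the form $\Phi=\sum_{a,b,c}u_av_bw_cD[a,b,c]$ is nonzero exactly when some triple violates distributivity. Indeed, a violating triple contributes $u_av_bw_c(h(z_1)-h(z_2))$ with $z_1\neq z_2$, and monomials from distinct triples cannot cancel. Since $\Phi$ has degree $4$, Schwartz--Zippel shows that evaluating $\Phi$ at random points detects a violation with high probability. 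So it suffices to evaluate $\Phi$ in time $\bigO(n^{\omega})$.
\begin{itemize}
\item The left part is $\sum_y g(y)\,q(y)$, where $g(y)=\sum_a u_a h(P[a,y])$ and $q(y)=\sum_{b,c:Q[b,c]=y}v_bw_c$. Both $g$ and $q$ take $\bigO(n^2)$ time to compute.
\item For the right part, define $A[a,x]=\sum_{b:P[a,b]=x}v_b$ and $B[a,y]=\sum_{c:P[a,c]=y}w_c$, each computable in $\bigO(n^2)$, and $H[x,y]=h(Q[x,y])$. Then the right part equals $\sum_a u_a\,(AHB^{\top})[a,a]$. We compute $AH$ with one matrix product in $\bigO(n^{\omega})$ and then take row-wise inner products with $B$ in $\bigO(n^2)$.
\end{itemize}
Repeating $\bigO(\log n)$ times gives high probability of correctness.

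\textbf{Lower bound.} Reduce triangle detection in a tripartite graph on $I\cup J\cup K$ with $|I|=|J|=|K|=n$ to a distributivity instance on $|S|=\bigO(n)$ elements. Let $S=I\cup J\cup K\cup\{0,1,\bot\}$. The intended gadget is as follows.
\begin{itemize}
\item For $i\in I$, set $i\odot j=j$ if $ij$ is an edge and $\bot$ otherwise, and similarly $i\odot k=k$ or $\bot$. Also set $i\odot 1=i\odot 0=0$.
\item Set $j\oplus k=k\oplus j=1$ if $jk$ is an edge and $0$ otherwise. Make $0$ and $\bot$ absorbing for $\oplus$, i.e.\ $\bot\oplus x=x\oplus \bot=0$.
\end{itemize}
Then for $(a,b,c)=(i,j,k)$ the left-hand side is $i\odot(j\oplus k)\in\{i\odot 0,i\odot 1\}=\{0\}$. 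The right-hand side is $(i\odot j)\oplus(i\odot k)$, which equals $1$ exactly when $ij,ik,jk$ are all edges and $0$ otherwise. Hence a violation at $(i,j,k)$ occurs iff there is a triangle. An $\bigO(|S|^{\omega-\epsilon})$ algorithm would then detect triangles in $\bigO(n^{\omega-\epsilon})$ time.

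\textbf{Main obstacle.} The hard step is filling in all remaining Cayley-table entries so that no spurious violations arise on the $\bigO(n^3)$ other triples (for instance $a\in J\cup K$, or $b,c\in\{0,1,\bot\}$, or $b,c$ from the same part). I would try making $\odot$ with a left argument outside $I$ constantly $0$, so both sides become $0\oplus 0=0$ after setting $0\odot x=0$ and $0\oplus 0=0$. I would also make all $\oplus$ entries not specified above equal to $0$, and set $i\odot x=0$ for every $x$ not handled above. Then I would check case by case that both sides evaluate to $0$. If the bare symbol $1$ creates violations, for instance at triples $(i,1,x)$ where the right-hand side is $0\oplus 0$, I would add extra absorbing or dummy symbols, or duplicate $J$ and $K$ into tagged copies, so that every non-$(I,J,K)$ triple collapses to $0$ on both sides.
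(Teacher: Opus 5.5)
Your proposal is correct and follows the paper's proof: your $\Phi$ is the paper's $f-g$ (with $h$ in the role of $w$), your right-hand quantity $\sum_a u_a (AHB^{\top})[a,a]$ is exactly the trace of $W^{PQ}W^{QR}W^{RP}$, and your gadget is the paper's with $0,1$ in the roles of $\infty,\Delta$ (the paper uses a general graph and needs no separate $\bot$). The ``main obstacle'' is already resolved by the completion you propose, and no extra symbols are needed: every value of $\oplus$ lies in $\{0,1\}$ and $a\odot 0=a\odot 1=0$ for every $a$, so the left side is identically $0$, while the right side equals $1$ only when $a\in I$ and $(b,c)$ is $(j,k)$ or $(k,j)$ for a triangle $\{a,j,k\}$.
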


The Triangle Detection Hypothesis (see, e.g.,~\cite{AbboudW14} and the discussion in \conferenceversion{the full version of the paper}{Appendix~\ref{sec:assumptions})} postulates that there is no $O(n^{\omega-\epsilon})$-time algorithm for detecting a triangle in an $n$-node graph.
As immediate consequences of our algorithm, we obtain a randomized $\bigO(n^{\omega})$-time algorithm for verifying whether $(S,\oplus, \odot)$ forms a field or a ring. (In Section~\ref{sec:further-results}, we describe improved algorithms for these tasks, exploiting that field/ring verification involves additional side constraints as well as a result in~\cite{groupverifictaion2023}.) Notably, Theorem~\ref{thm:distributivity} reveals that we cannot expect a \emph{dichotomy} into ``easy'' and ``brute-force-hard'' identities.

\subsubsection{A Classification Theorem via 4-Term Arithmetic Progressions}

The conditionally optimal verification algorithm for distributivity raises the question whether perhaps all 3-variable identities can be verified in strongly subcubic time. However, we first observe that such a powerful algorithmic result appears very challenging: We show that there are identities for which a strongly subcubic verification algorithm would refute the Strong Zero Triangle Hypothesis~\cite[Conjecture 7.5]{AbboudBKZ22} from fine-grained complexity (for further details and the proof, see \conferenceversion{the full version of the paper.}{Sections~\ref{sec:overview} and~\ref{sec:strongzerotriangle}; Section ~\ref{sec:assumptions} in the Appendix gives a definition of the Strong Zero Triangle Hypothesis}).

\begin{restatable}{theorem}{zerotrianglehard}\label{thm:hardest}
Given an algebraic structure $(S, \oplus, \odot)$ with $0\in S$, if we can decide whether it satisfies the identity \begin{equation}\label{eq:hardest}
\left((a\odot b)\oplus(a\odot c)\right)\oplus(b\odot c)=0
\end{equation}
for all $a,b,c\in S$ in time $O(n^{3-\epsilon})$ for some $\epsilon > 0$, then the Strong Zero Triangle Hypothesis fails.
\end{restatable}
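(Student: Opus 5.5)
The plan is a direct fine-grained reduction from Strong Zero Triangle to the identity in \eqref{eq:hardest}. As recalled in Appendix~\ref{sec:assumptions}, I will use the hypothesis in the form: in a tripartite graph on parts $A,B,C$ of size $n$, with integer edge weights in a range $[-M,M]$ where $M=O(n)$, detecting a triangle of total weight $0$ requires $n^{3-o(1)}$ time. The essential point is that $M$ is only linear in $n$. The reduction will produce a ground set of size $O(n+M)=O(n)$, so an $O(|S|^{3-\epsilon})$ verifier would give an $O(n^{3-\epsilon})$ zero-triangle algorithm.

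\textbf{Ground set.} I would take
\[ S = A\cup B\cup C\cup \{0,1,\bot\}\cup T_1\cup T_2\cup T_3\cup T_{12}, \]
where $T_i=\{(i,w): |w|\le M\}$ and $T_{12}=\{(12,s): |s|\le 2M\}$. These tagged copies of integers record which pair of parts a value came from.

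\textbf{The operation $\odot$.} It is defined only on correctly ordered pairs of parts:
\begin{itemize}
\item $a\odot b=(1,w(a,b))$ for $a\in A$, $b\in B$;
\item $a\odot c=(2,w(a,c))$ for $a\in A$, $c\in C$;
\item $b\odot c=(3,w(b,c))$ for $b\in B$, $c\in C$.
\end{itemize}
It equals $\bot$ on non-edges and on every other input pair.

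\textbf{The operation $\oplus$.} It assembles a triangle's weight in two stages and then tests it:
\begin{itemize}
\item $(1,x)\oplus(2,y)=(12,x+y)$;
\item $(12,s)\oplus(3,t)=1$ if $s+t=0$, and $=0$ otherwise;
\item every other combination, including anything involving $\bot$, gives $0$.
\end{itemize}

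\textbf{Correctness.} Evaluate $((a\odot b)\oplus(a\odot c))\oplus(b\odot c)$ for an arbitrary triple $(a,b,c)\in S^3$. If $a\notin A$, $b\notin B$, or $c\notin C$, then some product is $\bot$ or carries the wrong tag. The expression therefore collapses to $0$, because every "wrong" $\oplus$ evaluates to $0$ and $0$ is not a valid operand for either non-trivial rule. The same happens if some pair among $a,b,c$ is a non-edge. When $(a,b,c)\in A\times B\times C$ is a triangle, the tags force the evaluation order exactly as intended. The value is $1\neq 0$ iff $w(a,b)+w(a,c)+w(b,c)=0$. Hence the identity holds for all triples iff the graph has no zero triangle.

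\textbf{Running time.} The Cayley tables have size $O(|S|^2)=O((n+M)^2)$ and are built in that time. A verifier running in $O(|S|^{3-\epsilon})$ thus yields an $O(n^{3-\epsilon})$ zero-triangle algorithm, contradicting the hypothesis.

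\textbf{Main obstacle.} The delicate point is the weight range. The tagged copies of the integers must be elements of $S$, so the reduction needs weights bounded by $O(n)$. I first have to confirm that the Strong Zero Triangle Hypothesis of~\cite{AbboudBKZ22} is stated for weights of linear magnitude; this is the regime that makes the "strong" version meaningful, since $\tilde O(Mn^\omega)$ algorithms exist. If instead it were phrased with weights in $[-n^c,n^c]$, I would first try a standard splitting of the weights into base-$n$ digits with carries. That would require several identities or more variables, though, so I expect the linear-weight form to be the intended one.
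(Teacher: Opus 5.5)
Your reduction is correct and is essentially the paper's own. In both, $\odot$ returns tagged edge weights, and $\oplus$ works in two stages: it first forms the partial sum and then tests whether the total is zero. Every other table entry is sent to the element representing $0$; the paper calls this element $\infty$ and its flag $\Delta$, and you call them $0$ and $1$. The only difference is cosmetic: you use a tripartite graph with per-pair tags, while the paper uses a general graph with a symmetric $\odot$ and stage counters.

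Your ``main obstacle'' does not arise. The paper states the Strong Zero Triangle Hypothesis (Appendix~\ref{sec:assumptions}) for weights in $\{-n,\dots,n\}$, and passing from a general graph to your tripartite setting is the standard three-copies reduction.
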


In contrast, on the positive side, we are able to extend the techniques of the subcubic distributivity algorithm to solve a variety of additional identities in time $\bigO(n^{\omega})$ as well. Curiously, however, these generalizations reveal some identities of particular interest, which appear not to be solvable in strongly subcubic time using this approach, but also do not appear to admit Strong-Zero-Triangle-hardness. As an example, consider the identity
\begin{equation}
    \left((a\odot b) \oplus (a \odot c)\right) \oplus b = 0. \label{eq:intermediate-identity}
\end{equation}
Note that its left-hand side takes an ``intermediate'' position on the spectrum between distributivity, $(a\odot b) \oplus (a \odot c)$, and the hard identity \eqref{eq:hardest} of Theorem~\ref{thm:hardest}, $\left((a\odot b)\oplus(a\odot c)\right)\oplus(b\odot c)$.

As one of our most interesting contributions, both conceptually and technically, we justify the apparent hardness of this and related identities via a reduction from an algorithmic problem in additive combinatorics: Specifically, let 4-AP Detection denote the problem of deciding whether a set $A\subseteq \{1,\dots, N\}$ contains a non-trivial 4-term arithmetic progression $a, a+b, a+2b, a+3b$ (where $b\ne 0$). We show that an algorithm verifying \eqref{eq:intermediate-identity} in time $O(n^{3-\epsilon})$ would give an $O(N^{2-\epsilon'})$-time algorithm for 4-AP Detection. Let us discuss the relevance of this result.

\paragraph{Arithmetic Progression Detection as a Fine-Grained Hypothesis}
$k$-term arithmetic progressions, or $k$-APs for short, are central objects in additive combinatorics. Many fascinating results revolve around the study of $k$-AP-free sets, such as Roth's and Szemerédi's seminal theorems~\cite{TaoV06}. We consider the same question from a computational perspective: How efficiently can we decide if a given set $A \subseteq \{1,\dots, N\}$ is $k$-AP-free?

The naive algorithm runs in time $O(N^2)$ for any fixed $k$: Each combination of the first two terms $a_1, a_2 \in A$ uniquely determines the remaining terms $a_3, \dots, a_k$ (via $a_i = a_1 + (i - 1) (a_2 - a_1)$) and can thus be tested in time $O(1)$. For $k = 3$ there is a faster $\tilde O(N)$-time algebraic algorithm based on the Fast Fourier Transform. However, this algebraic approach does \emph{not} generalize to $k \geq 4$, and we are not aware of any polynomial improvement over the naive $O(N^2)$-time algorithm. This jump in complexity from $k = 3$ to $k = 4$ is not a coincidence, but is actually mirrored in additive combinatorics: $3$-AP-free sets are extremely well-understood, with almost matching lower and upper density bounds~\cite{Behrend1946,KelleyM23} based on an extensive toolkit such as Fourier analysis. To date the study of $4$-AP-free sets either suffers from astronomical quantitative losses, or relies on the intricate ``higher-order Fourier analysis''~\cite{Gowers01}. We view this as one of many reasons (to be described in \Cref{sec:ap}) to put forth the following hypothesis for all constants $k \geq 4$:
\gdef\N{N}%
\begin{restatable}[$k$-AP]{hypothesis}{hypkap} \label{hyp:k-ap}
There is no algorithm to detect a $k$-AP in a given set $A \subseteq \{0,\ldots,\N\}$ in time $O(\N^{2-\epsilon})$ (for any constant $\epsilon > 0$).\gdef\N{n}%
\end{restatable}

In Section~\ref{sec:ap} we further justify the relevance of the $k$-AP Hypothesis for understanding hardness in P, and argue that it deserves to be viewed as a new fine-grained hypothesis. In particular, we establish $k$-AP Detection as an ``arithmetic analogue'' of the well-established Hyperclique Hypothesis, review why current algorithmic approaches fail to break the $k$-AP Hypothesis, and point out further connections to central problems like $4$-SUM. We also develop a web 
of fine-grained reductions from 4-AP Detection (detailed in \conferenceversion{the full version of the paper}{Sections~\ref{sec:overview} and~\ref{sec:4APhard-identities}}) to interesting intermediate problems that connect to verifying identities, our core interest of this paper.

\paragraph{Main Classification Theorem}
Equipped with the techniques of the results discussed above, we are able to prove a classification theorem for a natural set of identities, specifically, all identities in which neither side is a subexpression of the other.

\begin{theorem}[Classification Theorem, informal version]
    Consider an identity $f(a,b,c) = g(a,b,c)$ where $f$ is not a subexpression of $g$ and vice versa. Each such identity falls into one of three regimes: \begin{enumerate}
    \item it can be checked in randomized time $O(n^2)$, or
    \item it can be checked in randomized time $O(n^{\omega})$; any $O(n^{\omega-\epsilon})$-time algorithm would violate the Triangle Detection Hypothesis;
    \item it can (trivially) be checked in time $O(n^3)$; any $O(n^{3-\epsilon})$-time algorithm would refute the 4-AP Detection Hypothesis.
    \end{enumerate}
\end{theorem}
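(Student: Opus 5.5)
The plan is to prove the classification by a case analysis on the syntactic structure of the two sides $f$ and $g$. The analysis is driven by which variables appear more than once and how the repeated occurrences are nested. Each case receives either an algorithm or a reduction.

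First, I normalize the identity. Since neither side is a subexpression of the other, I can strip common outer context only when both sides share the same top operation with an identical argument. Otherwise the identity is treated as a single test $f(a,b,c)=g(a,b,c)$. I then define the ``variable-occurrence structure'': for each side I record the multiset of variables and the subterms in which repeated variables meet.

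\textbf{Regime 1.} If the identity is read-once, or becomes read-once after treating a repeated variable whose occurrences can be ``frozen'', I use the Rajagopalan--Schulman random-vector technique. Concretely, I encode each Cayley table as a linear map on $\mathbb{F}^S$ and test equality with random vectors in $O(n^2)$. The case of a variable occurring only on one side, or twice in a common shallow subterm, also falls here.

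\textbf{Regime 2.} These are identities where exactly one variable, say $a$, is shared across two subterms, as in distributivity $a\odot(b\oplus c)$ versus $(a\odot b)\oplus(a\odot c)$. I generalize the distributivity algorithm of Theorem~\ref{thm:distributivity}. For a fixed $a$, each side becomes a read-once function of $(b,c)$. I then use random linear sketches over $a$ so that the test collapses to a product of $n\times n$ matrices indexed by $(a,b)$ and $(a,c)$. This gives $O(n^\omega)$. The matching lower bound comes from embedding triangle detection: I take graph adjacency as the inner operation, and use the remaining operations so that a violation occurs exactly at a triangle $(a,b,c)$. The edges $ab$ and $ac$ are read via the two occurrences of $a$, and the edge $bc$ via the $(b,c)$ subterm.

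\textbf{Regime 3.} These are the remaining identities, where the repeated occurrences cover all three pairs $\{a,b\},\{a,c\},\{b,c\}$, or have an $(a,b),(a,c),b$ pattern as in \eqref{eq:intermediate-identity}. For these I reduce from 4-AP Detection via the paper's intermediate problems. I encode $x\in\{1,\dots,N\}$ as pairs over an $S$ of size about $\sqrt N$ using base-$\sqrt N$ digits. The aim is that the three pairwise operations compute the partial sums $a+b$, $a+2b$, $a+3b$ of a candidate progression, and that the final equality fails iff all four terms lie in $A$. For the patterns that reach all three pairs, Theorem~\ref{thm:hardest} gives an even stronger statement.

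The main obstacle is exhaustiveness. I must show that every identity not in Regimes 1 or 2 really admits a 4-AP embedding, including degenerate shapes where some operations coincide or variables repeat asymmetrically. I would handle this first by enumerating the minimal ``hard patterns'' via a small number of reductions between identities: substitute projection or constant tables for operations to reduce any hard identity to one of a few canonical ones like \eqref{eq:intermediate-identity}. After that, I would prove 4-AP hardness only for the canonical patterns.
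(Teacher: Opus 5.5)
Your outline has the right overall shape (three regimes, randomized algorithms, reductions from triangle detection and 4-AP), but the decisive steps are missing or would fail.

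First, the regimes are not determined by which variables repeat and how their occurrences nest. The actual dividing lines are syntactic decompositions, up to permuting variables: Regime~1 is when both sides have the form $H(G(a,b),c)$, and Regime~2 is when both have the form $J(I(H(a,b),c),G(a,b))$. Hardness hinges on whether two subterms over the same pair of variables decompose into a common expression (the similarity relation $\sim$). For example, as constant-term identities, $(a\odot b)\oplus((a\odot b)\odot c)$ is quadratic-time, while $(a\odot b)\oplus((b\odot a)\odot c)$, with the same occurrence pattern, is triangle-hard. Likewise $(c(abb)c)+(babb)$ is in Regime~2 although two variables are shared. For fixed $a$ it is not read-once in $(b,c)$, so your Regime~2 description and algorithm miss it. The paper instead evaluates $\sum_{a,b,c}x_ay_bz_cw_{f(a,b,c)}$ as a triangle sum in a tripartite graph whose vertices are labeled by values of $H(a,b)$, $G(a,b)$ and $I(\cdot,c)$.

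More seriously, your exhaustiveness plan (``substitute projection or constant tables'') cannot work as stated. One operation symbol occurs at many nodes and must act differently at each; for instance, a single table for $\cdot$ must make $abb$ evaluate to $b$ and $babb$ to $a\cdot b$. This is what the Subexpression Embedding (bookkeeping) gadget, the multichromatic reformulation, and the Collapsing Lemmas for non-similar subterms provide. You also never reduce the one-sided constant-term problem to checking $f=g$. That reduction is exactly where ``neither side is a subexpression of the other'' is used: embed only the subexpressions of $f$ so that $g$ always evaluates to $\infty$.

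Your ``strip common outer context'' step is unsound. The identity $(a\odot b)\oplus(a\odot c)=(a\odot b)\oplus(c\odot a)$ is triangle-hard, yet stripping leaves $a\odot c=c\odot a$. That is only a sufficient condition, and it is checkable in $O(n^2)$.

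Second, your 4-AP encoding cannot work as a single instance. Suppose one instance over $|S|\approx\sqrt N$ decided 4-AP Detection. Then brute force on it would run in $O(N^{3/2})$ time and refute the 4-AP Hypothesis outright. Put differently, $N^{3/2}$ triples cannot cover the $\Theta(N^2)$ candidate progressions.

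The missing idea is to enumerate an extra offset $\delta$ externally. This produces $O(\sqrt N)$ instances of size $O(\sqrt N)$, so an $n^{3-\epsilon}$ algorithm yields $N^{2-\epsilon/2}$. You then route through Square Detection (and T Detection):
\begin{itemize}
\item A $\delta$-dependent linear labeling of the grid makes the corners of every square a 4-AP, and every 4-AP appears as a square for some $\delta$.
\item The identity's variables are set to linear forms in $(i,j,k)$; e.g.\ $a=j$, $b=i-j$, $c=i+j+k$ for $((a\oplus_1 b)\oplus_3(a\oplus_2 c))\oplus_4 c$.
\item At every node, the two matrix indices to test and the value passed upward are linear in that node's two operands.
\end{itemize}
In the paper, two of the six canonical shapes admit no apparent square embedding and go through T Detection instead. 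Finally, Theorem~\ref{thm:hardest} rests on the Strong Zero Triangle Hypothesis, so it does not supply the 4-AP-based bound the statement requires.
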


This theorem successfully answers our driving question for a large and interesting class of identities. Curiously, what we initially expected to become a dichotomy turned out to be a natural \emph{trichotomy}.

In fact, we obtain more detailed results: Specifically, we can describe the set of identities in the first regime as all formulas where either side can be written as $H(G(a,b),c)$ for appropriate choices of $H,G$ and permutation of the variables. This generalizes the result by Rajagopalan and Schulman~\cite{rajagopalan00} which remained restricted to read-once identities. E.g.,  $(c((ab)(ba)))(c+c)=(ba)+(ab)$ lies in Regime 1, but is far from read-once.

Our results show that verifying any identity in Regime 3 in strongly subcubic time necessitates a strongly subquadratic algorithm for 4-AP Detection. This means that either: (1) 4-AP Detection is indeed quadratic-time hard, and we have fully settled the complexity for the most natural identities, or (2) 4-AP Detection can be solved in strongly subquadratic time -- in this case, the techniques behind such an algorithm might lead the way to obtaining strongly subcubic algorithm for the 4-AP-hard identities. Ultimately, the authors remain agnostic about which case is more likely.

\subsubsection{Further Results on Ring and Field Verification and Counting}
\label{sec:further-results}

A corollary of our $\bigO(n^{\omega})$-time randomized algorithm for distributivity verification together with the quadratic-time algorithms for group verification~\cite{rajagopalan00,groupverifictaion2023} immediately yield $\bigO(n^{\omega})$ time algorithms for verifying whether $(S,\oplus, \odot)$ forms a field (or ring, respectively). We improve over this baseline by obtaining (almost-)optimal algorithms for these tasks:

\begin{restatable}{theorem}{FieldVerificationTheorem}\label{th:field-verification}
    Given a set $F$ consisting of $n$ elements, together with binary operations $\oplus, \odot$, there exists a deterministic algorithm deciding if $(F,\oplus,\odot)$ forms a field in time $\bigO(n^2)$.
\end{restatable}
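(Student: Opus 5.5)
We verify the field axioms in turn, checking the structural ones first and saving distributivity for last. Once the earlier checks pass, the structure is constrained enough that distributivity can be tested with only $O(n^2)$ triples.

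First we check that $(F,\oplus)$ is an abelian group in deterministic $O(n^2)$ time using the group verification algorithm of~\cite{groupverifictaion2023}. Commutativity is an $O(n^2)$ scan of the Cayley table. From the table we find the identity $0$, and we set $F^*=F\setminus\{0\}$. We then check that $F^*$ is closed under $\odot$ and that $(F^*,\odot)$ is an abelian group, again by~\cite{groupverifictaion2023} and a table scan. Finally we check $0\odot x=x\odot 0=0$ for all $x$, which takes $O(n)$ time. If any check fails we reject.

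It remains to verify $a\odot(b\oplus c)=(a\odot b)\oplus(a\odot c)$ for all $a,b,c$; by commutativity of $\odot$ the other side follows. Fix $a\in F^*$ and let $\lambda_a(x)=a\odot x$. Distributivity for this $a$ says exactly that $\lambda_a$ is an endomorphism of $(F,\oplus)$; we already know it is a bijection because $F^*$ is a group and $\lambda_a(0)=0$. So we must decide whether all $n-1$ bijections $\lambda_a$ are automorphisms of the additive group. Checking each one naively costs $\Theta(n^2)$, for $n^3$ overall, and this is the main obstacle.

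To overcome it we exploit the multiplicative group structure. The set $T=\{a\in F^*:\lambda_a \text{ is an additive automorphism}\}$ is closed under $\odot$, since $\lambda_{ab}=\lambda_a\circ\lambda_b$ by associativity of $\odot$. A nonempty closed subset of a finite group is a subgroup. Hence it suffices to check a generating set of $(F^*,\odot)$, and finding one is cheap.
\begin{itemize}
\item Since every proper subgroup has index at least $2$, a greedy generating set has size $O(\log n)$.
\item Building it greedily costs $O(n\log n)$ group operations, or $O(n)$ with a little care: maintain the generated subgroup and add an element outside it, which at least doubles its size.
\end{itemize}
For each generator $g$ we verify additivity of $\lambda_g$ by testing $\lambda_g(x\oplus y)=\lambda_g(x)\oplus\lambda_g(y)$ over all $n^2$ pairs. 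This gives $O(n^2\log n)$ total, which is not yet $O(n^2)$.

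To shave the logarithm, we reduce the cost of verifying additivity of a map. We likewise compute a generating set $x_1,\dots,x_k$ of $(F,\oplus)$ with $k=O(\log n)$. A bijection $\phi$ with $\phi(0)=0$ is a homomorphism if and only if $\phi(y\oplus x_i)=\phi(y)\oplus\phi(x_i)$ for all $y$ and all $i$. Indeed, that condition lets us prove $\phi(y\oplus z)=\phi(y)\oplus\phi(z)$ by induction on the length of a word for $z$ in the generators: if $z=z'\oplus x_i$, then $\phi(y\oplus z)=\phi((y\oplus z')\oplus x_i)=\phi(y\oplus z')\oplus\phi(x_i)=\phi(y)\oplus\phi(z')\oplus\phi(x_i)=\phi(y)\oplus\phi(z)$. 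Note that no inverses are needed in a finite group. So checking one map costs $O(nk)$, and checking all $O(\log n)$ multiplicative generators costs $O(n\log^2 n)$, which is well within $O(n^2)$. Together with the $O(n^2)$ group checks, the whole algorithm is deterministic and runs in time $O(n^2)$.

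The delicate points are the subgroup-closure argument, which relies on associativity already having been verified, and the generator-based homomorphism test. In fact the generator reduction alone suffices: even checking every $a\in F^*$ against $O(\log n)$ additive generators costs $O(n^2\log n)$, whereas combining it with the multiplicative generators gives the near-linear verification step.
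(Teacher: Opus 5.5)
Your proof is correct, but it reaches the $O(n^2)$ bound by a different route than the paper.

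\textbf{The paper's route.} It invokes the basis construction of \cite{groupverifictaion2023} (Lemma~\ref{lemma:basis-construction}). This gives sets $S_a$ and $S_m$ of size $O(\sqrt n)$ such that every element is a sum of two elements of $S_a$, and every nonzero element is a product of two elements of $S_m$. Lemmas~\ref{lemma:b-in-Sa-distributivity} and~\ref{lemma:a-in-Sm-dist} then show that distributivity for $a\in S_m$, $b\in S_a$, $c\in F$ implies it everywhere. The paper checks these $O(\sqrt n\cdot\sqrt n\cdot n)=O(n^2)$ triples by brute force.

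\textbf{Your route.} You replace ``covering by pairwise products'' with ordinary generation, using three ingredients:
\begin{itemize}
\item $O(\log n)$ generators found greedily via Lagrange's theorem;
\item the observation that $\{a\in F^* : \lambda_a\in\mathrm{Aut}(F,\oplus)\}$ is a submonoid, hence a subgroup, of $F^*$ (your analogue of Lemma~\ref{lemma:a-in-Sm-dist});
\item the word-length induction showing that additivity of a map only needs testing against additive generators (your analogue of Lemma~\ref{lemma:b-in-Sa-distributivity}, iterated along a word instead of applied once to $\beta\oplus\beta'$).
\end{itemize}

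\textbf{What each route buys.} Your distributivity phase drops to $O(n\log^2 n)$, so the quadratic cost comes only from table scans and the group verification. This phase also needs only elementary group theory rather than the CFSG-dependent basis construction, though you still rely on \cite{groupverifictaion2023} for the group checks themselves. The paper's approach is uniform with its ring verification (Lemma~\ref{lemma:ring-distributivity}). There $(R,\odot)$ is not a group, so your subgroup-closure step, which is field-specific, is unavailable.

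\textbf{Two small points.}
\begin{itemize}
\item The identity $\lambda_{a\odot b}=\lambda_a\circ\lambda_b$ at the input $0$ does not follow from associativity on $F^*$ alone. It needs your explicit check $0\odot x=x\odot 0=0$, and you should say so.
\item Your closing remark that ``the generator reduction alone suffices'' is inaccurate. By your own count it gives only $O(n^2\log n)$; the $O(n^2)$ bound comes from combining it with the multiplicative generators.
\end{itemize}
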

\begin{restatable}{theorem}{RingVerificationTheorem}\label{th:ring-verification}
    Given a set $R$ consisting of $n$ elements, together with binary operations $\oplus, \odot$, there exists a randomized algorithm deciding if $(R,\oplus,\odot)$ forms a ring in time $\bigO(n^2)$.
\end{restatable}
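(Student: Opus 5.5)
Our plan is to check the ring axioms in an order that keeps each step within $O(n^2)$. We may use the randomized $O(n^2)$ associativity checker of Rajagopalan and Schulman~\cite{rajagopalan00} and the $O(n^2)$ group verification of~\cite{groupverifictaion2023}.

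\textbf{Step 1: additive structure.} We first decide whether $(R,\oplus)$ is an abelian group. Commutativity is checked directly in $O(n^2)$ time, and the group test takes $O(n^2)$. If the test passes, we know the identity $0$ and the inverse table. We then check associativity of $\odot$ with the randomized $O(n^2)$ test; for a ring with unity we also search for a two-sided identity in $O(n^2)$. What remains are the two distributive laws, and these are the only potentially superquadratic conditions. By symmetry we discuss only left distributivity, $a\odot(b\oplus c)=(a\odot b)\oplus(a\odot c)$.

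\textbf{Step 2: reduction to an endomorphism test.} For each fixed $a$, left distributivity says exactly that the row map $L_a\colon x\mapsto a\odot x$ is a homomorphism of the abelian group $(R,\oplus)$. Testing this naively costs $n^2$ per row, so $n^3$ overall. We exploit the group structure instead. Compute a generating set $g_1,\dots,g_k$ of $(R,\oplus)$ with $k\le \log_2 n$ by greedily adding an element outside the subgroup generated so far; this can be done in $O(n\log n)$ time by BFS-style closure. Then fix a spanning structure: every $x$ is reached from $0$ by a BFS tree in which each edge adds a generator, $x = y\oplus g_i$ with $y$ the parent of $x$. For each $a$, define the candidate homomorphism $\phi_a$ along this tree by $\phi_a(0)=0$ and $\phi_a(x)=\phi_a(y)\oplus L_a(g_i)$. $L_a$ is a homomorphism if and only if (i) $L_a=\phi_a$ on all of $R$, and (ii) $\phi_a$ is well defined as a homomorphism. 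We test (ii) by checking the relations $\phi_a(x\oplus g_i)=\phi_a(x)\oplus L_a(g_i)$ for all $x$ and all $i$. Together, (i) and (ii) force $L_a$ to be additive. The cost is $O(nk)=O(n\log n)$ per $a$, which gives $O(n^2\log n)$ in total.

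\textbf{Step 3: removing the logarithm (main obstacle).} To reach $O(n^2)$ we randomize, following the Rajagopalan--Schulman style. We check the identity $a\odot(b\oplus c)=(a\odot b)\oplus(a\odot c)$ only for pairs $(b,c)$ in $R\times\{g_1,\dots,g_k\}$, but aggregated over $a$. Equivalently, we apply Step 2's relation checks with a random subset-sum of generators in place of each single $g_i$. If some relation fails, a random element $c$ drawn as a random $\pm$ combination of the generators exposes a violation with constant probability. The reason is that the set of $c$ satisfying the relation for all $x$, given that it holds on the tree, forms a subgroup: if a proper subgroup, it has index $\ge2$, so a uniformly random $c\in R$ detects failure with probability $\ge1/2$.

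This suggests a simpler algorithm. First verify the tree condition (i), which costs $O(n)$ per $a$. Then pick $O(1)$ uniformly random $c\in R$ and check $L_a(x\oplus c)=L_a(x)\oplus L_a(c)$ for all $x$ and all $a$, in $O(n^2)$ total. The key claim to prove carefully is that, for fixed $a$, the set $H_a=\{c: L_a(x\oplus c)=L_a(x)\oplus L_a(c)\ \forall x\}$ is a subgroup. This holds because $c,d\in H_a$ gives $L_a(x\oplus c\oplus d)=L_a(x\oplus c)\oplus L_a(d)=L_a(x)\oplus L_a(c)\oplus L_a(d)$, and taking $x=0$ yields $L_a(c\oplus d)=L_a(c)\oplus L_a(d)$. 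Hence $H_a=R$, or else a random $c$ misses $H_a$ with probability $\ge 1/2$.

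One remaining subtlety is that a single random $c$ shared by all $a$ suffices per round, since each bad $a$ is caught with probability $\ge1/2$ and we only need to catch one. Then $O(1)$ rounds give constant error, and $O(\log n)$ rounds, if high probability is required, can still be fit by batching. We would present the constant-error $O(n^2)$ version and amplify as in~\cite{rajagopalan00}. With this subgroup argument, Step 2 is unnecessary and the whole algorithm runs in randomized $O(n^2)$.
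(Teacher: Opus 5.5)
\textbf{The core idea is correct and differs from the paper's route.} For fixed $a$, let $H_a=\{c : a\odot(x\oplus c)=(a\odot x)\oplus(a\odot c)\ \forall x\}$.
\begin{itemize}
\item $H_a$ is closed under $\oplus$: apply the defining identity of $d$ at $x\oplus c$ and at $c$, and that of $c$ at $x$.
\item So $H_a$ is either empty or a subgroup of the finite group $(R,\oplus)$. It is empty exactly when $a\odot 0\neq 0$. You should state this case, because your ``take $x=0$'' step silently uses $a\odot 0=0$.
\item Hence one uniformly random $c$, shared by all rows, catches a left-distributivity failure with probability at least $1/2$ in $O(n^2)$ time. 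Right distributivity is symmetric.
\end{itemize}

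The paper instead makes distributivity \emph{deterministic}. It builds the basis $S$ of $(R,\oplus)$ with $|S|=O(\sqrt n)$ from \cite{groupverifictaion2023}. Lemma~\ref{lemma:ring-distributivity} then shows that both distributive laws together follow from four restricted laws in which two of the three arguments lie in $S$. Checking these costs $O(n|S|^2)=O(n^2)$, so the only randomness is the Rajagopalan--Schulman associativity test for $\odot$.

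Your argument avoids that basis, whose construction uses the classification of finite simple groups, and it handles each one-sided law on its own. It is therefore more elementary. Step~2 can be dropped entirely: by closure of $H_a$, checking $c$ over a generating set of size at most $\log_2 n$ already gives a deterministic $O(n^2\log n)$ test, with no BFS tree.

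\textbf{The gap is the error probability.} By the convention in Section~\ref{sec:preliminaries}, ``randomized'' in this theorem means error at most $n^{-c}$. Your test has one-sided error $1/2$ per $O(n^2)$ round. Your claim that $O(\log n)$ rounds ``can still be fit by batching'' is not supported. Each fresh $c$ requires comparing $a\odot(x\oplus c)$ with $(a\odot x)\oplus(a\odot c)$ for all $n^2$ pairs $(a,x)$.

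Moreover, the bound $1/2$ is tight even on inputs that satisfy every other ring axiom the paper checks. On $(\mathbb{Z}_4,+)$, put $a\odot b=a\cdot(b\bmod 2)$ with $b\bmod 2\in\{0,1\}$.
\begin{itemize}
\item This $\odot$ is associative and right distributive.
\item $H_a=\{0,2\}$ for odd $a$, and $H_a=\mathbb{Z}_4$ for even $a$.
\item Taking the direct product with any ring $T$ gives $H_{(a,t)}=\{0,2\}\times T$, which has index $2$.
\item So on these inputs of size $n=4|T|$, each round rejects with probability exactly $1/2$.
\end{itemize}

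Reaching error $n^{-c}$ therefore needs $\Theta(\log n)$ independent rounds, i.e.\ $\Theta(n^2\log n)$ time with your test. As written, your proposal proves the statement only in the bounded-error sense. To match the paper you need a distributivity check whose error is polynomially small within a single $O(n^2)$ pass, or a deterministic one such as the paper's basis-based reduction.
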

We make use of a breakthrough result of Evra, Gadot, Klein, and Komargodski~\cite{groupverifictaion2023} which constructs a basis for $S$ of size $O(\sqrt{|S|})$ in time $O(n^2)$, where a basis of $S$ is a set $B$ such that $B^2=S$ (i.e., each element in $S$ can be written as the product of two elements in $B$). In order to find this basis, Evra et al. make use of the classification of finite simple groups.

Prior to this result, F.W. Light  made a simple observation for the associativity checking problem (see \cite{Clifford1961}). If we are given a smaller set of generators $R \subseteq S$, it suffices to check associativity for all $a, c \in S$ but only for $b \in R$. We use a similar approach, leveraging the basis construction of Evra et al., to show our results for field and ring verification.


\paragraph{Counting Variant}
We remark that our algorithms for verifying distributivity and related identities avoid (explicitly or implicitly) representing the full search space, fundamentally exploiting that the identity needs to hold for \emph{all} choices of $a,b,c\in S$. We give evidence that this is a crucial aspect, by showing that \emph{counting} all distributive triples requires cubic running time assuming the Strong Zero Triangle Hypothesis. This adds distributivity verification to the currently rather small list of decision problems in P for which a fine-grained separation from its natural counting version is known.

\begin{restatable}{theorem}{DistCountingTheorem}\label{th:dist-counting}
     Given a set $S$ of $n$ elements, together with binary operations $\oplus, \odot$, if there exists an algorithm that counts the number of distributive triples, i.e., the triples $x,y,z\in S$ such that $x\odot(y\oplus z) = (x\odot y) \oplus (x\odot z)$
     in time $\bigO(n^{3-\varepsilon})$ for some $\varepsilon>0$, then
     the Strong Zero Triangle Hypothesis is false.
\end{restatable}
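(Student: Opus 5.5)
The plan is a direct reduction from (the counting form of) Zero Triangle. Counting is at least as hard as detection, so a subcubic counter for distributive triples would give a subcubic detector for zero triangles. I will take the Strong Zero Triangle Hypothesis in the form of a tripartite graph on $A\cup B\cup C$, $|A|=|B|=|C|=n$, with edge weights $w(\cdot,\cdot)$ that are integers of absolute value $O(n)$. I am assuming this matches \cite[Conjecture 7.5]{AbboudBKZ22} up to standard adjustments. Fix a cyclic group $\mathbb Z_m$ with $m=\Theta(n)$ large enough that a sum of three weights is $0$ in $\mathbb Z_m$ iff it is $0$ over the integers. There should also be room for a shift $M\in\mathbb Z_m$ such that $w_1+w_2+w_3+M\neq 0$ for all admissible weights.

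\textbf{Ground set and operations.} Take
\[ S = A\cup B\cup C\cup G\cup U\cup\{\bot\}, \]
where $G=\{g_k\}_{k\in\mathbb Z_m}$ and $U=\{u_k\}_{k\in\mathbb Z_m}$ are two copies of $\mathbb Z_m$. Then $|S|=O(n)$. Define the operations as follows.
\begin{itemize}
\item $a\odot b=g_{w(a,b)}$ and $a\odot c=g_{w(a,c)}$ for $a\in A$, $b\in B$, $c\in C$.
\item $b\oplus c=u_{-w(b,c)+\delta}$, where $\delta\in\{0,M\}$ is a parameter.
\item $a\odot u_k=g_k$ for $a\in A$.
\item $g_k\oplus g_\ell=g_{k+\ell}$.
\item Every other product and sum is $\bot$, and $\bot$ is absorbing for both operations.
\end{itemize}

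For $x=a$, $y=b$, $z=c$, the left side $a\odot(b\oplus c)$ equals $g_{-w(b,c)+\delta}$ and the right side equals $g_{w(a,b)+w(a,c)}$. Hence, with $\delta=0$, the triple is distributive iff $abc$ is a zero triangle. With $\delta=M$, no such triple is distributive.

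\textbf{Difference trick.} Let $N_0$ and $N_M$ be the numbers of distributive triples in the two instances. I want to show that every triple outside $A\times B\times C$ has the same distributivity status in both instances. Then $N_0-N_M$ is exactly the number of zero triangles, and two calls to the assumed $O(n^{3-\varepsilon})$ counter decide Zero Triangle in $O(n^{3-\varepsilon})$, contradicting the hypothesis.

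\textbf{Main obstacle: invariance check.} The only entries that differ between the two instances are the values $b\oplus c$. So I must check that for every triple $(x,y,z)$ with $(y,z)\in B\times C$ and $x\notin A$, both sides are independent of $\delta$.
\begin{itemize}
\item If $x\notin A$, then $x\odot u_k=\bot$ for every $k$, so the left side is $\bot$ in both instances.
\item The right side $(x\odot b)\oplus(x\odot c)$ does not involve $\oplus$ on $B\times C$, so it is also unchanged.
\end{itemize}
Triples with $(y,z)\notin B\times C$ use identical table entries in both instances and are trivially unaffected. If this invariance argument ran into trouble for some corner case, such as $\bot$ interacting with group elements, I would fall back on computing the contribution of non-$A\times B\times C$ triples explicitly in $O(n^2)$ time. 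That is feasible because, in this construction, each such triple's status depends on at most two of its coordinates through a structured table. I expect this bookkeeping, rather than the core gadget, to be the part that needs the most care.
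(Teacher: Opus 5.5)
Your reduction is correct, but it disposes of the junk triples (those outside $A\times B\times C$) differently from the paper.

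\textbf{The paper's route.} It uses a single instance on $S=V\cup([-10n,10n]\times\{0,1,2\})\cup\{\infty\}$. The level tags force every triple containing a non-vertex element to evaluate to $\infty$ on both sides, so all such triples are automatically distributive. The count is then $|S|^3-|V|^3$ plus the number of zero triangles, obtained from one oracle call. The price is a four-case verification. The benefit is that $\oplus$ and $\odot$ come out commutative and associative, so hardness holds even for commutative semigroups. Your $\oplus$, as written, is not commutative, since $c\oplus b=\bot\neq b\oplus c$.

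\textbf{Your route.} You let junk triples behave arbitrarily. For instance, $(a,b,b')$, $(a,c,b)$ and $(a,b,u_k)$ are non-distributive in your structure, so no single clean closed-form count is available. You cancel them by subtracting two instances that differ only in the $B\times C$ block of $\oplus$, and your locality check is complete:
\begin{itemize}
\item that block is read only by the left-hand side;
\item its output $u_k$ yields a non-$\bot$ value under $x\odot u_k$ only when $x\in A$;
\item the right-hand side applies $\oplus$ only to outputs of $\odot$, which lie in $G\cup\{\bot\}$.
\end{itemize}
This costs a second oracle call, but the gadget needs essentially no case analysis and is robust to modification. Your fallback paragraph is therefore unnecessary.

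\textbf{Two small points to tidy up.}
\begin{itemize}
\item For $\delta=M$, the condition you need is $w(a,b)+w(a,c)+w(b,c)\not\equiv M \pmod m$. This agrees with what you wrote only because the range of sums is symmetric. For weights in $[-W,W]$, taking $m\ge 6W+2$ and $M=3W+1$ works.
\item Non-edges need the standard treatment. Either give them a large weight, as the paper does with $3n+1$, or map them to $\bot$; the latter also cancels in your difference.
\end{itemize}
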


\subsection{Conclusion and Open Problems}
Our main theorem establishes a trichotomy for a large natural subclass of identities. In particular, the dividing line between identities verifiable in strongly subcubic time and identities requiring cubic time is fully explained by the 4-AP Hypothesis. As such, the perhaps most pressing open problem is to find further support or evidence against quadratic-time hardness of 4-AP Detection, and we encourage readers to try to refute it.

Orthogonal to this, a natural question is to extend our classification to also settle the (possibly slightly obscure) identities in which one side is a proper subexpression of the other. Furthermore, it would be interesting to classify (subclasses of) identities involving four or more variables. Some of our algorithmic and/or hardness ideas extend to such more complex formulas, however, a full classification will likely require a classification into more than 3 regimes.

Finally, our hardness results on counting distributive triples suggests that a classification of the natural counting problem over different identities will be substantially different from analyzing the verification task, which has been the main focus of this work.

\section{Preliminaries}\label{sec:preliminaries}
For positive integers $n,m$ with $m \leq n$, let $[m,n]$ denote the set $\{m,\dots, n\}$. With $[n]$ we denote the set $[1,n]$.
Let $\omega<2.37134$ \cite{AlmanDWXXZ25} denote the optimal exponent of multiplying two $n\times n$ matrices.
We use $\Tilde{\bigO}$ notation, which hides the poly-logarithmic factors.
In other words, $f(n)\in \Tilde{\bigO}(g(n))$ if and only if there exists a $k\in \bigO(1)$ such that $f(n)\in \bigO(g(n)\cdot \log^k(n))$. All the hardness hypotheses used in this work are described in \conferenceversion{the full version of the paper}{Appendix~\ref{sec:assumptions}}. Throughout, we simply write \emph{randomized} algorithm to refer to Monte Carlo randomized algorithms that succeed with high probability $1 - n^{-c}$ (for an arbitrarily large constant $c$).

\subsection{Algebraic Expressions}
We consider algebraic expressions built up from variables and binary operations $\odot_1,\ldots,\odot_p$.
While analyzing a concrete instance of such problem, we operate on a set $S$ of $n$ elements and operations $\odot_1,\ldots,\odot_p$ that operate on $S$ ($\odot_i:S\times S\rightarrow S$) and are provided as their Cayley tables (for each of them we have table $C_i\in S^{S\times S}$ where $C_i[a,b] = a \odot_i b$ for all $a,b\in S$). All variables in the expression can take any value from $S$.
The algebraic expression itself is part of the description of the problem and we assume it has constant size.
Then, our algorithmic task is to take set $S$ and all the Cayley tables and return True or False, depending on whether the given property is satisfied or not.

\begin{figure}[t] 
\centering
\begin{tikzpicture}[
  level distance=12mm,
  every node/.style={font=\normalsize},
  op/.style={inner sep=1.5pt},
  var/.style={inner sep=3pt},
  level 1/.style={sibling distance=32mm},
  level 2/.style={sibling distance=20mm}
]
\node[op] (root) {$\oplus$}
  child { node[op] (L) {$\odot$}
    child { node[var] (x1) {$a$} }
    child { node[var] (x2) {$b$} }
  }
  child { node[op] (R) {$\odot$}
    child { node[var] (x3) {$c$} }
    child { node[var] (x4) {$a$} }
  };

\end{tikzpicture}
\caption{Tree representation of the expression
$f(a,b,c) \equiv (a \odot b) \oplus (c \odot a)$, with ${\rm depth}(f) = 2$ and
${\rm var}(f) = \{a,b,c\}$.}
\label{fig:parsetree}
\end{figure}
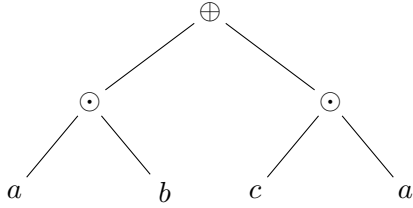

Every algebraic expression can be represented with a rooted, ordered binary tree with internal nodes labeled with operations and leaves labeled with variables that can take values from $S$ (see Figure~\ref{fig:parsetree}).
We define ${\rm depth}(f)$ as the depth of the tree representing $f$ (where a single-node tree, e.g., a variable, has depth 0) and we say that expression $f'(\bar v)$ is a subexpression of $f(\bar v)$ iff the tree representing $f'$ is a subtree of the tree representing $f$.
For an expression $f$, we define ${\rm var}(f)\ $
as the set of variables that $f$ depends on (i.e., that appear as leaves in the tree representing $f$).
As there is a bijection between such trees and considered algebraic expressions, later we identify a tree with its expression and vice versa.
For two expressions $f,g$, we write $f\equiv g$ to denote that $f$ and $g$ are \emph{symbolically equivalent expressions} (in other words, $f$ and $g$ represent the same algebraic expression).
On the other hand, given a set $S$ and the operations $\odot_1,\dots, \odot_ p$ as the Cayley tables, we write $f(x,y,z) = g(x,y,z)$ for some $x,y,z\in S$, to denote that the \emph{evaluations} of $f$ and $g$ on the elements $x,y,z$ yield the same element in~$S$.

\section{Subcubic Complexity of Distributivity Checking}
\label{sec:technical-overview-distributivity}

In order to provide an overview of our results, we start with establishing the complexity of Distributivity Verification.
Its proof is self-contained and gives an intuition of the nature of our results.

We provide a randomized algorithm running in $\bigO(n^\omega)$-time and a matching lower bound that there is no $\bigO(n^{\omega-\epsilon})$-time algorithm for Distributivity
Checking unless we can detect triangles in $\bigO(n^{\omega-\epsilon})$-time and the Triangle Detection Hypothesis \conferenceversion{\footnote{Triangle Detection cannot be solved in $\bigO(n^{\omega-\varepsilon})$ for any $\epsilon>0$.}}{(Hypothesis ~\ref{hypot:triangle})} fails:

\begin{theorem}\label{th:distributivity}
    Given a set $S$ of size $n$ together with two binary operations $\oplus, \odot: S\times S \to S$, there exists a randomized algorithm checking in time $\bigO(n^{\omega})$ if the algebraic structure $(S,\oplus, \odot)$ is distributive, that is if for all $a,b,c\in S$ we have $a\odot(b\oplus c) = (a\odot b)\oplus(a\odot c)$. Moreover, there is no algorithm solving this problem in time $\bigO(n^{\omega-\epsilon})$ for any $\epsilon > 0$ unless the Triangle Detection Hypothesis is false.
\end{theorem}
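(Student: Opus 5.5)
\textbf{Upper bound.} The plan is to compress the $n^3$ equalities into a single polynomial identity test in the style of Freivalds. Work over a field $\mathbb F$ of size $\mathrm{poly}(n)$ and introduce formal variables $\alpha_a,u_b,v_c$ for $a,b,c\in S$ and $\gamma_s$ for $s\in S$. Compare
\[
P_L=\sum_{a,b,c}\alpha_a u_b v_c\,\gamma_{a\odot(b\oplus c)},\qquad P_R=\sum_{a,b,c}\alpha_a u_b v_c\,\gamma_{(a\odot b)\oplus(a\odot c)} .
\]
The monomials $\alpha_a u_b v_c \gamma_s$ are pairwise distinct. Hence $P_L\equiv P_R$ as polynomials iff the identity holds for all triples. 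Both sides have degree $4$, so the Schwartz--Zippel lemma shows that evaluating at uniformly random field elements detects a violation with probability at least $1-4/|\mathbb F|$. Repeating boosts this to high probability. It remains to evaluate both sums fast.

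For $P_L$, first compute $W(d)=\sum_a \alpha_a\gamma_{a\odot d}$ for every $d$ in total time $O(n^2)$. Then compute $N(d)=\sum_{b,c:\,b\oplus c=d}u_bv_c$ for every $d$ in time $O(n^2)$ by one scan of the $\oplus$ table. This gives $P_L=\sum_d W(d)N(d)$.

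For $P_R$, group the terms by $p=a\odot b$ and $q=a\odot c$. Define $U[a,p]=\sum_{b:\,a\odot b=p}u_b$ and $V[a,q]=\sum_{c:\,a\odot c=q}v_c$; both are built in $O(n^2)$ time. Then
\[
P_R=\sum_{p,q}\gamma_{p\oplus q}\,M[p,q],\qquad M=U^{\top}\mathrm{diag}(\alpha)\,V .
\]
Computing $M$ takes one matrix multiplication in $O(n^{\omega})$, and the final sum takes $O(n^2)$. The key point is that the $a$-dependence of the weight factors as $\alpha_a\gamma_s$, so the right-hand side becomes a single matrix product.

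\textbf{Lower bound.} I would reduce from triangle detection in a tripartite graph on parts $A,B,C$ of size $n$. Let $S=A\cup B\cup C\cup\{0,1\}$, which has size $O(n)$. Define $\odot$ and $\oplus$ as follows:
\begin{itemize}
\item For $a\in A$ and $b\in B$, set $a\odot b=b$ if $ab$ is an edge and $0$ otherwise.
\item For $a\in A$ and $c\in C$, set $a\odot c=c$ if $ac$ is an edge and $0$ otherwise.
\item For $b\in B$ and $c\in C$, set $b\oplus c=1$ if $bc$ is an edge and $0$ otherwise.
\item Every other entry of both tables is $0$.
\end{itemize}
The left-hand side $a\odot(b\oplus c)$ is always $0$, because $b\oplus c\in\{0,1\}$ and $x\odot 0=x\odot 1=0$ for every $x$.

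Now consider the right-hand side $(a\odot b)\oplus(a\odot c)$. It can be nonzero only if $a\in A$, since otherwise both operands are $0$. Given $a\in A$, the operands satisfy $a\odot b\in\{b,0\}$ and $a\odot c\in\{c,0\}$. The only nonzero entries of the $\oplus$ table are on $B\times C$. So the right-hand side equals $1$ exactly when $b\in B$, $c\in C$, and $ab$, $ac$, $bc$ are all edges. Therefore the structure is distributive iff the graph is triangle-free. An $O(n^{\omega-\epsilon})$ verifier would then detect triangles in $O(n^{\omega-\epsilon})$ time, refuting the Triangle Detection Hypothesis.

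\textbf{Main obstacle.} The step needing the most care is the lower bound gadget. Every entry of both Cayley tables, including the ones where elements play ``wrong'' roles (for example $b\in B$ in the first argument of $\odot$, or $B\times B$ inside $\oplus$), must be defined so that no spurious violation arises. The choice of a universal sink $0$ with $x\odot 0=x\odot 1=0$ is what makes this work, so I would check that case analysis exhaustively. The algorithmic half is comparatively routine once one sees the rank-one weighting $\alpha_a\gamma_s$.
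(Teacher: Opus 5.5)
Your proposal is correct and follows essentially the same route as the paper. The paper also runs Schwartz--Zippel on the same two degree-$4$ polynomials, evaluates the left side as a sum over $2$-paths in $O(n^2)$, and evaluates the right side as a weighted triangle sum $\mathrm{tr}(W^{PQ}W^{QR}W^{RP})$, which is the same quantity as your $\sum_{p,q}\gamma_{p\oplus q}\,(U^{\top}\mathrm{diag}(\alpha)V)[p,q]$. For the lower bound the paper uses the same gadget: $\odot$ returns its second argument on edges, $\oplus$ maps edges to a flag element, and an absorbing sink is used everywhere else. The only difference is that the paper states it on a general graph with elements $\infty,\Delta$, whereas you use a tripartite graph with elements $0,1$.
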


\subsection{Distributivity Checking is Triangle-Hard}

\begin{lemma}
Given a set $S$ of size $n$ with binary operations $\oplus$ and $\odot$, if there exists an algorithm deciding whether $\odot$ is distributive over $\oplus$ in time $T(n)$, then there exists an algorithm detecting a triangle in a graph on $n$ vertices in time $\bigO(T(n))$.
\end{lemma}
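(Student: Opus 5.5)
The plan is to reduce Triangle Detection in a tripartite graph to a single distributivity instance. I first turn $G=(V,E)$ into a tripartite graph with parts $A,B,C$, each a copy of $V$, where each of the pairs $(A,B)$, $(A,C)$, $(B,C)$ carries a copy of $E$. Then $G$ has a triangle iff there are $a\in A$, $b\in B$, $c\in C$ that are pairwise adjacent. I then build a ground set $S$ of size $O(n)$ and Cayley tables for $\oplus,\odot$, in time $O(n^2)$, such that the distributive law fails exactly at such triples. Since $T(n)\ge n^2$ (the algorithm must read its input), and since $T$ evaluated at $|S|=O(n)$ is $O(T(n))$ for polynomially bounded $T$, this gives the lemma.

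The construction makes the left-hand side $a\odot(b\oplus c)$ constant and lets the right-hand side $(a\odot b)\oplus(a\odot c)$ test all three edges. Take
\[
S = A\cup B\cup C\cup B^*\cup C^*\cup\{z,p,q\},
\]
where $B^*=\{b^*: b\in B\}$ and $C^*=\{c^*: c\in C\}$ are fresh copies. Define $\odot$ by:
\begin{itemize}
\item $a\odot b=b^*$ if $a\in A$, $b\in B$ and $\{a,b\}\in E$;
\item $a\odot c=c^*$ if $a\in A$, $c\in C$ and $\{a,c\}\in E$;
\item $x\odot y=q$ for every other pair $(x,y)\in S\times S$.
\end{itemize}
Define $\oplus$ by $b^*\oplus c^*=p$ whenever $\{b,c\}\in E$, and $x\oplus y=q$ for every other pair.

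The verification is a short case analysis over all $(x,y,w)\in S^3$.
\begin{itemize}
\item Left side: for every $y,w$ the sum $y\oplus w$ lies in $\{p,q\}$. Since $p,q\notin B\cup C$, we get $x\odot(y\oplus w)=q$ always.
\item Right side: it equals $p$ only if $x\odot y\in B^*$ and $x\odot w\in C^*$ with the underlying pair adjacent. That forces $x=a\in A$, $y=b\in B$, $w=c\in C$ with $\{a,b\},\{a,c\},\{b,c\}\in E$, i.e.\ a triangle. In every other case the right side is $q$.
\end{itemize}
So distributivity fails iff $G$ contains a triangle.

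The main obstacle is the design itself, not the check. Information about all three edges must reach one side of the identity, yet $b\oplus c$ can carry only $O(n)$ possible values and so cannot encode both $b$ and $c$. The fix is to route all edge tests through the right-hand side: $a\odot b$ and $a\odot c$ each record the \emph{other} endpoint (as $b^*$ or $c^*$) only when the edge to $a$ exists, and $\oplus$ then tests the $b$--$c$ edge. Meanwhile $\odot$ is made constant on everything $\oplus$ can output, so the left-hand side never varies. I would double-check that no triple with elements outside $A,B,C$ creates a spurious violation; this holds because only $B^*\times C^*$ produces $p$, and $\odot$ outputs land in $B^*$ or $C^*$ only from $A\times B$ or $A\times C$ respectively.
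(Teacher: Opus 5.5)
Your reduction is correct and is essentially the paper's. In both, $\oplus$ only ever outputs two flag values that $\odot$ sends to a default element, so the left side is constant. On the right, $\odot$ tests the edges at $a$ and forwards the other endpoint, and $\oplus$ tests the remaining edge and emits the distinguished flag exactly on triangles. The paper does this directly on $S=V\cup\{\Delta,\infty\}$ with $a\odot b=b$ and $a\oplus b=\Delta$ on edges. Your tripartite copies, the starred sets $B^*,C^*$ and the unused element $z$ are therefore unnecessary, though harmless, since they only blow up $|S|$ by a constant factor.
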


\begin{proof}
We reduce triangle detection to checking distributivity, i.e., whether for all $a,b,c\in S$ we have $a\odot(b\oplus c) = (a\odot b)\oplus(a\odot c)$.
Given a graph $G=(V,E)$ on $n$ vertices, define $S = V \cup \{\Delta, \infty\}$ and the operations:
\[
a\oplus b =
\begin{cases}
\Delta & \text{if } \{a,b\}\in E,\\
\infty & \text{otherwise,}
\end{cases}
\qquad
a\odot b =
\begin{cases}
b & \text{if } \{a,b\}\in E,\\
\infty & \text{otherwise.}
\end{cases}
\]
For all $x\in S$, let
\begin{align*}
    &\infty \oplus x = x \oplus \infty = \infty \odot x = x \odot \infty = \infty \\
&\Delta \oplus x = x \oplus \Delta = \Delta \odot x = x \odot \Delta = \infty.
\end{align*}
Intuitively, $\infty$ acts as an absorbing element, while $\Delta$ flags a triangle.

If any of $x,y,z$ equals $\infty$ or $\Delta$, both $x\odot(y\oplus z)$ and $(x\odot y)\oplus(x\odot z)$ equal $\infty$, so distributivity trivially holds. It remains to consider $x,y,z\in V$.

\begin{claim}
For all $a,b,c\in V$:
\begin{enumerate}
    \item $a\odot(b\oplus c)=\infty$.
    \item $(a\odot b)\oplus(a\odot c)=\infty$ iff $a,b,c$ do not form a triangle in $G$.
\end{enumerate}
\end{claim}

\begin{proof}
Since $b\oplus c\in\{\infty,\Delta\}$, and $a\odot\infty=a\odot\Delta=\infty$, (i) follows.

For (ii), if $a,b,c$ form a triangle, then $(a\odot b)\oplus(a\odot c)=b\oplus c=\Delta\neq\infty$.
Otherwise, if any of $\{a,b\},\{a,c\}$ is missing, one of $a\odot b,a\odot c$ is $\infty$, and hence $(a\odot b)\oplus(a\odot c)=\infty$.
If only $\{b,c\}$ is missing, then $b\oplus c=\infty$, yielding the same result.
\end{proof}

Hence, $(S,\oplus,\odot)$ fails to be distributive exactly when $G$ contains a triangle.
Constructing $S$ and its Cayley tables takes linear time, so running the distributivity checker on $(S,\oplus,\odot)$ detects triangles in $\bigO(T(n))$ time.
\end{proof}

\begin{corollary}\label{col:distributivity_lb}
Unless the Triangle Detection Hypothesis fails, there is no algorithm deciding whether $(A,\oplus,\odot)$ is distributive in time $\bigO(n^{\omega-\varepsilon})$ for any $\varepsilon>0$.
\end{corollary}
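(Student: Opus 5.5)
This follows directly from the preceding lemma. Suppose, for contradiction, that some algorithm decides distributivity of $(A,\oplus,\odot)$ with $|A|=n$ in time $T(n)=\bigO(n^{\omega-\varepsilon})$ for some $\varepsilon>0$. Take a graph $G=(V,E)$ on $n$ vertices. We build the structure from the lemma's proof: $S=V\cup\{\Delta,\infty\}$ with the given Cayley tables. This has $n+2$ elements and takes $\bigO(n^2)$ time to write down. The lemma shows that $(S,\oplus,\odot)$ is distributive if and only if $G$ is triangle-free. So running the assumed checker on $S$ decides triangle detection.

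Two bookkeeping points remain.

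\begin{itemize}
\item \emph{Instance size.} The instance has $n+2$ rather than $n$ elements. Since $(n+2)^{\omega-\varepsilon}=\bigO(n^{\omega-\varepsilon})$, this changes nothing. The same bound handles the $\bigO(T(n))$ statement of the lemma under the mild regularity $T(n+2)=\bigO(T(n))$, which holds for polynomial bounds.
\item \emph{Construction cost.} Writing the Cayley tables costs $\Theta(n^2)$. Because $\omega\ge 2$, one might worry that $n^2$ dominates $n^{\omega-\varepsilon}$ when $\varepsilon$ is close to $\omega-2$. This does not matter. An $\bigO(n^{\omega-\varepsilon})$ bound with $\omega-\varepsilon<2$ is impossible anyway, since the input alone has size $\Theta(n^2)$. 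So we may assume without loss of generality that $\omega-\varepsilon\ge 2$, and then the total running time is $\bigO(n^2+n^{\omega-\varepsilon})=\bigO(n^{\omega-\varepsilon})$.
\end{itemize}

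Together these give an $\bigO(n^{\omega-\varepsilon})$-time triangle detection algorithm, contradicting the Triangle Detection Hypothesis.

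If the checker is randomized, succeeding with high probability, the reduction inherits this. That is consistent with the usual form of the hypothesis, which rules out randomized algorithms as well. There is no real obstacle here: the only subtlety is the quadratic construction cost, resolved by the trivial input-size lower bound above.
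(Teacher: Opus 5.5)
Your proof is correct and follows the paper's route exactly: the corollary is an immediate consequence of the preceding reduction lemma, and your bookkeeping about the $n+2$ elements and the $\bigO(n^2)$ construction cost spells out what the paper leaves implicit. One justification is imprecise: ``the input has size $\Theta(n^2)$'' does not by itself rule out sub-quadratic time, since a decision algorithm need not read its whole input. The claim still holds here, because each Cayley-table entry of the reduced instance depends on at most one adjacency bit, and triangle detection needs $\Omega(n^2)$ adjacency queries even for randomized algorithms (e.g., distinguishing $K_{n/2,n/2}$ from $K_{n/2,n/2}$ plus one edge inside a part).
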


\subsection{Distributivity Checking via Triangle Counting}\label{sec:distributivity_alg}
In this section we provide the promised $\bigO(n^{\omega})$ algorithm for distributivity checking. \conferenceversion{}{In Section~\ref{sec:freivalds_distributivity} in the Appendix, we provide an alternative algorithm based on matrix multiplication and Freivalds' algorithm.}

Recall that $S$ is a set of $n$ elements, and we are given the binary operations $\oplus,\odot: S\times S \to S$ as input tables. The goal is to check whether the identity $a\odot(b\oplus c) = (a\odot b) \oplus(a\odot c)$ holds for all $a,b,c\in S$.

Consider the following two polynomials $f,g$ on $4n$ variables, $\{x_s\}_{s\in S},\{y_s\}_{s\in S},\{z_s\}_{s\in S},\{w_s\}_{s\in S}$:
\begin{align*}
 f &= \sum_{a,b,c\in S} x_a \, y_b \,  z_c \,  w_{a\odot(b\oplus c)},  \\
 g &= \sum_{a,b,c\in S} x_a \,  y_b \, z_c  \, w_{(a\odot b) \oplus(a\odot c)}.
\end{align*}

Observe that, $f-g =\sum_{a,b,c\in S} x_a \,  y_b \, z_c  \, (w_{a\odot(b\oplus c)}- w_{(a\odot b) \oplus(a\odot c)}) $ is the zero polynomial if and only if $a\odot(b\oplus c) = (a\odot b) \oplus(a\odot c)$ holds for all $a,b,c\in S$.
Therefore, our goal reduces to polynomial identity testing on $f-g$. We pick a prime field $\mathbb{F}_p$ (for sufficiently large $p$) and assign independently uniformly random values from $\mathbb{F}_p$  to the variables $\{x_s\}_{s\in S},\{y_s\}_{s\in S},\{z_s\}_{s\in S},\{w_s\}_{s\in S}$, and then evaluate $f-g$ on these values. If the result is non-zero, we return NO; otherwise we return YES. Since $f-g$ has degree four, by Schwartz--Zippel lemma, the probability of incorrectly reporting YES is at most $4/p$.
It remains to describe how to evaluate $f-g$. We will evaluate $f$ and $g$ separately and subtract the results.

We first describe how to evaluate $g$. Define an edge-weighted tripartite graph $(P,Q,R)$, where $P,Q,R$ are vertex sets each of size $n$ and labeled by elements of $S$. Let $p_s\in P$ denote the vertex in $P$ labeled by $s\in S$, and similarly define vertices $q_s\in Q,r_s\in R$. Now we define the edges in the graph (note that we allow parallel edges):
\begin{itemize}
    \item For every pair of $a\in S,b\in S$, add an edge between $p_a\in P$ and $q_{a\odot b}\in Q$ with weight $x_ay_b$.
    \item For every pair of $a'\in S,c\in S$, add an edge between $p_{a'}\in P$ and $r_{a'\odot c}\in R$ with weight $z_c$.
    \item For every pair of $e\in S, f\in S$, add an edge between $q_{e}\in Q$ and $r_f\in R$ with weight $w_{e\oplus f}$.
\end{itemize}

Three edges with one of each type described above form a triangle if and only if the vertices coincide, $p_a=p_{a'},  q_e  = q_{a\odot b} , r_f = r_{a'\odot c}$, i.e., $a=a', e = a\odot b, f = a'\odot c$.   Then, the weight of this triangle (namely, the product of the weights of its three edges) equals
\[(x_ay_b)\cdot z_c \cdot w_{e\oplus f} = x_ay_b z_c w_{(a\odot b)\oplus (a\odot c)}.\]
The triangles are in bijection with the triples $(a,b,c)\in S\times S\times S$. Therefore,
the polynomial $g$ equals the sum of the weights of all triangles in the graph.

It remains to compute the total weight of all triangles in the graph $(P,Q,R)$.  For a pair of vertices $p_i\in P,q_j\in Q$, let $W^{PQ}_{i,j}$ denote the total weight of edges connecting $p_i,q_j$ (there could be zero, one, or multiple such edges). Similarly define $W^{QR}_{j,k}$ and $W^{RP}_{k,i}$. Then, clearly, the total weight of all triangles equals the trace of the matrix product $W^{PQ}W^{QR}W^{RP}$, which can be computed in $O(n^{\omega})$ time.

Evaluating polynomial $f$ is even simpler and we proceed in a very similar way. Again we create a tripartite graph $(P,Q,R)$ with parallel edges allowed. Now for every pair $b\in S, c\in S$ we add an edge between $p_b\in P$ and $q_{b\oplus c}\in Q$ with weight $y_bz_c$; and for every pair $e\in S, a\in S$ we add an edge between $q_e\in Q$ and $r_{a\odot e}\in R$ with weight $x_aw_{a\odot e}$. Then the polynomial $f$ equals the sum of all length-2 paths in $P\times Q\times R$, which can be calculated in $O(n^2)$ time.

Finally, we choose $p=\Omega(n)$. As a single assignment of variables $\{x_s\}_{s\in S},\{y_s\}_{s\in S},\{z_s\}_{s\in S},\{w_s\}_{s\in S}$ from $\mathbb{F}_p$ might give us false equality between non-equal polynomials, we need to repeat the above procedure $O(1)$ times in order to boost success probability. This yields the following claim:

\begin{lemma}
  Given a set $S$ of size $n$ together with two binary operations $\oplus, \odot: S\times S \to S$, there exists a randomized algorithm checking if the algebraic structure $(S,\oplus, \odot)$ is distributive in $O(n^{\omega})$ time that succeeds with high probability.
\end{lemma}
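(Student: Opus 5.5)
The plan is to assemble the pieces described just above into a single randomized procedure and then bound its error probability and running time. First, the lemma is reduced to polynomial identity testing. Define $f$ and $g$ over the $4n$ variables $\{x_s\},\{y_s\},\{z_s\},\{w_s\}$. Then $f-g$ is the zero polynomial iff distributivity holds. The reason is that the monomials $x_ay_bz_cw_d$ for distinct $(a,b,c,d)$ are distinct. So if some triple $(a,b,c)$ violates the identity, the coefficient of $x_ay_bz_cw_{a\odot(b\oplus c)}$ in $f-g$ is $+1$, since no other triple contributes this monomial to $g$. Conversely, if the identity holds everywhere, the two sums agree term by term.

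Next, we pick a prime $p$ with $p=\Theta(n)$ and $p>8$, say; one can be found deterministically in $\tilde O(n)$ time by sieving. We then sample all variables uniformly from $\mathbb{F}_p$. The Schwartz--Zippel lemma bounds the false-YES probability of one trial by $4/p$. A NO answer is never wrong, since a nonzero evaluation certifies that $f-g\neq 0$.

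Then we evaluate $g$ and $f$ as in the text. For $g$, we build the three $n\times n$ weight matrices $W^{PQ},W^{QR},W^{RP}$ over $\mathbb{F}_p$ in $O(n^2)$ time by accumulating edge weights; parallel edges are simply added. We then compute $\mathrm{tr}(W^{PQ}W^{QR}W^{RP})$ with two matrix products in $O(n^\omega)$ field operations. Field operations cost $O(1)$ on the word RAM since $p=O(n)$.

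The one step that needs real care is checking that this trace is exactly $g$. We must be sure that closed walks $p_i\to q_j\to r_k\to p_i$ using one edge of each type correspond bijectively to triples $(a,b,c)$. An edge of the first type is indexed by $(a,b)$ and fixes $i=a$ and $j=a\odot b$. A second-type edge, indexed by $(a',c)$, must have $a'=i=a$, and it fixes $k=a\odot c$. The third-type edge between $q_j$ and $r_k$ is unique, with weight $w_{j\oplus k}$. So the product of weights is $x_ay_bz_cw_{(a\odot b)\oplus(a\odot c)}$, as claimed. For $f$, we sum over $e$ the quantity $\bigl(\sum_{b,c:\,b\oplus c=e} y_bz_c\bigr)\cdot\bigl(\sum_a x_aw_{a\odot e}\bigr)$. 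Both factors are computable for all $e$ in $O(n^2)$ total.

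Finally, each trial takes $O(n^\omega)$ time. To obtain high probability $1-n^{-c}$, we either repeat the trial $O(c)$ times, since each trial errs with probability at most $4/p=O(1/n)$, or choose $p\ge n^{c+1}$. Word operations remain $O(1)$ in the latter case as long as $p$ is polynomial in $n$. The total running time is $O(n^\omega)$.
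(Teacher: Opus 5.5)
Your proposal is correct and follows essentially the same route as the paper. The common steps are: Schwartz--Zippel identity testing of $f-g$ over $\mathbb{F}_p$; evaluating $g$ as the total weight of triangles via $\mathrm{tr}(W^{PQ}W^{QR}W^{RP})$ in $O(n^\omega)$ time; evaluating $f$ in $O(n^2)$ time by bucketing over $e=b\oplus c$, which is exactly the paper's length-2-path sum; and amplifying with $O(1)$ repetitions for $p=\Theta(n)$.

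Your explicit coefficient argument for why a violating triple makes $f-g$ nonzero adds rigor the paper only asserts. One small caveat: your alternative of taking $p\ge n^{c+1}$ would need a (randomized) primality test rather than sieving to find the prime quickly. Your repetition option already suffices, so this does not affect the proof.
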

\noindent
The above lemma, together with Corollary~\ref{col:distributivity_lb} concludes the proof of Theorem~\ref{th:distributivity}.

\section{Technical Overview: Beyond Distributivity Checking} \label{sec:overview}
Beyond distributivity checking, our second main contribution is the following theorem, providing the complexity classification of many identities into three regimes: (1) quadratic time, (2) Triangle Detection time with matching upper and conditional lower bounds, and (3) cubic time with matching lower bounds based on the 4-AP Hypothesis.
\begin{theorem}[Trichotomy of Natural Identities]\label{thm:main-theorem}
    For any two expressions $f(a,b,c)$, $g(a,b,c)$, such that $f$ is not a subexpression of $g$ and vice versa, given an algebraic structure $(S,\oplus_1,\dots, \oplus_p)$, checking if $f(a,b,c) = g(a,b,c)$ for each $a,b,c\in S$ falls precisely in one of three regimes.
    \begin{enumerate}
        \item It can be checked in randomized $\bigO(n^2)$ time if both $f,g$ can be written as $I(H(a,b),c)$ (or any symmetric form), for some expressions $H,I$.
        \item  It can be checked in randomized $\bigO(n^\omega)$ time if both $f,g$ can be written as $K(J(I(a,b),c), H(a,b))$ (or any symmetric form), for some expressions $H,I,J,K$. Any $\bigO(n^{\omega-\varepsilon})$ algorithm for identities in this regime would refute Triangle Detection Hypothesis.
        \item It can be checked (trivially) in $\bigO(n^3)$ time. Any $\bigO(n^{3-\varepsilon})$ algorithm would refute the $4$-AP Detection Hypothesis.
    \end{enumerate}
\end{theorem}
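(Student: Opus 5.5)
We prove the three regimes separately and then show they are exhaustive and mutually exclusive.

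\textbf{Regime 1 (upper bound).} Suppose both sides have the form $I(H(a,b),c)$ for a common split of the variables into the pair $\{a,b\}$ and the singleton $c$; other splits are handled by permuting variables. Following the Rajagopalan--Schulman approach, we compute tables for $H_f(a,b)$ and $H_g(a,b)$ in $O(n^2)$ time, since each is a constant-size expression in two variables evaluated over all $n^2$ pairs. We then introduce random weights $x_a,y_b,z_c$ over $\mathbb{F}_p$ and compare
\[F=\sum_{a,b,c} x_a y_b z_c\, w_{I_f(H_f(a,b),c)}\]
with the analogous sum $G$ for $g$. Grouping by $h=H_f(a,b)$, we set $u_h=\sum_{(a,b):H_f(a,b)=h} x_a y_b$, which takes $O(n^2)$ time. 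Then $F=\sum_{h,c} u_h z_c\, w_{I_f(h,c)}$, and $I_f(h,c)$ is tabulable in $O(n^2)$ (it may itself contain $c$ several times). We compute $G$ the same way. By Schwartz--Zippel, $F=G$ identically exactly when the identity holds, so $O(1)$ repetitions suffice. The crux is that a common split of variables exists for both sides.

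\textbf{Regime 2 (upper bound).} Here $f=K(J(I(a,b),c),H(a,b))$. We generalize the triangle construction of Section~\ref{sec:distributivity_alg}. Take parts $P$ indexed by $(u=I(a,b))$, $Q$ indexed by $v=H(a,b)$, and $R$ indexed by $t=J(u,c)$. For the $P$--$Q$ edges we aggregate weights $\sum_{(a,b):I=u,H=v}x_ay_b$. The $P$--$R$ edges join $u$ to $J(u,c)$ with weight $z_c$, and the $Q$--$R$ edges join $v$ to $t$ with weight $w_{K(t,v)}$. The total triangle weight, computed as a trace of a matrix product, equals the polynomial for $f$; we do the same for $g$. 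This costs $O(n^\omega)$ time. For the case where one side lies in Regime~1, we simply use the simpler evaluation.

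\textbf{Regime 2 (lower bound).} Whenever the identity is not in Regime~1, we embed the triangle gadget from the distributivity lemma. Some side genuinely requires pairs $(a,b)$, $(a,c)$, $(b,c)$ to interact, or needs $a,b$ jointly twice with $c$. We instantiate the operations appearing in that side so that the ``triangle'' structure survives, for example by setting $\odot$ to ``return $b$ if edge, else $\infty$''. Every other operation is set to an absorbing projection, so that the other side evaluates constantly to $\infty$. This yields hardness exactly as in the distributivity lemma.

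\textbf{Regime 3 and exhaustiveness.} First we argue combinatorially, by case analysis on parse trees, that if neither side is a subexpression of the other and some side admits no form $K(J(I(a,b),c),H(a,b))$ under any variable permutation, then that side contains a ``three-pairwise'' pattern. Such a side can be reduced, by fixing operations to projections, to a core identity like \eqref{eq:intermediate-identity} or \eqref{eq:hardest}. We then invoke the reductions from 4-AP Detection to these cores, developed later in Section~\ref{sec:4APhard-identities}: elements of $S$ encode residues in $[N]$, and $\odot,\oplus$ encode the sums $a+b$ and $a+2b$ so that a violation corresponds to a 4-AP. The requirement that neither side is a subexpression of the other is what lets us force the other side to be a distinct constant.

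The main obstacle is this structural case analysis: showing that every identity outside the two normal forms projects onto one of finitely many 4-AP-hard cores, and that the projections do not accidentally collapse the ``hard'' side. I would first enumerate minimal trees by depth and verify each core individually.
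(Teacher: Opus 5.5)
Your upper bounds are correct and essentially the paper's (Lemma~\ref{le:evaluate_polynomials}). One correction: Regime~1 needs no common variable split. $P_f$ and $P_g$ are evaluated separately, each with its own grouping, and this is unavoidable anyway, since associativity $(ab)c=a(bc)$ has no common split. The lower-bound half, however, has two genuine gaps.

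\textbf{The 4-AP reduction has the wrong scaling.} If elements of $S$ encode numbers in $[N]$, then $|S|=\Theta(N)$. An $O(|S|^{3-\varepsilon})$ identity checker then yields only an $O(N^{3-\varepsilon})$ algorithm for 4-AP Detection, which is slower than the trivial $O(N^2)$ one and refutes nothing. To turn an $N^{2-o(1)}$ bound into an $|S|^{3-o(1)}$ bound, you must spread the roughly $N^2$ candidate progressions over instances of size $|S|\approx\sqrt N$. A 4-AP has only two degrees of freedom, so three variables ranging over all of $[N]$ waste one.

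This is exactly what the paper's intermediate problems achieve. 4-AP Detection on $\{0,\dots,n\}$ is reduced to $O(\sqrt n)$ multichromatic Square (or T) Detection instances on $O(\sqrt n)\times O(\sqrt n)$ grids, one per offset $\delta$. The labellings, such as $M_1^\delta(i,j)=1$ iff $6Ni-3j\in A_1$, are chosen so that every 4-AP appears as the four corners of a square in some instance (Lemma~\ref{lem:square-chrom-hardness}). Each grid instance is then encoded into a core identity over a set of size $O(\sqrt n)$ via a linear change of variables such as $a=j$, $b=i-j$, $c=i+j+k$. Each operation checks one matrix entry and passes on a linear combination of its operands (Lemma~\ref{lem:4aphard1234}). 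The total cost is $\sqrt n\cdot(\sqrt n)^{3-\varepsilon}=n^{2-\varepsilon/2}$. The cores actually needed are the six expressions of Lemma~\ref{lemma:core-identities}, two of which require T Detection rather than Square Detection.

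\textbf{The structural reduction lacks its mechanism and its invariant.} ``Fixing operations to projections'' cannot work in general. The same operation symbol usually occurs at several positions that must behave differently, and on both sides of the identity (in distributivity, both $\oplus$ and $\odot$ do). So you cannot let one side simulate a gadget while setting ``every other operation'' to be absorbing. Your suggested table ``return $b$ if edge, else $\infty$'' already fails on the paper's example $(c(abb)c)+(babb)$. Without a rule for $x\cdot x$ it sends $b\cdot b$ to $\infty$; with $x\cdot x=x$ the two summands evaluate to $c$ and $b$, so the edge $\{a,c\}$ is never tested, whatever $+$ does.

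The paper's tool is the subexpression embedding gadget (Definition~\ref{def:subexpression_embedding}). Each element carries the symbolic subexpression it represents, so every occurrence of an operation can be programmed independently. Anything that leaves the parse tree of $f$, in particular the other side of a natural pair, evaluates to $\infty$.

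On top of this, the similarity relation and the collapsing lemma for dissimilar two-variable subexpressions (Lemma~\ref{lemma:collapsing-non-similar}) place the regime boundaries. Your informal invariants do not capture them:
\begin{itemize}
\item $((a\cdot b)\cdot c)\oplus((a\cdot b)\cdot c)$ lies in Regime~1, whereas $((a\cdot b)\cdot c)\oplus((b\cdot a)\cdot c)$ is not even of the Regime~2 form. The only difference is whether the two $\{a,b\}$-subexpressions are similar.
\item $(a\odot b)\oplus(a\odot c)$ (Regime~2) and $((a\odot b)\oplus(a\odot c))\oplus a$ (a 4-AP-hard core) have identical pairwise interactions. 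So no ``three-pairwise pattern'' separates Regimes~2 and~3.
\end{itemize}
The case analysis you postpone is precisely Proposition~\ref{prop:triangle-hardness} and Lemmas~\ref{lemma:4-ap-two-maximal-subtrees}--\ref{lemma:4-ap-one-maximal-subtree-triangle}, and it rests on exactly these tools.
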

In this section we give an overview of the main ideas behind this trichotomy: In \Cref{sec:overview:sec:algos} we quickly comment how our approach for distributivity checking extends to other identities in regimes~(1) and~(2). Then, in \Cref{sec:overview:sec:ap}, we explain how some selected identities from regime (3) can be shown to be 4AP-hard. Finally, in \Cref{sec:overview:sec:classes} we show how to obtain the full classification into the three regimes.

\subsection{Subcubic Algorithms} \label{sec:overview:sec:algos}
As all identities on three variables can be verified with a brute-force approach in $\bigO(n^3)$ time, we need to focus only on the quadratic and triangle-time regimes.

It turns out that our approach for verifying Distributivity almost immediately generalizes to verifying all identities of the form $f(a,b,c)=g(a,b,c)$ where $f$ and $g$ are from the two sub-cubic regimes. Again we reduce this question to verification of identity of polynomials defined separately for $f$ and $g$; and apply Schwartz--Zippel lemma.
In order to evaluate the  polynomials efficiently, it suffices to put appropriate weights on edges in a 3-partite graph and count triangles or 2-paths, which can be done in $\bigO(n^\omega)$ or $\bigO(n^2)$ time, respectively.
We provide the details in \conferenceversion{the full version of the paper}{Section~\ref{se:algorithms-main}}.

\subsection{Lower Bounds} \label{sec:overview:sec:ap}
From now on we focus on the conceptually very different challenge to establish fine-grained lower bounds for the identities in regime~(3) based on the 4-AP Hypothesis. We follow a clean approach: to identify a well-suited \emph{intermediate} problem, to show that this intermediate problem is 4-AP-hard, and then to reduce it to checking the hard identities.

\paragraph{Intermediate Problem: Square Detection}
The \emph{Square Detection} problem is to decide if a given $0$-$1$-matrix $M$ contains a \emph{square} \makebox{$M(i, j) = M(i, j + k) = M(i + k, j + k) = M(i + k, j) = 1$}. This problem was communicated to us by Adam Polak~\cite{Polak_personal_communication}. It turns out to be a convenient intermediate problem in our reduction chain, but is arguably mathematically natural and interesting in its own right. Square Detection can be solved in time $O(n^3)$ for $n \times n$ matrices (by enumerating all triples $i, j, k$), and no faster algorithm is known. This is to be expected, as in the next paragraph we show that it is 4-AP-hard.

\paragraph{Reduction from 4-AP Detection to Square Detection}
First observe that 4-AP and Square Detection have different complexities, hence we cannot expect a reduction that transforms one 4-AP instance into one equivalent Square Detection instance. Instead, it is reasonable to aim to reduce a given 4-AP instance $A \subseteq \set{0, \dots, n}$ to $N = \sqrt{n}$ Square Detection instances, each of size $N \times N$. If Square Detection is in subcubic time $O(N^{3-\epsilon})$, then such a reduction would imply that 4-AP Detection is in subquadratic time $O(N \cdot N^{3-\epsilon}) = O(n^{2-\epsilon/2})$, contradicting the 4-AP Hypothesis.

We now describe the idea to construct one such Square Detection instance. Suppose that there is a \emph{labeling} of the $N \times N$ grid, $\ell : N \times N \to \set{0, \dots, n}$, which satisfies that labels of the corners of every square form a 4-AP. Then consider the Square Detection instance $M$, where we place $1$-entries at exactly those grid points whose labels appear in $A$. This construction is sound: If $M$ is a yes instance, then~$A$ contains a $4$-AP. However, the construction is not complete-we only capture the $4$-APs in~$A$ that appear at the labels of the corners of some square.

Hence, to implement a complete reduction along these lines the crucial question is: Are there $N$ labelings such that \emph{every} 4-AP in $\set{0, \dots, n}$ appears as a square in at least one of these labelings? To our initial surprise such labelings exist indeed, at least modulo some technical annoyances. Though tricky, our final proof is unexpectedly (and pleasantly) short;\conferenceversion{the proof can be found in the full version of the paper}{ see \cref{sec:4APhard-identities}}.

\paragraph{Reduction from Square Detection to Identity Checking}
Let us finally return to the identity checking problem. Before we consider general identities, we focus on obtaining lower bounds for the following concrete, representative identity:
\begin{equation} \label{eq:overview}
    ((a\odot b) \oplus (a \odot c)) \oplus c = \infty.
\end{equation}
Note that this is exactly identity~\eqref{eq:intermediate-identity} discussed in the introduction (up to renaming ``$0$'' to ``$\infty$'').

We will describe how to reduce a Square Detection instance $M$ to checking this identity. In fact, by a simple trick it is equivalent to reduce to the augmented identity
\begin{equation} \label{eq:overview-augmented}
    \underbrace{((a\oplus_1 b) \oplus_3 (a \oplus_2 c)) \oplus_4 c}_{f(a, b, c)} = \infty,
\end{equation}
where $\oplus_1, \oplus_2, \oplus_3, \oplus_4$ are four different operations that we are free to define. Consider the following creative reduction. Intuitively, we will identify the three variables $a, b, c$ with the three indices $i, j, k$ via the following correspondence:
\begin{align*}
    a &= i, \\
    b &= i - j, \\
    c &= i + j + k.
\end{align*}
Then our goal is to assign the four operations $\oplus_1, \oplus_2, \oplus_3, \oplus_4$ in such a way that $f(a, b, c) \neq \infty$ if and only if the four constraints $M(i, j) = 1$, $M(i, j + k) = 1$, $M(i + k, j + k) = 1$, and $M(i + k, j) = 1$ are satisfied.

Focus on the first constraint, $M(i, j) = 1$. To encode this constraint into $\oplus_1$ we ensure that $a \oplus_1 b = \infty$ whenever $M(a, a - b) = 0$. We further make sure that the other operations preserve $\infty$ (i.e., $\infty \oplus_3 x = \infty$) so that whenever the constraint $M(i, j) = 1$ is violated we immediately satisfy the identity $f(a, b, c) = \infty$. We can similarly encode the second constraint $M(i, j + k)$ into $\oplus_2$.

Encoding the third constraint, $M(i + k, j + k) = 1$, into $\oplus_3$ is slightly more interesting. Note that $\oplus_3$ does not directly depend on the variables in $f$, but rather on the outputs of the operations $a \oplus_1 b$ and $a \oplus_2 c$. But we have not yet defined these operations---safe for the cases where already one of the first two constraints are falsified. We can thus carefully define the operations $\oplus_1, \oplus_2$ in such a way that $\oplus_3$ gets access to $i + k$ and $j + k$, as desired. Specifically, by picking $a \oplus_1 b := b = i - j$ and $a \oplus_2 c := c - a = j + k$ we encode the constraint $M(i + k, j + k) = 1$ by ensuring that $x \oplus_3 y = \infty$ whenever $M(x + y, y) = 0$. The same game can be played to encode the final constraint into $\oplus_4$.

Overall we find it fascinating how this idea relates algebraic identity testing to a problem as seemingly unrelated as Square Detection.

\subsection{Classification} \label{sec:overview:sec:classes}
Before we pinpoint the complexity of \emph{all} identities, let us first focus on a simpler class which captures the main difficulties, and can be seen as the core of our trichotomy. Intuitively, we restrict our attention to testing identities where one side is a \emph{constant} symbol, say $f(a, b, c) = \infty$. Formally, the \emph{Constant Term Identity Checking} (CTIC) is to decide if $f(x,y,z) = f(x',y',z')$ for all triples ${(x,y,z), (x',y',z')\in S^3}$. 
At the end of this overview we describe how to generalize to non-constant identities.

In the next paragraphs we provide an overview of our heavy tools and techniques required for establishing the trichotomy of CTIC.
Although obtaining lower bounds for some canonical identities is somewhat direct from the considered hypotheses, we need more structural insights in order to classify general (constant-term) identities.


\paragraph{Subexpression Embedding Gadget}
The first such tool is a convenient \emph{bookkeeping} trick, which allows us to assume that each operation does not only compute its value, but also carries information encoding the full history of operations in the subtree below. This is exactly the trick which allowed us to consider identities \eqref{eq:overview-augmented} rather than \eqref{eq:overview}.

More precisely, for any fixed expression $f(a,b,c)$, given an algebraic structure $(S,\oplus_1,\dots, \oplus_p)$, we can construct an algebraic structure $(\tilde S,\tilde \oplus_1,\dots, \tilde\oplus_p)$ such that the following properties are satisfied.
\begin{enumerate}[label=(\roman*)]
    \item \emph{Bookkeeping:} For any triple $x,y,z\in S$, and any subexpression $u(a,b,c)$ of $f$, $\tilde u(x,y,z) = \left(u(x,y,z), \mathbf{u}\right)$, where $\mathbf{u}$ is a symbolical expression representing $u$. 
    Moreover, for inputs $(x,y,z)\not \in S^3$, we have $\tilde f(x,y,z) = \infty$.
    \item \emph{Efficiency:} Given $(S,\oplus_1,\dots, \oplus_p)$, we can construct $(\tilde S, \tilde\oplus_1,\dots, \tilde\oplus_p)$ deterministically in time $\bigO(|S|^2)$ (in particular, $|\tilde S| = \bigO(|S|)$).
\end{enumerate}
Equipped with this tool, we consider the generalization of the CTIC problem that we call \emph{Multichromatic Constant Term Identity Checking Problem} (MCTIC): given $(S,\oplus_1,\dots, \oplus_p)$, and $A,B,C\subseteq S$, decide if $f(x,y,z) = f(x',y',z')$ for all triples $x,x'\in A$, $y,y' \in B$, $z,z'\in C$.
Note that setting $A= B =C = S$ yields precisely the CTIC problem.
What's more, the Subexpression Embedding Gadget can be used to prove that the complexity of these two problems is equal (up to a constant factor), hence it is sufficient to classify the complexity of MCTIC problem.

\paragraph{Collapsing Lemmas} One of the crucial tools that we use for our classification are the so-called \emph{Collapsing Lemmas}.
Intuitively, we say that an expression $f$ \emph{collapses} into $g$ if $f$ can be used to simulate $g$.
More precisely, if we are given an input to $g$, we can construct an input for $f$, such that they always agree (i.e. $f(x,y,z) = g(x,y,z)$ for every triple $x,y,z$).
Throughout the paper we prove several Collapsing Lemmas. At a high level they state that if two subexpressions $u,w$ of $f$ satisfy certain properties, they can be independently collapsed into much simpler expressions, while preserving the desired structure of $f$.

A particularly powerful Collapsing Lemma (see \conferenceversion{full version for details}{\Cref{lemma:collapsing-non-similar}}) relies on the following notion of \emph{similar} expressions. We say that $f(a,b,c)$ is similar to $g(a,b,c)$ (written $f\sim g$) if there exists an expression $X(a,b,c)$ as well as expressions $H_f, H_g$ such that $f(a,b,c) \equiv H_f(X(a,b,c))$ and $g(a,b,c) \equiv H_g(X(a,b,c))$---that is, $f$ and $g$ can be decomposed into a common subexpression~$X(a,b,c)$. We note that $\sim$ is an equivalence relation\conferenceversion{}{ (see Lemma~\ref{lemma:equivalence-of-similarity})}.
The Collapsing Lemma states that if $f(a,b,c)$ contains two \emph{dissimilar} subexpressions $I(a,b) \not\sim J(a,b)$ that both depend on two variables $a, b$, then, for arbitrary given operation $\odot$, we can collapse $I(a,b)$ into $a\odot b$ and $J(a,b)$ into $b$ (or vice versa) independently. This allows us to dramatically simplify formulas throughout: For instance, the overly complicated expression $(a(b + (a - a)) \oplus b) \otimes (a \odot b)$ might as well be replaced by $b \otimes (a \odot b)$.

In the following paragraph we give an overview of how this Collapsing Lemma can be used to prove the Triangle Detection hardness of the CTIC for the expressions that cannot be written as $H(G(a,b),c)$ for any expressions $G,H$, by highlighting one example of such expression as a case in point.
For details and full hardness classification of such expressions we refer the reader to \conferenceversion{the full version of the paper.}{Proposition \ref{prop:triangle-hardness}.}

\paragraph{Triangle-Time Regime}
Consider the following algebraic expression: $f(a,b,c)\equiv (c(abb)c) + (babb)$ (where by $ab$ we understand $a\cdot b$, and by e.g.\ $abc$ we understand $(a\cdot b)\cdot c$).
It is easy to see that for no expressions $G,H$, can $f$ be written as $H(G(a,b),c)$ (or any symmetric form).
Moreover, we can observe that $abb\not\sim babb$.
Hence, by the Collapsing Lemma presented above, we can collapse $abb$ and $babb$ into $b$ and $a\cdot b$.
Hence, $f$ is collapsed into $f'(a,b,c) \equiv (cbc) + (a\cdot b)$.
Given a graph $\Gamma = (V,E)$, we can now set the operation $\cdot$ as follows.
\[
x+y = x\cdot y = \begin{cases}
    x & \text{if $x=y$}\\
    x & \text{if $\{x,y\}\in E$}\\
    \infty & \text{otherwise}
\end{cases}
\]
where $\infty$ is an element that we choose from $S$ that corresponds to an isolated vertex in $\Gamma$ (hence $\infty \cdot x = x\cdot \infty = \infty$, same for $+$).
It is now easy to verify that the expression $f'(a,b,c)$ evaluates to $\infty$ if and only if the triple $a,b,c$ does not form a triangle, and to $c$ otherwise.

\paragraph{Cubic-Time Regime} Recall that earlier in this section we focused on a concrete expression $f(a,b,c)\equiv ((a\odot b) \oplus (a\odot c)) \oplus c$, for which we argued that, assuming the $4$-AP Hypothesis, the CTIC on this expression falls into the Cubic-Time Regime.
In our full proof we similarly consider the following six important expressions, and show similar hardness results.
\begin{align*}
    f_1(a,b,c) &= \big ((a\oplus_1 b)\,  \oplus_3 \, (a\oplus_2 c)\big ) \oplus_4 c, \\
    f_2(a,b,c) &=  (a\oplus_1 b)\, \oplus_4\, \big ((a\oplus_2 c)\oplus_3 b \big), \\
    f_3(a,b,c) &=  \big (\big ((a\oplus_1 b)\oplus_2 c\big )\oplus_3 a\big ) \oplus_4 b, \\
    f_4(a,b,c) &=  \big (\big ((a\oplus_1 b)\oplus_2 c\big )\oplus_3 a\big ) \oplus_4 c, \\
    f_5(a,b,c) &=  \big ((a\oplus_1 b)\oplus_3 (a\oplus_2 c)\big )\oplus_4 a, \\
    f_6(a,b,c) &= (a\oplus_1 b) \oplus_4 \big (a\oplus_3 \big (c\oplus_2 (a\oplus_1 b)\big )\big).
\end{align*}
In fact, for four out of these six expressions, an adaptation of the reduction from Square Detection as described above is possible, but for the remaining two such an approach seems not to be enough.
To overcome this technical barrier, we come up with yet another interesting intermediate 4-AP-hard problem that we call \emph{T Detection}.

Perhaps it seems odd that we specifically consider six out of the infinitely many hard identities. But as it turns out, building on the tools introduced earlier, we show that \emph{every} expression $g(a,b,c)$ that does not fall in the Triangle-time regime can be collapsed into one of these six special functions $f_1, \dots, f_6$.

As a case in point, we sketch the proof for one specific expression here:
\[g(a,b,c) \equiv (bba) + (aba + (c(bba))c)\]
Here, for simplicity we understand $ab$ to be $a\cdot b$, $aba$ to be $(a\cdot b)\cdot a$, and we understand the $\cdot$ operation to have priority over $+$ operation (as usual). It is easy to verify that $g$ cannot be written as $J(I(H(a,b),c), G(a,b))$ (or any symmetric shape) for any expressions $G,H,I,J$.
We can first apply the Collapsing Lemma to collapse (independently) the subexpressions $bba$ and $aba$ into $a\oplus_1 b$ and $b$ respectively.
This implies that $g$ can be collapsed into the following expression.
\[
g'(a,b,c) \equiv (a\oplus_1 b) + (b + (c(a\oplus_1 b))c)
\]
Now by using the Subexpression Embedding Gadget and Chromaticity, we can make sure that $c(a\oplus_1 b) = a\oplus_1 b$.
In particular, this collapses $g$ further into
\[
g''(a,b,c) \equiv (a\oplus_1 b) + (b + (a\oplus_1 b)c).
\]
It is now straightforward to see that we can obtain $f_6$ from $g''$.
Of course, we omitted many details here and only considered a single cherry-picked expression. We refer the reader to the full version of the paper for more general proofs with all the details included. 

\paragraph{Generalization to Non-Constant Identities}
The only remaining step towards proving our lower bounds is to show an efficient reduction from CTIC to the respective identity checking.
In particular, informally, we show that in almost all cases, verifying $f=g$ is at least as hard as the more difficult of the two CTIC problems for $f$ and $g$.

The strategy towards proving this is to add an element to our set $S$ that we call $\infty$, such that ${f(\infty, \infty, \infty) = \infty = g(\infty, \infty, \infty)}$. 
It then suffices to show that we can modify any given algebraic structure $(S,\oplus_1,\dots, \oplus_p)$ in such a way that for any triple $x,y,z$, the value $f(x,y,z)$ remains unchanged, while $g(x,y,z)= \infty$ (and vice versa) for any natural pair of expressions $f,g$. This gives a canonical reduction from the Constant Term Identity Checking to (natural) Identity Checking and thus concludes our classification task.

\paragraph{Strong Zero Triangle Hardness}
Finally, we outline how to show \Cref{thm:hardest,th:dist-counting}, our hardness results based on the Strong Zero Triangle Hypothesis.

Let Small-Weight Zero Triangle denote the problem of detecting a zero-weight triangle in an $n$-node graph with edge weights in $\{-n, \dots, n\}$. The current best algorithm for detecting zero-weight triangles in graphs with weights $\{-M, \dots, M\}$ runs in time $O(Mn^\omega)$ \cite{AlonGM97}, thus a reasonable hypothesis is that Small-Weight Zero Triangle cannot be solved in subcubic time $O(n^{3-\epsilon})$. This is known as the \emph{Strong Zero Triangle Hypothesis}~\cite{AbboudBKZ22}. In \conferenceversion{the full version of the paper}{Section \Cref{sec:assumptions}}, we will review this and other hardness hypotheses in detail.

Both theorems rely on the simple bookkeeping trick introduced previously.
Let us first talk about how to prove the hardness of the first identity $\left((a\odot b)\oplus(a\odot c)\right)\oplus(b\odot c)=0.$
The simple idea of encoding edge weights into the operation $x\odot y$ works. In order to add the edge weights, we define $x \oplus y$ to return the sum of edge weights if they exist or some annihilator element otherwise. We incorporate a second variable, which tracks how many times we performed the $\oplus$ operation, so that the final $\oplus$ operation performs a last equality to zero check.

In order to show the small-weight zero triangle hardness of counting distributive triples, we proceed as follows.
We want to show hardness of the task of counting $x,y,z\in S$ such that $x\odot(y\oplus z) = (x\odot y) \oplus (x\odot z)$.
We like to encode a zero triangle $u,v,w \in V$ in such a way that $x_u,y_v, z_w \in S$ form a distributive triple that is $x_u \odot (y_v \oplus z_w) = (x_u \odot y_v) \oplus (x_u \odot z_w)$ holds.
The high level idea of the construction now consists of the left side of the distributivity equation to equal the edge-weight sum $c(u,v) + c(v,w)$ and the right hand side to equal $-c(u,w)$. Clearly, this happens when $u,v,w$ are a zero-triangle. This can be again achieved with the bookkeeping trick, with the drawback that we create some extra help-variables.
We ensure that any triple $x,y,z \in S$, where at least one of $x,y$ or $z$ is a help-variable, will always be distributive.  In the end it remains to count, whether we have more distributive triples than the help-variables will inevitably produce.
A detailed discussion of Strong-Zero-Triangle hardness variants is provided in the full version of the paper. 

\newcommand\AP{\operatorname{AP}}

\section{Arithmetic Progression Detection: A New Source of Hardness?} \label{sec:ap}
Recall that a sequence of numbers $a_1, \dots, a_k$ with common step-width $a_2 - a_1 = \dots = a_k - a_{k-1}$ is called \emph{$k$-term arithmetic progression}, or \emph{$k$-AP} for short. The \emph{$k$-AP Detection} problem is to decide if a given set of numbers $A \subseteq \{0,\ldots,n\}$ contains a non-trivial $k$-AP. We emphasize that here~$n$ is not only a bound on the size of~$A$, but on the underlying universe. The trivial $O(n^2)$-time algorithm is to enumerate all combinations of the first two terms in the progression, and to test in constant time $O(1)$ whether the remaining $k-2$ terms are present. Perhaps surprisingly at first sight, this algorithm is essentially the fastest-known for~$k \geq 4$. Indeed, only (sub-)logarithmic improvements are known: either based on the tabulation-based ``Four-Russians'' approach \conferenceversion{}{(e.g., combining \Cref{lem:ap-to-hyperclique} with~\cite{Nagle10,AbboudFS24})}, or by exploiting Szemerédi's theorem which states that sets with sufficiently large density necessarily contain a $k$-AP,\footnote{Specifically, the state-of-the-art bounds imply that sets with density at least $(\log n)^{-c}$ necessarily contain a $4$-AP~\cite{GreenT17}, and sets with density at least $\exp(-(\log \log n)^{c_k})$ necessarily contain a $k$-AP for $k > 4$~\cite{LengSS24}.} and therefore need not be checked explicitly. Given this state of the art, it is natural to propose the following hypothesis for all constants $k \geq 4$.

\hypkap*

In what follows, we will make a case that \Cref{hyp:k-ap} deserves to be established as a fine-grained hypothesis. This involves arguing that the hypothesis is (1) \emph{plausible} and (2) \emph{useful} (in the sense that it has many interesting applications). The well-established fine-grained hypotheses are both, though of course there is a trade-off between (1) and (2): the stronger the assumption, the less plausible it is, yet the more powerful it becomes in the design of fine-grained reductions. We will argue that the 4-AP Hypothesis is among the stronger and yet still very believable hypotheses. In addition, the 4-AP Hypothesis is arguably simple to state, and, as we will see shortly, fits naturally into the big picture of fine-grained complexity.

Specifically, we justify the plausibility of \Cref{hyp:k-ap} in \Cref{sec:ap:sec:arith-graphs,sec:ap:sec:approaches}, and discuss its usefulness in \Cref{sec:ap:sec:uses}.

\subsection{Arithmetic Analogue of Hyperclique} \label{sec:ap:sec:arith-graphs}
A main argument for the plausibility of the $k$-AP Hypothesis is that it can be seen as the \emph{arithmetic analogue} of another well-established hardness assumption---the \emph{Hyperclique Hypothesis}. To appreciate this connection, let us take a moment to survey this analogy in a broader context.

\paragraph{Arithmetic versus Graph Problems}
A recurring phenomenon in fine-grained complexity is that \emph{arithmetic} problems (i.e., problems on sets or sequences of numbers) have natural counterparts among \emph{graph} or \emph{matrix} problems. For example, two of the three primary fine-grained hypotheses---the famous \emph{3SUM} and \emph{APSP} Hypotheses---are part of this pattern: 3SUM\footnote{Decide if a given size-$n$ set contains three numbers that sum to zero.} is a fundamental arithmetic problem~\cite{GajentaanO95}, known to be tightly connected to the corresponding \emph{Zero Triangle}\footnote{Decide if an edge-weighted $n$-vertex graph contains a triangle whose edge weights sum to zero.} problem on graphs~\cite{VassilevskaW18}. APSP\footnote{Compute all pairs of distances in an edge-weighted $n$-vertex graph.} is a fundamental graph problem, known to be fine-grained equivalent to \emph{Min-Plus Product}\footnote{Given integer $n \times n$ matrices $A, B$, compute a third matrix $C$ defined by $C[i, j] = \min_k (A[i, k] + B[k, j])$.}, whose arithmetic analogue is called \emph{Min-Plus Convolution}\footnote{Given length-$n$ integer sequences $a, b$, compute a third sequence $c$ defined by $c[k] = \min_{i+j=k} (a[i] + b[j])$.}~\cite{CyganMWW17,KunnemannPS17}. This analogy underlies many interesting fine-grained connections, which in particular led Zero Triangle and Min-Plus Convolution to become hardness assumptions in their own right. Many further instances exist---for instance, between \emph{Min-Max Convolution} and \emph{Min-Max Product} (which is equivalent to \emph{All-Pairs Bottleneck Paths})~\cite{VassilevskaWY07}.

For all these pairs of problems the analogy is not just based on intuition, but is indeed \emph{systematic} in the following sense. For one, known algorithms transfer from one setting to the other up to mechanical changes (for instance, replacing FFT and fast matrix multiplication as the two algebraic primitives for arithmetic and graph problems, respectively), Moreover, in all cases there is a mechanical fine-grained reduction from the arithmetic problem to the corresponding graph problem (that transforms sets or sequences of numbers into their \emph{Cayley graphs}). This implication is only known to be one-directional, though. In particular, arithmetic hardness assumptions are generally (1) \emph{less plausible} than their graph counterparts, but (2) have \emph{more useful} implications.

Despite the missing formal equivalence, we are aware of \emph{no separation} between the arithmetic and graph variants for any corresponding pair. Every known algorithm and lower bound synchronously either applies to both variants (under mechnical adaptions) or to neither. This is widely regarded as heuristic evidence that corresponding pairs of arithmetic and graph problems are at least \emph{morally} equivalent.

\paragraph{Hyperclique Detection}
The \emph{$(k, h)$-Hyperclique} problem is to detect a $k$-clique in an $h$-uniform hypergraph (i.e., a set of $k$ vertices among which all possible hyperedges are present). The fastest-known algorithms for this problem essentially take brute-force time $O(n^k)$, which motivated Lincoln, Vassilevska~W., and Williams~\cite{LincolnWW18} to propose the following hypothesis for all constants $k > h \geq 3$.

\begin{hypothesis}[$(k, h)$-Hyperclique]
There is no algorithm to detect a $k$-clique in an $h$-uniform $n$-vertex hypergraph in time $O(n^{k-\epsilon})$ (for any constant $\epsilon > 0$).
\end{hypothesis}

Though this is a rather recent hypothesis (proposed in 2018~\cite{LincolnWW18}), its plausibility is well-defended (see e.g.\ the discussion in~\cite{LincolnWW18}, and also~\cite{AbboudBDN18}), and it has already led to many and diverse conditional lower bounds in the few years since~\cite{AbboudBDN18,LincolnWW18,BringmannFK19,CarmeliZBKS20,KunnemannM20,AnGIJKN21,BringmannS21,DalirrooyfardW22,Kunnemann22,GorbachevK23,Zamir23,GokajK25,FischerKRS25, FuLR25}, notably for subgraph detection problems~\cite{LincolnWW18,BringmannS21,DalirrooyfardW22, FuLR25}, geometric problems~\cite{Kunnemann22,GorbachevK23}, and logic-definable problem classes~\cite{BringmannFK19,KunnemannM20,AnGIJKN21,GokajK25,FischerKRS25}. 

In light of the wide-spread analogy between arithmetic and graph problems, it is perhaps surprising that so far no arithmetic analogue of Hyperclique has been identified. As we will see, this missing analogue is exactly AP Detection.

\paragraph{AP Detection versus Hyperclique Detection}
The state of affairs for $k$-AP Detection and \makebox{$(k, k-1)$}-Hyperclique detection is strikingly similar. First, both Hypotheses are wrong for $k = 3$: $3$-AP Detection can be solved in near-linear time by the Fast Fourier Transform, whereas $(3, 2)$-Hyperclique Detection (i.e., \emph{Triangle Detection} in normal graphs) is in matrix multiplication time~$O(n^\omega)$. This is an instance of the natural correspondence between FFT and matrix multiplication discussed earlier.

Second, for $k \geq 4$ both problems remain stuck at their brute-force running times. The \emph{same} algorithmic approaches fail for the \emph{same} reasons, as we discuss further in \Cref{sec:ap:sec:approaches}.

Third, and most importantly, we establish a formal reduction from $k$-AP Detection to $(k, k-1)$-Hyperclique Detection. Recall that this is in line with all other pairs of related problems, where the arithmetic problem reduces to the graph problem.

\begin{theorem} \label{thm:ap-to-hyperclique}
For every $k \geq 4$, the $k$-AP Hypothesis implies the $(k, k-1)$-Hyperclique Hypothesis.
\end{theorem}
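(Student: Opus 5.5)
The plan is a direct reduction. Given $A\subseteq\{0,\dots,n\}$, I will build a small number of $(k-1)$-uniform $k$-partite hypergraphs on $N=O(n^{2/k})$ vertices each, such that a $k$-clique exists in one of them if and only if $A$ contains a non-trivial $k$-AP. If $(k,k-1)$-Hyperclique were solvable in time $O(N^{k-\epsilon})$, then with $O(1)$ instances $k$-AP Detection would run in time $O(n^{2-2\epsilon/k})$, contradicting the $k$-AP Hypothesis. The contrapositive is exactly \Cref{thm:ap-to-hyperclique}.

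The core gadget is a family of integer linear forms $L_1,\dots,L_k$ in variables $x_1,\dots,x_k$ with two properties: $L_i$ does not depend on $x_i$, and $(L_1,\dots,L_k)$ is always an arithmetic progression. A natural choice is
\[
L_i(x)=\alpha+\sum_{j=1}^{k} v_j\,(i-j)\,x_j ,
\]
with positive integer weights $v_j$ and an offset $\alpha$. The coefficient of $x_i$ in $L_i$ vanishes. Moreover
\[
L_i = \alpha - \sum_j j v_j x_j + i\cdot d, \qquad d=\sum_j v_j x_j ,
\]
so the $L_i$ form an AP with step $d$ and first term $a=\alpha+\sum_j(1-j)v_jx_j$. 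The hypergraph has one part $V_j=\{0,\dots,m\}$ per variable (with $x_1\in\{1,\dots,m\}$). For every $i$, and every choice of values for $\{x_j\}_{j\neq i}$, I add the hyperedge if $L_i\in A$. Then $k$-cliques correspond exactly to tuples $x$ whose AP $(L_1,\dots,L_k)$ lies entirely in $A$.

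Non-triviality comes for free: all $v_j>0$, all $x_j\ge 0$ and $x_1\ge 1$ force $d\ge 1$, so trivial progressions $a,a,\dots,a$ never create cliques. Soundness is then immediate. Building the hypergraph takes $O(k\,m^{k-1})$ time, which is within budget.

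Completeness is the main obstacle. Every AP $(a,d)$ with $a\in[0,n]$ and $1\le d\le n/(k-1)$ must be realized by some tuple $x$ in some instance. A single instance has only $m^k=\Theta(n^2)$ tuples, so the linear map
\[
x\mapsto\Bigl(\sum_j v_jx_j,\ \sum_j (1-j)v_jx_j\Bigr)
\]
must be close to injective onto a box of side about $n$ in both coordinates. I would pick the weights in base-$m$ fashion, pairing consecutive variables. For a pair $(2t-1,2t)$ set $v_{2t-1}=v_{2t}=m^{t-1}$. The pair then contributes $m^{t-1}$ times the vector
\[
\bigl(x_{2t-1}+x_{2t},\ -(2t-2)x_{2t-1}-(2t-1)x_{2t}\bigr).
\]
The matrix $\begin{pmatrix}1&1\\-(2t-2)&-(2t-1)\end{pmatrix}$ has determinant $-1$, so each pair maps $[0,m]^2$ bijectively onto a lattice parallelogram of $\mathbb Z^2$ with $\Theta(m^2)$ points. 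Across pairs the scales separate like base-$m$ digits, after a possible unimodular change of coordinates and a slight blow-up of $m$ by a constant factor to absorb carries. So the image should contain a translate of an integer box of side $\Omega(m^{\lfloor k/2\rfloor})$ in both coordinates. For odd $k$ the leftover variable is simply given weight $v_k=1$ and is essentially wasted. In that case I instead choose $m=n^{1/\lfloor k/2\rfloor}$; then $N^{k-\epsilon}$ is still $n^{2\cdot k/(k-1)\cdot(\dots)}$, and I need to double-check the exponent still gives a polynomial saving. If it does not, the fallback is to let the odd variable range only over $O(1)$ values, so it costs nothing.

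Finally, the box of reachable $(a,d)$ will not be aligned with $[0,n]\times[1,n]$, because the $a$-coordinate ranges over negative values. I therefore take $O(1)$ instances with offsets $\alpha$, and if needed $d$-offsets folded into the hyperedge tests (which keeps $d\ge1$), so that the translated boxes cover all relevant $(a,d)$. I expect verifying this covering cleanly to take the most care: choosing $m=\Theta(n^{2/k})$ with the right constant, handling carries between scales, and handling the odd-$k$ case.
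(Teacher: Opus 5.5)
\paragraph{Where the reduction breaks.} Your forms $L_i=\alpha+\sum_j v_j(i-j)x_j$ are the paper's matrix $M$, rescaled by $k-1$ and weighted, and the even case does go through. Each $U_t$ is unimodular, and its parallelogram $U_t[0,Cm]^2$ contains a box of side $\Omega(Cm/k)$. So a top-down greedy choice of pair-digits, with a constant blow-up $C$, covers a box of side $\Theta(m^{k/2})$; you do not need a common change of coordinates for the different $U_t$.

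The genuine gap is odd $k$ ($k=5,7,\dots$), which the theorem includes. A single weighted variable contributes only the one-dimensional progression $v_jx_j(1,1-j)$, so your leftover variable cannot supply the missing half digit in both coordinates. Both of your options fail:
\begin{itemize}
\item Taking $m=n^{2/(k-1)}$ turns an $O(N^{k-\epsilon})$ hyperclique algorithm into time $n^{2(k-\epsilon)/(k-1)}$. This is subquadratic only when $\epsilon>1$.
\item Restricting the odd variable to $O(1)$ values does not help. If the other $k-1$ parts keep $n^{2/(k-1)}$ vertices, the hypothesized algorithm gains nothing on this unbalanced instance. Padding gives the same bound, and splitting into balanced blocks of size $s$ costs $(n^{2/(k-1)}/s)^{k-1}s^{k-\epsilon}\ge n^2$. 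If instead the other parts have only $n^{2/k}$ vertices, there are $O(n^{2(k-1)/k})=o(n^2)$ tuples, too few to realize all progressions.
\end{itemize}
The missing idea is to let each variable carry two digits at \emph{staggered} scales, rather than one digit at a shared scale. The paper takes $q=\lceil n^{1/k}\rceil$ and lets every part be the set $X$ of integers with at most two nonzero (possibly negative) base-$q$ digits, so $|X|=O(n^{2/k})$. It then solves $Mx=a$ digit by digit (\Cref{lem:ap-ruler-trick}), which works regardless of the parity of $k$.

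\paragraph{Non-triviality versus completeness.} With $v_j>0$ and $x_j\ge 0$, every reachable pair satisfies $\alpha-(k-1)d\le a\le\alpha$. This is a cone pinched at $d=0$: for $d=1$, a single instance realizes at most $k$ starting points. So $O(1)$ choices of $\alpha$ cannot cover all progressions with small step, and you really do need a shift $d=\delta+\sum_j v_jx_j$ with $\delta<0$ (or negative digits, as in the paper). But then $d=0$ becomes reachable, and every trivial progression $(a,\dots,a)$ with $a\in A$ produces a clique. No $(k-1)$-ary hyperedge test can see $d$, since $d$ depends on all $k$ variables, so your parenthetical ``which keeps $d\ge1$'' is unjustified. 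The paper's repair is to reduce from multichromatic $k$-AP (\Cref{lem:ap-chromatic}). Color-code $A$ into a partition $A_1,\dots,A_k$ and test $L_i\in A_i$; this rules out trivial progressions and makes arbitrary offsets harmless.
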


At a high level, this reduction follows the same mechanical idea used in other arithmetic-to-graph reductions: transforming a set of numbers into its Cayley graph. The challenge is to find an appropriate generalization to hypergraphs. Conceptually, our construction is inspired by Gowers' proof~\cite{Gowers07} of Szemerédi's theorem. \conferenceversion{For the reduction details, we refer the reader to the full version of the paper.}{}

Another notable connection is that the difficulty of both problems increases as the relevant parameters grow: $k$ for $k$-AP and $h$ for $(k, h)$-Hyperclique. This is formalized in the following theorem. \Cref{fig:ap-reducts} illustrates the resulting web of known reductions. \begin{figure*}[t]
\centering
\def\gap{2.5mm}
\definecolor[named]{LightBlue}{HTML}{8D9DB6}
\definecolor[named]{PaleBlue}{HTML}{F2F5F8}
\begin{tikzpicture}[
    prob/.style={
        draw,
        fill=PaleBlue,
        rounded corners,
        font=\small,
        minimum width=34mm,
        minimum height=8mm,
        inner sep=2mm,
        align=center
    },
    important/.style={
        fill=LightBlue,
    },
    reduct/.style={
        draw,
        >=latex,
        shorten <=.8mm,
        shorten >=.8mm,
    },
    every edge quotes/.append style={font=\small},
    region/.style={
        draw=black!75,
        dashed,
        rounded corners,
    },
]

\matrix[
    anchor=east,
    inner sep=0pt,
    every node/.style={anchor=center},
    column sep=18mm,
    row sep=8mm,
    row 3/.style={row sep=14mm},
] at (.5\textwidth-0.4pt-\gap, 0) {
    &
    \node[prob] (k3hc) {$(k, 3)$-Hyperclique}; &
    \node[prob] (k4hc) {$(k, 4)$-Hyperclique}; &
    \node[prob] (kkm1hc) {$(k, k\mathord-1)$-Hyperclique}; \\

    &
    \node[prob] (53hc) {$(5, 3)$-Hyperclique}; &
    \node[prob] (54hc) {$(5, 4)$-Hyperclique}; &
    \\
    
    &
    \node[prob] (43hc) {$(4, 3)$-Hyperclique}; &
    &
    \\

    &
    \node[prob, important] (4ap) {$4$-AP}; &
    \node[prob, important] (5ap) {5-AP}; &
    \node[prob, important] (kap) {$k$-AP}; \\

    &
    \node[prob] (4sum) {$4$-SUM}; &
    &
    \\
};

\newcommand\drawhalfdottedreduct[3]{
    \path[draw, shorten <=.8mm] (#1.east) -- ([shift={(3mm, 0)}]#1.east);
    \path[draw, dashed] ([shift={(3mm, 0)}]#1.east) to["#3"] ([shift={(-4mm, 0)}]#2.west);
    \path[draw, shorten >=.8mm, >=latex, ->] ([shift={(-4mm, 0)}]#2.west) -- (#2.west);}
\path[reduct, ->] (4ap) to["Thm.~\ref{thm:ap-larger-k}"] (5ap);
\drawhalfdottedreduct{5ap}{kap}{Thm.~\ref{thm:ap-larger-k}}
\path[reduct, ->] (4ap) to["Thm.~\ref{thm:ap-to-hyperclique}", swap] (43hc);
\path[reduct, ->] (5ap) to["Thm.~\ref{thm:ap-to-hyperclique}", swap] (54hc);
\path[reduct, ->] (kap) to["Thm.~\ref{thm:ap-to-hyperclique}", swap] (kkm1hc);
\path[reduct, ->] (53hc) to (54hc);
\path[reduct, ->] (53hc) to (43hc);
\path[reduct, ->] (k3hc) to (k4hc);
\drawhalfdottedreduct{k4hc}{kkm1hc}{}
\path[reduct, ->] (k3hc) to (53hc);
\path[reduct, ->] (k4hc) to (54hc);
\path[reduct, ->] (4ap) to["\conferenceversion{}{Thm.~\ref{thm:4ap-to-4sum}}~\cite{GokajK25}"] (4sum);

\begin{pgfonlayer}{bg}
    \path (-.5\textwidth+0.4pt, 0) coordinate (pic left);
    \newcommand\drawregionaround[3]{
        \path (#1.north east) + (\gap, \gap) coordinate (region north east);
        \path (#2.south) + (0, -\gap) coordinate (region south);
        \path[region] (pic left |- region north east)
            node[below right=2mm] {$n^{#3\pm o(1)}$}
            rectangle (region north east |- region south);}
    \drawregionaround{kap}{4sum}{2}
    \drawregionaround{43hc}{43hc}{4}
    \drawregionaround{54hc}{54hc}{5}
    \drawregionaround{kkm1hc}{kkm1hc}{k}
\end{pgfonlayer}
\end{tikzpicture}
\caption{Illustrates the known fine-grained reductions between $k$-AP Detection and other central hard problems.} \label{fig:ap-reducts}
\end{figure*}

\begin{theorem} \label{thm:ap-larger-k}
The $k$-AP Hypothesis implies the $k'$-AP Hypothesis for all $k' \geq k$.
\end{theorem}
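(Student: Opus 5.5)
By induction on $k'$ it suffices to treat $k' = k+1$. I prove the contrapositive: an $O(N^{2-\epsilon})$-time algorithm for $(k+1)$-AP Detection yields an $\tilde O(n^{2-\epsilon})$-time algorithm for $k$-AP Detection. The plan is to reduce one $k$-AP instance $A \subseteq \{0,\dots,n\}$ to $O(\log n)$ instances of $(k+1)$-AP Detection, each over a universe of size $O(k^2 n)$.

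\emph{Layered construction.} First think of elements as pairs $(i,x)$ with a layer index $i \in \{0,\dots,k\}$ and a value $x$. Layers $0,\dots,k-1$ hold copies of (subsets of) $A$. Layer $k$ holds \emph{every} integer $y$ in the range $[-kn, 2kn]$, which contains every possible value $x + k d$ with $x, x+d \in [0,n]$. Now let $(i_0,x_0),\dots,(i_k,x_k)$ be a $(k+1)$-AP of pairs with common step $(s,d)$. Since the layer indices form an AP of length $k+1$ inside $\{0,\dots,k\}$, either $s = 0$ or $s = \pm 1$.
\begin{itemize}
\item If $s=0$, all terms lie in one layer. That layer cannot be layer $k$, because then the indices would coincide but the values are still free; this is excluded by making the packing below injective and noting that layer $k$ alone does contain APs. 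Within a layer $i<k$, the values form a nontrivial $(k+1)$-AP inside $A$, and hence a $k$-AP in $A$. This direction is sound.
\item If $s = \pm 1$, after possibly reversing the progression we get $x_0,\dots,x_{k-1} \in A$ forming a $k$-AP with step $d$. The last term is automatically available in layer $k$.
\end{itemize}

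Two issues remain. The first is that layer $k$ itself contains long APs. The second is the degenerate case $d = 0$, where the layer step alone makes the pair-AP nontrivial.

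\emph{Handling layer $k$.} I restrict the free layer to contain only values appearing as the $k$-th step of some progression, but that is still dense. Instead I use the layer-$0$/layer-$1$ separation below, which also resolves the case $s=0$ in layer $k$. Concretely, I will put layer $k$ at a disjoint packed position so that it contributes no AP with $s = 0$ of length $k+1$ together with other layers. I expect this to need care: the simplest fix is to replace layer $k$ by the set $\{x + k d\}$ ranging only over a sparse encoding. If that fails, I will scale the value coordinate of the layers differently so that in-layer APs inside layer $k$ are impossible. For example, I would use values $(k+1)x$ for layers $<k$ and only the residues that $x_0 + k d$ can take for layer $k$. I expect this bookkeeping to be the main obstacle.

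\emph{Excluding $d=0$.} A nontrivial $k$-AP has $x_0 \ne x_1$, so $x_0$ and $x_1$ differ in some bit $j \le \log n$. For each pair $(j,\beta)$ I build a separate instance. In it, layer $0$ contains only the elements of $A$ whose bit $j$ equals $\beta$, layer $1$ contains only those whose bit $j$ equals $1-\beta$, and the other layers $<k$ contain all of $A$. This gives:
\begin{itemize}
\item Every nontrivial $k$-AP survives in at least one of these $O(\log n)$ instances.
\item In every instance, $x_0 \ne x_1$ holds for progressions with $s=\pm1$, so they come from nontrivial $k$-APs. The reversed orientation is covered symmetrically.
\end{itemize}

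\emph{Packing into integers.} I encode $(i,x)$ as $x' + L\cdot i$, where $x' = x + kn$ is shifted to be nonnegative and $L = 10kn$. Any integer AP among such numbers has bounded value differences. Because $L$ exceeds twice the value range, no carries occur, and the integer AP decodes to a coordinate-wise AP of pairs. The universe size is $O(k^2 n) = O(n)$, and the whole construction takes linear time per instance. The total running time is therefore $O(\log n \cdot n^{2-\epsilon})$, which refutes the $k$-AP Hypothesis.
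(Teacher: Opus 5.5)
\textbf{There is a genuine gap at layer $k$. As written, your reduction answers yes on every input.} Layer $k$ is the whole interval $[-kn,2kn]$, so it contains nontrivial $(k+1)$-APs by itself. Take $(k,y),(k,y+d),\dots,(k,y+kd)$ with $d\neq 0$: this is a pair-progression with $s=0$, and it packs to a genuine nontrivial integer progression in every one of your instances, regardless of $A$.

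None of your proposed remedies addresses this:
\begin{itemize}
\item Injectivity of the packing is irrelevant, because these are distinct elements.
\item The bit-$j$ split only restricts layers $0$ and $1$, so it does not touch layer $k$.
\item Rescaling values, or keeping ``only the residues that $x_0+kd$ can take'', cannot help. For completeness the layer must contain every potential last term $(1-k)x_0+kx_1$ with $x_0,x_1\in[0,n]$. That set is essentially a full interval (or an affine image of one), and any affine re-encoding of a layer maps progressions to progressions.
\item Even the data-dependent restriction to $\{(1-k)a+kb : a,b\in A\}$ can contain long progressions when $A$ is $k$-AP-free. For example, with $k=4$ and $A=\{0,1,2,5,6,7\}$ it contains the whole interval $\{0,\dots,10\}$.
\end{itemize}
The rest of your argument is fine: the carry-free decoding with $L=10kn$, the case split $s\in\{-1,0,1\}$, soundness of in-layer progressions in layers $<k$, and the bit trick that excludes $d=0$.

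\textbf{The missing idea is to cover the free layer by few AP-free pieces.} By Behrend's construction (\Cref{lem:behrend}), partition the range of layer $k$ into $N=n^{o(1)}$ sets $F_1,\dots,F_N$ with no nontrivial $3$-AP. Then build one instance for each triple $(j,\beta,\ell)$, with layer $k$ equal to $F_\ell$.
\begin{itemize}
\item A progression with $s=0$ inside layer $k$ would now contain a nontrivial $3$-AP in $F_\ell$, which is impossible.
\item Completeness holds for the $\ell$ with $x_0+kd\in F_\ell$.
\item The overhead is $n^{o(1)}\log n$, which still gives time $O(n^{2-\epsilon+o(1)})$.
\end{itemize}

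With this repair, your argument becomes a direct one-step version of the paper's proof. The paper first passes from monochromatic $k'$-AP to multichromatic $k'$-AP via \Cref{lem:ap-chromatic}. There, intersecting each color class with Behrend sets is exactly what makes every segment AP-free, including the padded ones, which is what your layer $k$ lacks. It then pads the extra color classes with the full universe, which corresponds to your free layer. Finally it returns to monochromatic $k$-AP by color-coding, which plays the role of your bit trick in ruling out the trivial progression.
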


In summary, we have established that AP Detection is indeed the correct arithmetic analogue of the Hyperclique Detection problem. For this reason alone the $k$-AP Hypothesis deserves its place among the other fine-grained assumptions. One could even argue that it should be regarded as \emph{morally} equivalent to Hyperclique. Of course, there have been fewer explicit attempts to break the $k$-AP Hypothesis, so it is fair to consider it with care.

\subsection{Algorithmic Approaches Fail} \label{sec:ap:sec:approaches}
To further substantiate the believability of the $k$-AP Hypothesis, we review some natural algorithmic approaches and explain why they fail to refute it. These known approaches broadly fall into two categories: algebraic and combinatorial.

\paragraph{Algebraic Algorithms}
Since $3$-APs can be detected in near-linear time by an algebraic algorithm, perhaps it is reasonable to expect that the same could apply to $4$-APs. The algebraic algorithm for detecting $3$-APs can be viewed as efficiently evaluating the following \emph{trilinear} polynomial
\begin{equation*}
    \sum_{i, j=1}^n A_i B_{i+j} C_{i+2j}
\end{equation*}
on the $0$-$1$-inputs indicating the elements present in the given set. There is a rich theory surrounding the complexity of such trilinear problems; see e.g.~\cite{Blaser13}. Specifically, it is known that the algebraic complexity of any trilinear polynomial is \emph{exactly captured} by a notion called \emph{tensor rank}. For this particular trilinear polynomial the tensor-rank is linear. More broadly, this theory is what the FFT and Strassen-like algorithms for fast matrix multiplication are founded on.

To detect $4$-APs, the natural generalization is to consider the following \emph{$4$-linear} polynomial
\begin{equation*}
    \sum_{i, j=1}^n A_i B_{i+j} C_{i+2j} D_{i+3j}.
\end{equation*}
Unfortunately, most of the established theory does \emph{not} extend from trilinear to $4$-linear problems. In particular, the tensor rank characterization does not carry over. For this reason it remains open whether this polynomial can be evaluated in subquadratic complexity, even though one can actually prove that this 4-linear polynomial has maximal tensor rank $n^2$ (in analogy to~\cite[Section~8]{LincolnWW18}). This is by far not the only 4-linear polynomial with this state of affairs; another example is---not surprisingly at this point---the following polynomial to count $4$-hypercliques in a $3$-uniform hypergraph:
\begin{equation*}
    \sum_{i, j, k, \ell=1}^n A_{ijk} B_{ij\ell} C_{ik\ell} D_{jk\ell}.
\end{equation*}
To our knowledge it is a challenging open problem to find \emph{any} 4-linear polynomial with nontrivial algebraic complexity upper bounds (that cannot be realized by a trivial reduction to the trilinear case, see also~\cite{AustrinKK22}), or lower bounds (see~\cite{Raz10} and \cite[Open Problem~7]{ShpilkaY10}). Hence, to achieve a nontrivial algebraic algorithm for 4-AP Detection one likely has to overcome a deeper barrier.

\paragraph{Combinatorial Algorithms}
A very different approach is to consider purely \emph{combinatorial} algorithms, i.e., algorithms that do not rely on algebraic primitives such as FFT or matrix multiplication. The hope is that dense sets without $k$-term progressions exhibit a certain combinatorial structure that can be algorithmically exploited. This line of reasoning has led to surprising combinatorial algorithms for the related Triangle Detection problem~\cite{ArlazarovDKF70,BansalW12,Chan15,Yu18,AbboudFKLM24}, based on deep \emph{regularity} results in graphs~\cite{BansalW12,AbboudFKLM24}. The fastest-known such algorithm runs in time \smash{$n^3 / 2^{\Omega((\log n)^{1/7})}$}.

This has direct consequences for $3$-AP Detection: By the reduction from $3$-AP Detection to Triangle Detection (\conferenceversion{see full version of the paper}{\Cref{lem:ap-to-hyperclique}}), \cite{AbboudFKLM24} implies a combinatorial \smash{$n^2 / 2^{\Omega((\log n)^{1/7})}$}-time algorithm for $3$-AP Detection. An alternative (marginally slower) \smash{$n^2 / 2^{\Omega((\log n)^{1/9})}$}-time algorithm is to simply run the brute-force $O(|A|^2)$-time algorithm if \smash{$|A| \leq n / 2^{\Omega((\log n)^{1/9})}$}; otherwise the Kelley-Meka~\cite{KelleyM23} bounds for Roth's theorem, refined in~\cite{BloomS23}, imply that $A$ \emph{must} contain a $3$-AP, so we can simply report ``yes''.\footnote{In fact, these two alternatives are tightly related as the combinatorial Triangle Detection algorithm~\cite{AbboudFKLM24} itself is based on the ideas by Kelley and Meka~\cite{KelleyM23}, further developed by Kelley, Lovett and Meka~\cite{KelleyLM24}.} In summary, currently there is no combinatorial algorithm in truly subquadratic-time algorithms even for $3$-AP Detection. Moreover, such a hypothetical subquadratic-time algorithm could only follow along the lines of the first approach, as the Kelley-Meka bounds~\cite{KelleyM23} for 3-AP-free sets are almost tight~\cite{Behrend1946}. 

For detecting $4$-APs the state of affairs is even worse: The only improvements over the $O(n^2)$-time baseline are (sub-)logarithmic, either based on the simple ``Four-Russians'' trick mentioned before, or similarly using Szemerédi's theorem~\cite{GreenT17} to avoid testing dense sets explicitly. It is conceivable that higher lower-order improvements can be achieved, perhaps by directly combining the ``Four-Russians'' trick with ``higher-order Fourier analysis'' (a deep theory developed by Gowers~\cite{Gowers01} to quantitatively improve Szemerédi's theorem, see e.g.\ the textbook~\cite{TaoV06}) in analogy to~\cite{BansalW12}; however, the involved quantitative bounds are still quite weak and would only lead to logarithmic speed-ups, if any.

In summary, for a combinatorial algorithm to break the $4$-AP Hypothesis one would have to overcome two barriers: (1) To extend from $3$-APs to $4$-APs (i.e., from graphs to hypergraphs) without quantitative losses, which likely involves extending the Kelley-Meka bounds to $4$-AP-free sets, and (2) to achieve truly subquadratic-time algorithms for $3$-AP Detection (i.e., to improve the quasi-polynomial speed-up to polynomial). Barrier (1) already appears hard, but (2) seems completely out of reach for now.

\paragraph{A New Approach?}
Currently, both algebraic and combinatorial methods fail to refute the $k$-AP Hypothesis. Of course it is conceivable that a radically different (perhaps hybrid) approach could exist. While we cannot rule out this possibility, we consider it likely that such an alternative approach would analogously be applicable to the Hyperclique problem as well.

\subsection{Usefulness for Reductions} \label{sec:ap:sec:uses}
In \Cref{sec:ap:sec:arith-graphs,sec:ap:sec:approaches} we have argued that the $k$-AP Hypothesis is plausible. Let us finally comment why the $k$-AP Hypothesis, and in particular the $4$-AP Hypothesis, is also very useful in the design of fine-grained reductions.

The first indication is, of course, that in this paper we establish several new lower bounds based on the 4-AP Hypothesis---not only for verifying algebraic identities, but also for problems like Square Detection which we believe are of independent interest.

Second, \Cref{thm:ap-to-hyperclique} entails that \emph{all} lower bounds conditioned on the $(k, k-1)$-Hyperclique hypothesis can now alternatively be based on the $k$-AP Hypothesis. As mentioned before, this long list~\cite{AbboudBDN18,LincolnWW18,BringmannFK19,CarmeliZBKS20,KunnemannM20,AnGIJKN21,BringmannS21,DalirrooyfardW22,Kunnemann22,GorbachevK23,Zamir23,GokajK25,FischerKRS25} involves lower bounds for many important problems, such as Klee's measure problem~\cite{Kunnemann22} or induced 4-cycle detection~\cite{DalirrooyfardW22}, to highlight a few.

Third, it follows from work by Gokaj and Künnemann~\cite{GokajK25}, and was independently also observed by Adam Polak and collaborators~\cite{Polak_personal_communication}, that the 4-AP Hypothesis also implies the \emph{4-SUM Hypothesis}\footnote{Every algorithm to decide if in a given set there are four numbers that sum to zero takes time $n^{2-o(1)}$.}, and for this reason \emph{all} lower bounds based on 4-SUM can be alternatively based on the $4$-AP Hypothesis. This includes known lower bounds for geometric problems~\cite{KunnemannN22,GokajKST25}, but also other arithmetic problems~\cite{JinX23}. In \conferenceversion{the full version of the paper}{ Section ~\Cref{sec:ap:sec:4sum}}, we include a short self-contained proof of this fact.

Fourth, let us highlight a conceptual advantage of the $k$-AP Hypothesis over all related fine-grained assumptions: It concerns a ``dense'' arithmetic problem. Indeed, 3SUM and all its variants can be viewed as problems on a sparse integer set (over a universe of at least quadratic size). This is exactly the reason why the $k$-AP Hypothesis succeeds as a hardness assumption for other dense problems (like Square Detection), and why attempts based on the 3SUM Hypothesis fail. For that reason in particular, we are optimistic that the $k$-AP Hypothesis will lead to more fine-grained lower bounds in the future.
\conferenceversion{}{
\subsection{Basic Reductions for \texorpdfstring{$k$}{k}-AP} \label{sec:reducts:sec:chromatic}
We start with some basic reductions involving the $k$-AP problem. First, it will be convenient on several occasions to also consider the following \emph{multichromatic} version of the problem:

\begin{problem}[Multichromatic $k$-AP Detection]
Given sets $A_1, \dots, A_k \subseteq \set{0, \dots, n}$, decide if there is a (possibly trivial) $k$-AP $(a_1, \dots, a_k) \in A_1 \times \dots \times A_k$.
\end{problem}

The first step is to show that the monochromatic and multichromatic versions of $k$-AP Detection are equivalent. We rely on the following lemma based on Behrend's seminal construction of 3-AP-free sets that follows e.g.\ from~\cite[Theorem~3.10]{DudekGS20}.

\begin{lemma}[Behrend's Construction] \label{lem:behrend}
For any integer $n$, the set $\set{0, \dots, n}$ can be partitioned into sets $F_1, \dots, F_N$ such that $N \leq n^{o(1)}$ and such that each set $F_i$ contains no non-trivial 3-AP. The sets $F_1, \dots, F_N$ can be computed in deterministic time $O(n)$.
\end{lemma}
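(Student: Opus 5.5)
The plan is to give an explicit coloring of $\{0,\dots,n\}$ based on Behrend's digit construction and show that every color class is 3-AP-free. An explicit coloring gives deterministic linear construction time directly, whereas a set-cover argument via translates of one Behrend set would cost at least an extra logarithmic factor.

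\emph{Parameters and coloring.} Fix $m = \lceil \sqrt{\log n}\,\rceil$ and an even base $d$ with $d^m > n$, say $d \approx 2n^{1/m}$. Write each $x \in \{0,\dots,n\}$ in base $d$ as $x = \sum_{i<m} x_i d^i$. Let $v(x) \in \{0,1\}^m$ record which digits satisfy $x_i \ge d/2$. Set $x'_i = x_i - (d/2)v(x)_i$, so every $x'_i$ lies in $[0, d/2)$, and let $r(x) = \sum_i (x'_i)^2$. The color of $x$ is the pair $(v(x), r(x))$. The number of colors is at most $2^m \cdot m(d/2)^2 = 2^{O(\sqrt{\log n})} \cdot n^{O(1/\sqrt{\log n})} = n^{o(1)}$, as required.

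\emph{Each class is 3-AP-free.} Take $x, y, z$ of the same color with $x + z = 2y$. Let $c_v = \sum_i (d/2)v_i d^i$ be the common offset for the class, and subtract it from all three; this preserves the 3-AP relation since $(x - c_v) + (z - c_v) = 2(y - c_v)$. After subtracting, every digit lies in $[0, d/2)$. Hence the base-$d$ additions $x' + z'$ and $y' + y'$ produce no carries, so the digit vectors satisfy $x'_i + z'_i = 2y'_i$ for every $i$ by uniqueness of base-$d$ representation. Viewing the digit vectors as points of $\mathbb{R}^m$, the three points $x', y', z'$ lie on a sphere of squared radius $r$, and $y'$ is the midpoint of $x'$ and $z'$. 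By strict convexity of the Euclidean ball this forces $x' = z'$, hence $x = z$, so the progression is trivial.

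\emph{Running time and main obstacle.} Enumerate $x = 0, 1, \dots, n$ while maintaining the digit vector as a base-$d$ counter. Each increment changes amortized $O(1)$ digits, so $v(x)$ and $r(x)$ can be updated in amortized $O(1)$ time per step. Each color $(v, r)$ is encoded as an integer index in $[0, 2^m m d^2)$, which is $n^{o(1)}$, and we bucket $x$ into $F_{\mathrm{index}}$; this gives total deterministic time $O(n)$. The one delicate point is the carry issue: Behrend's norm argument needs digitwise addition to be exact. The split into lower/upper half-digits via $v$, combined with translation invariance of 3-APs, is what makes it exact. I expect the remaining details to be routine, including the choice of $d$, $m$ and the rounding to make $d$ even.
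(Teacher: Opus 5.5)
Your construction is correct, and it argues differently from the paper, which has no argument of its own here. The paper simply cites \cite{DudekGS20} (Theorem~3.10). That result is a deterministic Behrend-type procedure that partitions integers into few classes, each avoiding a chosen non-trivial three-variable linear equation $\alpha_1x_1+\alpha_2x_2+\alpha_3x_3=t$.

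You give a self-contained specialization to the single equation $x+z=2y$. Your colour of $x$ is the pair consisting of the upper-half-digit pattern $v(x)$ and the squared norm of the reduced digit vector. The key steps hold:
\begin{itemize}
\item Within one class, subtracting the common offset $c_v$ preserves $x+z=2y$ and leaves every digit in $[0,d/2)$. Then $x'_i+z'_i\le d-2$ and $2y'_i\le d-2$, so uniqueness of base-$d$ representation gives the equation digitwise.
\item The parallelogram identity $|y'|^2=r-|x'-z'|^2/4$ then forces $x'=z'$, hence $x=z$.
\item The colour count $2^m\cdot m(d/2)^2=2^{O(\sqrt{\log n})}=n^{o(1)}$ is right.
\item The amortized base-$d$ counter, with $O(1)$ updates to $v$ and $r$ per digit change and bucketing into $n^{o(1)}$ lists, gives deterministic $O(n)$ time.
\end{itemize}

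Your route buys transparency and an explicit bound $N=2^{O(\sqrt{\log n})}$ without relying on an external result. The half-digit pattern combined with translation invariance is exactly what removes the carry issue that plain Behrend digit sets avoid only by restricting to small digits. The cited theorem buys generality: other linear equations and arbitrary input sets. This paper never needs that, since Lemma~\ref{lem:ap-chromatic} and Lemma~\ref{lem:square-free} use only the 3-AP case, and the latter only after a harmless shift of $\set{-n,\dots,n}$.
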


\begin{lemma}[Monochromatic versus Multichromatic $k$-AP] \label{lem:ap-chromatic}
If multichromatic $k$-AP Detection is in time $T(n)$, then monochromatic $k$-AP Detection is in time $O(T(n))$. Conversely, if monochromatic $k$-AP Detection is in time $T(n)$, then multichromatic $k$-AP Detection is in time $T(n) \cdot n^{o(1)}$.
\end{lemma}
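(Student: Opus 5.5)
For the first direction (monochromatic from multichromatic), the obvious reduction $A_1=\dots=A_k=A$ fails because the multichromatic problem accepts trivial progressions. I would therefore guess the $2$-adic valuation of the step and reduce to that. By symmetry assume the step $b$ is positive, and let $t\in\{0,\dots,\lfloor\log n\rfloor\}$ be such that $b\equiv 2^t \pmod{2^{t+1}}$. Also guess the residue $r=a_1 \bmod 2^{t+1}$. Then $a_j\equiv r+(j-1)2^t \pmod{2^{t+1}}$.

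For each pair $(t,r)$, define
\[
A_j^{(t,r)}=\Bigl\{\tfrac{a-r-(j-1)2^t}{2^{t+1}} \;:\; a\in A,\ a\equiv r+(j-1)2^t \ (\mathrm{mod}\ 2^{t+1})\Bigr\}\subseteq\{0,\dots,\lceil n/2^{t+1}\rceil\}.
\]
The map $a_j\mapsto c_j$ is affine in $j$, so the correspondence works in both directions.
\begin{itemize}
\item If $(a_j)$ is an AP with step $b$, then $(c_j)$ is an AP with step $s=(b-2^t)/2^{t+1}$.
\item Conversely, any (possibly trivial) multichromatic AP $(c_j)$ with step $s$ lifts to an AP in $A$ with step $2^{t+1}s+2^t$. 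This step is odd multiple of $2^t$, hence never zero, so the lifted AP is non-trivial.
\end{itemize}
The total running time is $\sum_t 2^{t+1}\,T(n/2^{t+1})$. Under the usual regularity assumption on $T$ (e.g.\ $T(n)/n^{1+\delta}$ nondecreasing, as for $T(n)=n^{2-\epsilon}$), this sum is geometric and hence $O(T(n))$.

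For the converse direction, I would encode colours into high-order digits. Let $L=2n+2$ and form the single set
\[
B=\bigcup_{i=1}^k \{\,a+iL : a\in A_i\,\}\subseteq\{0,\dots,O(kn)\}.
\]
Write $x_j=a_j+i_jL$. Consecutive differences equal $(a_{j+1}-a_j)+(i_{j+1}-i_j)L$, and $|a_{j+1}-a_j|\le n<L/2$. Hence for any AP in $B$, both the sequence $(a_j)$ and the label sequence $(i_j)$ must themselves be APs. An AP of $k$ labels in $\{1,\dots,k\}$ is $1,\dots,k$, its reverse, or constant.
\begin{itemize}
\item The first two cases give a multichromatic AP, after reversing the order if necessary.
\item The constant case is the spurious one: a non-trivial AP lying entirely inside a single $A_i$.
\end{itemize}
Conversely, a multichromatic AP with $a_j\in A_j$ maps to $x_j=a_j+jL$, which has step $(a_2-a_1)+L\neq 0$, so it is non-trivial in $B$.

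The remaining step, which I expect to be the main obstacle, is to kill the spurious constant-label progressions, and this is where \Cref{lem:behrend} enters. Partition $\{0,\dots,n\}$ into 3-AP-free sets $F_1,\dots,F_N$ with $N\le n^{o(1)}$. Enumerate all colour tuples $(c_1,\dots,c_k)\in[N]^k$ and run the construction above on $A_j\cap F_{c_j}$.
\begin{itemize}
\item Any multichromatic AP is captured by the tuple of classes of its terms.
\item A constant-label AP would be a non-trivial $k$-AP, hence contain a non-trivial 3-AP, inside a single $F_c$, which is impossible since $k\ge 3$.
\end{itemize}
This gives $N^k=n^{o(1)}$ calls to monochromatic detection on universe $O(kn)$, i.e.\ total time $T(n)\cdot n^{o(1)}$ for constant $k$.
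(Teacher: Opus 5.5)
Your proof is correct. The converse direction is essentially the paper's: the paper also partitions $\{0,\dots,n\}$ into $n^{o(1)}$ 3-AP-free Behrend classes and enumerates all $k$-tuples of classes. It places $A_i\cap F_{\ell_i}$ in a block shifted by $i\cdot 10n$ and argues via the step width that an AP must visit the blocks in order. Your decomposition $x_j=a_j+i_jL$, showing that both the offsets and the labels form APs, is a slightly cleaner way to say the same thing.

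The first direction takes a genuinely different route. The paper uses colour-coding: it randomly partitions $A$ into $A_1\cup\dots\cup A_k$. Because the parts are disjoint, any multichromatic AP has $a_1\neq a_2$ and is automatically non-trivial, and a fixed non-trivial $k$-AP is split correctly with probability $k^{-k}$. Derandomization is via perfect hash families. You instead guess $t=v_2(b)$ and $a_1 \bmod 2^{t+1}$, so non-triviality comes for free from arithmetic: the lifted step $2^t(2s+1)$ is an odd multiple of $2^t$. Your route gives a deterministic, hashing-free reduction whose instances shrink geometrically. The price is the regularity assumption on $T$; without it you only get $O(T(n)\log n)$, assuming $T(n)/n$ is nondecreasing. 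The paper's randomized version needs no regularity but has constant one-sided error per trial, and the standard derandomization via $k$-perfect hash families also costs a $\log n$ factor. For the purposes of the hypothesis both routes are equally good.

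One small fix is needed. Your sets $A_j^{(t,r)}$ are not contained in $\{0,\dots,\lceil n/2^{t+1}\rceil\}$ as claimed. For example, $a=0$ gives the value $(0-r-(j-1)2^t)/2^{t+1}$, which can be as low as roughly $-(k+1)/2$. There are two easy remedies:
\begin{itemize}
\item Shift all $k$ sets by a common constant $O(k)$, which preserves multichromatic APs.
\item Discard negative entries. This is safe because a genuine AP with positive step yields $c_j=q+(j-1)s$ with $q=(a_1-r)/2^{t+1}\geq 0$ and $s=(b-2^t)/2^{t+1}\geq 0$.
\end{itemize}
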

\begin{proof}
The first claim is immediate by color-coding, i.e., randomly partition $A = A_1 \cup \dots \cup A_k$. With probability $k^{-k} = \Omega(1)$ any fixed (non-trivial) $k$-AP is hashed perfectly into $A_1 \times \dots \times A_k$. See~\cite{AlonYZ95,SchmidtS90} for the standard derandomization of this trick.

For the rest of this proof focus on the converse direction. Let $A_1, \dots, A_k \subseteq \set{0, \dots, n}$ be the given sets. We construct 3-AP-free sets $F_1, \dots, F_N$ that partition the universe $\set{0, \dots, n}$ by \Cref{lem:behrend}. Then we construct sets $B_\ell$, where $\ell$ ranges over all $k$-tuples $\ell \in [N]^k$, by
\begin{equation*}
    B_\ell = \bigcup_{i=1}^k ((A_i \cap F_{\ell_i}) + i \cdot 10 n).
\end{equation*}
We test if there is a monochromatic $k$-AP in any of the sets $B_\ell$. If yes, we report that there is a multichromatic $k$-AP in $(A_1, \dots, A_k)$.

Observe that $B_\ell \subseteq \set{0, \dots, O(n)}$, and thus the running time is $T(O(n)) \cdot N^k = T(n) \cdot n^{o(1)}$ (using that $T(O(n)) = O(T(n))$ as any $k$-AP over the universe $\set{0, \dots, O(n)}$ can be trivially reduced to $O(1)$ instances over the universe $\set{0, \dots, n}$). We analyze the correctness in the following.

We first show that if there is a $k$-AP $(b_1, \dots, b_k)$ in some set $B_\ell$, then there is a $k$-AP in $(A_1, \dots, A_k)$. The $k$-AP cannot be fully contained in a \emph{segment} $(A_i \cap F_{\ell_i}) + i \cdot 10n$ since $F_{\ell_i}$ does not contain a 3-AP (and thus also no $k$-AP). It follows that the $k$-AP spans at least two distinct segments. However, two adjacent segments in $B_\ell$ are placed at a distance of at least $9 n$, so it follows that the step-width $b_{i+1} - b_i$ is at least~$9 n$. Therefore, two elements $b_i, b_{i+1}$ can never appear in the same segment, and it follows that $b_1, \dots, b_k$ spans all $k$ segments, i.e., $b_i \in (A_i \cap F_{\ell_i}) + i \cdot 10 n$. Define $a_i = b_i - i \cdot 10n$. Then $(a_1, \dots, a_k)$ is a multichromatic $k$-AP in $(A_1, \dots, A_k)$ as claimed.

It remains to show that if there is a $k$-AP $(a_1, \dots, a_k)$ in $(A_1, \dots, A_k)$, then there is a $k$-AP in some set~$B_\ell$. \Cref{lem:behrend} guarantees that the sets $B_1, \dots, B_N$ cover the entire set $\set{0, \dots, n}$. In particular, there exist $\ell_1, \dots, \ell_k \in [N]$ such that $a_i \in B_{\ell_i}$ for all $i$. Then the sequence $(b_1, \dots, b_k)$ defined by $b_i = a_i + i \cdot 10n$ is a $k$-AP in $B_\ell$ for $\ell = (\ell_1, \dots, \ell_k)$.
\end{proof}

\begin{proof}[Proof of \Cref{thm:ap-larger-k}]
The proof is a simple consequence of \Cref{lem:ap-chromatic}. If monochromatic $k'$-AP Detection is in time~$T(n)$, then multichromatic $k'$-AP Detection is in time $T(O(n))$. From this it follows for all $k \leq k'$ that multichromatic $k$-AP Detection is in time $T(O(n))$ as well, as we can simply take the sets $A_{k+1}, \dots, A_{k'}$ to be copies of the full universe~$\set{0, \dots, O(n)}$. Finally, from the converse direction of \Cref{lem:ap-chromatic} it follows that monochromatic $k$-AP Detection is in time $T(O(n)) \cdot n^{o(1)}$.
\end{proof}

\subsection{Reduction from \texorpdfstring{$k$}{k}-AP to \texorpdfstring{$(k, k-1)$}{(k, k-1)}-Hyperclique} \label{sec:reducts:sec:hyperclique}
In this section we establish the reduction from $k$-AP Detection to $(k, k-1)$-Hyperclique Detection. Again, throughout $k \geq 3$ is assumed to be constant.

\begin{lemma} \label{lem:ap-to-hyperclique}
If $(k, k-1)$-Hyperclique Detection is in time $T(n)$, then $k$-AP Detection is in time $O(T(n^{2/k}))$.
\end{lemma}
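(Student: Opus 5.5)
The plan is a Cayley-hypergraph construction in the spirit of Gowers' hypergraph proof of Szemerédi's theorem. Set $m = \Theta(n^{2/k})$. I will build $O(1)$ $k$-partite $(k-1)$-uniform hypergraphs with parts $V_1,\dots,V_k$, each a copy of $\{0,\dots,m-1\}$, and then run the $(k,k-1)$-Hyperclique algorithm on each of them. By \Cref{lem:ap-chromatic} I may start from the multichromatic version (sets $A_1,\dots,A_k$); by its proof and the segment trick I may also pass to arithmetic modulo a prime $p = \Theta(n)$ without creating spurious progressions.

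\textbf{Parametrization.} A $k$-AP is a sequence $a_i = a + (i-1)d$. Write $a = \sum_j \alpha_j x_j$ and $d = \sum_j \beta_j x_j$ as linear forms in $k$ variables $x_1,\dots,x_k$. I want the $i$-th term not to depend on $x_i$. The coefficient of $x_i$ in $a_i$ is $\alpha_i + (i-1)\beta_i$, so it suffices to set $\alpha_j = -(j-1)\beta_j$. Then
\[
a_i = \sum_{j \ne i} (i-j)\,\beta_j x_j ,
\]
which depends only on the $k-1$ variables other than $x_i$. For every $(k-1)$-tuple $(x_j)_{j\neq i}$ with $x_j \in V_j$, I add the hyperedge on these vertices exactly when $a_i \bmod p \in A_i$.

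\textbf{Correctness and running time.} Soundness is immediate. A hyperclique $(x_1,\dots,x_k)$ gives elements $a_i \in A_i$, and $a_i = a + (i-1)d$ with $a = -\sum_j (j-1)\beta_j x_j$ and $d = \sum_j \beta_j x_j$, so they form a $k$-AP (mod $p$, and hence over the integers by the segment/window argument). Building each hypergraph takes $O(k\, m^{k-1}) = O(n^{2-2/k})$ time, which is dominated by $T(m) = T(n^{2/k})$ because any algorithm must at least read its input. The running time is therefore $O(T(n^{2/k}))$, provided only $O(1)$ hypergraphs are needed.

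\textbf{Completeness, the main obstacle.} Every pair $(a,d) \in \mathbb{Z}_p^2$ must be realized by some $x \in \{0,\dots,m-1\}^k$ in at least one of the $O(1)$ instances. Equivalently, the box sums $\sum_j x_j \beta_j (1, j-1)$ must cover $\mathbb{Z}_p^2$, which has $p^2 = \Theta(m^k)$ elements, so there is essentially no slack in the count.

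\begin{itemize}
\item \emph{First attempt: mixed radix.} Pair up the variables. For two indices $j \ne j'$, the directions $(1,j-1)$ and $(1,j'-1)$ are linearly independent. So for each pair I choose scales $\beta_j, \beta_{j'}$ equal to a common power $m^t$ (times a fixed inverse matrix). This lets the pair contribute the $t$-th base-$m$ digit of $d$ and of $-a$ independently, up to bounded carries.
\item \emph{Absorbing carries.} The carries, and a possible odd leftover variable when $k$ is odd, I would absorb by taking $O(1)$ shifted instances, or by enlarging $m$ by a constant factor.
\item \emph{Fallback.} If the explicit digit scheme fails, I would pick the $\beta_j$ at random over $O(1)$ instances and prove coverage by a second-moment/counting argument on the box sums.
\end{itemize}

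Getting this covering exactly right, with only constant overhead, is the step I expect to be delicate.
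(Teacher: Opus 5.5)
\textbf{Verdict: there is a genuine gap.} The lemma's difficulty lies in the completeness (covering) step, the paper's \Cref{lem:ap-ruler-trick}, and your proposal does not establish it.

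\textbf{What is fine.} Your set-up matches the paper's. The forms $a_i=\sum_{j\ne i}(i-j)\beta_j x_j$ are the rows of the paper's matrix $M$ up to column scaling. The hyperedges are defined the same way, and soundness is the same.

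\textbf{Why the routes you sketch do not close the gap.} Because $\beta_j$ is a scalar, $x_j$ moves $(d,-a)$ only along the direction $(1,j-1)$. No ``fixed inverse matrix'' can be applied, so two-dimensional coverage at a given scale needs two variables at that scale.
\begin{itemize}
\item \emph{Even $k$.} Your pairing can be repaired. Adjacent directions $(1,j-1),(1,j)$ have determinant $1$. Hence the images of the boxes of the pairs $(1,2),(3,4),\dots$ at scales $1,m,m^2,\dots$ add up to a transversal of $m^{k/2}\mathbb{Z}^2$. Working modulo $m^{k/2}>kn$ (not modulo a prime) then gives exact coverage with a single instance and no carries to absorb.
\item \emph{Odd $k$ (e.g.\ $k=5$).} There is no perfect pairing. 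The $(k-1)/2$ pairs give only $m^{k-1}$ residues, and the leftover interval moves $(d,-a)$ along a single line. You do not say how it should supply the remaining two-dimensional factor of order $m$. Enlarging $m$ by a constant, or using $O(1)$ shifted instances, gains only constant factors and does not address this.
\item \emph{Random fallback.} A second-moment bound would still have to be proved, including control of degenerate pairs $x,x'$. Even then it only shows that each fixed $(a,d)$ is missed with probability bounded by a constant. Covering all $\Theta(n^2)$ progressions then needs $\Theta(\log n)$ independent instances, giving $O(T(n^{2/k})\log n)$ rather than the claimed bound.
\end{itemize}

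\textbf{The missing idea.} Cover the digit levels cyclically rather than by a matching.
\begin{itemize}
\item The paper sets $q=\lceil n^{1/k}\rceil$ and uses $k$ base-$q$ levels for $(a_1,a_2)$.
\item Each $x_j$ carries two digits, at levels $j-1$ and $j$; $x_1$ sits at levels $0,1$ and $x_k$ at levels $k-1$ and $0$.
\item Thus every level is served by two variables with independent directions, and this works for every $k$.
\item The two digits of $x_k$ are not adjacent, so its values do not form an interval. This is why the vertex set is $X$, the integers with at most two nonzero (possibly negative) base-$q$ digits of size $O_k(q)$, giving $|X|=O_k(n^{2/k})$.
\item An explicit $x\in X^k$ with $Mx=a$ is written down over the integers, so no modulus is needed. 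The formulas must also handle the wraparound pair of directions (columns $1$ and $k$), whose determinant is $k-1$ rather than $1$.
\end{itemize}
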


The proof of \Cref{lem:ap-to-hyperclique} requires some setup. Let $\AP^k \subseteq \Rat^k$ denote the subset of $k$-term progressions. Observe that $\AP^k$ is a vector space of dimension $2$ as it can be described as the set of all points $(a_1, \dots, a_k)$ satisfying the $k-2$ linearly independent equations $a_i = (2-i) a_1 + (i-1) a_2$ for $i=3, \dots, k$. Define the~\makebox{$k \times k$} matrix $M$ by
\begin{equation*}
    M = \frac{1}{k-1}
    \begin{bmatrix}
        0      & -1     & -2     & \cdots & -(k-1) \\
        1      & 0      & -1     & \cdots & -(k-2) \\
        2      & 1      &  0     & \cdots & -(k-3) \\
        \vdots & \vdots & \vdots & \ddots & \vdots \\
        k-1    & k-2    & k-3    & \cdots & 0      \\
    \end{bmatrix}.
\end{equation*}
The point of this matrix is that its image (i.e., column space) is exactly the set of all $k$-APs; see the following observation. The normalization by $\frac{1}{k-1}$ is not strictly necessary but natural later on.

\begin{lemma} \label{lem:ap-image}
$\im M = \AP^k$.
\end{lemma}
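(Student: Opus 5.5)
We show the two inclusions $\im M \subseteq \AP^k$ and $\AP^k \subseteq \im M$. Index rows and columns by $i, j \in \{1, \dots, k\}$. The entries of $M$ are $M_{ij} = \frac{i-j}{k-1}$, which is consistent with the displayed first row $0, -1, \dots, -(k-1)$ and first column $0, 1, \dots, k-1$.

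For $\im M \subseteq \AP^k$, note that the $j$-th column of $M$ is the vector $\bigl(\frac{i-j}{k-1}\bigr)_{i=1}^k$. This is a $k$-term progression with first term $\frac{1-j}{k-1}$ and step $\frac{1}{k-1}$, so every column lies in $\AP^k$. As observed just before the lemma, $\AP^k$ is a linear subspace of $\Rat^k$, being the solution set of the linear equations $a_i = (2-i)a_1 + (i-1)a_2$. Therefore every linear combination of the columns is also in $\AP^k$.

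For the reverse inclusion, $\AP^k$ has dimension $2$, so it suffices to show that $\dim \im M \ge 2$, i.e.\ that $\operatorname{rank} M \geq 2$. The first column is $\frac{1}{k-1}(0, 1, \dots, k-1)$ and the last column is $\frac{1}{k-1}(-(k-1), -(k-2), \dots, 0)$. The first has first entry $0$ and is nonzero. The second has first entry $-1 \neq 0$. Hence the two columns are linearly independent, and since $k \geq 2$ they are distinct columns. So $\im M$ is a subspace of $\AP^k$ of dimension at least $2$, which forces equality.

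A more explicit version of the second inclusion also works: the columns $j=1$ and $j=k$ are the progressions with first terms $0$ and $-1$ and steps $\frac{1}{k-1}$ and $\frac{1}{k-1}$ respectively, so their difference is the constant vector $(1, \dots, 1)$. Together with the first column this generates all progressions (first term, step). There is no real obstacle here; the only step needing care is the dimension count for $\AP^k$, which the paper has already justified.
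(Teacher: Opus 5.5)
Your proof is correct and follows the paper's approach: the inclusion $\im M \subseteq \AP^k$ is argued the same way, via each column being a progression. For the reverse inclusion you use a rank count (columns $1$ and $k$ are linearly independent inside the $2$-dimensional space $\AP^k$), whereas the paper writes down the explicit preimage $x = (a_2(k-1), -a_1(k-1), 0, \dots, 0)$ using columns $1$ and $2$; this is the same idea made concrete, much like your ``more explicit version''.
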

\begin{proof}
Each column in $M$ is a $k$-term progression, hence its column space satisfies $\im M \subseteq \AP^k$. Conversely, any $k$-term progression $a = (a_1, \dots, a_k)$ can be expressed as a matrix-vector product $a = M x$, for instance for $x = (a_2(k-1), -a_1(k-1), 0, \dots, 0)$.
\end{proof}

\begin{lemma} \label{lem:ap-ruler-trick}
For any integer $n \geq 1$ there is a set $X \subseteq \set{0, \dots, O_k(n)}$ of size $O_k(n^{2/k})$ such that for every $k$-AP $a=(a_1, \dots, a_k) \in \set{0, \dots, n}^k$ there is some $x \in X^k$ with $M x = a$. The set $X$ can be constructed in deterministic time $O(|X|)$.
\end{lemma}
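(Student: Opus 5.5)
The plan is a ``ruler''/mixed-radix construction. I first turn the condition $Mx = a$ into two scalar linear constraints, and then build $X$ from numbers with only a few prescribed base-$b$ digits.

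\emph{Step 1: reduce to two linear forms.} By definition of $M$, $(Mx)_i = \frac{1}{k-1}\bigl(i\,s(x) - t(x)\bigr)$, where $s(x)=\sum_j x_j$ and $t(x)=\sum_j j\,x_j$. So $Mx$ is the $k$-AP with step $d = s/(k-1)$ and first term $a_1 = (s-t)/(k-1)$. A given $k$-AP $(a_1,\dots,a_k)\in\set{0,\dots,n}^k$ with step $d$ is therefore hit exactly when
\[
s(x) = (k-1)d, \qquad t(x) = (k-1)(d-a_1).
\]
Both targets are integers of absolute value $O_k(n)$. The task is thus to find a set $X$ of size $O_k(n^{2/k})$ such that the map $\varphi\colon x\mapsto(s(x),t(x))$ restricted to $X^k$ covers every pair of this form. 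A counting check shows that this is tight: $|X|^k = \Theta(n^2)$, which matches the number of $k$-APs. Hence every coordinate must carry exactly ``two digits' worth'' of information.

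A caveat on signs. With $X$ nonnegative, row~$1$ of $M$ forces $(Mx)_1\le 0$, so I will take $X$ to consist of integers of absolute value $O_k(n)$. If the statement genuinely requires $X\subseteq\set{0,\dots,O_k(n)}$, I would first check whether a reindexing or normalization of $M$ absorbs this; otherwise I expect the downstream hyperclique reduction to only need magnitudes bounded by $O_k(n)$.

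\emph{Step 2: unimodular pairs of coordinates.} The key algebraic fact is that for consecutive indices $j,j+1$ the matrix $\begin{bmatrix}1&1\\ j&j+1\end{bmatrix}$ has determinant $1$. Consequently, any integer target $(U,W)$ is realized by $x_j=(j+1)U-W$ and $x_{j+1}=W-jU$, both of magnitude $O_k(\max(|U|,|W|))$. For even $k$ I group the coordinates into the $k/2$ pairs $(2m-1,2m)$ and set $b = \Theta(n^{2/k})$. I write both targets $s$ and $t$ in base $b$ with $k/2$ digits: $s=\sum_m s_m b^{m-1}$ and $t=\sum_m t_m b^{m-1}$, where signed digits handle negative values. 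Pair $m$ is responsible for level $m-1$: I apply the unimodular inversion to $(s_m,t_m)$ and multiply by $b^{m-1}$. Then $X = \set{c\, b^{r} : 0\le r<k/2,\ |c|\le C_k b}$ has size $O_k(b)=O_k(n^{2/k})$, its elements have magnitude $O_k(n)$, and it is clearly constructible in time $O(|X|)$.

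\emph{Step 3 (main obstacle): odd $k$.} Pairing would leave one coordinate unused, which would force $b=n^{1/\lfloor k/2\rfloor}$ and is too large. Instead I set $b=\Theta(n^{1/k})$ and let each coordinate carry two adjacent digit levels, $x_j = \alpha_j b^{j-1} + \beta_j b^{j}$, so that $X=\set{\alpha b^{r}+\beta b^{r+1}}$ has size $O_k(b^2)=O_k(n^{2/k})$. Level $r$ then receives contributions from coordinates $r$ and $r+1$, which form a unimodular pair. The boundary levels are the problem. Closing the chain cyclically (coordinate $k$ covers levels $k-1$ and $0$) produces the pair $(1,k)$ with determinant $k-1$ instead of $1$. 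I would try to resolve this using the fact that both targets $s,t$ are multiples of $k-1$: dividing them out at level $0$ should make the system solvable over $\mathbb{Z}$. Alternatively, I would use one triple of coordinates with mixed radix. Throughout, carries between levels must be propagated; allowing digits in a range $[-C_k b, C_k b]$ should keep everything within the claimed size and magnitude bounds. Verifying that these carries and the divisibility condition fit together is the step I expect to need the most care.
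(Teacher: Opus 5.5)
Your reduction to the two linear forms $s(x)=\sum_j x_j$ and $t(x)=\sum_j j\,x_j$ is correct; it is a cleaner way to do the coefficient comparison the paper carries out ``tediously''. Your sign caveat is also justified. Since $M_{1j}=(1-j)/(k-1)\le 0$, a nonnegative $X$ only reaches progressions with $a_1=0$, so the containment $X\subseteq\{0,\dots,O_k(n)\}$ cannot hold as stated. The paper's own $X$ uses ``negative digits'' and really lies in $\{-O_k(n),\dots,O_k(n)\}$, which is all the hyperclique reduction needs. For even $k$ your construction is complete and correct: one signed digit per coordinate in base $b=\Theta(n^{2/k})$, with consecutive pairs $(2m-1,2m)$ of determinant $1$. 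It is a genuinely different and slightly simpler route than the paper's, which uses a single cyclic two-digit construction in base $q=\lceil n^{1/k}\rceil$ for every $k$, with no parity split.

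The gap is the odd case, which you leave as a plan. Your cyclic structure is exactly the paper's: coordinate $j$ carries levels $j-1$ and $j$, and coordinate $k$ wraps around to levels $k-1$ and $0$. However, the level-$0$ fix does not work if you expand $s$ and $t$ themselves in base $b$ with standard digits. The system $x_1+x_k=s_0$, $x_1+kx_k=t_0$ is solvable over $\mathbb{Z}$ iff $(k-1)\mid t_0-s_0$, and divisibility of $s,t$ by $k-1$ does not pass to their lowest digits. For example, take $k=5$, $b=3$, $a=(1,2,3,4,5)$. Then $s=4=1\cdot 3+1$ and $t=0$, which forces $4x_5=-1$.

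The missing step is to divide out $k-1$ \emph{before} expanding. Write $d=s/(k-1)=\sum_\ell d_\ell b^\ell$ and $e=t/(k-1)=\sum_\ell e_\ell b^\ell$ with $|d_\ell|,|e_\ell|=O(b)$; the paper expands $a_1,a_2$ and takes $d_\ell=a_2[\ell]-a_1[\ell]$, $e_\ell=a_2[\ell]-2a_1[\ell]$. Then impose level-$\ell$ targets $((k-1)d_\ell,(k-1)e_\ell)$. The digit unknowns now split into $k$ independent $2\times 2$ systems, with no carries at all. Level $\ell\in\{1,\dots,k-1\}$ is solved unimodularly by coordinates $\ell,\ell+1$. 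Level $0$ is solved by $x_k^{(0)}=e_0-d_0$ and $x_1^{(0)}=kd_0-e_0$, which are exactly the paper's digits $-a_1[0]$ and $(k-1)a_2[0]-(k-2)a_1[0]$.

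Note also that your set $\{\alpha b^r+\beta b^{r+1}\}$ omits the wrap-around elements $\alpha b^{k-1}+\beta$. Taking all integers with at most two nonzero signed digits, as the paper does, fixes this and keeps $|X|=O_k(b^2)=O_k(n^{2/k})$.
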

\begin{proof}
The intuition behind this lemma is simple: The affine vector space $\set{x : M x = a}$ has dimension~\makebox{$k - 2$}, and from this it follows that there are at least $n^{k-2}$ solutions $x \in \set{0, \dots, O_k(n)}^k$. Hence, we can choose $X$ as a \emph{uniformly random} set with sampling rate $\Theta_k(n^{-(k-2)/k})$ to hit at least one such solution in~$X^k$ with constant probability. The resulting set $X$ has size $O_k(n^{2/k})$ as claimed. In the following paragraphs we spell out the formal details of a deterministic construction; we encourage the readers mainly interested in the concepts behind the reduction to skip this part.

Let $q = \ceil{n^{1/k}}$. We will view integers in their base-$q$ representation; i.e., for $b \in \set{0, \dots, n}$ let us write $b[\ell]$ for the integer obtained from $b$ by zeroing out all $q$-ary digits except for the $\ell$-th one. Thus,~$b[\ell]$ is a multiple of $q^\ell$ that is less than $q^{\ell+1}$, and we can express \smash{$b = \sum_{\ell=0}^{k-1} b[\ell]$}. We pick
\begin{equation*}
    X = \set*{\sum_{\ell=0}^{k-1} d_\ell \cdot q^\ell : \text{$d_0, \dots, d_{k-1} \in \set{-O_k(q), \dots, O_k(q)}$ and at most two $d_\ell$'s are nonzero}}.
\end{equation*}
That is, $X$ essentially consists of the integers in $\set{0, \dots, n}$ where only two $q$-ary digits are nonzero; however, for technical reasons we also allow ``negative digits'' in the definition of $X$. We can nevertheless bound its size by $|X| \leq k^2 \cdot O_k(q)^2 = O_k(n^{2/k})$.

We now argue that for any $k$-AP $a=(a_1, \dots, a_k) \in \set{0, \dots, n}^k$ there exists $x \in X^k$ with $M x = a$. Pick $x = (x_1, \dots, x_k)$ as
\begin{equation*}
    x_j =
    \begin{cases}
        -(k-2) a_1[0] + (k-1) a_2[0] + (k-1) a_2[1] &\text{for $j = 1$,} \\
        (k-1)((j-3)a_1[j-1] - (j-2)a_2[j-1] - (j-1)a_1[j] + j a_2[j]) &\text{for $1 < j < k$,} \\
        (k-1)((k-3)a_1[k-1] - (k-2)a_2[k-1]) - a_1[0] &\text{for $j = k$.}
    \end{cases}
\end{equation*}
Observe that $x_j \in X$ (when choosing the hidden constants in the definition of $X$ sufficiently large), as in each case we assign only two of the ``$q$-ary digits'' of $x_j$ to a (possibly negative) value bounded by $O_k(q)$. This complicated definition is chosen in such a way that
\begin{align*}
    (M x)_i
    = \sum_{j=1}^k M_{ij} x_j
    = \sum_{j=1}^k \frac{i - j}{k-1} \cdot x_j
    = (2 - i) \sum_{\ell=0}^{k-1} a_1[\ell] + (i - 1) \sum_{\ell=0}^{k-1} a_2[\ell],
\end{align*}
as can be (tediously) verified by comparing the coefficients of $a_1[\ell]$ and $a_2[\ell]$ in the last step. It follows that
\begin{equation*}
    (M x)_i = (2 - i) a_1 + (i - 1) a_2 = a_1 + (i - 1) (a_2 - a_1) = a_i, \vphantom{\sum}
\end{equation*}
and thus $M x = a$ as claimed.
\end{proof}

\begin{proof}[Proof of \Cref{lem:ap-to-hyperclique}]
We reduce from a given multichromatic $k$-AP instance $A_1, \dots, A_k \subseteq \set{0, \dots, n}$ by \Cref{lem:ap-chromatic}. Construct the set $X$ by the previous \Cref{lem:ap-ruler-trick} (in negligible time). Then construct a $(k-1)$-uniform $k$-partite hypergraph $G$ on vertex parts $X_1, \dots, X_k$, where we identify each part with a copy of $X$. We add a hyperedge $(x_1, \dots, x_{i-1}, x_{i+1}, \dots, x_k) \in X_1 \times \dots \times X_{i-1} \times X_{i+1} \times \dots \times X_k$ if and only if
\begin{equation*}
    \sum_{j=1}^k M_{ij} x_j \in A_i.
\end{equation*}
Note that this expression is indeed well-defined as the missing term $x_i$ appears with coefficient~\makebox{$M_{ii} = 0$} in the sum. We will argue that this graph contains a size-$k$ hyperclique if and only if there is a $k$-AP in the given instance. Thus, it suffices to run the fast hyperclique detection algorithm on $G$. As $|X_1|, \dots, |X_k| = O(n^{2/k})$ (by \Cref{lem:ap-ruler-trick}) the total time is $O(T(n^{2/k}))$. (Here again, we use the fact that $T(O(N)) = O(T(N))$ by the trivial reduction from one Hyperclique instance of size $O(N)$ to $O(1)$ instances of size $N$.)

For the correctness, first suppose that there is a hyperclique $x = (x_1, \dots, x_k) \in X_1 \times \dots \times X_k$ in~$G$. We show that there is a $k$-AP in $(A_1, \dots, A_k)$. Indeed, take $a := M x$. By construction we have $a_i \in A_i$ for all~\makebox{$i \in [k]$}, and \Cref{lem:ap-image} implies that $a = Mx \in \im M \subseteq \AP^k$ is a $k$-term progression. Conversely, suppose that there is a $k$-AP $a = (a_1, \dots, a_k)$ in $(A_1, \dots, A_k)$. \Cref{lem:ap-ruler-trick} guarantees that there is a vector $x \in X^k$ such that $M x = a$. But then by construction $(x_1, \dots, x_k)$ forms a hyperclique in $G$.
\end{proof}

\begin{remark}
We remark that the same proof shows that \emph{counting} (multichromatic) $k$-APs reduces to \emph{counting} $(k, k-1)$-hypercliques. The only technical overhead is to make sure in \Cref{lem:ap-ruler-trick} that there is \emph{exactly one} solution $x$ to $M x = a$, which can be achieved by guessing some of the bits in $a_1, \dots, a_k$.
\end{remark}

\subsection{Reduction from \texorpdfstring{$4$}{4}-AP to \texorpdfstring{$4$}{4}-SUM} \label{sec:ap:sec:4sum}
Finally, we include a short self-contained proof that the 4-AP Hypothesis implies the 4-SUM Hypothesis~\cite{GokajK25}. A similar reduction has been obtained independently~\cite{Polak_personal_communication}.

\begin{theorem}[{{\cite{GokajK25}}}] \label{thm:4ap-to-4sum}
The $4$-AP Hypothesis implies the $4$-SUM Hypothesis.
\end{theorem}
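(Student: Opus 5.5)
We prove the contrapositive: assuming 4-SUM can be solved in time $O(N^{2-\epsilon})$ on sets of size $N$, we solve 4-AP Detection on a universe $\set{0,\dots,n}$ in time $O(n^{2-\epsilon'})$. By \Cref{lem:ap-chromatic} we may work with the multichromatic version. So we are given $A_1,\dots,A_4\subseteq\set{0,\dots,n}$ and must find $a_i\in A_i$ that form a 4-AP. The 4-AP condition consists of two linear equations, $a_1-2a_2+a_3=0$ and $a_2-2a_3+a_4=0$. A single 4-SUM instance can only test one linear equation with coefficients $\pm1$. We therefore split the instance so that one equation is enforced structurally and the other by the 4-SUM test.

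The key step is to use a bucketing on the common difference, in the spirit of the ruler trick of \Cref{lem:ap-ruler-trick}. Write every element in base $q=\ceil{\sqrt n}$, so $a=a^{hi}q+a^{lo}$. Also write the progression as $a_i=a_1+(i-1)d$ with $d=d^{hi}q+d^{lo}$.

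First we guess $d^{lo}\in\set{0,\dots,q-1}$; there are $\sqrt n$ guesses. The low digits of $a_2,a_3,a_4$ are then determined by $a_1^{lo}$ and $d^{lo}$, up to carries. Carries take only $O(1)$ values, and we also guess them. After these guesses the low-digit constraints become equalities of the form $a_i^{lo}=a_1^{lo}+(i-1)d^{lo}-(\text{carry})q$.

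Within each guess we must still check the high-digit progression condition, which is again two linear equations on numbers of size $O(\sqrt n)$, together with consistency of the low digits. We encode both as a single 4-SUM equation with weighted coordinates. Map each $a\in A_i$ to the number
\[
\phi_i(a)=\lambda_i\,(a^{hi}\text{ part})+K\cdot(\text{low-digit check}),
\]
where the coefficients $\lambda_i$ and the large multiplier $K$ are chosen so that the first equation is carried in the low field and the second in the high field. Coefficients like $-2$ can be emulated by using distinct scalings per color, since each color contributes exactly one element. Dividing the scalings across blocks of size $\sqrt n$ yields 4-SUM instances with $O(\sqrt n)$ elements each.

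I expect the main obstacle to be making two independent linear constraints, with coefficients $(1,-2,1,0)$ and $(0,1,-2,1)$, expressible as one 4-SUM equation with unit coefficients over sets of size only $O(\sqrt n)$ per instance. The total number of instances must be about $n^{1/2}$ or $n$, and then the running time is $n\cdot (\sqrt n)^{2-\epsilon}=n^{2-\epsilon/2}$. If we need $n$ instances of size $\sqrt n$, this yields $n^{2-\epsilon/2}$, which is subquadratic, so it suffices.

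The first thing I would try is the cleaner route of rewriting the 4-AP condition as $a_1+a_4=a_2+a_3$ together with $a_1+a_3=2a_2$. The first equation is already a genuine 4-SUM equation over $A_1,-A_2,-A_3,A_4$. The second is enforced by the guessing: guess the difference $d$ modulo about $\sqrt n$ and its top digit structure, and restrict the sets so that any surviving 4-SUM solution automatically has $a_2-a_1=a_3-a_2$. Here the Behrend partition of \Cref{lem:behrend} would be used to rule out spurious solutions, as in the proof of \Cref{lem:ap-chromatic}.
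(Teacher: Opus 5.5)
Your proposal has a genuine gap. The step you flag as the ``main obstacle''---expressing both progression equations by a single 4-SUM equation---is never actually carried out, and it is essentially the whole proof.

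Your premise that one 4-SUM instance can only test a single equation with coefficients $\pm 1$ is false for the multichromatic version. Because each color contributes exactly one element, you may push each $A_i$ through its own coefficient \emph{vector}. The paper sets $B_1=\{(a_1,2a_1)\}$, $B_2=\{(-2a_2,-3a_2)\}$, $B_3=\{(a_3,0)\}$, $B_4=\{(0,a_4)\}$. Then $b_1+b_2+b_3+b_4=(0,0)$ iff $a_3=2a_2-a_1$ and $a_4=3a_2-2a_1$, i.e., iff $(a_1,\dots,a_4)$ is a 4-AP. The flattening $(x,y)\mapsto x+10n\cdot y$ is collision-free, since the first coordinates of a 4-tuple sum to something of absolute value below $10n$. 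This gives a \emph{single} 4-SUM instance of size $O(n)$. Together with \Cref{lem:ap-chromatic} and the standard colored-to-uncolored 4-SUM reduction, an $O(N^{2-\epsilon})$ 4-SUM algorithm then yields 4-AP Detection in time $n^{2-\epsilon+o(1)}$. No base-$\sqrt n$ digits, guesses of $d^{lo}$ or carries, size-$\sqrt n$ sub-instances, or Behrend partition are needed.

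As written, your machinery does not close this gap:
\begin{itemize}
\item Your map $\phi_i(a)=\lambda_i(a^{hi})+K\cdot(\text{low-digit check})$ puts a single scalar per color on the high digits, so it encodes only one of the two high-digit equations. The ``low-digit check'' is not the other equation. What you need is two coefficient sequences $(\lambda_i)$ and $(\mu_i)$ with $\phi_i(a)=\lambda_i a+K\mu_i a$. That is exactly the construction above, and it makes the splitting superfluous.
\item Your running-time accounting is unresolved. Guessing only $d^{lo}$ leaves sets of size $\Theta(n)$ in each of $\sqrt n$ instances, costing $\sqrt n\cdot n^{2-\epsilon}$, which is not subquadratic in general. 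You would have to fix $a_1^{lo}$ as well to get $n$ instances of size $O(\sqrt n)$.
\item The ``cleaner route'' fails. For solutions of $a_1+a_4=a_2+a_3$, partial information about $d$ cannot force $a_1+a_3=2a_2$ unless $d$ is fixed entirely. That gives $n$ instances of size $n$, i.e., quadratic time. A partition into 3-AP-free sets forbids progressions; it does not enforce an equation across the four colors.
\end{itemize}
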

\begin{proof}
We describe the transformation from a given multichromatic $4$-AP instance $(A_1, A_2, A_3, A_4)$ to an equivalent multichromatic $4$-SUM instance. By \Cref{lem:ap-chromatic}, and the basic fact that $4$-SUM and its colored counterpart are equivalent, this entails the claimed reduction. We transform each of the sets $A_i$ as follows:
\begin{align*}
    B_1 &= \set{(a_1, 2a_1) : a_1 \in A_1}, \\
    B_2 &= \set{(-2a_2, -3a_2) : a_2 \in A_2}, \\
    B_3 &= \set{(a_3, 0) : a_3 \in A_3}, \\
    B_4 &= \set{(0, a_4) : a_4 \in A_4}.
\end{align*}
Then the tuples $(b_1, b_2, b_3, b_4) \in B_1 \times B_2 \times B_3 \times B_4$ satisfying $b_1 + b_2 + b_3 + b_4 = (0, 0)$ stand in one-to-one correspondence to the tuples $(a_1, a_2, a_3, a_4) \in A_1 \times A_2 \times A_3 \times A_4$ satisfying $a_3 = 2a_2 - a_1$ and $a_4 = 3a_2 - 2a_1$. These are exactly the defining equations for a $4$-AP. Hence, the (2-dimensional) $4$-SUM instance $(B_1, B_2, B_3, B_4)$ is equivalent to the given $4$-AP instance $(A_1, A_2, A_3, A_4)$. A final modification is to reduce from 2-dimensional 4-SUM to the standard integer version by the typical embedding $(x, y) \mapsto x + y \cdot 10n$.
\end{proof}}

\section{4-AP-Hard Identities} \label{sec:4APhard-identities}
In this section, we prove the hardness of several representative identity checking problems, based on the 4-AP hypothesis. Our reduction from 4-AP detection involves two intermediate problems to be defined in the upcoming subsections: Square Detection and T Detection.

\subsection{4-AP-Hardness of Square Detection} \label{subsec:squaredetect}
Our first intermediate problem, Square Detection, was communicated to us by Adam Polak~\cite{Polak_personal_communication}.

\begin{problem}[Square Detection] \label{prob:cornersquare}
Given an $n\times n$ matrix $M$ over $\{0,1\}$, determine if there is a \emph{square}, i.e., $1\le i,j\le n$ and $k > 0$ such that \[M(i,j) = M(i+k,j) = M(i+k,j+k) = M(i,j+k)=1.\]
\end{problem}

\Cref{prob:cornersquare} has a simple $\bigO(n^3)$ time algorithm.
A very similar problem that we call T detection will be defined later in \cref{subsec:tdetect}. For our actual reductions to the identity checking problems, it will be easier to start from the multichromatic version of the problem, which allows us to drop the $k > 0$ condition.

\begin{problem}[Multichromatic Square Detection] \label{prob:colorcornersquare}
Given four $n\times n$ matrices $M_1,M_2,M_3,M_4$ over $\{0,1\}$, determine if there is a \emph{multichromatic square}, i.e., $1\le i,j\le n$ and $k$ such that $M_1(i,j) = M_2(i+k,j) = M_3(i+k,j+k) = M_4(i,j+k)=1$.
\end{problem}

We emphasize that in the multichromatic version $k$ is allowed to be positive, zero, or negative. It is standard to reduce from Square Detection problem to its multichromatic version via color-coding \cite{AlonYZ95}. In the following central lemma we first establish 4-AP-hardness of Multichromatic Square Detection. Then, with some more technical overhead we complete the 4-AP-hardness of Monochromatic Square Detection in \cref{lem:square-hardness}.

\begin{lemma} \label{lem:square-chrom-hardness}
Multichromatic Square Detection cannot be solved in time $O(n^{3-\epsilon})$ (for any constant $\epsilon > 0$), unless the 4-AP Hypothesis fails.
\end{lemma}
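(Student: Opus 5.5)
The plan is to reduce a multichromatic 4-AP instance $(A_1,A_2,A_3,A_4)$ over $\{0,\dots,n\}$ to $n^{1/2+o(1)}$ instances of Multichromatic Square Detection, each of size $N\times N$ with $N=\Theta(\sqrt n)$. By \Cref{lem:ap-chromatic}, multichromatic 4-AP is as hard as 4-AP up to $n^{o(1)}$ factors. If Multichromatic Square Detection were in time $O(N^{3-\epsilon})$, the total time would be $n^{1/2}\cdot n^{(3-\epsilon)/2}\cdot n^{o(1)}=n^{2-\epsilon/2+o(1)}$, refuting the 4-AP Hypothesis.

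For each instance $h$ I will choose four \emph{labelings} $\ell^h_1,\dots,\ell^h_4:[N]^2\to\mathbb Z$, one per colour. I then set $M_t(p,q)=1$ iff $\ell^h_t(p,q)\in A_t$, where labels outside $\{0,\dots,n\}$ automatically give $0$-entries. The labelings must satisfy
\[
\bigl(\ell^h_1(i,j),\ \ell^h_2(i+k,j),\ \ell^h_3(i+k,j+k),\ \ell^h_4(i,j+k)\bigr)\ \text{is a 4-AP for all } i,j,k.
\]
This makes soundness immediate: any multichromatic square yields a multichromatic 4-AP. Trivial progressions are allowed in the multichromatic 4-AP problem.

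To find such labelings, write $x=i$, $y=j$, $z=i+k$, $w=j+k$, so that $w=z-x+y$. Now look for $a_t=u+(t-1)v$ with $u,v$ linear in $(x,y,z)$, such that $a_1$ depends only on $(x,y)$, $a_2$ only on $(z,y)$, $a_3$ only on $(z,w)$, and $a_4$ only on $(x,w)$. A quick coefficient comparison shows that the solution space is two-dimensional. Two independent solutions are
\[
(a_1,\dots,a_4)=(2x,\ y+z,\ 2w,\ 3w-x)
\quad\text{and}\quad
(a_1,\dots,a_4)=(-6y,\ z-3y,\ 2z,\ 3w+\dots),
\]
where the second is worked out from the same equations. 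Adding a constant progression $(c,c+e,c+2e,c+3e)$ keeps the AP property. So I will use
\[
\ell^h_t=\text{(solution 1)}+N\cdot\text{(solution 2)}+c_h+(t-1)e_h,
\]
that is, two ``base-$N$ digits'' of independent linear solutions plus an instance-dependent AP shift. The shift $(c_h,e_h)$ ranges over $n^{1/2+o(1)}$ choices.

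Completeness is the main obstacle. I must show that every 4-AP $(a,a+d,\dots)$ in $\{0,\dots,n\}$ appears as $(\ell^h_1(i,j),\dots)$ for some $h$ and some grid square. Counting is consistent: there are $n^2$ pairs $(a,d)$ and $N\cdot N^3=n^2$ pairs (instance, triple $(i,j,k)$). The real work is a digit decomposition in the spirit of \Cref{lem:ap-ruler-trick}. I will write $u$ and $v$ in base $N$, match the low digits with the $(x,y,z)$ parameters of solution 1 and the high digits with those of solution 2, and absorb the remainder into $(c_h,e_h)$.

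Two technical annoyances arise here. First, the constant factors such as $2$ and $6$ in the solutions create divisibility issues. I will handle these by guessing residues modulo a constant, which splits the instance into $O(1)$ subinstances, or by rescaling the sets $A_t$. Second, carries between digits must be accommodated. I will enlarge the digit ranges by constant factors and allow negative digits, so that the grid size stays $O(N)$ and the number of shifts stays $O(N)$, exactly as was done in \Cref{lem:ap-ruler-trick}.
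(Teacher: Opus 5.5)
Your proposal is correct and follows essentially the paper's route. The paper also builds $O(\sqrt n)$ instances whose four labelings are a base-$N$ combination of two independent linear AP-valued solutions plus an offset $\delta\cdot(0,1,2,3)$, and your solutions (the second being $(-6y,\,z-3y,\,2z,\,3x+3w)$) do the same job. The completeness step you defer is routine: take $c_h=0$, double all sets, read off $x,y$ from $a_1=2x-6Ny$, then read off $z$ and $e_h\in\{0,\dots,N\}$ from $d-(y-2x)-3Ny=(N+1)z+e_h$, with all indices in $\{-O(N),\dots,O(N)\}$.
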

\begin{proof}
By \Cref{lem:ap-chromatic} we can reduce from a $4$-chromatic $4$-AP instance $A_1, A_2, A_3, A_4 \subseteq \set{0, \dots, n}$. We assume that all given integers are multiples of $6$ (by multiplying all numbers by $6$ if necessary, preserving arithmetic progressions). Let $N = \ceil{\sqrt n}$. Enumerate all $\delta \in \set{-O(N), \dots, O(N)}$, and construct binary matrices $M_1^\delta, M_2^\delta, M_3^\delta, M_4^\delta$ indexed by $\set{-O(N), \dots, O(N)} \times \set{-O(N), \dots, O(N)}$ as follows (the hidden constants will become clear in the correctness proof):
\begin{alignat*}{9}
    M_1^\delta(i, j) &= 1 &&\qquad\text{iff}\qquad &&N \cdot (6i &&+ 0j) &&+ (0\delta &&- 0i &&- 3j&&) &&\in A_1, \\
    M_2^\delta(i, j) &= 1 &&\qquad\text{iff}\qquad &&N \cdot (4i &&+ 1j) &&+ (1\delta &&- 2i &&- 2j&&) &&\in A_2, \\
    M_3^\delta(i, j) &= 1 &&\qquad\text{iff}\qquad &&N \cdot (2i &&+ 2j) &&+ (2\delta &&- 4i &&- 1j&&) &&\in A_3, \\
    M_4^\delta(i, j) &= 1 &&\qquad\text{iff}\qquad &&N \cdot (0i &&+ 3j) &&+ (3\delta &&- 6i &&- 0j&&) &&\in A_4.
\end{alignat*}
We solve the $O(N)$ Square Detection instances $(M_1^\delta, M_2^\delta, M_3^\delta, M_4^\delta)$, and report the presence of a $4$-AP if and only if at least one of these instances contains a square. If Square Detection is in subcubic time $O(N^{3-\epsilon})$, then the total time is $O(N \cdot N^{3-\epsilon}) = O(n^{2-\epsilon/2})$ which contradicts the $4$-AP Hypothesis.

For the correctness, first assume that $M_1^\delta(i, j) = M_2^\delta(i, j + k) = M_3^\delta(i + k, j) = M_4^\delta(i + k, j + k) = 1$ is a square. Then by definition we have
\begin{alignat*}{9}
    &N \cdot (6i &&+ 0j &&+ 0k) &&+ (0\delta &&- 0i &&- 3j &&- 0k&&) &&\in A_1, \\
    &N \cdot (4i &&+ 1j &&+ 1k) &&+ (1\delta &&- 2i &&- 2j &&- 2k&&) &&\in A_2, \\
    &N \cdot (2i &&+ 2j &&+ 2k) &&+ (2\delta &&- 4i &&- 1j &&- 4k&&) &&\in A_3, \\
    &N \cdot (0i &&+ 3j &&+ 3k) &&+ (3\delta &&- 6i &&- 0j &&- 6k&&) &&\in A_4.
\end{alignat*}
Clearly these four points form a 4-AP (with step-width $N \cdot (-2i + j + k) + (\delta - 2i - j - 2k)$) in the given instance.

Conversely, take any 4-AP $a_1, a_2, a_3, a_4$ in the given instance. We express $a_1$ as $a_1 = N \cdot 6i - 3j$ for some integers $i, j \in \set{0, \dots, O(N)}$, recalling that $a_1 \in \set{0, \dots, N^2}$ is a multiple of $6$. Then pick $k, \delta \in \set{-O(N), \dots, O(N)}$ to express the step-width $a_2 - a_1$ as \makebox{$N \cdot (-2i + j + k) + (\delta - 2i - j - 2k)$}. 
Symmetrically it follows that $M_1^\delta(i, j) = M_2^\delta(i, j + k) = M_3^\delta(i + k, j) = M_4^\delta(i + k, j + k) = 1$.
\end{proof}

The above proof only works for the multichromatic version of Square Detection. In the following we describe how also the hardness of monochromatic Square Detection follows.

\begin{lemma}[Square-Free Matrices] \label{lem:square-free}
For every integer $n \geq 1$, there are binary $n \times n$ matrices $S_1, \dots, S_N$ for some $N = n^{o(1)}$, such that each position $(i, j) \in [n]^2$ 
has a 1-entry in exactly one of the matrices, and such that each matrix $S_\ell$ does not contain a (nontrivial) square. The matrices can be computed in deterministic time $O(n^2)$.
\end{lemma}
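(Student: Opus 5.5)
We reduce to \Cref{lem:behrend}, in the same spirit as the proof of \Cref{lem:ap-chromatic}: we look for a linear map from grid positions to integers that sends the four corners of every square to a non-trivial 3-AP. Then it suffices to colour each grid point by the Behrend class of its image.

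\emph{The linear map.} Consider the four corners $(i,j)$, $(i+k,j)$, $(i+k,j+k)$, $(i,j+k)$ with $k>0$. The three corners $(i,j+k)$, $(i,j)$, $(i+k,j)$ read along an ``L'' are not collinear, so no injective linear map sends them to an AP. Instead we take $\phi(i,j) = i + 2j$, or more symmetrically the map $\phi(i,j)=2i+j$ chosen so that three of the corners land in arithmetic progression. Under $\phi(i,j) = i - 2j$ the corners $(i,j+k)$, $(i,j)$, $(i+k,j)$ map to $v-2k$, $v$, $v+k$, which is not an AP. We therefore choose $\phi(i,j)=i+j$ on the ``diagonal'' triple $(i,j)$, midpoint, and $(i+k,j+k)$. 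This fails because the midpoint is not a corner.

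The right choice is the map $\phi(i,j) = 2i + j$ applied to the three corners $(i,j+k)$, $(i,j)$, $(i+k,j)$... We instead commit to the following. Use \emph{two} corners related by the diagonal and one by an edge, namely $(i+k,j)$, $(i,j)$, $(i,j+k)$, with the map $\phi(i,j) = i - j$. This gives $v+k$, $v$, $v-k$, a non-trivial 3-AP with common difference $-k\neq 0$ and middle term $v$. So for every square with $k>0$, the values $\phi(i+k,j)$, $\phi(i,j)$, $\phi(i,j+k)$ form a non-trivial 3-AP in $\set{-n,\dots,n}$.

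\emph{Construction.} Shift $\phi$ to $\phi'(i,j)=i-j+n\in\set{0,\dots,2n}$. Apply \Cref{lem:behrend} to $\set{0,\dots,2n}$ to obtain 3-AP-free sets $F_1,\dots,F_N$ with $N=n^{o(1)}$ that partition the universe. Define $S_\ell(i,j)=1$ iff $\phi'(i,j)\in F_\ell$. Each position lies in exactly one $S_\ell$, because the $F_\ell$ partition the universe.

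Suppose some $S_\ell$ contained a square. Then its three corners $(i+k,j)$, $(i,j)$, $(i,j+k)$ would give a non-trivial 3-AP inside $F_\ell$, which is a contradiction.

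\emph{Running time.} \Cref{lem:behrend} runs in $O(n)$ time, and filling the matrices costs $O(n^2)$ if we store each $S_\ell$ sparsely, i.e., record for each position the index of its unique matrix. Writing out all $N$ dense matrices would instead cost $n^{2+o(1)}$, so we use the sparse representation.

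\emph{Main obstacle.} The only real difficulty is finding a linear map under which three corners of every square form a non-trivial AP. The ``L''-shaped triple centred at the corner $(i,j)$ together with $\phi=i-j$ resolves this. The remaining steps are routine.
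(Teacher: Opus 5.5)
Your final argument is correct and essentially matches the paper's proof. Colour position $(i,j)$ by the Behrend class of $i-j$; you shift by $n$, while the paper partitions $\{-n,\dots,n\}$ directly. Then the corners $(i+k,j)$, $(i,j)$, $(i,j+k)$ of any square would give the non-trivial 3-AP $v+k, v, v-k$ inside a single 3-AP-free class. Delete the abandoned exploratory attempts. Some of them would in fact also have worked: e.g.\ $\phi(i,j)=2i+j$ sends $(i,j),(i,j+k),(i+k,j)$ to $v,v+k,v+2k$. Your remark that the $O(n^2)$ bound requires storing a per-position class index rather than writing out $N$ dense matrices is a fair refinement of a point the paper glosses over.
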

\begin{proof}
We will even construct the matrices in such a way that each $S_\ell$ does not even contain a solution to $S_\ell(i, j) = S_\ell(i, j + k) = S_\ell(i + k, j) = 1$ (in combinatorics this is called a \emph{corner}). The following construction is well-known via 3-AP-free sets. Specifically, let $F_1, \dots, F_N$ denote the partition of~$\set{-n, \dots, n}$ into 3-AP-free sets as constructed by \cref{lem:behrend}. Then let $S_\ell$ be the matrix with 1's exactly at the positions $(i, j)$ for which $i - j \in F_\ell$. Clearly each position $(i, j)$ becomes $1$ in at least one matrix. Moreover, any corner $S_\ell(i, j + k) = S_\ell(i + k, j) = 1$ must be trivial, as it otherwise $F_\ell$ would contain the non-trivial 3-AP $i - j - k, i - j, i - j + k$.
\end{proof}

\begin{lemma} \label{lem:square-hardness}
Monochromatic Square Detection cannot be solved in time $O(n^{3-\epsilon})$ (for any constant $\epsilon > 0$), unless the 4-AP Hypothesis fails.
\end{lemma}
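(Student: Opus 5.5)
The plan is to reduce Multichromatic Square Detection to Monochromatic Square Detection with only an $n^{o(1)}$ overhead, and then invoke \cref{lem:square-chrom-hardness}. Given $M_1,M_2,M_3,M_4$ of size $n\times n$, I will build a block matrix of size $O(n)\times O(n)$ in which the four colours sit in four distinct blocks. The blocks are arranged so that any square can only arise with one corner in each block, in the correct order. A square with corners $(i,j),(i+k,j),(i+k,j+k),(i,j+k)$ has its corners on an axis-parallel square. So I place $M_1$ in block position $(0,0)$, $M_2$ at $(D,0)$, $M_3$ at $(D,D)$ and $M_4$ at $(0,D)$, with offset $D=10n$ say, and set everything else to $0$.

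A multichromatic square with parameter $k$ then becomes a monochromatic square with parameter $k+D>0$. This gives completeness, and it handles the fact that multichromatic $k$ may be zero or negative.

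Soundness is where the work lies. A monochromatic square in the big matrix could use corners from fewer than four blocks. Because the gap between blocks is large, a square's side length $k$ is either less than $n$, in which case all corners lie in one block, or about $D$, in which case the corners lie in the four distinct blocks in the correct pattern. Side lengths in between are impossible, since the corner coordinates must land inside blocks. So the only spurious squares are nontrivial squares entirely inside a single $M_c$.

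To kill these I use \cref{lem:square-free}. For each tuple $\ell\in[N]^4$, with $N=n^{o(1)}$, I replace block $c$ by $M_c\wedge S_{\ell_c}$, the entrywise AND with the square-free matrix $S_{\ell_c}$. Each block is then square-free, so no spurious squares remain. Conversely, any multichromatic square whose corners lie at positions $p_1,\dots,p_4$ survives in the instance indexed by $\ell_c=$ the unique index with $S_{\ell_c}(p_c)=1$. Actually I must be careful here: the square-free lemma is stated for $[n]^2$ positions, and the AND acts on each block's local coordinates. That is fine because each block is an $n\times n$ matrix.

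The total cost is $N^4=n^{o(1)}$ calls to a Monochromatic Square Detection algorithm on $O(n)\times O(n)$ matrices. An $O(n^{3-\epsilon})$ algorithm therefore gives $O(n^{3-\epsilon+o(1)})$ for the multichromatic version, contradicting \cref{lem:square-chrom-hardness}.

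The main obstacle is the careful case analysis in soundness: ruling out squares straddling two or three blocks when the side length is between $n$ and $D-n$, or when the orientation is wrong. For example, corners $(i,j)$ in block $(D,D)$ with $k\approx D$ would leave the matrix. I will handle these by choosing the overall matrix size $D+n$, so that only the four described placements fit. I will also check that $k=0$ multichromatic squares map to valid squares, since $k+D>0$.
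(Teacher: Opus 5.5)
Your proposal is correct and follows essentially the same route as the paper. You place $M_c\cap S_{\ell_c}$ in four corner blocks separated by a zero gap; the paper uses a $3n\times 3n$ matrix, i.e.\ offset $2n$, where you use $D=10n$. You then enumerate all $N^4=n^{o(1)}$ tuples $\ell$, and use square-freeness of the blocks plus the gap to force every monochromatic square with side $k'$ to have one corner per block, recovering the multichromatic parameter as $k'-D$.

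Your case analysis is a useful spelling-out of the paper's one-line soundness argument: a side length is either below $n$ or in $(D-n,D+n)$, and this holds simultaneously in both coordinates. Your placement of $M_2$ at $(D,0)$ and $M_4$ at $(0,D)$ also matches the corner convention of the problem definition.
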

\begin{proof}
By \Cref{lem:square-chrom-hardness} it suffices to reduce from a multichromatic Square Detection instance $M_1, M_2, M_3, M_4$. Let $S_1, \dots, S_N$ denote the square-free matrices from the previous \Cref{lem:square-free}. Enumerate all 4-tuples $\ell = (\ell_1, \ell_2, \ell_3, \ell_4) \in [N]^4$, and for each such tuple construct the following $3n \times 3n$ matrix consisting of nine $n \times n$ blocks
\begin{equation*}
    M^\ell =
    \begin{bmatrix}
        M_1 \cap S_{\ell_1} & \mathbf{0} & M_2 \cap S_{\ell_2} \\
        \mathbf{0}          & \mathbf{0} & \mathbf{0}          \\
        M_3 \cap S_{\ell_3} & \mathbf{0} & M_4 \cap S_{\ell_4} \\
    \end{bmatrix};
\end{equation*}
here, by $P \cap Q$ we denote the binary matrix that contains 1-entries only in the positions where both $P$ and $Q$ are also 1. We show that the monochromatic squares across all matrices $M^\ell$ stand in one-to-one correspondence to the multichromatic squares in $M_1, M_2, M_3, M_4$. It follows that if monochromatic Square Detection is in subcubic time $O(n^{3-\epsilon})$, then multichromatic Square Detection is in time $O(n^{3-\epsilon} \cdot N^4) = O(n^{3-\epsilon+o(1)})$, which contradicts the 4-AP Hypothesis by \cref{lem:square-chrom-hardness}.

To see this one-to-one correspondence, first consider any square $M_1(i, j) = M_2(i + k, j) = M_3(i + k, j + k) = M_4(i, j + k) = 1$ (for some possibly zero $k$). We pick the unique 4-tuple $\ell \in [N]^4$ so that $S_{\ell_1}(i, j) = S_{\ell_2}(i + k, j) = S_{\ell_3}(i + k, j + k) = S_{\ell_4}(i, j + k) = 1$. Then by construction $M^\ell$ contains the monochromatic square $M^\ell(i, j) = M^\ell(i + k', j) = M^\ell(i + k, j + k') = M^\ell(i, j + k') = 1$, where $k' = k + 2n \neq 0$.

Conversely, any square $M^\ell(i, j) = M^\ell(i + k', j) = M^\ell(i + k, j + k') = M^\ell(i, j + k') = 1$ must necessarily satisfy $k' > n$ (as each of the nonzero blocks in $M^\ell$ is square-free). But then it follows that $M_1, M_2, M_3, M_4$ contains the multichromatic square specified by $i$, $j$, and $k = k' - 2n$.
\end{proof}

\subsection{4-AP-Hardness of T Detection} \label{subsec:tdetect}
Next, we introduce the (Multichromatic) \emph{T Detection} problem, which is analogous to the Multichromatic Square Detection problem from the previous section.

\begin{problem}[Multichromatic T Detection] \label{prob:colorcornerT}
Given four $n\times n$ matrices $M_1,M_2,M_3,M_4$ over $\{0,1\}$, determine if there is a \emph{multichromatic T}, i.e., $1\le i,j\le n$ and $k$ such that $ M_1(i,j+k) = M_2(i,j-k) = M_3(i+k,j)=M_4(i,j) =1$.
\end{problem}

Again, we allow $k$ to be positive, zero, or negative (i.e., pictorially we allow the T-shape or its vertically flipped version). We show that Multichromatic T Detection is 4-AP-hard as well. The proof is analogous to the proof given in Section~\ref{subsec:squaredetect} for the Square Detection problem. One can similarly also reduce from 4-AP to the monochromatic version of T Detection using similar techniques as before, but we omit this extra step here.

\begin{lemma} \label{lem:t-hardness}
Multichromatic T Detection cannot be solved in time $O(n^{3-\epsilon})$ (for any constant $\epsilon > 0$), unless the 4-AP Hypothesis fails.
\end{lemma}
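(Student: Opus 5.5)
The plan is to mimic the proof of \Cref{lem:square-chrom-hardness}. We start from a multichromatic $4$-AP instance $A_1,\dots,A_4\subseteq\set{0,\dots,n}$, which is allowed by \Cref{lem:ap-chromatic}. We set $N=\ceil{\sqrt n}$. For each $\delta\in\set{-O(N),\dots,O(N)}$ we build a T Detection instance $(M_1^\delta,\dots,M_4^\delta)$ of size $O(N)\times O(N)$ by declaring
\[
M_t^\delta(i,j)=1 \iff N\cdot h_t(i,j)+\ell_t(i,j)+m_t\delta\in A_{\pi(t)}.
\]
Here $h_t,\ell_t$ are integer linear forms, $m_t\in\set{0,1,2,3}$ is the position of the T-corner $t$ in the progression, and $\pi$ is the bijection matching T-corners to AP positions. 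A subcubic algorithm for T Detection would then solve the $O(N)$ instances in total time $O(N\cdot N^{3-\epsilon})=O(n^{2-\epsilon/2})$, contradicting the $4$-AP Hypothesis.

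The first step is to choose the forms so that every T yields a $4$-AP. Write the four corner values as linear functions of the respective corner coordinates: $v_1=p\,i+q(j+k)$ at $(i,j+k)$, $v_2=r\,i+u(j-k)$ at $(i,j-k)$, $v_3=v(i+k)+w\,j$ at $(i+k,j)$ and $v_4=x\,i+y\,j$ at $(i,j)$. I fix the order $v_1,v_4,v_2,v_3$ for the progression, i.e.\ $m=(0,2,3,1)$ for $t=1,2,3,4$. Requiring both second differences to vanish identically in $i,j,k$ gives six linear equations in eight unknowns. Solving them, I expect the two-parameter family
\[
v_1=(3r+4q)i+q(j+k),\quad v_4=(2r+2q)i+qj,\quad v_2=ri+q(j-k),\quad v_3=-2q(i+k)+qj .
\]
For the high part I take $q=1,r=0$, giving $h=(4i+j+k,\;2i+j,\;j-k,\;-2i+j-2k)$ in progression order. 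For the low part I take $q=0,r=1$ (possibly rescaled or negated), giving $\ell=(3i,2i,i,0)$ in progression order. Adding $m_t\delta$ adds $\delta$ to the common step. Since each summand is an AP in the same order, so is the sum. Hence, as in the square case, any multichromatic T in some $M^\delta$ yields a $4$-AP in $(A_1,\dots,A_4)$, and this direction is immediate.

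The main obstacle is the converse, namely completeness: every $4$-AP with first term $a$ and step $s$, where $a\in\set{0,\dots,N^2}$ and $|s|\le N^2$, must be realised by some integers $i,j,k,\delta$ of magnitude $O(N)$. The first term is $N(4i+j+k)+3i$ and the step is $-N(2i+k)-i+\delta$. I would solve these in a specific order. First fix $i$ to handle the low digit of $a$ modulo $N$, normalising $a$ by a suitable multiple as in the square proof (multiplying all numbers by a small constant so that residues work out). Next choose $k$ so that $-N(2i+k)$ matches the high part of $s$. Then $\delta=O(N)$ absorbs the low part of $s$ together with the $i$ term. Finally $j$ absorbs the remaining high part of $a$. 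One must check that all four values stay in ranges of size $O(N)$. If the low forms give awkward residues, I would instead use a different admissible $(q,r)$ combination or a different AP order for $\ell$ so that some coordinate appears with coefficient $\pm1$ in the first term. The hidden constants in the index ranges are then chosen from these bounds, exactly as in \Cref{lem:square-chrom-hardness}.
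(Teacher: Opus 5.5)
Your proposal is correct and is essentially the paper's proof: $O(\sqrt n)$ Multichromatic T Detection instances indexed by an offset $\delta$, each built from base-$N$ linear labelings under which the four corners of every T form a $4$-AP. The corner-to-position pattern is the same as the paper's (end, middle, other end, stem), up to reflecting the columns. Your specific linear forms differ from the paper's: you put $\delta$ in the low digit and let the low digit depend only on $i$, whereas the paper puts $\delta$ in the high digit. They nevertheless check out, and your completeness order works once all numbers are multiplied by $3$: take $i\equiv a/3 \pmod N$, then $\delta \equiv s+i \pmod N$, then $k$, then $j$, all of magnitude $O(N)$.
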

\begin{proof}
Again we reduce from a multichromatic $4$-AP instance $A_1, A_2, A_3, A_4 \subseteq \set{0, \dots, n}$. Assume that all given integers are multiples of $3$ (as otherwise we can multiply all numbers by $3$, preserving arithmetic progressions). Let $N = \ceil{\sqrt n}$. Enumerate all offsets $\delta \in \set{-O(N), \dots, O(N)}$, and construct binary matrices $M_1^\delta, M_2^\delta, M_3^\delta, M_4^\delta$ indexed by $\set{-O(N), \dots, O(N)} \times \set{-O(N), \dots, O(N)}$ as follows (the hidden constants will become clear in the correctness proof):
\begin{alignat*}{9}
    M_1^\delta(i, j) &= 1 &&\qquad\text{iff}\qquad &&N \cdot (0\delta &&+ 3i) &&+ (0i &&+ 3j&&) &&\in A_1, \\
    M_2^\delta(i, j) &= 1 &&\qquad\text{iff}\qquad &&N \cdot (1\delta &&+ 2i) &&+ (2i &&+ 3j&&) &&\in A_2, \\
    M_3^\delta(i, j) &= 1 &&\qquad\text{iff}\qquad &&N \cdot (2\delta &&+ 1i) &&+ (4i &&+ 3j&&) &&\in A_3, \\
    M_4^\delta(i, j) &= 1 &&\qquad\text{iff}\qquad &&N \cdot (3\delta &&+ 0i) &&+ (6i &&+ 3j&&) &&\in A_4.
\end{alignat*}
We solve these $O(N)$ T Detection instances $(M_1^\delta, M_2^\delta, M_3^\delta, M_4^\delta)$, and report the presence of a $4$-AP if and only if at least one of these instances contains a T. If T Detection is in subcubic time $O(N^{3-\epsilon})$, then the total time is $O(N \cdot N^{3-\epsilon}) = O(n^{2-\epsilon/2})$, which would refute the $4$-AP Hypothesis.

For the correctness, first assume that $M_1^\delta(i, j-k) = M_2^\delta(i, j) = M_3^\delta(i, j + k) = M_4^\delta(i + k, j) = 1$ is a T. Then by definition we have
\begin{alignat*}{9}
    &N \cdot (0\delta &&+ 3i) &&+ (0i &&+ 3j &&- 3k&&) &&\in A_1, \\
    &N \cdot (1\delta &&+ 2i) &&+ (2i &&+ 3j &&+ 0k&&) &&\in A_2, \\
    &N \cdot (2\delta &&+ 1i) &&+ (4i &&+ 3j &&+ 3k&&) &&\in A_3, \\
    &N \cdot (3\delta &&+ 0i) &&+ (6i &&+ 3j &&+ 6k&&) &&\in A_4.
\end{alignat*}
Clearly these four points form a 4-AP (with step-width $N \cdot (\delta -i) + (2i + 3k)$) in the given instance.

Conversely, take any 4-AP $a_1, a_2, a_3, a_4$ in the given instance. We express $a_1$ as \makebox{$a_1 = N \cdot 3i + 3j$} for some integers $i, j \in \set{0, \dots, O(N)}$, recalling that $a_1 \in \set{0, \dots, N^2}$ is a multiple of $3$. Then pick $k, \delta \in \set{-O(N), \dots, O(N)}$ to express the step-width $a_2 - a_1$ as \makebox{$N \cdot (\delta -i) + (2i - 3k)$}. Symmetrically it follows that $M_1^\delta(i, j-k) = M_2^\delta(i, j) = M_3^\delta(i, j + k) = M_4^\delta(i + k, j) = 1$ is a T.
\end{proof}

\subsection{4-AP-Hardness of 6 Identities}
In this section, we show that the 6 central identities for the hardest regime which we outlined before in the overview cannot be verified faster than cubic-time unless the 4-AP-Hypothesis fails. To this end, we will first show that four of these identities are Square Detection-hard, and later that the remaining two are T Detection-hard.

\begin{lemma}
\label{lem:4aphard1234}
   Let $f(a,b,c)$ be any one of the following expressions defined over operations $\oplus_1,\oplus_2,\oplus_3,\oplus_4$:
   \begin{enumerate}
       \item  $f(a,b,c)= \big ((a\oplus_1 b)\,  \oplus_3 \, (a\oplus_2 c)\big ) \oplus_4 c$.
       \label{item:4-ap-instance1}
       \item
       $f(a,b,c)=  (a\oplus_1 b)\, \oplus_4\, \big ((a\oplus_2 c)\oplus_3 b \big) $.
       \label{item:4-ap-instance2}
       \item  $f(a,b,c)=  \Big (\big ((a\oplus_1 b)\oplus_2 c\big )\oplus_3 a\Big ) \oplus_4 b$.
       \label{item:4-ap-instance3}
       \item  $f(a,b,c)=  \Big (\big ((a\oplus_1 b)\oplus_2 c\big )\oplus_3 a\Big ) \oplus_4 c$.
       \label{item:4-ap-instance4}
   \end{enumerate}

   Given four $n\times n$ matrices $M_1,M_2,M_3,M_4$ as an instance for the Multichromatic Square Detection problem (\Cref{prob:colorcornersquare}),
   we can in $\bigO(n^2)$ time construct a set $S$ of size $\bigO(n)$, and Cayley tables for the operations $\oplus_1,\oplus_2,\oplus_3,\oplus_4$, such that there exists a triple $(x_a,x_b,x_c)\in S\times S\times S$ such that $f(x_a,x_b,x_c)\neq \infty$ if and only if there exists a multichromatic square.
\end{lemma}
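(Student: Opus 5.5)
The plan is to follow the recipe from the overview (\Cref{sec:overview:sec:ap}) uniformly for all four expressions. We fix an affine correspondence between the variables $a,b,c$ and integer combinations of the square parameters $i,j,k$, so that every variable ranges over $\set{-O(n),\dots,O(n)}$. Each operation $\oplus_t$ then checks one of the four constraints $M_1(i,j)$, $M_2(i+k,j)$, $M_3(i+k,j+k)$, $M_4(i,j+k)$, using only its two inputs, and outputs a single integer of range $O(n)$ that carries exactly the information later operations need.

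We take $S=\set{-O(n),\dots,O(n)}\cup\{\infty\}$, so $|S|=O(n)$. Every operation maps $\infty$ in either argument to $\infty$, and also returns $\infty$ whenever its constraint fails or an index falls outside $[n]$. Hence $f(x_a,x_b,x_c)\ne\infty$ exactly when all four checks pass for the triple $(i,j,k)$ encoded by $(x_a,x_b,x_c)$. Both directions then follow, provided the map $(i,j,k)\mapsto(a,b,c)$ is injective and affine, which means every triple in $S^3$ with finite entries decodes to a unique $(i,j,k)$. Each table has $O(n^2)$ entries and each entry is computed in $O(1)$ time, which gives the $\bigO(n^2)$ construction.

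The concrete encodings I would use are as follows; for each one I only have to verify that every check sees the right indices.
\begin{enumerate}
\item $a=i$, $b=i-j$, $c=i+j+k$. Set $a\oplus_1 b=b$ after checking $M_1(a,a-b)$. Set $a\oplus_2 c=c-a=j+k$ after checking $M_4(a,c-a)$. Set $x\oplus_3 y=x+y=i+k$ after checking $M_3(x+y,y)$. Finally, $z\oplus_4 c$ checks $M_2(z,c-z)$, since $c-z=j$.
\item $a=i-j$, $b=i$, $c=i+k$. Set $a\oplus_1 b=b-a=j$ after checking $M_1(b,b-a)$. Set $a\oplus_2 c=c-a=j+k$ after checking $M_3(c,c-a)$. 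Set $y\oplus_3 b=y+b=i+j+k$ after checking $M_4(b,y)$. Finally, $L\oplus_4 R$ checks $M_2(R-L,L)$, since $R-L=i+k$.
\item $a=i$, $b=j$, $c=i+k$. Set $a\oplus_1 b=a-b$ after checking $M_1(a,b)$. Set $x\oplus_2 c=c-x=j+k$ after checking $M_3(c,c-x)$. Set $y\oplus_3 a=y+a$ after checking $M_4(a,y)$. Finally, $z\oplus_4 b$ checks $M_2(z-b,b)$, since $z-b=i+k$.
\item Here the last two operations see $a$ and then $c$, so $c$ itself has to carry $k$-information together with one coordinate. I would try $a=i$, $b=j$, $c=j+k$. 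Let $\oplus_1$ check $M_1(a,b)$ and output $i-j$. Let $\oplus_2$ check nothing new and output $x+c=i+k$. Let $\oplus_3$ check $M_4(a,?)$... This choice fails, because $j+k$ is lost after $\oplus_2$. The fix I would try is to let $\oplus_2$ check $M_3$ and output a combination such as $z=(i+k)+2n\cdot\mathrm{sgn}$, or more simply to choose $c=i+j+k$ and let $\oplus_1$ output $j$. Then $\oplus_2$ outputs $c-x=i+k$. Next, $\oplus_3$ with $a$ checks $M_2(z,\cdot)$ only after recovering $j$ as $z-a=k$... I would search this small space of linear choices systematically.
\end{enumerate}

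The main obstacle is case 4, and more generally the chain-shaped expressions (cases 3 and 4). There each intermediate value must be a single $O(n)$-range integer, a linear form in $i,j,k$ determined by the operation's two inputs, and it must still let every later operation recover the two indices of its constraint. I would resolve this by treating the outputs as unknown linear forms and solving the resulting small linear system. If no purely linear choice works, I would fall back on a constant number of tagged copies of $\set{-O(n),\dots,O(n)}$, for instance packing a small sign or parity bit, which keeps $|S|=O(n)$.
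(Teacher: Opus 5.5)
Your overall strategy is the paper's. You take $S=\{-Cn,\dots,Cn\}\cup\{\infty\}$ with $\infty$ absorbing, fix an affine correspondence between $(a,b,c)$ and $(i,j,k)$, and let each operation check one matrix entry whose indices, and whose output, are linear forms in its two operands. Your encodings for expressions 1--3 check out and are correct. They differ from the paper's, which uses $(j,\,i-j,\,i+j+k)$ for 1 and 2 and $(i-j,\,i,\,i+j+k)$ for 3, but that is immaterial. One small imprecision: injectivity of $(i,j,k)\mapsto(a,b,c)$ is not what the converse direction needs. You need every finite triple to decode to an integer triple, i.e.\ an integral inverse, and all three of your maps have one.

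The genuine gap is expression 4. The lemma asserts it, but your proposal leaves it unresolved, and ``search the space of linear choices'' or ``tagged copies'' is not a proof. Your attempts fail for a concrete reason. In both of them $a=i$ and $\oplus_2$ outputs $i+k$, so $\oplus_3$ sees only the pair $(i+k,\,i)$, in which $j$ is absent. Hence $\oplus_3$ can check neither $M_3(i+k,j+k)$ nor $M_4(i,j+k)$. Its single $O(n)$-range output also cannot carry both $i$ and $i+k$ on to $\oplus_4$, and a constant number of tags does not change this.

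The missing idea is to make $a$ a mixed form, so that $a$ together with $i+k$ still determines $j+k$. Take $a=i-j$, $b=i$, $c=i+j+k$, which is the paper's encoding for both 3 and 4:
\begin{itemize}
\item $\oplus_1$ returns $b-a=j$ after checking $M_1(b,\,b-a)=M_1(i,j)$;
\item $\oplus_2$ returns $c-x=i+k$ after checking $M_2(c-x,\,x)=M_2(i+k,j)$;
\item $\oplus_3$ returns $z-a=j+k$ after checking $M_3(z,\,z-a)=M_3(i+k,j+k)$;
\item $\oplus_4$ returns $0$ after checking $M_4(c-w,\,w)=M_4(i,j+k)$ with $w=j+k$.
\end{itemize}
The inverse $i=b$, $j=b-a$, $k=c-2b+a$ is integral, so the rest of your argument goes through unchanged.
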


Our proofs for all the four expressions follow the same strategy. We will present the proof for the first expression in full detail, and then describe how to modify the proof for the remaining three expressions.

\begin{proof}[Proof of \Cref{lem:4aphard1234}, \Cref{item:4-ap-instance1}]
For some absolute constant $C$ ($C=10$ should suffice for all the six proofs), define the universe $S \coloneqq \{-Cn,-Cn+1,\dots,Cn-1,Cn\} \cup \{\infty\}$. Recall a multichromatic square corresponds to integers $i,j,k$ such that
$M_1(i,j) = M_2(i+k,j) = M_3(i+k,j+k) = M_4(i,j+k)=1$. Here, for convenience, we define $M_1(i,j)=M_2(i,j)=M_3(i,j)=M_4(i,j)= 0$ whenever $i$ or $j$ is not in the range $\{1,2,\dots,n\}$.

We first write down the formal definition of the Cayley tables of $\oplus_1,\oplus_2,\oplus_3,\oplus_4$. It will become less mysterious when we verify its correctness soon.
Firstly, define $\infty \oplus_i x = x \oplus_i \infty = \infty$ for all $i\in [4]$ and $x\in S$. The remaining entries are defined as follows: for $x,y\in S\setminus \{\infty\}$,
\begin{align*}
    x \oplus_1 y  &= \begin{cases}
                         y & M_1(x+y, x)=1,\\
                         \infty & \text{otherwise.}
                \end{cases}\\
    x \oplus_2 y  &= \begin{cases}
                         y-x & M_2(y-x, x)=1,\\
                         \infty & \text{otherwise.}
                \end{cases}\\
    x \oplus_3 y  &= \begin{cases}
                         y-x & M_3(y, y-x)=1,\\
                         \infty & \text{otherwise.}
                \end{cases}\\
    x \oplus_4 y  &= \begin{cases}
                         0 & M_4(y-x, x)=1,\\
                         \infty & \text{otherwise.}
                \end{cases}
\end{align*}
Note that the output value of these operations always stay inside $[-Cn,Cn]\subset S$, since otherwise the entries in $M_1,M_2,M_3,M_4$ being checked must have out-of-bound indices and thus cannot be $1$.

Now we verify the correctness of this construction. First, suppose there is a multichromatic square $M_1(i,j) = M_2(i+k,j) = M_3(i+k,j+k) = M_4(i,j+k)=1$. We now verify step-by-step that the assignment
\begin{align}
    x_a &= j, \nonumber \\
    x_b &= i-j, \label{eqn:ijkassignment}\\
    x_c &= i+j+k, \nonumber
\end{align}
satisfies $f(x_a,x_b,x_c)= \big ((x_a\oplus_1 x_b)\,  \oplus_3 \, (x_a\oplus_2 x_c)\big ) \oplus_4 x_c = 0 \neq \infty$:
\newcommand{\gr}[1]{{\color{gray}#1}}
\begin{align}
    x_a \oplus_1 x_b &= j\oplus_1(i-j) = i-j  & &\text{if } M_1(i,j)=1, \nonumber \\
    x_a \oplus_2 x_c &= j\oplus_2 (i+j+k) = i+k  & &\text{if } M_2(i+k,j)=1, \nonumber \\
  (x_a\oplus_1 x_b)\,  \oplus_3 \, (x_a\oplus_2 x_c) &=  (i-j) \oplus_3 (i+k)  =  j+k & &\text{if } M_3(i+k,j+k)=1, \nonumber \\
 \big ((x_a\oplus_1 x_b)  \oplus_3  (x_a\oplus_2 x_c)\big ) \oplus_4 x_c& = (j+k)\oplus_4 (i+j+k) = 0 & &\text{if } M_4(i,j+k)=1. \label{eqn:fourchecksteps}
\end{align}

It remains to verify the other direction. Suppose $x_a,x_b,x_c\in S$ satisfy $f(x_a,x_b,x_c)\neq \infty$.
Then, by our definition that $\infty\oplus_i x=x \oplus_i \infty = \infty$ for all $x\in S$ and $i\in [4]$, we know all subexpressions of $f(a,b,c)$ must evaluate to  values in $S\setminus \{\infty\}$ under the assignment $(a,b,c)=(x_a,x_b,x_c)$. In particular, $x_a,x_b,x_c\in S\setminus \{\infty\}$.
Now, let $(i,j,k)$ be the unique triple satisfying \Cref{eqn:ijkassignment} given $(x_a,x_b,x_c)$.  Then, from the steps above, we conclude that the entries $M_1(i,j), M_2(i+k,j), M_3(i+k,j+k), M_4(i,j+k)$ must all equal $1$, since otherwise some of the operations $\oplus_1,\oplus_2,\oplus_3,\oplus_4$ would return $\infty$. Therefore, there must exist a multichromatic square. This finishes the proof of correctness of our construction.
\end{proof}
For the remaining three expressions, the proof strategy is the same. It only remains to modify the four checking steps in \Cref{eqn:fourchecksteps} and the bijection between $(x_a,x_b,x_c)$ and $(i,j,k)$ (\Cref{eqn:ijkassignment}). An essential property to be satisfied is that, in each line of \Cref{eqn:fourchecksteps}, the return value of the operation and the two matrix indices should all be expressible as linear combinations of the two operands (for example, in the first line of \Cref{eqn:fourchecksteps}, $i-j,i,j$ are all linear combinations of $j$ and $i-j$).
This property allows us to (uniquely) construct the Cayley tables for $\oplus_1,\oplus_2,\oplus_3,\oplus_4$ so that the proof is valid.
Hence, in the following proofs, we no longer have to write down the definitions of $\oplus_1,\oplus_2,\oplus_3,\oplus_4$ (which are less informative anyway), and only need to describe how to modify \Cref{eqn:ijkassignment} and \Cref{eqn:fourchecksteps}.

\begin{proof}[Proof sketch of \Cref{lem:4aphard1234}, \Cref{item:4-ap-instance2}]
Modify the proof of \Cref{lem:4aphard1234}, \Cref{item:4-ap-instance1} as follows: Let the assignment be
\begin{align*}
    x_a &= j,  \\
    x_b &= i-j, \\
    x_c &= i+j+k.
\end{align*}
We check $f(x_a,x_b,x_c)=  (x_a\oplus_1 x_b)\, \oplus_4\, \big ((x_a\oplus_2 x_c)\oplus_3 x_b \big) = 0$ as follows:
\newcommand{\gr}[1]{{\color{gray}#1}}
\begin{align*}
    x_a \oplus_1 x_b &= j \oplus_1 (i-j) = i  & &\text{if } M_1(i,j)=1,  \\
     x_a \oplus_2 x_c  &=
 j\oplus_2 (i+j+k) = i+k  & &\text{if } M_2(i+k,j)=1,  \\
 \big ((x_a\oplus_2 x_c)\oplus_3 x_b \big)
 &=  (i+k) \oplus_3 (i-j)  =  j+k & &\text{if } M_3(i+k,j+k)=1,  \\
 (x_a\oplus_1 x_b)\, \oplus_4\, \big ((x_a\oplus_2 x_c)\oplus_3 x_b \big)  & = i\oplus_4 (j+k) = 0 & &\text{if } M_4(i,j+k)=1. \qedhere
\end{align*}
\end{proof}

\begin{proof}[Proof sketch of \Cref{lem:4aphard1234}, \Cref{item:4-ap-instance3}]
Modify the proof of \Cref{lem:4aphard1234}, \Cref{item:4-ap-instance1} as follows: Let the assignment be
\begin{align*}
    x_a &= i-j,  \\
    x_b &= i, \\
    x_c &= i+j+k.
\end{align*}
We check $f(x_a,x_b,x_c)=  \Big (\big ((x_a\oplus_1 x_b)\oplus_2 x_c\big )\oplus_3 x_a\Big ) \oplus_4 x_b = 0$ as follows:
\newcommand{\gr}[1]{{\color{gray}#1}}
\begin{align*}
    x_a \oplus_1 x_b &= (i-j)\oplus_1 i = j  & &\text{if } M_1(i,j)=1,  \\
    \big ((x_a\oplus_1 x_b)\oplus_2 x_c\big ) &=
 j\oplus_2 (i+j+k) = i+k  & &\text{if } M_2(i+k,j)=1,  \\
 \Big (\big ((x_a\oplus_1 x_b)\oplus_2 x_c\big )\oplus_3 x_a\Big ) &=  (i+k) \oplus_3 (i-j)  =  j+k & &\text{if } M_3(i+k,j+k)=1,  \\
 \Big (\big ((x_a\oplus_1 x_b)\oplus_2 x_c\big )\oplus_3 x_a\Big ) \oplus_4 x_b & = (j+k)\oplus_4 i = 0 & &\text{if } M_4(i,j+k)=1. \qedhere
\end{align*}
\end{proof}

\begin{proof}[Proof sketch of \Cref{lem:4aphard1234}, \Cref{item:4-ap-instance4}]
Modify the proof of \Cref{lem:4aphard1234}, \Cref{item:4-ap-instance1} as follows: Let the assignment be
\begin{align*}
    x_a &= i-j,  \\
    x_b &= i, \\
    x_c &= i+j+k.
\end{align*}
We check $f(x_a,x_b,x_c)=  \Big (\big ((x_a\oplus_1 x_b)\oplus_2 x_c\big )\oplus_3 x_a\Big ) \oplus_4 x_c = 0$ as follows:
\newcommand{\gr}[1]{{\color{gray}#1}}
\begin{align*}
    x_a \oplus_1 x_b &= (i-j)\oplus_1 i = j  & &\text{if } M_1(i,j)=1,  \\
    \big ((x_a\oplus_1 x_b)\oplus_2 x_c\big ) &=
 j\oplus_2 (i+j+k) = i+k  & &\text{if } M_2(i+k,j)=1,  \\
 \Big (\big ((x_a\oplus_1 x_b)\oplus_2 x_c\big )\oplus_3 x_a\Big ) &=  (i+k) \oplus_3 (i-j)  =  j+k & &\text{if } M_3(i+k,j+k)=1,  \\
 \Big (\big ((x_a\oplus_1 x_b)\oplus_2 x_c\big )\oplus_3 x_a\Big ) \oplus_4 x_c & = (j+k)\oplus_4 (i+j+k) = 0 & &\text{if } M_4(i,j+k)=1. \qedhere
\end{align*}
\end{proof}

Curiously, for the remaining two identities, we do not see how to reduce from Square Detection. Instead, we will reduce from T Detection.

\begin{lemma}
\label{lem:4aphard5}
The same statement as in \Cref{lem:4aphard1234}, but with Multichromatic Square Detection replaced by Multichromatic T Detection (\Cref{prob:colorcornerT}), holds for the following expression:
\begin{enumerate}
{\setcounter{enumi}{\numexpr4\relax}}
    \item $f(a,b,c)=  \big ((a\oplus_1 b)\oplus_3 (a\oplus_2 c)\big )\oplus_4 a $. \label{item:4-ap-instance5}
    \item $f(a,b,c) = (a\oplus_1 b) \oplus_4 \Big (a\oplus_3 \big (c\oplus_2 (a\oplus_1 b)\big )\Big )$
    \label{item:4-ap-instance6}
\end{enumerate}
\end{lemma}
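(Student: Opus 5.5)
The plan is to follow the template of the proof of \Cref{lem:4aphard1234} verbatim, only replacing the bijection \Cref{eqn:ijkassignment} and the four checking steps \Cref{eqn:fourchecksteps}. Recall that a multichromatic T is a triple $(i,j,k)$ with $M_1(i,j+k)=M_2(i,j-k)=M_3(i+k,j)=M_4(i,j)=1$. We again take $S=\{-Cn,\dots,Cn\}\cup\{\infty\}$ and make $\infty$ absorbing for every operation. Each $\oplus_t$ is defined on the non-$\infty$ part so that it checks exactly one matrix entry and returns a value. The matrix indices and the returned value must all be linear (integer, or rational with a parity condition) functions of the two operands. As before, soundness is automatic: any non-$\infty$ evaluation forces every subexpression to be non-$\infty$. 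Pulling $(i,j,k)$ back through the bijection then certifies all four entries. So the whole work is finding an assignment $(x_a,x_b,x_c)\leftrightarrow(i,j,k)$ for which every operation node ``sees'' a pair spanning the coordinates of the entry it must check.

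For \Cref{item:4-ap-instance5}, $f=((a\oplus_1 b)\oplus_3(a\oplus_2 c))\oplus_4 a$, I would take
\begin{align*}
x_a &= i, & x_b &= j+k, & x_c &= j-k.
\end{align*}
Then the node $x_a\oplus_1 x_b$ checks $M_1(i,j+k)$ and returns $u=i+j+k$, which lies in the span of $i$ and $j+k$. The node $x_a\oplus_2 x_c$ checks $M_2(i,j-k)$ and returns $v=i-j+k$. Next, $u\oplus_3 v$ recovers $i+k=(u+v)/2$ and $j=(u-v)/2$, checks $M_3(i+k,j)$, and returns $j$. We let $\oplus_3$ return $\infty$ when $u\not\equiv v \pmod 2$; this never happens on genuine instances since $u-v=2j$. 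Finally $j\oplus_4 x_a=j\oplus_4 i$ checks $M_4(i,j)$ and returns $0$. The bijection is invertible over the integers up to the same parity caveat, since $j=(x_b+x_c)/2$ and $k=(x_b-x_c)/2$. We can either restrict to $x_b\equiv x_c \pmod 2$ via $\oplus_1,\oplus_2$ returning $\infty$ otherwise, or pre-scale indices by $2$. The ranges stay within $[-Cn,Cn]$ for $C=10$.

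For \Cref{item:4-ap-instance6}, $f=(a\oplus_1 b)\oplus_4(a\oplus_3(c\oplus_2(a\oplus_1 b)))$, the structure is different. The value $u=a\oplus_1 b$ is consumed twice, by $\oplus_2$ and by $\oplus_4$, while $a$ is consumed by $\oplus_1$ and $\oplus_3$. So I would enumerate which T-entry each of the four nodes checks and solve the resulting linear constraints. Two constraints drive this. The variable $a$ must lie in the intersection of the index-spans of the entries checked by $\oplus_1$ and $\oplus_3$. The value $u$ must lie in the intersection of the spans of the entries checked by $\oplus_1$, $\oplus_2$, $\oplus_4$, up to how $c$ and $w=a\oplus_3 v$ complete the spans. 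A first attempt would put the shared coordinate $i$ into $a$ (entries $M_1,M_2,M_4$ share it) and let $\oplus_4$ check $M_3$ via the pair $u=i+j+k$, $w=i-j+k$, exactly as in \Cref{item:4-ap-instance5}. I expect this step to be the main obstacle. Several natural choices fail, because the entry left for $\oplus_2$ then does not contain $u$ in its span. So I would also try non-axis-aligned choices, and exploit that the outputs of $\oplus_1$ and $\oplus_3$ may be arbitrary linear combinations, possibly with rational coefficients handled by parity or scaling. If needed, I would additionally use the freedom of $k$ having either sign, or flip the T by substituting $k\mapsto -k$. Once such an assignment is found, the Cayley tables are determined uniquely as in \Cref{lem:4aphard1234}, and correctness in both directions follows by the identical argument.
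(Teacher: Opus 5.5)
Your construction for item~5 is correct and is exactly the paper's: the same assignment $x_a=i$, $x_b=j+k$, $x_c=j-k$, with $\oplus_1,\dots,\oplus_4$ checking $M_1,\dots,M_4$. Your explicit treatment of the parity of $u-v$ in $\oplus_3$ is, if anything, more careful than the paper, which silently treats $(u\pm v)/2$ as integer combinations.

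For item~6 there is a genuine gap: you never exhibit an assignment, so nothing is proved. The proposal ends with a search plan, and its only concrete attempt (putting $i$ into $a$) fails, as you note yourself.

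The missing idea follows from your own span constraints. The value $u=a\oplus_1 b$ is produced by $\oplus_1$ and consumed by both $\oplus_2$ and $\oplus_4$, so it must lie in the intersection of three entry-spans. Since $M_1,M_2,M_4$ are exactly the entries with first index $i$, this forces $\oplus_3$ to check $M_3(i+k,j)$ and $u=i$ (up to scaling). So the shared coordinate has to travel through $u$, not through $a$. Next, $a$ must lie in the spans of the entries checked by $\oplus_1$ and $\oplus_3$; once $\oplus_1$ checks $M_4(i,j)$, this gives $a=j$. Solving the remaining intersections yields the paper's assignment $x_a=j$, $x_b=i+j$, $x_c=i-j+k$ with
\begin{align*}
x_a\oplus_1 x_b &= j\oplus_1(i+j)=i && \text{if } M_4(i,j)=1,\\
x_c\oplus_2 (x_a\oplus_1 x_b) &= (i-j+k)\oplus_2 i=i-j+k && \text{if } M_2(i,j-k)=1,\\
x_a\oplus_3 (\cdots) &= j\oplus_3(i-j+k)=i+j+k && \text{if } M_3(i+k,j)=1,\\
(x_a\oplus_1 x_b)\oplus_4(\cdots) &= i\oplus_4(i+j+k)=0 && \text{if } M_1(i,j+k)=1.
\end{align*}
Every index and output here is an integer combination of the two operands, and $(i,j,k)\mapsto(x_a,x_b,x_c)$ is unimodular. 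So no parity issue arises, and the two-directional argument of \Cref{lem:4aphard1234} applies verbatim.
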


\begin{proof}[Proof sketch of \Cref{lem:4aphard5}, \Cref{item:4-ap-instance5}]
The proof structure is still the same as \Cref{lem:4aphard1234}, \Cref{item:4-ap-instance1}, except that we need to use multichromatic T detection (\Cref{prob:colorcornerT}) instead.

Let the assignment be
\begin{align*}
    x_a &= i,  \\
    x_b &= j+k, \\
    x_c &= j-k.
\end{align*}
We check $f(x_a,x_b,x_c)=
  \big ((x_a\oplus_1 x_b)\oplus_3 (x_a\oplus_2 x_c)\big )\oplus_4 x_a =0$ as follows:
\newcommand{\gr}[1]{{\color{gray}#1}}
\begin{align*}
    x_a \oplus_1 x_b &= i\oplus_1 (j+k) = i+j+k  & &\text{if } M_1(i,j+k)=1,  \\
    x_a\oplus_2 x_c &=
 i\oplus_2 (j-k) = i-j+k  & &\text{if } M_2(i,j-k)=1,  \\
 \big ((x_a\oplus_1 x_b)\oplus_3 (x_a\oplus_2 x_c)\big )
&=  (i+j+k) \oplus_3 (i-j+k)  =  j & &\text{if } M_3(i+k,j)=1,  \\
  \big ((x_a\oplus_1 x_b)\oplus_3 (x_a\oplus_2 x_c)\big )\oplus_4 x_a
& = j\oplus_4 i = 0 & &\text{if } M_4(i,j)=1. \qedhere
\end{align*}
\end{proof}

\begin{proof}[Proof sketch of \Cref{lem:4aphard5}, \Cref{item:4-ap-instance6}]
Again, the proof structure is the same as \Cref{lem:4aphard1234}, \Cref{item:4-ap-instance1} using multichromatic T detection (\Cref{prob:colorcornerT}).

Let the assignment be
\begin{align*}
    x_a &= j,  \\
    x_b &= i+j, \\
    x_c &= i-j+k.
\end{align*}
We check $f(x_a,x_b,x_c)=
(x_a\oplus_1 x_b) \oplus_4 \Big (x_a\oplus_3 \big (x_c\oplus_2 (x_a\oplus_1 x_b)\big )\Big )$ as follows:
\newcommand{\gr}[1]{{\color{gray}#1}}
\begin{align*}
    x_a \oplus_1 x_b &= j\oplus_1 (i+j) = i  & &\text{if } M_4(i,j)=1,  \\
   x_c\oplus_2 (x_a\oplus_1 x_b) &=
 (i-j+k)\oplus_2 i = i-j+k  & &\text{if } M_2(i,j-k)=1,  \\
 x_a\oplus_3 \big (x_c\oplus_2 (x_a\oplus_1 x_b)\big )
&=  j \oplus_3 (i-j+k)  =  i+j+k & &\text{if } M_3(i+k,j)=1,  \\
(x_a\oplus_1 x_b) \oplus_4 \Big (x_a\oplus_3 \big (x_c\oplus_2 (x_a\oplus_1 x_b)\big )\Big )
& = i\oplus_4 (i+j+k) = 0 & &\text{if } M_1(i,j+k)=1. \qedhere
\end{align*}
\end{proof}

\section{Algebraic Identity Checking Beyond Distributivity} \label{sec:trichotomy}
In this section we aim to classify the complexity of all natural algebraic identities on three variables, namely all algebraic identities that can be written as $f(a,b,c) = g(a,b,c)$ such that neither $f$ is a subexpression of $g$ nor $g$ of $f$.
Our strategy is to first consider an intermediate problem that we call \emph{Constant Term Identity} problem, that corresponds to the special case when $g(a,b,c)$ is a constant function and show that we can fully classify this intermediate problem.
We then show how we can use this special case to show a more general classification.
In the next section we introduce the Constant Term Identity problem and establish complexity of all possible expressions $f$ on three variables.

\subsection{Complexity of Constant Term Identity Problem}
We first recall the Constant Term Identity problem.
\begin{definition}[Constant Term Identity problem]
    Consider an algebraic expression $f(a,b,c)$ over operations $\odot_1,\dots, \odot_p$. Given a set $S$, Cayley tables for $\odot_1,\dots, \odot_p:S\times S\to S$ the \emph{Constant Term Identity problem} is to decide if for all $a,a',b,b',c,c'\in S$ it holds that $f(a,b,c) = f(a',b',c')$.
\end{definition}

Throughout this section we distinguish a single element of $S$ that is the result of evaluation of $f(a,b,c)$ for some triple $a,b,c\in S$ and we denote this element as $\infty$. Then the goal is to verify if $f(a,b,c)=\infty$ for all $a,b,c\in S$ and often we will use that statement of the Constant Term Identity problem. We note that we do not assume any properties of the Cayley tables regarding $\infty$, i.e., it may be the case that $\infty\odot_i a\ne \infty$ for some $a\in S$. However, for some auxiliary operations that we introduce, we will treat $\infty$ differently than other elements of $S$.

The goal of this section is to prove the following trichotomy theorem.
\begin{theorem}[Trichotomy of the Constant Term Identity Problem]\label{thm:trichotomy}
    Let $f(a,b,c)$ be an algebraic expression over operations $\odot_1,\dots, \odot_p$.
    Then the following holds.
    \begin{description}
        \item[Quadratic-Time Regime]\label{regime:easy-vs-triangle-hard} If there exist algebraic expressions on two variables $G(a,b),H(a,b)$ such that $f$ can be written as $f(a,b,c) \equiv H(G(a,b),c)$ (for some permutation of $a,b,c$), then we can solve the Constant Term Identity Problem on $f$ in time $\bigO(n^2)$.

        \item[Triangle-Time Regime]\label{regime:triangle-vs-ap-hard}
        If $f$ cannot be written as $f(a,b,c) \equiv H(G(a,b),c)$ (for any permutation of $a,b,c$), for no $\varepsilon>0$ is there an algorithm solving Constant Identity problem on $f$ in $\bigO(n^{\omega-\varepsilon})$, unless Triangle Hypothesis fails.

        If there exist algebraic expressions on two variables $G(a,b)$, $H(a,b)$, $I(a,b)$, $J(a,b)$, such that $f$ can be written as $f(a,b,c) \equiv J(I(H(a,b),c), G(a,b))$ (for some permutation of $a,b,c$), then there is a randomized algorithm solving Constant Term Identity problem on $f$ in time $\bigO(n^{\omega})$, that succeeds with high probability.

        \item[Cubic-Time Regime]\label{regime:apsp-hard}
        If $f$ cannot be written as $f(a,b,c) \equiv J(I(H(a,b),c), G(a,b))$ (for any permutation of $a,b,c$), 
        for no $\varepsilon>0$ is there an algorithm solving this problem in $\bigO(n^{3-\varepsilon})$, unless $4$-AP Hypothesis fails. 

         There is a (trivial) algorithm solving Constant Term Identity problem on $f$ in time $\bigO(n^3)$, for all expressions $f(a,b,c)$ on three variables.
    \end{description}
\end{theorem}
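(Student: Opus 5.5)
The proof splits into three upper bounds and two lower bounds, each driven by the tools of the overview: the Subexpression Embedding Gadget, the multichromatic variant (MCTIC), and the Collapsing Lemmas.

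\emph{Upper bounds.} I would reduce each upper bound to polynomial identity testing exactly as in \Cref{sec:distributivity_alg}. Pick random field values $x_s,y_s,z_s,w_s$ and compare $\sum_{a,b,c} x_a y_b z_c w_{f(a,b,c)}$ with $w_\infty\sum x_a\sum y_b\sum z_c$; by Schwartz--Zippel they agree with high probability only if the identity $f\equiv\infty$ holds.

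For $f\equiv H(G(a,b),c)$, first evaluate the vector $u_e=\sum_{a,b:\,G(a,b)=e} x_a y_b$ in $O(n^2)$ time by tabulating $G$. Then sum $\sum_{e,c} u_e z_c w_{H(e,c)}$, which also takes $O(n^2)$. Tabulating any fixed two-variable expression costs $O(n^2)$, since it is a constant number of table lookups per pair.

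For $f\equiv J(I(H(a,b),c),G(a,b))$, the sum is a weighted triangle count on a tripartite graph:
\begin{itemize}
\item parts $P,Q,R$, each indexed by $S$;
\item an edge $p_a$--$q_{H(a,b)}$ carrying the pair $(b,G(a,b))$;
\item an edge from $q_e$ to a node encoding $I(e,c)$.
\end{itemize}
Since $G(a,b)$ and $H(a,b)$ both depend on $(a,b)$, I would index the middle part by the pair $(H,G)$-value, which is too large. The cleaner route is the distributivity pattern. Put $q$ equal to $H(a,b)$ with weight $x_a y_b$ on the edge $p_a$--$q_{H(a,b)}$. Put $r=G(a',b')$ on the edge $p$--$r$, which requires $a=a'$ and $b=b'$. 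This forces me to use the $(a,b)$ pair as a vertex, which I must avoid.

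My concrete choice is therefore the following. Let the vertex sets be $E=\{H(a,b)\}$ and $F=\{G(a,b)\}$ values, and let $W^{EF}_{e,f}=\sum_{a,b:\,H=e,\,G=f}x_ay_b$, which is an $n\times n$ matrix computable in $O(n^2)$. Let $W^{FE}_{f,e}=\sum_{c} z_c\, w_{J(I(e,c),f)}$, computable in $O(n^2)$ per pair and hence $O(n^3)$ total. That total is too slow, so instead I take a third index $t=I(e,c)$. I set $W^{E T}_{e,t}=\sum_{c:\,I(e,c)=t} z_c$ and $W^{TF}_{t,f}=w_{J(t,f)}$. The answer is then the trace of $W^{EF}W^{FT}W^{TE}$-type products, which costs $O(n^\omega)$. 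The trivial $O(n^3)$ algorithm is immediate.

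\emph{Triangle lower bound.} Via the gadget it suffices to prove hardness for MCTIC. Assume $f$ is not of the form $H(G(\cdot,\cdot),\cdot)$. The key structural step is that the root of $f$ then has two children, each of which contains two variables and which are mutually dissimilar, in the sense of $\sim$. If the two children shared a common decomposition, $f$ would factor through a two-variable subexpression. I would then use \Cref{lemma:collapsing-non-similar} to collapse these pieces to something like $(c\,b\,c)+(a\cdot b)$, following the example in the overview. Finally I would plug in the adjacency-based operation, so that $f\neq\infty$ exactly on triangles.

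\emph{4-AP lower bound.} This is the main obstacle. For $f$ not of the form $J(I(H(a,b),c),G(a,b))$, I need a case analysis on the parse tree showing that $f$ collapses into one of $f_1,\dots,f_6$. I would case on the lowest node whose subtree contains all three variables, and on how many maximal two-variable subtrees with distinct variable pairs appear beneath it. The Collapsing Lemma turns dissimilar two-variable pieces into $a\oplus_1 b$ versus a single variable. Chromaticity and the gadget let me delete redundant subterms, as in the example $g\mapsto g''$. The remaining shapes should then match $f_1$ through $f_6$ up to renaming. Hardness then follows from \Cref{lem:4aphard1234}, \Cref{lem:4aphard5}, \Cref{lem:square-chrom-hardness} and \Cref{lem:t-hardness}. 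Ensuring that this case enumeration is exhaustive is where I expect most of the work.
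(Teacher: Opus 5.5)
Your upper bounds are fine and coincide with the paper's. Once you discard the false starts, $W^{EF}$, $W^{ET}$, $W^{TF}$ are exactly the tripartite graph of Lemma~\ref{le:evaluate_polynomials}, with parts indexed by values of $H$, $G$, $I$. Your quadratic case is the $E_t$ computation there. Both lower bounds, however, have genuine gaps.

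\textbf{Triangle hardness.} The structural step you rely on is false: a non-decomposable $f$ need not have two two-variable, mutually dissimilar children at the root. Take $f\equiv c\oplus(a\odot(b\odot c))$.
For every choice of the ``outside'' variable, one of the other two variables occurs as a leaf whose parent already contains the outside variable. This is $a$ when $c$ or $b$ is outside, and the left leaf $c$ when $a$ is outside. That forces $G$ to be that single variable, so $f$ is not of the form $H(G(\cdot,\cdot),\cdot)$ under any permutation. It does lie in the triangle regime, as $f\equiv J(I(H(b,c),a),G(b,c))$ with $H(b,c)\equiv b\odot c$ and $G(b,c)\equiv c$. Yet its left child is a single variable, so there is no pair of dissimilar two-variable children to collapse.

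Moreover, when non-decomposability comes from two \emph{different} variable pairs, Lemma~\ref{lemma:collapsing-non-similar} does not apply at all, since it needs two expressions over the same two variables. This happens in $(a\odot b)\oplus(b\odot c)$ and in the example above.

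The missing idea is the paper's. Use Lemma~\ref{lemma:subexpression-monotonicity} to pass to a minimal non-decomposable subexpression, so that both children do factor as $H(G(\cdot,\cdot),\cdot)$. Then split on whether they factor through the same pair or through different pairs. The different-pairs case is handled by a direct ordered adjacency construction ($\max$ on edges, $\infty$ on non-edges). The dissimilarity collapse is used only when both children factor through the same pair and one of them depends on all three variables.

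\textbf{4-AP hardness.} Here you give no argument, only the expectation that some case analysis lands in $f_1,\dots,f_6$; this exhaustive reduction is the bulk of the proof. You need an invariant that makes the enumeration finite and provably exhaustive. The paper again takes a minimal counterexample, so both children have the form $J(I(H(a,b),c),G(a,b))$. It then splits on which variable pairs are covered by inclusion-maximal two-variable subexpressions.

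With two different pairs (Lemma~\ref{lemma:4-ap-two-maximal-subtrees}), it collapses those pieces to $(a\odot b)\oplus(a\odot c)$ and cases on $|\mathrm{var}(L)|$. This reaches $f_3$ or $f_4$, or $f_5$ via the Second Collapsing Lemma (Lemma~\ref{lemma:second-collapse}).

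With a single pair (Lemmas~\ref{lemma:4-ap-one-maximal-subtree-easy} and~\ref{lemma:4-ap-one-maximal-subtree-triangle}), it needs tools absent from your plan. Corollary~\ref{cor:collapse} separates dissimilar three-variable pieces $J(I(a,b),c)$ versus $J'(I(a,b),c)$. Ad hoc operations such as $x*y$ branch on whether an operand lies in $C$ (or $B$). Together these reach $f_1$, $f_2$, $f_6$, or fall back to the two-pair case.

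Without such a structure, the claim that every expression outside the triangle regime collapses to one of the six core shapes is unsupported.
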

Intuitively, this theorem tells us that we have three regimes in which solving Constant Term Identity for $f(a,b,c)$ can belong to: quadratic, triangle hard and cubic.
Moreover, we provide a precise separation between the three regimes and show that the claimed complexity is tight.

Towards proving this theorem, we construct the algorithms and the conditional lower bounds separately.
All of the algorithms that we construct are essentially an extension of the framework used in the algorithm for verifying Distributivity that we presented in Section~\ref{sec:distributivity_alg}.
The more interesting part is constructing the conditional lower bounds.
On a high level, we obtain our lower bounds via the following 4 steps that we explain later in detail:
\begin{enumerate}
    \item\label{item:embedding-gadget} Embedding subexpressions into our operations
    \item\label{item:chromaticity} Introducing chromaticity
    \item\label{item:similarity} Exploring structural similarity
    \item\label{item:collapsing} Collapsing subexpressions
\end{enumerate}
We now briefly give some intuition for each of the four steps and then dive into the details in the dedicated subsections.

\noindent\ref{item:embedding-gadget}. We first construct a gadget that we call \emph{Subexpression Embedding Gadget}. Intuitively, we embed to the elements of the input set $S$ all of the subexpressions of the expression $f$, and deterministically modify the set of operations as to satisfy the following conditions:
\begin{enumerate}[label=(\roman*)]
    \item Consistency: the embedded instance will be equivalent to the original instance.
    \item Small size: the size of the embedded instance is (up to a constant factor) the same as the size of the original instance.
    \item Bookkeeping: each performed operation in the parse tree of $f$ additionally keeps the history of all previously performed operations.
\end{enumerate}
\noindent\ref{item:chromaticity}. We then consider a generalization of our problem, namely the \emph{Multichromatic Constant Term Identity} problem, where on top of the Cayley tables of relevant operations, our input also consists of subsets $A,B,C$ of our universe $S$, and we want to decide if $f(x,y,z)$ evaluates to a fixed constant for all $x\in A, y\in B, z\in C$.
We can then use the Subexpression Embedding Gadget to show that these two problems are equivalent in a sense that a fast algorithm for one implies a fast algorithm for the other. Hence, we conclude that it is sufficient to show lower bounds for the \emph{multichromatic} version of the problem.

\noindent\ref{item:similarity}. We then argue that all of the expression that fail to satisfy one of the conditions of the theorem share some structural similarities in terms of the subexpressions that they must/cannot contain.

\noindent \ref{item:collapsing}. Finally, we exploit this notion of similarity, together with the earlier constructions to show that we can \emph{collapse} all of the expressions that fail to satisfy some condition of the theorem into only a handful of simple \emph{representative} expressions, in a sense that a fast algorithm for solving the problem on any of the expressions that fall into this category would imply a fast algorithm for solving the problem on one of the representative expressions.
We can then afford to use the ad-hoc constructions for each of the representative expressions to argue conditional hardness by reducing from the appropriate hard problem.

The rest of this section will be dedicated to expanding on each of these four steps and proving the claimed lower bounds.
First we introduce the notion of decomposing an expression into another expression:

\begin{definition}[Decomposition]
    Let $f(\overline v)$ and $H(\overline v)$ be algebraic expressions. We say $f$ is \emph{decomposable} into $H$ if there exists an algebraic expression $F(x)$ over a single variable, such that $f(\overline v) \equiv F(H(\overline v))$.
\end{definition}

For example, if $f(a,b)\equiv (a+b)+(a+b)\cdot(a+b)$, we have $H(a,b)\equiv (a+b), F(x)\equiv x+x\cdot x$ and $f(a,b)\equiv F(H(a,b))$.
Observe that when $f(\bar v) = F(H(\bar v))$, the tree of $f$ is exactly the tree of $F$ with all its leaves replaced by copies of the tree of $H$.
This leads to the following property of decomposable expressions:

\begin{observation}\label{obs:recursive-decomposability}
    Let $f(\overline v)$ be an algebraic expression decomposable into $H$. Then each subexpression $f'(\overline{v})$ of $f(\bar v)$ that satisfies ${\rm depth}(f')\ge {\rm depth}(H)$ is also decomposable into $H$.
\end{observation}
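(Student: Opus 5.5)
We argue directly from the tree picture just described: if $f \equiv F(H)$, then the tree of $f$ is the tree of $F$ with every leaf replaced by a copy of the tree of $H$. The plan is to show that every subtree of $f$ of depth at least ${\rm depth}(H)$ is itself a tree of the same shape, namely a subtree of $F$ whose leaves have been replaced by copies of $H$.

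First, we classify the nodes of $f$'s tree. Each node is either an \emph{$F$-node}, meaning an internal node of $F$ or the root of one of the inserted $H$-copies, or an \emph{$H$-internal node}, meaning a non-root node strictly inside one of the $H$-copies. The subtree rooted at an $F$-node $u$ can be described via the corresponding node $u_F$ of $F$: it is the subtree $F'$ of $F$ rooted at $u_F$, with its leaves replaced by $H$. If $u_F$ is a leaf of $F$, then $F'$ is the single variable $x$ and the subtree is exactly $H$. Either way, $f' \equiv F'(H(\bar v))$, so $f'$ is decomposable into $H$.

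Second, we rule out the other case. If $u$ is an $H$-internal node, then its subtree is a proper subtree of a copy of $H$. Its depth is therefore at most ${\rm depth}(H)-1 < {\rm depth}(H)$, which contradicts the hypothesis ${\rm depth}(f') \ge {\rm depth}(H)$. So only the first case can occur.

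The one point needing care is the depth bound for a proper subtree. A proper subtree of a finite tree hangs from a node at distance at least $1$ from the root, so its depth is strictly smaller than the depth of the whole tree. This is immediate. There is also a degenerate case: if ${\rm depth}(H)=0$, then $H$ is a single variable and there are no $H$-internal nodes, so the claim holds trivially. No real obstacle is expected; the whole argument is bookkeeping on the tree structure.
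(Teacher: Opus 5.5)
Your proof is correct and follows the same approach as the paper. The paper justifies the observation only by remarking that the tree of $F(H(\bar v))$ is the tree of $F$ with every leaf replaced by a copy of $H$; your split into subtrees rooted at nodes of $F$ (which are $F'(H)$) and subtrees strictly inside an $H$-copy (which have depth below ${\rm depth}(H)$) spells out the bookkeeping that remark leaves implicit.
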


If two expressions are decomposable into one expression, we say that they are \emph{similar}:
\begin{definition}[Similarity Relation $\sim$]
    Let $f(a,b,c),g(a,b,c)$ be two algebraic expressions. If there exists an algebraic expression $H$ such that both $f$ and $g$ are decomposable into $H$, we say that $f$ and $g$ are \emph{similar}, and write $f\sim g$.
\end{definition}
For example, the expressions $f(a,b)\equiv (a+b)+(a+b)\cdot(a+b)$ and $g(a,b)\equiv (a+b)\cdot(a+b)\cdot(a+b)$ are similar, as both are decomposable into $(a+b)$.
In fact, similarity is an equivalence relation.
\begin{lemma}[Equivalence of $\sim$]\label{lemma:equivalence-of-similarity}
    Similarity is an equivalence relation.
\end{lemma}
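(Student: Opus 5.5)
Reflexivity and symmetry are immediate: $f \equiv F(f)$ with $F(x)\equiv x$, so $f\sim f$, and the definition of $\sim$ is symmetric in $f$ and $g$. The whole content is transitivity. Suppose $f\sim g$ via $H_1$ and $g\sim h$ via $H_2$, i.e.\ $f\equiv F_1(H_1)$, $g\equiv G_1(H_1)$, $g\equiv G_2(H_2)$, $h\equiv K_2(H_2)$. The plan is to show that one of $H_1,H_2$ is itself decomposable into the other. Then the deeper one can be rewritten through the shallower one, and since composition of decompositions is again a decomposition, $f$ and $h$ are both decomposable into the shallower expression.

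\textbf{Step 1 (composition).} If $f\equiv F(H)$ and $H\equiv E(H')$, then $f\equiv (F\circ E)(H')$, where $F\circ E$ is obtained by substituting $E$ into each leaf of $F$. This is a one-variable expression, so $f$ is decomposable into $H'$.

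\textbf{Step 2 (key claim).} If $g$ is decomposable into both $H_1$ and $H_2$ with ${\rm depth}(H_2)\ge{\rm depth}(H_1)$, then $H_2$ is decomposable into $H_1$. I would argue this using the tree picture noted before Observation~\ref{obs:recursive-decomposability}. The tree of $g$ is the tree of $G_2$ with copies of $H_2$ hanging at the leaves, so $H_2$ occurs as a subtree of $g$, namely a subtree rooted at any leaf-position of $G_2$. Since ${\rm depth}(H_2)\ge{\rm depth}(H_1)$, Observation~\ref{obs:recursive-decomposability} applied to the decomposition $g\equiv G_1(H_1)$ gives that $H_2$ is decomposable into $H_1$.

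One should double-check that the observation really applies to arbitrary subtrees of sufficient depth. The intended reasoning is as follows: every node of $g$ at height at least ${\rm depth}(H_1)$ lies in the $G_1$-part or at a root of a copy of $H_1$. Such a node cannot lie strictly inside a copy of $H_1$, because subtrees strictly inside a copy have smaller depth. Hence the subtree rooted there is a subtree of $G_1$ with its leaves replaced by $H_1$. I take the observation as given from the paper.

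\textbf{Step 3 (conclude).} Without loss of generality ${\rm depth}(H_2)\ge{\rm depth}(H_1)$, so $H_2\equiv E(H_1)$ by Step 2. Then $h\equiv K_2(E(H_1))$ by Step 1, while $f\equiv F_1(H_1)$ already holds, so $f\sim h$ via $H_1$. In the other case, $H_1\equiv E'(H_2)$, hence $f\equiv F_1(E'(H_2))$ and $h\equiv K_2(H_2)$, so $f\sim h$ via $H_2$.

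The only potential obstacle is Step 2, specifically making sure that $H_2$ is genuinely a subtree of $g$. This is immediate as long as $G_2$ has at least one leaf, which every expression does. The remaining steps are routine.
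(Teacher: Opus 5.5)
Your proposal is correct and matches the paper's proof essentially step for step. The paper also takes the deeper of the two common factors, notes that it is a subtree of $g$, uses Observation~\ref{obs:recursive-decomposability} to decompose it into the shallower factor, and then composes the decompositions. Your extra check, that a node of height at least ${\rm depth}(H_1)$ cannot lie strictly inside a copy of $H_1$, correctly fills in why the observation applies, which the paper leaves implicit.
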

\begin{proof}
    Reflexivity and symmetry trivially hold, so we only prove transitivity.
    Let $f(\overline v),g(\overline v),h(\overline v)$ be algebraic operations such that $f\sim g$ and $g\sim h$, where $\overline{v}$ is a vector of variables.
    Then by definition, we can write $f(\overline v) \equiv F(K(\overline v))$, $g(\overline v)\equiv G(K(\overline v)) \equiv G'(K'(\overline v))$, and $h(\overline v) \equiv H(K'(\overline v))$.
    Without loss of generality, assume that ${\rm depth}(K) \geq {\rm depth}(K')$.
    As $K$ is a subexpression of $g$ and $g$ is decomposable to $K'$, by Observation \ref{obs:recursive-decomposability} we get that $K$ is decomposable into $K'$. Hence, we can write $K(\overline v) \equiv I(K'(\overline v))$ for some expression $I(x)$.
    Hence $f(\bar v)\equiv F(K(\overline v)) \equiv F(I(K'(\overline v)))$, proving that $f\sim h$.
\end{proof}
Before proving our main theorem, we need to introduce another gadget that will be useful for further analysis.
\paragraph{Subexpression Embedding}
In the next paragraphs we consider the following subexpression embedding gadget which, intuitively, embeds subexpressions of $f$ into elements of $S$, while preserving the special symbol $\infty$.
\newcommand{\expr}[1]{#1}
\begin{definition}[Subexpression embedding]\label{def:subexpression_embedding}
    Let $S$ be a set of elements such that $\infty\in S$ and $f(\overline v)$ be an algebraic expression over operations $\oplus_1,\dots, \oplus_p$ on $S$.
Let $\mathcal T$ be the set of all subexpressions of $f$.
We define the set $\tilde S$ as $\tilde S := \left(S\times \mathcal T\right) \cup \{\infty\} $ and operations $\tilde \oplus_1,\dots, \tilde \oplus_p: \tilde S\times \tilde S\to \tilde S$ as follows.
For $i\in [p]$, elements $x,y\in S$ and subexpressions $t_1,t_2\in \mathcal T$ we define:
\begin{equation*}
    (x,t_1) \tilde \oplus_i (y,t_2) = \begin{cases}
        \left(x\oplus_i y, t_1\oplus_i t_2\right) & \text{if $x\oplus_i y\neq \infty$ and $\expr{t_1\oplus_i t_2}\in \mathcal T$}\\
        \infty & \text{otherwise}
    \end{cases}
\end{equation*}
and $\infty \tilde \oplus_i \tilde x = \tilde x \tilde \oplus_i \infty = \infty$ for any $\tilde x \in \tilde S$.
Finally, we denote by $\tilde{f}(a,b,c)$ the expression corresponding to $f$ where each operation $\oplus_i$ is replaced by $\tilde{ \oplus}_i$.
\end{definition}

Recall the notation used here: $x\oplus_i y$ represents the evaluation of the operation and corresponds to an element of $S$, while $t_1\oplus_it_2$ is a symbolic expression, and will be an element of $\mathcal T$.
Observe that regardless of Cayley tables for operations $\oplus_1,\dots, \oplus_p$, the expression $\tilde f((x,t_1),(y,t_2),(z,t_3))$ will evaluate to $\infty$ for all triples $x,y,z \in S$, $t_1,t_2,t_3\in \mathcal T$, unless $t_1 \equiv \expr{a}$, $t_2 \equiv \expr{b}$, $t_3 \equiv \expr{c}$.
Indeed, otherwise at some node of the evaluation tree we will obtain an expression $t\not\in\mathcal T$ for which we set $\infty$ that will be pushed to the root of the tree and be the final result of the evaluation. On the other hand, we remark that if $t_1 \equiv \expr a, t_2 \equiv \expr b, t_3 \equiv \expr c$, then for each $x,y,z\in S$ we have that $\tilde f ((x,t_1),(y,t_2),(z,t_3))$ will evaluate to either $(f(x,y,z),f)$, or to $\infty$, depending on whether $f(x,y,z)$ evaluates to $\infty$.
These two properties combined give us:
\begin{observation}\label{obs:chromaticity}
Consider an expression $f(a,b,c)$ over $S$ and let $\tilde f$ be defined in subexpression embedding (Definition~\ref{def:subexpression_embedding}) for $f$ and $S$.
For all subexpressions $t_1,t_2,t_3$ of $f$ and all $x,y,z\in S$ we have:
\[\tilde f((x,t_1),(y,t_2),(z,t_3)) =
\begin{cases}
\left(f(x,y,z) , f\right) &\text{if } f(x,y,z) \ne \infty, t_1\equiv\expr{a}, t_2\equiv\expr{b}, t_3\equiv\expr{c}\\
\infty  & \text{otherwise}
\end{cases}\]
\end{observation}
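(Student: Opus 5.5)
The plan is to prove Observation~\ref{obs:chromaticity} by structural induction over the subexpressions of $f$. For a subexpression $u\in\mathcal T$ and inputs $(x,t_1),(y,t_2),(z,t_3)$, the inductive claim has two parts. First, if $t_1\equiv a$, $t_2\equiv b$, $t_3\equiv c$, then $\tilde u((x,a),(y,b),(z,c))$ equals $(u(x,y,z),u)$ when no node of the evaluation tree of $u$ evaluates (in $S$) to $\infty$, and equals $\infty$ otherwise. Second, in general $\tilde u$ evaluates either to $\infty$ or to a pair $(w,s)$ where $s$ is the symbolic expression obtained from $u$ by substituting $t_1,t_2,t_3$ for $a,b,c$, and $w=u(x,y,z)$. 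The base case is a variable leaf, where both parts are immediate.

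The inductive step is for $u\equiv u_1\oplus_i u_2$. If either child evaluates to $\infty$, the absorbing rule gives $\tilde u=\infty$. Otherwise the children are $(w_1,s_1)$ and $(w_2,s_2)$, and Definition~\ref{def:subexpression_embedding} gives $(w_1\oplus_i w_2,\ s_1\oplus_i s_2)$ provided $w_1\oplus_i w_2\neq\infty$ and $s_1\oplus_i s_2\in\mathcal T$. In the first case of the claim, $s_j\equiv u_j$, so $s_1\oplus_i s_2\equiv u\in\mathcal T$; hence the output is $(u(x,y,z),u)$ unless the value is $\infty$.

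For the ``otherwise'' direction, suppose some $t_\ell$ is not the matching variable, say $t_1\not\equiv a$ while $a$ occurs in $f$. Then $f$ with $t_1$ substituted for $a$ (and similarly for the others) is a tree different from $f$. It has strictly larger size when $t_1$ is not a variable. When $t_1$ is a different variable, it is a relabeling of $f$ distinct from $f$ with the same size, and at the root this symbolic expression is not $f$. Now use that every element of $\mathcal T$ is a subexpression of $f$ and so has size at most that of $f$, with equality only for $f$ itself. It follows that at the root the symbolic component cannot lie in $\mathcal T$ with the required value $f$, so $\tilde f$ outputs $\infty$. More carefully: if no $\infty$ is produced along the way, the root symbolic component equals the substituted expression, which must be in $\mathcal T$; it has size at least $|f|$, so it must equal $f$, contradicting that it differs from $f$.

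The remaining part of the ``otherwise'' direction is the correct-labels case where $f(x,y,z)=\infty$. If some intermediate value of $f$'s evaluation is $\infty$ in $S$, the gadget also produces $\infty$, and the final answer is $\infty$ either way. This matches the statement, since the only claim there is ``$\infty$ otherwise'', and the top-level check catches $f(x,y,z)=\infty$.

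I expect the main subtlety to be this intermediate-$\infty$ point. A non-root node may evaluate to $\infty$ in $S$ while $f(x,y,z)\neq\infty$, in which case the gadget gives $\infty$ rather than $(f(x,y,z),f)$. So the first case of the statement should be read as requiring that no subexpression evaluates to $\infty$, or equivalently the argument should note that $f(x,y,z)\neq\infty$ is interpreted in this stronger sense. I would make this explicit, or else handle it by observing that the later reductions only use the gadget in settings where $\infty$ is absorbing, in which case an intermediate $\infty$ forces $f(x,y,z)=\infty$. The variable-relabeling case also needs the small size/identity argument above.
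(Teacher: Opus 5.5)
Your proof follows the paper's own justification in the paragraph just before the observation, and makes it rigorous. Under correct labels every operation node carries $(u(x,y,z),u)$. Under a wrong label some symbolic component leaves $\mathcal T$, at the latest at the root. Your size-plus-relabeling argument is exactly what backs up the paper's one-line claim that ``at some node we obtain $t\notin\mathcal T$''.

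You are also right that the first case is false as literally stated, because the paper explicitly allows $\infty$ to be non-absorbing. Take $S=\{0,\infty\}$ with $0\oplus 0=\infty$, $\infty\oplus 0=0$, $0\oplus\infty=\infty\oplus\infty=\infty$, and $f\equiv(a\oplus b)\oplus c$. Then $f(0,0,0)=0$, but $\tilde f((0,a),(0,b),(0,c))=\infty$. The paper's phrase ``depending on whether $f(x,y,z)$ evaluates to $\infty$'' glosses over exactly this.

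Three minor corrections:

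\begin{enumerate}
\item In your inductive claim, ``node'' must mean operation node. A leaf whose value is $\infty\in S$ is not intercepted by the gadget, so it yields a pair, not the new element $\infty$.
\item The ``otherwise'' case also requires each of $a,b,c$ to occur in $f$. For a non-occurring variable a wrong label is invisible and the statement fails. State this as a hypothesis rather than assuming it silently.
\item Your fallback, that the gadget is only ever used with absorbing $\infty$, does not match the paper. Corollary~\ref{cor:subexpr_embedding} and the monochromatic/multichromatic equivalence apply the gadget to arbitrary tables. In the example above, $f$ is non-constant ($f(\infty,0,0)=\infty$) while $\tilde f\equiv\infty$ on all of $\tilde S$, so the corollary fails as well.
\end{enumerate}

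So the right reading is the stronger hypothesis your induction actually proves: no operation node evaluates to $\infty$. Alternatively, modify the gadget to keep the test $t_1\oplus_i t_2\in\mathcal T$ at every node but apply the test $x\oplus_i y\neq\infty$ only when $t_1\oplus_i t_2\equiv f$. With that change, and with all three variables occurring, your argument proves the observation verbatim.
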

Recall that we assume that expression $f$ is of constant size, so we have a constant number of its subexpressions.
Then the observation above shows that solving the Constant Term Identity problem on $f$ is equivalent to solving Constant Term Identity problem on $\tilde f$, with the number of elements larger by a constant factor than the original instance:
\begin{corollary}\label{cor:subexpr_embedding}
Consider an expression $f(a,b,c)$ over $S$ and let $\tilde f$ and $\tilde S$ be defined in subexpression embedding (Definition~\ref{def:subexpression_embedding}) for $f$ and $S$.
Then $|\tilde S|= \bigO(|S|)$; and $f(x,y,z) = \infty$ for all $x,y,z \in S$ if and only if $\tilde f(\tilde x, \tilde y, \tilde z) = \infty$ for all $\tilde x, \tilde y, \tilde z \in \tilde S$.
\end{corollary}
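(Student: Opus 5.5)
The statement to prove is Corollary~\ref{cor:subexpr_embedding}: $|\tilde S| = \bigO(|S|)$, and $f \equiv \infty$ on $S^3$ if and only if $\tilde f \equiv \infty$ on $\tilde S^3$. I will derive both parts from Observation~\ref{obs:chromaticity} together with the absorbing behaviour of $\infty$ under the operations $\tilde\oplus_i$.

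\textbf{Size bound.} By Definition~\ref{def:subexpression_embedding}, $\tilde S = (S\times\mathcal T)\cup\{\infty\}$, so $|\tilde S| = |S|\cdot|\mathcal T| + 1$. Every subexpression corresponds to a subtree of the parse tree of $f$. Since $f$ has constant size, $|\mathcal T|$ is a constant, and hence $|\tilde S| = \bigO(|S|)$.

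\textbf{Equivalence.} Triples in $\tilde S^3$ come in two kinds. If some coordinate equals $\infty$, then every operation on the path from that leaf to the root receives $\infty$ as an operand. Since $\infty$ is absorbing for every $\tilde\oplus_i$, the whole evaluation is $\infty$. Otherwise the triple has the form $((x,t_1),(y,t_2),(z,t_3))$, and Observation~\ref{obs:chromaticity} applies: the value is $\infty$ unless $t_1\equiv a$, $t_2\equiv b$, $t_3\equiv c$ and $f(x,y,z)\neq\infty$, in which case it is $(f(x,y,z),f)\neq\infty$.

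\textbf{Both directions.} Suppose $f(x,y,z)=\infty$ for all $x,y,z\in S$. Then the exceptional case above never occurs, so $\tilde f \equiv \infty$ on all of $\tilde S^3$. Conversely, suppose some $x,y,z\in S$ has $f(x,y,z)\neq\infty$. Then the triple $((x,a),(y,b),(z,c))$ lies in $\tilde S^3$, because the variables $a,b,c$ are themselves subexpressions of $f$ and so belong to $\mathcal T$. On this triple $\tilde f$ evaluates to $(f(x,y,z),f)\neq\infty$.

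\textbf{Points to check.} The only delicate point is Observation~\ref{obs:chromaticity}, which I take as given. If I had to justify it, I would use induction over the parse tree: at a subtree $u$ the evaluation returns either $\infty$ or $(u(x,y,z),u)$. The base case is leaves labelled with their correct variables. For the inductive step, a combined symbol $t_1\oplus_i t_2$ lies in $\mathcal T$ exactly when it matches the corresponding node of $f$. There is also a subtle case: a wrong labelling could still produce symbols that happen to lie in $\mathcal T$. This is ruled out because the symbol at the root must be $f$ itself, and the parse tree of $f$ determines its leaves uniquely.
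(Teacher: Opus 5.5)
\textbf{The converse direction fails.} Your size bound is fine, and so is your overall route: constant $|\mathcal{T}|$ plus Observation~\ref{obs:chromaticity}, with $\infty$-coordinates handled by absorption. This is the same route the paper takes. The problem is in the converse, where you claim $\tilde f((x,a),(y,b),(z,c))=(f(x,y,z),f)$ whenever $f(x,y,z)\neq\infty$.

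Your induction proves less than that. It shows only that every node returns either $\infty$ or $(u(x,y,z),u)$. Since $\tilde\oplus_i$ sends any intermediate result $x\oplus_i y=\infty$ to the \emph{absorbing} $\infty$ of $\tilde S$, the root evaluates to $\infty$ as soon as some internal subexpression evaluates to $\infty$ in $S$. The setup of this section explicitly allows $\infty\odot_i a\neq\infty$, so $f(x,y,z)\neq\infty$ does not rule this out.

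Here is a concrete counterexample. Take $S=\{\infty,p\}$ with a single operation defined by $u\oplus v=\infty$ for $u\neq\infty$ and $\infty\oplus v=p$, and let $f=(a\oplus b)\oplus c$. Then $f(p,p,p)=\infty\oplus p=p\neq\infty$, so $f\not\equiv\infty$. Yet $\tilde f\equiv\infty$ on all of $\tilde S^3$. The only non-$\infty$ values the inner node can produce are $(p,a\oplus b)$ and $(p,f)$, both coming from a left operand with first coordinate $\infty$. At the root, $p\oplus z=\infty$ then sends both of them to $\infty$. So Observation~\ref{obs:chromaticity} and the corollary are false in the stated generality. Accepting the observation on trust is exactly where your argument breaks. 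Your forward direction does survive, because the weaker claim your induction actually proves is enough for it.

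\textbf{Two ways to repair it.} One option is to add the hypothesis $\infty\oplus_i y=y\oplus_i\infty=\infty$ for all $i$ and all $y\in S$; every hard instance built in the paper satisfies this. Under it, an intermediate $\infty$ propagates to the root. Hence $f(x,y,z)\neq\infty$ forces every node to be non-$\infty$, and your induction yields $(f(x,y,z),f)$.

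The other option is to change the gadget so that the test $x\oplus_i y\neq\infty$ is applied only when $t_1\oplus_i t_2\equiv f$. Elements $(\infty,t)$ then pass through internal nodes as ordinary elements. With that change the observation holds verbatim and your proof goes through unchanged.

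\textbf{A minor point.} In your wrong-labelling argument, the label of a variable that does not occur in $f$ is unconstrained, so the ``otherwise $\infty$'' clause of the observation also needs $f$ to depend on all of $a,b,c$. This does not affect the corollary: such triples still evaluate to either $\infty$ or $(f(x,y,z),f)$.
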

Informally, this construction gives us a way to decide if $f(x,y,z) = \infty$ for all $x,y,z\in S$, while simultaneously keeping track of where we are in the parse tree of $f$ at each step.
Using this notion we will be able to prove technical lemmas that will simplify the landscape of the identities that we need to check.
As a first case in point, by leveraging the subexpression embedding construction, we show that deciding if $f(x,y,z)=\infty$ for each $x,y,z\in S$ is at least as hard as deciding if $g(x,y,z) = \infty$ for each subexpression $g$ of $f$.
\begin{lemma}[Subexpression Monotonicity Lemma]\label{lemma:subexpression-monotonicity}
    Let $f(\overline v)$ be an expression, let $g(\overline v)$ be a subexpression of $f$ and their variables take values from set $S$ of $n$ elements. If there is an algorithm solving the Constant Term Identity problem for $f(\overline v)$ in time $T(n)$, then there is an algorithm solving the Constant Term Identity problem on $g(\overline v)$ in time $\bigO(T(n))$.
\end{lemma}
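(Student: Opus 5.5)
We reduce Constant Term Identity for $g$ to Constant Term Identity for $f$ by a single construction of size $\bigO(n)$. Given $(S,\oplus_1,\dots,\oplus_p)$ with distinguished $\infty$, we first apply the subexpression embedding of Definition~\ref{def:subexpression_embedding} with respect to $g$. By Corollary~\ref{cor:subexpr_embedding}, this yields $(\tilde S,\tilde\oplus_1,\dots,\tilde\oplus_p)$ with $|\tilde S|=\bigO(n)$. Moreover, $g$ is constantly $\infty$ on $S$ if and only if $\tilde g$ is constantly $\infty$ on $\tilde S$. By Observation~\ref{obs:chromaticity}, every non-$\infty$ value of $\tilde g$ has the form $(g(x,y,z),g)$, and $\infty$ is absorbing.

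The difficulty is that $g$ is only a subtree of $f$, and the operations above the root of $g$ inside $f$ must propagate this information. We therefore extend $\tilde S$ with tags for the subexpressions of $f$ lying on the path from the root of $g$ up to the root of $f$. Concretely, let $\mathcal T_f$ be the subexpressions of $f$, and use the embedding of Definition~\ref{def:subexpression_embedding} for $f$ rather than $g$, tagging elements by $\mathcal T_f$. We then redefine the operations at those tagged pairs corresponding to nodes strictly above the root of $g$:
\begin{itemize}
\item if $(u,t_1)\tilde\oplus_i(w,t_2)$ has $t_1\oplus_i t_2\in\mathcal T_f$ and the result is an ancestor of $g$ in $f$, output $(u',t_1\oplus_i t_2)$;
\item here $u'$ is the component coming from the child that is an ancestor-or-equal of $g$; the other child's value is ignored, except that an $\infty$ there still yields $\infty$.
\end{itemize}
Nodes not on the path keep the original embedded definition. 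One subtlety is that $g$ may occur several times in $f$, and an off-path subtree might contain $g$ and produce $\infty$ spuriously. To avoid this, we give off-path nodes trivial, $\infty$-free behaviour. For these nodes we define the operation to output some fixed non-$\infty$ token tagged with the right subexpression, so they never evaluate to $\infty$ on properly tagged inputs.

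We then check correctness. By induction along the parse tree, on inputs tagged $(a,b,c)$, every on-path node evaluates to $(g(x,y,z),t)$, or to $\infty$ exactly when $g(x,y,z)=\infty$. Hence $\tilde f$ on correctly tagged inputs equals $(g(x,y,z),f)$ or $\infty$, and on incorrectly tagged inputs it equals $\infty$, as in Observation~\ref{obs:chromaticity}. Therefore $\tilde f\equiv\infty$ on $\tilde S$ if and only if $g\equiv\infty$ on $S$.

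The main obstacle is defining the Cayley tables consistently when the same symbolic pair of tags $(t_1,t_2,\oplus_i)$ appears at several nodes of $f$. Since the tag determines the subexpression, and equal subexpressions evaluate identically, the definition depends only on the tags. If a given subexpression is both on-path and off-path, then it contains $g$ in both roles, and on-path treatment is harmless there. The table has $\bigO(n^2)$ entries and is built in $\bigO(n^2)\le \bigO(T(n))$ time, since $T(n)=\Omega(n^2)$ for reading the input. The total running time is therefore $\bigO(T(n))$.
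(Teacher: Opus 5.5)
\textbf{The gap.} It sits at the step you yourself call the main obstacle. Cayley-table entries are indexed by (value, tag). A tag records the subexpression but not the \emph{role} of the node. So your instruction to give off-path nodes an $\infty$-free token cannot be carried out when a sibling subtree hanging off the path shares a subexpression with $g$. Your consistency paragraph only reconciles tags that are both on-path and off-path.

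For instance, take $f=((a\oplus_1 b)\oplus_2 c)\oplus_3(a\oplus_1 b)$ and $g=(a\oplus_1 b)\oplus_2 c$. The entry $(x,a)\,\tilde\oplus_1\,(y,b)$ is used both inside $g$ and by the off-path sibling. Making it $\infty$-free corrupts $\tilde g$: its root no longer sees $x\oplus_1 y$, or no longer sees $\infty$ when $x\oplus_1 y=\infty$. Keeping the embedded definition instead lets the sibling evaluate to $\infty$.

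The same applies to the very case you raise, an off-path copy of $g$. It is evaluated by literally the same entries as $g$, so it cannot be made $\infty$-free at all. Hence the claim that off-path nodes never evaluate to $\infty$ on properly tagged inputs is false in general. Your induction (on-path nodes are $\infty$ exactly when $g(x,y,z)=\infty$) depends on that claim, because your pass-through rule still returns $\infty$ when the sibling is $\infty$.

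\textbf{The missing idea and the fix.} The missing idea is that such an $\infty$ is never spurious. Assign rules by tag with a fixed priority:
\begin{itemize}
\item tags that are subexpressions of $g$ keep the embedded definition;
\item tags equal to an ancestor of the chosen occurrence of $g$ get your pass-through rule;
\item all remaining tags get an $\infty$-free token, still with $\infty$ absorbing and the tag check, so wrongly tagged inputs still die at the root.
\end{itemize}
Then prove, by induction over the parse tree, that on a properly tagged input every node evaluating to $\infty$ forces $\tilde g=\infty$. For subexpressions of $g$ this holds because $\infty$ is absorbing inside $\tilde g$. Every other node can only inherit $\infty$ from a child. With this, $\tilde f\equiv\infty$ iff $g\equiv\infty$, and your one-shot reduction goes through.

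\textbf{How the paper does it.} The paper takes a different route that never modifies off-path entries. It peels one level, $f\equiv L\oplus R$, and reduces from $L$. The root operation outputs $L$'s value tagged $f$ and ignores $R$ entirely, even when $R$ evaluates to $\infty$; it sets $(x,L)\odot_i\infty=(x,f)$. The general statement then follows by induction on depth. Such a tag-indexed rule also fires wherever $L$ occurs as a left operand inside $R$, so that route needs a collision check of its own.
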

\begin{proof}
    Assume $f$ is defined over operations $\oplus_1,\dots, \oplus_p$ and $f\equiv L\oplus R$ for some subexpressions $L,R$ and operation $\oplus\in \{\oplus_1,\dots, \oplus_p\}$.
    We remark that it is sufficient to prove that we can solve the Constant Term Identity problem for $L$ in $\bigO(T(n))$ using $T(n)$ algorithm for $f$ (reasoning for $R$ is symmetric), as the claim of the lemma will then follow by induction.
    Also, assume that ${\rm depth}(L)\ge 2$, since otherwise it is trivial to decide if for all $\bar v$, $L(\bar v)=\infty$ in linear time.
    Let $\tilde f$ and $\tilde S$ be defined in subexpression embedding (Definition~\ref{def:subexpression_embedding}) for $f$ and $S$.
    Further, define a new sequence of operations $\odot_1,\dots, \odot_p$ as follows. Let $i\in[p]$ and for each pair of elements $x,y\in S\setminus \{\infty\}$ and subexpressions $t_1, t_2$ of $f$ set:
    \[
    (x,t_1) \odot_i (y,t_2) = \begin{cases}
        (x,t_1) \tilde\oplus_i (y,t_2) & \text{if }t_1\oplus_i t_2\not \equiv f \\
        (x,f) & \text{otherwise}
    \end{cases}
    \]
    Further, for any $\tilde x\in \tilde S$, let
    \[
    \infty \odot_i \tilde x = \infty
    \]
    Finally, for any element $x\in S$ and expression $t_1$, let
    \[
    (x,t_1) \odot_i \infty = \begin{cases}
        \infty & \text{if $t_1\ne L$}\\
        (x, f) & \text{otherwise}
    \end{cases}
    \]
    Let $f^*$ be the expression obtained by replacing each $\oplus_i$ operation in $f$ by $\odot_i$. As $f^*$ is the same expression as $f$, but with differently labeled operations, we can use the algorithm for solving Constant Term Identity problem on~$f$ in time $T(n)$ to solve Constant Term Identity problem on~$f^*$ in time $T(\bigO(n))=\bigO(T(n))$ as $T(n)=\bigO(n^{|v|})$.
    To conclude our proof, we show that there is a triple $\tilde x, \tilde y, \tilde z\in \tilde S$ such that $f^*(\tilde x, \tilde y, \tilde z) \ne \infty$ if and only if there is a triple $x,y,z \in S$ such that $L(x,y,z) \ne \infty$.

    Assume first that there is a triple $x,y,z \in S$ such that $L(x,y,z) \ne \infty$.
    Let $\odot_j$ be the operation in the root of $f^*$.
    Then by definition of $\odot_h$ we have $f^*((x,a),(y,b),(z,c)) = (L(x,y,z), f)\ne \infty$.
    Assume now that there is no such triple, i.e., for each $x,y,z\in s$ we have $L(x,y,z) = \infty$.
    Assume for contradiction that there exists a triple $\tilde x, \tilde y, \tilde z \in \tilde S$ such that $f^*(\tilde x, \tilde y, \tilde z)\ne \infty$.
    Denote by $\tilde L, L^*$ the subexpressions of $\tilde f$ and $f^*$ respectively that correspond to $L$.
    By Observation \ref{obs:chromaticity} and $L(x,y,z)=\infty$, we know that $\tilde L (\tilde x, \tilde y, \tilde z) = \infty$.
    On the other hand, $L^*(\tilde x, \tilde y, \tilde z)\ne \infty$ as otherwise we would have $f^*(\tilde x, \tilde y, \tilde z)=\infty \odot_j R^*(\tilde x, \tilde y, \tilde z) = \infty$ which contradicts that $f^*(\tilde x, \tilde y, \tilde z)\ne \infty$.
    By definition of $\odot$ we have the following claim:
    \begin{claim}
        If $\tilde L (\tilde x, \tilde y, \tilde z) = \infty$ and $L^*(\tilde x, \tilde y, \tilde z)\ne \infty$,  then $L^*(\tilde x, \tilde y, \tilde z) = (v,f)$ for some $v\in S$.
    \end{claim}
    \begin{subproof}
    We show by induction of the size of the subexpression $e$ of $L$ that $e^*(\tilde x, \tilde y, \tilde z)\in\{\tilde e(\tilde x, \tilde y, \tilde z), \infty\}\cup \{(v,f) : v\in S\}$.
    Indeed, this is the case because if either of the subtrees of $e^*$ evaluates to $\infty$ or some $(v,f)$ then $e^*(\tilde x, \tilde y, \tilde z)$ also evaluates to either $\infty$ or $(v',f)$ as $f\oplus_i t\not\in \mathcal{T}$ for all $t$.
    Next, $\tilde u \odot_i \tilde w \ne \tilde u \tilde\oplus_i \tilde w$ holds only in two cases: either when $\tilde u = (x,L)$ and $\tilde w=\infty$ or when $\tilde u = (x,t_1),\tilde w= (y,t_2)$ and $t_1\oplus_i t_2\equiv f$.
    In both cases $\tilde u \odot_i \tilde w = (x,f)$ instead of $\infty$ or $(x\oplus_i y,f)$, respectively, so the inductive step follows.

    Finally, as $L^*(\tilde x, \tilde y, \tilde z)\not\in \{\infty,\tilde L(\tilde x, \tilde y, \tilde z)\}$ so $L^*(\tilde x, \tilde y, \tilde z)=(v,f)$ for some $v\in S$.

    \end{subproof}
    Recall that $f \equiv L \oplus R$, which implies that for some $\tilde u\in \tilde S$ we have that $f^*(\tilde x, \tilde y, \tilde z) = (v,f) \odot_j \tilde u$.
    Now there are two cases, either $\tilde u = \infty$ in which case this clearly evaluates to $\infty$. Otherwise, $\tilde u = (u,t_2)$ for some $u\in S$ and subexpression $t_2$. But clearly, for each subexpression $t_2$, $f\oplus_j t_2$ is not a subexpression of $f$, hence $(v,f) \odot \tilde u =\infty$ in this case as well.
    Both cases lead to contradiction with the assumption that $f^*(\tilde x, \tilde y, \tilde z) \ne \infty$ and the claim follows.
\end{proof}
\paragraph{Multichromatic Constant Term Identity Problem} So far we only considered the version of problem in which our input consists of a single set $S$, over which we define all our operations and then want to decide if for all $x,y,z\in S$ it holds that $f(x,y,z) = \infty$.
However, often it is helpful to consider a more general version of the problem where, in addition to our universe, we are restricting the sets in which variables $x,y,z$ can live. More formally we define the following problem:
\begin{definition}[Multichromatic Constant Term Identity]
Let $f(a,b,c)$ be a fixed algebraic expression over operations $\oplus_1,\dots, \oplus_p$. Given a set $S$ and subsets $A,B,C\subseteq S$ with Cayley tables of operations $\oplus_1,\dots, \oplus_p$ defined on $S$, the \emph{Multichromatic Constant Term Identity} problem is to decide if for all triples $a\in A, b\in B, c\in C$ it holds that $f(a,b,c) = \infty$.
\end{definition}

Clearly the Constant Term Identity problem is a special case of this problem where $A=B=C=S$.
In particular, if we can solve the Multichromatic Constant Term Identity problem on $f$ in time $T(|S|)$, then we can also solve the Constant Term Identity problem on $f$ in $\bigO(T(|S|))$.
However, using the idea of subexpression embedding gadget, we can actually construct the reduction in the other direction as well, thus showing the equivalence between these two problems.
This will be particularly useful for showing hardness of Constant Term Identity problem on some family of expressions.
We also note that without loss of generality we can assume that the sets $A,B,C$ are pairwise disjoint, since otherwise we can copy the set $S$ three times and consider each set $A,B,C$ to be the corresponding subset in its respective copy.
We will use this fact to construct the Cayley tables for some of the technical lemmas later on.
\begin{lemma}
    Let $f(a,b,c)$ be a fixed algebraic expression over elements from set $S$ of $n$ elements. If there exists an algorithm solving Constant Term Identity on $f$ in time $T(n)$, then Multichromatic Constant Term Identity can be solved in time $\bigO(T(n))$.
\end{lemma}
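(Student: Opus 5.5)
We reduce a Multichromatic Constant Term Identity instance $(S,A,B,C,\oplus_1,\dots,\oplus_p)$ to a single Constant Term Identity instance for $f$ of size $\bigO(n)$, using a variant of the subexpression embedding of Definition~\ref{def:subexpression_embedding}. As noted before the lemma, we first copy $S$ so that $A,B,C$ are pairwise disjoint; this changes $n$ only by a constant factor. Let $\mathcal T$ be the set of subexpressions of $f$ and $\tilde S=(S\times\mathcal T)\cup\{\infty\}$. The idea is that the leaf tags $a,b,c\in\mathcal T$ should only be attachable to elements of $A$, $B$, $C$ respectively, and that an element carrying tag $a$ that does not lie in $A$ should be killed.

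Concretely, we define $\hat\oplus_i$ exactly as $\tilde\oplus_i$, except that an operand $(x,t)$ is treated as $\infty$ whenever $t\equiv a$ and $x\notin A$, or $t\equiv b$ and $x\notin B$, or $t\equiv c$ and $x\notin C$. (Note that $f$ has depth at least $1$ in the nontrivial case, so every leaf is an operand of some operation.) Each table entry is computed in $\bigO(1)$ time, and $|\mathcal T|=\bigO(1)$, so building the tables takes $\bigO(n^2)$ time, which is dominated by $T(n)\ge n^2$ (the algorithm must read its input). We then run the assumed algorithm on $\hat f$ over $\tilde S$ and return its answer.

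For correctness we mirror the argument of Observation~\ref{obs:chromaticity}, proving by induction on the subexpressions $e$ of $f$ that $\hat e((x,t_1),(y,t_2),(z,t_3))$ equals $(e(x,y,z),e)$ when $t_1\equiv a$, $t_2\equiv b$, $t_3\equiv c$, $x\in A$, $y\in B$, $z\in C$, and no intermediate value is $\infty$. In every other case the evaluation must reach $\infty$. One reason is that a wrong tag combination yields a symbolic expression outside $\mathcal T$, exactly as in Observation~\ref{obs:chromaticity}. The other is that a leaf with the correct tag but the wrong color is killed at its parent.

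Two subtleties need care. The first is that a non-leaf node can never be killed by the color check, because its tag is a non-variable subexpression. The second concerns a leaf tagged by a variable but sitting in a position of a different variable, for instance $(x,b)$ placed in the position of $a$. Here we must show that the symbolic composition then produces a tree that is not the corresponding subexpression of $f$. This matches the reasoning already used for $\tilde f$: the root's tag equals $f$ only if every leaf tag agrees with the leaf label of $f$. Hence $\hat f$ is non-$\infty$ at some triple iff $f(x,y,z)\neq\infty$ for some $x\in A,y\in B,z\in C$.

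The main obstacle is a point about $\infty$. In the original instance $\infty$ is merely the distinguished output value, and some $x\oplus_i y$ may equal $\infty$ at an intermediate node without the final value being $\infty$. The embedding then maps such a triple to $\infty$ and loses information. I would resolve this as the paper's Definition~\ref{def:subexpression_embedding} implicitly does. We tag intermediate results with their subexpression and drop the ``$x\oplus_i y\neq\infty$'' test except at the root, where we map to $\infty$ exactly when $f(x,y,z)=\infty$, and to a fixed non-$\infty$ element otherwise. This keeps the equivalence exact.
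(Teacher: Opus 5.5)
Your argument is correct and is essentially the paper's proof: apply the subexpression embedding of Definition~\ref{def:subexpression_embedding}, make every wrongly colored leaf-tagged element (such as $(x,a)$ with $x\notin A$, and likewise for $b$ and $c$) behave as an all-$\infty$ row and column, and conclude via Observation~\ref{obs:chromaticity} that only correctly tagged triples from $A\times B\times C$ can avoid $\infty$. Correct the attribution in your last paragraph: Definition~\ref{def:subexpression_embedding} applies the test $x\oplus_i y\neq\infty$ at \emph{every} node rather than dropping it implicitly, so Observation~\ref{obs:chromaticity}, and with it the paper's proof, is only valid when no intermediate value equals $\infty$; your root-only $\infty$ test genuinely repairs this, and you should present it as your own modification rather than as something the paper already does.
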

\begin{proof}
    If $f$ depends only on a proper subset of $\{a,b,c\}$, this claim is trivial, since Multichromatic Constant Term Identity can be solved in time $\bigO(n^2)$, linear in the size of input. Hence, we assume that $f$ depends on all three variables.
    Suppose $f$ operates on $S$ and let $\tilde f$ and $\tilde S$ be defined in subexpression embedding (Definition~\ref{def:subexpression_embedding}) for $f$ and $S$.
    Additionally, for each $x\in S\setminus A$, replace the row/column corresponding to the element $(x,\expr a)$ in each Cayley table of $\tilde \oplus_i$ to all-$\infty$ row/column, that is set $(x,\expr a)\tilde \oplus_i \tilde y =\tilde y \tilde \oplus_i  (x,\expr a) = \infty$ for each $\tilde y\in \tilde S$.
    Repeat the analogous construction for each row/column in each Cayley table corresponding to $(y,\expr b)$ for $y\in S\setminus B$, and $(z,\expr c)$ for $z\in S\setminus C$.
    \begin{claim}
        For each triple $(x,y,z)\not \in A\times B\times C$, and for each triple of subexpressions $t_1,t_2,t_3$ of $f$, we have that $\tilde f((x,t_1),(y,t_2),(z,t_3)) = \infty$.
    \end{claim}
    \begin{subproof}
        Firstly, if $(t_1,t_2,t_3)\not\equiv (\expr a,\expr b,\expr c)$ the claim follows as, at the latest, in the root we will obtain an expression that is not in $\mathcal{T}$, similarly as in Observation \ref{obs:chromaticity}. So we only need to consider the case when $(t_1,t_2,t_3) \equiv (\expr a,\expr b,\expr c)$.
        Assume without loss of generality that $x\not\in A$. By assumption that $f$ depends on all three variables, there exists a subexpression $t(a,b,c)$ of $f$ such that $t(a,b,c)\equiv \expr a \oplus_i t'(a,b,c)$ for some other subexpression $t'$ and some $i\in [p]$. The case when $t(a,b,c)\equiv t'(a,b,c) \oplus_i \expr a$ is symmetric.
        Then the corresponding subexpression $\tilde t$ of $\tilde f$ satisfies $\tilde t((x,\expr a),(y,\expr b),(z,\expr c)) = (x,\expr a) \tilde \oplus_i \tilde r$ for some $\tilde r\in \tilde S$.
        However, since $x\not \in A$, this implies that $\tilde t((x,\expr a),(y,\expr b),(z,\expr c)) = \infty$.
        Moreover, since $\infty \tilde \oplus_i \tilde r = \tilde r \tilde\oplus_i \infty = \infty$ by definition of subexpression embedding gadget, the desired claim follows directly.
    \end{subproof}
    Now it remains to show that for all $(x,y,z)\in A\times B\times C$ we have that $\tilde f((x,\expr a),(y,\expr b),(z,\expr c)) = \infty$ if and only if $f(x,y,z) = \infty$.
    However, this follows immediately from Observation \ref{obs:chromaticity}, as the all-$\infty$ row/column change in Cayley tables of $\tilde \oplus_i$ does not affect $(x,a),(y,b)$ or $(z,c)$ for $x\in A,y\in B$ and $z\in C$ respectively.
\end{proof}
Recall that two expressions are similar if there exists a common expression that they both decompose to.
We now prove that if we have two non-similar expressions over the same set of operations, we can set up the corresponding Cayley tables to simulate the case where one of the expressions has a brand new operation (that does not appear in any of the two expressions) in the root.
This will help us collapse the non-similar expressions to significantly reduce the number of expressions we need to prove lower bounds for.
\begin{lemma}\label{lemma:operations}
    Let $f(a,b,c), g(a,b,c)$ be expressions over operations $\oplus_1,\dots, \oplus_p$ and assume $f(a,b,c) \equiv L \oplus_i R$ for some $i\in [p]$. Let $f'(a,b,c) \equiv L \oplus' R$ for some operation $\oplus' \not\in \{\oplus_1,\dots, \oplus_p\}$.
    If $f\not\sim g$ and ${\rm depth}(f)\ge  {\rm depth}(g)$, then given a set $S$ together with its subsets $A,B,C$ and Cayley tables for $\oplus_1,\dots, \oplus_p, \oplus'$ we can in time $\bigO(n^2)$ construct the set $S'\supseteq S$, and Cayley tables for operations $\overline \oplus_1,\dots,\overline \oplus_p$ and a unary operation $q:S'\to S$ such that if $\overline f, \overline g$ denote the expressions with $\oplus_i$ replaced with $\overline \oplus_i$, the following is satisfied.
    \begin{itemize}
        \item For each $x\in A,y\in B,z\in C$, the value of $g(x,y,z) = q(\overline g(x,y,z))$.
        \item For each $x\in A,y\in B,z\in C$, we have that $q(\overline f(x,y,z)) = f'(x,y,z)$.
    \end{itemize}
\end{lemma}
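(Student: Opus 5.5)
I would build $S'$ on top of the subexpression-embedding idea (Definition~\ref{def:subexpression_embedding}), but in a way that keeps track of \emph{both} $f$ and $g$. Let $\mathcal T$ be the set of all subexpressions of $f$ and of $g$. Set $S' := (S\times\mathcal T)\cup S\cup\{\infty\}$, with the plain copy of $S$ holding the inputs. I identify $x\in A$ with $(x,a)$, $y\in B$ with $(y,b)$ and $z\in C$ with $(z,c)$; if needed I take disjoint copies, as remarked after the definition of the multichromatic problem.

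For $t_1,t_2\in\mathcal T$ I define
\[
(x,t_1)\,\overline\oplus_j\,(y,t_2) := (x\oplus_j y,\ t_1\oplus_j t_2)
\]
whenever $t_1\oplus_j t_2\in\mathcal T$ and $t_1\oplus_j t_2\not\equiv f$. The one exception is the root of $f$. When $j=i$, $t_1\equiv L$ and $t_2\equiv R$, I set $(x,L)\,\overline\oplus_i\,(y,R):=(x\oplus' y, f)$. All remaining entries map to $\infty$, and $\infty$ is absorbing. The unary map is $q((v,t)):=v$, with $q(\infty)$ sent to a fixed element. By induction over the parse tree, every subexpression $t$ of $f$ or $g$ other than $f$ itself then evaluates to $(t(x,y,z),t)$. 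Consequently $\overline g$ evaluates to $(g(x,y,z),g)$, and $\overline f$ evaluates to $(L\oplus' R,f)=(f'(x,y,z),f)$. Applying $q$ gives both required properties. The construction takes $O(n^2)$ time, since $|\mathcal T|=O(1)$.

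The main obstacle is consistency of the symbolic bookkeeping. A single symbolic pair $(t_1,t_2,\oplus_j)$ must never be asked to produce $f$ in one place and a proper subexpression of $g$ in another. This is exactly what the hypotheses rule out. If $L\oplus_i R\equiv f$ occurred as a subexpression of $g$, then $g$ would contain $f$, hence $g$ would decompose into $f$ and so $g\sim f$. This contradicts $f\not\sim g$; it is also impossible when ${\rm depth}(f)>{\rm depth}(g)$ unless $f\equiv g$.

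The case of equal depth deserves a separate check. If $f\equiv g$ then $f\sim g$, which is excluded. So $f\not\equiv g$, and the pair $(L,R,\oplus_i)$ produces the symbol $f$ only at the root of $f$. In particular $\overline g$ never passes through the modified entry. The symbols attached to intermediate nodes are exact syntactic trees, so this uniqueness argument is purely symbolic and does not depend on $S$.

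One more subtlety needs care. Rows and columns of $\overline\oplus_j$ that are indexed by raw elements of $S$ (rather than by tagged pairs) must be routed so that the inputs act as leaves $a,b,c$. I handle this by literally identifying $S$ with the tagged leaf copies, and by letting any untagged combinations go to $\infty$.

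If the intended use requires $q$ to be well defined on $S'\supseteq S$ with values in $S$ as stated, then $q(\infty)$ should be the distinguished element $\infty\in S$. This ensures that failures propagate consistently.
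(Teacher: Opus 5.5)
Your proposal is correct and is essentially the paper's proof: you embed the subexpressions of both $f$ and $g$, reroute only the $(L,R)$ entry of the root operation through $\oplus'$, project with $q$, and observe that $f$ occurring inside $g$ together with ${\rm depth}(f)\ge{\rm depth}(g)$ forces $f\equiv g$, which contradicts $f\not\sim g$. One correction: your remark that containing $f$ makes $g$ decomposable into $f$ is false in general (e.g.\ $(a\oplus b)\odot c$ contains $a\oplus b$ but does not decompose into it), though nothing depends on it, since the depth argument you also give is what carries that step.
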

\begin{proof}
    Let $\mathcal{T}$ be the union of all subexpressions of both $f$ and $g$. Apply the same construction as in the Subexpression Embedding gadget and let $\tilde f, \tilde f', \tilde g$ be the embedded expressions corresponding to $f,f',g$ respectively.
    Let $\oplus_j$ be the operation that appears in the root of $f$.
    Now modify $\tilde \oplus_j$ as follows:
    \[
        (x,t_1) \tilde\oplus_j (y, t_2) \gets \begin{cases}
            (x, t_1) \tilde \oplus' (y, t_2) &\text{if $t_1\equiv L$ and $t_2\equiv R$}\\
            (x, t_1) \tilde\oplus_j (y, t_2) &  \text{otherwise}
        \end{cases}
    \]
    Set $\tilde A:= \{(x, a)\mid x\in A\}$, $\tilde B:= \{(y, b)\mid y\in B\}$, $\tilde C:= \{(z,c)\mid z\in C\}$. Intuitively, we embed the corresponding leaf of $f$ (resp. $g$) to each of the sets $A,B,C$.
    Now set $S' = S\times \mathcal T$ and define $q: (x,T)\mapsto x$.
    We first show that for each $\tilde x\in \tilde A, \tilde y\in \tilde B, \tilde z \in \tilde C$, the embedded expression $\tilde g$ remains unchanged.
    \begin{claim}
        For each $x\in A, y\in B, z\in C$ we have $g(x,y,z) = \tilde q(g((x, a), (y,{b}), (z, {c})))$.
    \end{claim}
    \begin{subproof}
        We first notice that for each subexpression $T$ of $g$ that does not contain the operation $\oplus_j$, this follows trivially from Observation \ref{obs:chromaticity}.
        Now assume that there is a subexpression $T$ of $g$ that can be written as $T \equiv t_1 \oplus_j t_2$.
        We can argue inductively that the statement holds for both $t_1$ and $t_2$. Now, by our construction, the only way that $T$ evaluates to a different value is if $ t_1\equiv L, t_2 \equiv R$.
        However, since $f\equiv L \oplus_j R$ and ${\rm depth}(f)\ge {\rm depth}(g)$, this would imply $f\equiv g$, contradicting that $f\not\sim g$.
    \end{subproof}
    \begin{claim}
        For each $x\in A, y\in B, z\in C$ we have $f'(x,y,z) = \tilde f((x, a), (y,{b}), (z, {c}))$.
    \end{claim}
    \begin{subproof}
        Follows directly from our construction, Observation \ref{obs:chromaticity} and the fact that $f\not\sim g$.
    \end{subproof}
    The proof of the lemma now follows from the two claims above.
\end{proof}
We now introduce the concept of \emph{collapsing} one expression into the other.
On a high level, an expression $f$ can be collapsed into an expression $g$ if, given an instance\footnote{Here by instance we mean a multichromatic set $S$ together with Cayley tables defining each of the operations appearing in $g$.} of $g$, we can set up the Cayley tables for $f$ such that $f(a,b,c)$ and $g(a,b,c)$ always evaluate to the same element.
\begin{definition}[Expression Collapsing]
    We say an expression $f(a,b,c)$ can be \emph{collapsed} into an expression $g(a,b,c)$ if given any set $S$ with (pairwise disjoint) subsets $A,B,C$ and Cayley tables for operations in $g$, we can construct a set $S'\supseteq S$ of size $\bigO(|S|)$, the Cayley tables for operations in $f$ and a unary function $q: S'\rightarrow S$ such that $q(f(a,b,c)) = q(g(a,b,c))$ for each $a\in A, b\in B, c\in C$.
\end{definition}
\begin{lemma}\label{lemma:collapsing-2-vars}[Collapsing Lemma]
    Let $f(a,b)$ be an expression over operations $\oplus_1,\dots, \oplus_p$ that depends on both $a,b$. Then given a set $S$, disjoint subsets $A,B\subseteq S$ and an arbitrary operation $\odot: A\times B \rightarrow S$, we can construct a set $S' \supseteq S$ of size $\bigO(|S|)$ and new Cayley tables for $\oplus_1,\dots, \oplus_p$ and a unary function $q: S'\rightarrow S$ such that for each $a\in A$, $b\in B$ we have $q(f(a,b)) = a\odot b$.
\end{lemma}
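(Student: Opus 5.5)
Since $f$ depends on both $a$ and $b$, its root is an operation $\oplus_i$ applied to two subexpressions $L(a,b)$ and $R(a,b)$. I will use the bookkeeping idea of the Subexpression Embedding gadget (Definition~\ref{def:subexpression_embedding}). Let $\mathcal T$ be the set of subexpressions of $f$ and put $S' := (S\times\mathcal T)\cup S\cup\{\infty\}$, which has size $\bigO(|S|)$ because $f$ has constant size. Inside $f$, each element $a\in A$ is identified with $(a,\expr a)$ and each $b\in B$ with $(b,\expr b)$. Strictly, one should apply the construction to these embedded inputs; otherwise the raw elements $a\in A$, $b\in B$ must be given rows that map them to the tagged copies. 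Since $A,B$ are disjoint, I can do this directly: on untagged inputs from $A$ (resp.\ $B$) the tables treat the element as if it carried the tag $\expr a$ (resp.\ $\expr b$).

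Every non-root node is evaluated with \emph{pure bookkeeping}, independent of $\odot$. For a subexpression $t$ with $\mathrm{var}(t)=\{a\}$, the tables propagate $(a,t)$. For $\mathrm{var}(t)=\{b\}$ they propagate $(b,t)$. For $\mathrm{var}(t)=\{a,b\}$ they carry the \emph{pair} $(a,b)$ together with the tag $t$. The last case would blow up the universe, so this is the point that needs care.

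\textbf{Main obstacle.} Memorizing the full pair $(a,b)$ at two-variable internal nodes would give size $|A||B|$, which is too large. I would avoid this as follows. Choose a root-to-leaf path in which every node depends on both $a$ and $b$, and let $t^*$ be the deepest node on it that still depends on both. Then $t^*=t_a\oplus_k t_b$, where one child depends only on $a$ and the other only on $b$ (up to order). At $t^*$ the table receives $(a,t_a)$ and $(b,t_b)$ and outputs $(a\odot b,t^*)$, which is a single element of $S\times\mathcal T$. At each ancestor $u=u_1\oplus u_2$ of $t^*$, one child is the carried value $(v,\cdot)$ and the other is some other subexpression. The table forwards $(v,u)$ whenever the child tags match $u$ correctly, so the value $v=a\odot b$ propagates up to the root. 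The subtlety is that the other child may itself depend on $a$ and $b$ and so receives its own collapsed value; because the table ignores that child's value component and reads only its tag, this causes no problem. This works as long as the tags determine which child is "the" carrier. I would fix a canonical carrier, say the first child in tree order that contains $t^*$, and check that the tag pair $(t_1,t_2)$ together with the operation identifies $u$ uniquely. Repeated subexpressions (the same tag in several places) are harmless, since identical subtrees evaluate identically.

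Finally, set $q(v,t):=v$, set $q(x):=x$ on $S$, and set $q(\infty)$ arbitrarily. Entries not covered by these rules are set to $\infty$. Each table is filled in $\bigO(|S'|^2)$ time, and every entry is well defined because the output at $t^*$ depends only on the two input values. Checking that $q(f(a,b))=a\odot b$ is then an induction along the chosen path, with the root giving $(a\odot b,f)$.
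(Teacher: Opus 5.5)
Your strategy (tag values by subexpressions, apply $\odot$ at a node where an $a$-only and a $b$-only subtree meet, then forward that value to the root) is workable, but as written it has a genuine gap. You never say what a two-variable subexpression that does \emph{not} contain $t^*$ evaluates to.

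Your explicit rules cover single-variable nodes, $t^*$ itself, and nodes containing $t^*$. For the remaining two-variable nodes, the only option you state is carrying the pair $(a,b)$, which you rightly reject; otherwise your default sends the entry to $\infty$. That breaks the forwarding step. An ancestor of $t^*$ needs the \emph{tag} of its other child to know which node $u$ it is, and $\infty$ carries no tag.

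Concretely, take $f=(a\oplus b)\otimes(b\oplus a)$ with $t^*=a\oplus b$. The node $b\oplus a$ gets $\infty$, hence so does the root, and $q(f(a,b))=q(\infty)$ need not equal $a\odot b$. Your sentence that the sibling ``receives its own collapsed value'' presupposes exactly the rule that is missing.

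A smaller point: the claim that both children of $t^*$ are single-variable only holds if you choose the path greedily, always descending into a child that depends on both variables. For an arbitrary path it fails, e.g.\ $f=(a\oplus b)\otimes a$ with the path going to the right leaf.

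The repair is short. Give \emph{every} two-variable node a bounded-size tagged value. The natural rule is to apply $\odot$, in the correct order, at every node whose children are an $a$-only and a $b$-only subexpression. At every other two-variable node, at least one child depends on both variables, so forward that child's value. By induction, every two-variable subexpression $u$ then evaluates to $(a\odot b,u)$, and the distinguished path and $t^*$ become unnecessary.

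Once you make this repair, the tags carry no information beyond ``raw variable value'' versus ``already collapsed''. That is exactly the paper's proof. It takes $S'=S\cup\{(x,1):x\in S\}$ and, uniformly for all operations:
\begin{itemize}
\item keeps equal raw values;
\item turns a raw $A$-value meeting a raw $B$-value into $(a\odot b,1)$, with disjointness of $A,B$ fixing the order;
\item lets a flagged value absorb its sibling.
\end{itemize}
The full subexpression embedding gives node-specific control that this lemma does not need, though the paper does use that kind of control in its later collapsing arguments.
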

\begin{proof}
    Let $S' := S \cup \{(x,1) : x\in S\}$. In order to unify notation for defining $\oplus_i$, we say that $(x,0)$ represents $x$ for all $x\in S$.
    We now construct for each $i\in [p]$ and $x,y\in S$ the operation $\oplus_i$ as follows:
    \[
    (x,0) \oplus_i (y,0) =\begin{cases}
        (x,0) & \text{if }x=y\\
        (x\odot y,1) & \text{if } x\in A, y\in B \\
        (y\odot x,1) & \text{if } x\in B, y\in A
    \end{cases}
    \]
    Note that the three mentioned cases are the only possible, as the expression $f$ does not have constants (i.e. elements from $S$, non-variables) and we consider only $f$ evaluated on $a\in A, b\in B$ where $A$ and $B$ are disjoint.
    Moreover, let $(x,0) \oplus_i (y,1)=(y,1)$ and finally let $(x,1)\oplus_i (y, b)= (x,1)$ for any $b\in \{0,1\}$.
    Observe that this construction guarantees that the $\odot$ operation is performed exactly once. Indeed, the second parameter counts the number of operations performed, so $\odot$ is applied at most once, and since $f$ depends on both $a$ and $b$, $\odot$ will be performed at least once.
    Then $f(a,b) = (a\odot b, 1)$, so we set $q((x,b))=x$ for all $x\in S, b\in\{0,1\}$ and the claim follows.
\end{proof}
We can now apply the Collapsing Lemma to show that if we have two non-similar expressions $f(a,b), g(a,b)$, we can collapse them in such a way that (without loss of generality) $f(x,y)$ always evaluates to $x\odot y$ and $g(x,y)$ always evaluates to $x$ for any given operation $\odot$.
\begin{lemma}\label{lemma:collapsing-non-similar}
    Let $f(a,b)$ and $g(a,b)$ be two expressions that depend on both $a$ and $b$ over operations $\oplus_1,\dots, \oplus_p$, such that $f(a,b)\not\sim g(a,b)$.
    Given a set $S$, two disjoint subsets $A,B\subseteq S$ and an operation $\odot: S\times S \to S$, we can construct in $\bigO(|S|^2)$ time a set $S'\supseteq S$ of size $\bigO(|S|)$, the Cayley tables for the operations $\oplus_1,\dots, \oplus_p: S'\times S'\to S'$ and a unary function $q: S'\to S$ such that either:
    \begin{enumerate}
        \item $q(f(x,y)) = y$ and $q(g(x,y)) = x\odot y$ for every $x\in A$, $y\in B$, or
        \item $q(f(x,y)) = x\odot y$ and $q(g(x,y)) = y$ for every $x\in A$, $y\in B$.
    \end{enumerate}
\end{lemma}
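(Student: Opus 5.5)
The plan is to combine \Cref{lemma:operations} with a tag-driven version of the construction in the Collapsing Lemma (\Cref{lemma:collapsing-2-vars}). First, by the symmetry between the two conclusions, we may assume ${\rm depth}(f)\ge {\rm depth}(g)$. If instead ${\rm depth}(g)>{\rm depth}(f)$, we swap the roles of $f$ and $g$, and this is exactly what makes the lemma allow either outcome.

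Write $f\equiv L\oplus_j R$. Since $f\not\sim g$ and ${\rm depth}(f)\ge{\rm depth}(g)$, \Cref{lemma:operations} lets us pretend, at cost $\bigO(|S|^2)$ and a final projection $q$, that the root of $f$ carries a brand-new operation $\oplus'$. This $\oplus'$ occurs nowhere in $g$ and nowhere inside $L$ or $R$. The point of this step is that we can define $\oplus'$ freely without affecting $g$.

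Next we build everything inside the Subexpression Embedding gadget (\Cref{def:subexpression_embedding}) over $\mathcal T$, the union of the subexpressions of $f$ and $g$. Each operation then sees the symbolic subexpressions $t_1,t_2$ of its operands, so its behaviour may depend on where in the parse tree it is applied. Elements are pairs $(v,t)$ with $v\in S$ and $t\in\mathcal T$, so $|S'|=\bigO(|S|)$.

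The intended semantics is the following. Every node $t$ with ${\rm var}(t)=\{a\}$ carries the value $x$, and every node with ${\rm var}(t)=\{b\}$ carries $y$; this is realised by letting an operation on two single-variable operands with the same variable return the left operand's value. For a node combining a child carrying $x$ with a child carrying $y$, the old operations $\oplus_1,\dots,\oplus_p$ output the $y$-value. Hence every two-variable subexpression built from old operations evaluates to $y$, and in particular $q(g(x,y))=y$. The new root operation $\oplus'$ is defined as $(v_1,L)\oplus'(v_2,R)=(v_1\odot v_2,f)$ or $(v_2\odot v_1,f)$, depending on which of $L,R$ carries $x$.

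The main obstacle is that $L$ and $R$ may both depend on both variables. With the rule ``prefer $y$'' both children then deliver $y$, and $\oplus'$ never sees $x$. To fix this I would refine the per-node rule using the symbolic tags: at two-variable nodes that are subexpressions of $L$ but not of $g$, the operation keeps the $x$-value instead. I expect the difficulty to be showing that the assignment is consistent. A subexpression shared by $g$ and $L$ must keep outputting $y$, yet somewhere in $L$ the value $x$ must survive up to the root.

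Here I would use the non-similarity $f\not\sim g$ together with \Cref{obs:recursive-decomposability}. If every path from the root of $L$ down to an $a$-leaf passed through a subexpression shared with $g$, I would try to show that $L$ and $g$ (and hence $f$ and $g$) decompose into a common expression. That would contradict $f\not\sim g$. If this argument fails in some configuration, the fallback is the swapped outcome of the lemma, where $g$ collapses to $x\odot y$ and $f$ to $y$.

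Correctness then follows by induction over the parse trees, exactly as in the claims within the proof of \Cref{lemma:operations}. The running time is $\bigO(|S|^2)$ because each Cayley table has size $\bigO(|S|^2)$ and $\mathcal T$ has constant size.
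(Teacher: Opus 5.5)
There is a genuine gap: your construction fixes the outcome in advance. Because every old operation outputs the $y$-value, $g$ collapses to $y$ by design, and you then try to carry $x$ up to the fresh root so that $f$ collapses to $x\odot y$. For some admissible pairs this outcome is impossible. Take $g\equiv a\oplus_1 b$ together with either $f\equiv b\oplus_2(a\oplus_1 b)$ or $f\equiv((a\oplus_1 b)\oplus_2 b)\oplus_4((a\oplus_1 b)\oplus_3 b)$. In both cases $f\not\sim g$ and ${\rm depth}(f)>{\rm depth}(g)$, but $f(x,y)$ depends on $x$ only through $g(x,y)$. If $q(g(x,y))=y$, the elements $g(x,y)$ for different $y$ are distinct, and for fixed $y$ they must distinguish all values $x\odot y$. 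With $x\odot y:=x$ this forces $|S'|\ge|A|\cdot|B|$.

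The same examples refute the structural step you planned to use. In the second one, $(a\oplus_1 b)\oplus_2 b\not\sim g$, although every path from its root to the $a$-leaf passes through a copy of $g$. In the first one, the right child of the root is $g$ itself, so a child being similar to $g$ does not give $f\sim g$. The first example also shows your case split is incomplete. Your prefer-$y$ rule already loses $x$ when one child of the root is $b$ and the other depends on both variables, not only when both children depend on both variables.

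So the swapped outcome is forced by the structure, not an occasional fallback, and you give no construction for it. It cannot come from your rules, since there every old operation returns $y$. The paper lets the structure choose the outcome by induction on ${\rm depth}(f)+{\rm depth}(g)$, after making the root of $f$ fresh via \Cref{lemma:operations}. There are three cases.

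\begin{itemize}
\item If one child of the root has ${\rm var}=\{a\}$, the old operations select the $B$-operand and the root computes $\odot$. This gives outcome 2.
\item If both children depend on $a$ and $b$, one of them, say $L$, satisfies $L\not\sim g$, since otherwise $L\sim g\sim R$ would give $f\sim g$. The paper recurses on $(L,g)$ and lets the root project onto $L$.
\item If one child has ${\rm var}=\{b\}$ and the other depends on both variables, the paper uses the marker elements $(u,0),(u,1)$ from \Cref{lemma:collapsing-2-vars}. The first mixing operation computes $\odot$ and the marked value is propagated, so $g$ and the two-variable child evaluate to $(x\odot y,1)$. The root projects onto the $b$-child, giving $f\mapsto y$ and $g\mapsto x\odot y$, which is outcome 1.
\end{itemize}

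This third case, and the recursion that reaches it, is what your proposal is missing.
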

\begin{proof}
Both expression depend on two variables, so their depth is at least 1.
We prove the claim by induction on ${\rm depth}(f)+{\rm depth}(g)$.
If ${\rm depth}(f)={\rm depth}(g)=1$, $f$ and $g$ consist of a single operation and we have two cases. If they consists of distinct operations $\oplus_f$ and $\oplus_g$, we can set: $u \oplus_f v = u$ if $u\in B$ and $v$ otherwise; and $u \oplus_g v = u \odot v$ if $u\in A, v\in B$ and $v \odot u$ otherwise and the claim follows (the first case).
If they consist of the same operation $\oplus$, it must be the case that $f\equiv a \oplus b$ and $g\equiv b\oplus a$ (or the other way round), as $f\not\sim g$.
Then we set
$u \oplus v = u \odot v$ if $u\in A, v\in B$ and $u$ otherwise (note that now $u\in B$) and the claim follows (first case if $g\equiv a \oplus b$, second otherwise).

    For the inductive step, assume without loss of generality that ${\rm depth}(f)\ge {\rm depth}(g)$. We write $f\equiv L \oplus_f R$, where by Lemma \ref{lemma:operations}, we can assume that $\oplus_f$ does not appear in any subexpression of $g$, $L$, or $R$.
    As ${\rm var}(f)=\{a,b\}$ (i.e. $f$ depends on both $a$ and $b$), there are three cases depending on which variables do subtrees $L$ and $R$ depend on (other cases are symmetric):
    \begin{enumerate}[label=(\roman*)]
        \item ${\rm var}(L)=\{a\}, b\in {\rm var}(R)$. In this case we set tables $\oplus_i$ for all $\oplus_i\ne \oplus_f$ in the following way: $u\oplus_i v=u $ if $u\in B$ or $v$ otherwise. Then for $x\in A, y\in B$ we have: $R(x,y)=y$ and $g(x,y)=y$  as $b\in {\rm var}(R)\cap{\rm var}(g)$ and $L(x,y)=x$ (as ${\rm var}(L)=\{a\}$).
        Finally we set the operation $\oplus_f$ that appears only in the root of $f$: $u \oplus_f v = u \odot v$. As $f(x,y) = L(x,y)\oplus_f R(x,y)= x \oplus_f y = x \odot y$, the claim (the second case) follows. Function $q$ is the identity: $q(x)=x$ for $x\in S$ and $S'=S$.
        \item ${\rm var}(L)={\rm var}(R)=\{a,b\}$.
        It holds $g\not\sim L$ or $g\not\sim R$, since otherwise we have $g\sim L, g\sim R$ and then $L\sim R$ (as $\sim$ is equivalence class by Lemma \ref{lemma:equivalence-of-similarity}) which gives that $f\sim g$, a contradiction.
        Without loss of generality, we assume $L\not\sim g$.
    Then we apply inductive hypothesis for $L$ and $g$ and set $u \oplus_f v= u$ and the claim follows by inductive hypothesis (the first or second case). Function $q$ is the identity: $q(x)=x$ for $x\in S$ and $S'=S$.
        \item ${\rm var}(L)=\{b\},{\rm var}(R)=\{a,b\}$. Now we apply a similar construction to that in the proof of Lemma \ref{lemma:collapsing-2-vars}. Let $S'=S\cup \{(x,1): x\in S\}$ and, to simplify presentation, we identify elements $x\in S$ with $(x,0)$. Then we set:
        \begin{align*}
        (u,0)\oplus_i(v,0) &= \begin{cases}
            (u,0) &\text{if } u=v\\
            (u\odot v,1) &\text{if } u\in A, v\in B\\
            (v\odot u,1) &\text{if } u\in B, v\in A\\
        \end{cases}\\
        (u,1)\oplus_i(v,1) &= (u,1) \quad \text{note that in that case } u=v\\
        (u,0)\oplus_i(v,1) &= (v,1)\\
        (u,1)\oplus_i(v,0) &= (u,1)
        \end{align*}
        Note that for $(u,0)\oplus_i(v,0)$ there are no more cases to consider, as our expression contains only variables $a$ and $b$ that are evaluated to the given values $x$ and $y$ respectively and the expression does not contain constants.
    Then, the only possible returned value of the form $(e,1)$ is for $e=x\odot y$.
    Next, $R(x,y)=g(x,y) = (x\odot y,1)$ as ${\rm var}(R)={\rm var}(g)=\{a,b\}$ and $L(x,y)=(y,0)$ as ${\rm var}(R)=\{b\}$.
    Finally we set $u \oplus_f v=u$ and then $f(x,y)=L(x,y)\oplus_f R(x,y) = L(x,y)=(y,0)$, so it suffices to set $q((e,b))=e$ for all $e\in S,b\in\{0,1\}$ and the claim (first case) follows.
    \end{enumerate}
\end{proof}
A very similar approach gives us the following generalization of the lemma above.
\begin{corollary}\label{cor:collapse}
    Let $f(a,b,c)\equiv J(I(a,b), c)$ and $g(a,b,c) \equiv J'(I(a,b),c)$ such that $J\not\sim J'$ and $\var{J} = \var{J'} = \{a,b,c\}$.
    Given a set $S$, two disjoint subsets $A,B\subseteq S$ and an operation $\odot: S\times S \to S$, we can construct in $\bigO(|S|^2)$ time a set $S'\supseteq S$ of size $\bigO(|S|)$, the Cayley tables for the operations $\oplus_1,\dots, \oplus_p: S'\times S'\to S'$ and a unary function $q: S'\to S$ such that at least one of the two cases holds:
    \begin{enumerate}
        \item $q(f(x,y,z)) = (x\odot y)\odot z$ and $q(g(x,y,z)) = z$ for every $x\in A$, $y\in B$, $z\in C$,
        \item $q(f(x,y,z)) = z$ and $q(g(x,y,z)) = (x\odot y)\odot z$ for every $x\in A$, $y\in B$, $z\in C$,
    \end{enumerate}
\end{corollary}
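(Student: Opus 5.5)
The plan is to mirror the proof of Lemma~\ref{lemma:collapsing-non-similar}, treating the common inner subexpression $I(a,b)$ as a black box. The first step is to reduce to a two-variable situation. Introduce a fresh placeholder variable $w$ and write $J(a,b,c)\equiv \hat J(I(a,b),c)$ and $J'(a,b,c)\equiv \hat J'(I(a,b),c)$. Here $\hat J,\hat J'$ are expressions in $w$ and $c$, obtained by replacing every maximal occurrence of the subtree $I(a,b)$ by $w$. The condition $\var{J}=\var{J'}=\{a,b,c\}$ ensures that both $\hat J$ and $\hat J'$ depend on $w$ and on $c$.

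Next, use $J\not\sim J'$ to obtain $\hat J\not\sim\hat J'$. Suppose instead that $\hat J\equiv F(K(w,c))$ and $\hat J'\equiv F'(K(w,c))$. Substituting $w\mapsto I(a,b)$ would decompose both $J$ and $J'$ into $K(I(a,b),c)$, contradicting $J\not\sim J'$. This requires the substitution to be well-behaved. In particular, the maximal-occurrence replacement must commute with decomposition, and I expect this bookkeeping to be the most delicate point. If $I$ happens to occur inside $K$ in an overlapping way, I would choose $I$ of maximal depth among common inner parts, which avoids the overlap.

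Third, realize $w$ concretely via the Subexpression Embedding Gadget together with chromaticity. First apply Lemma~\ref{lemma:collapsing-2-vars} to $I(a,b)$ on disjoint $A,B$, with the operation $\odot$ itself, so that $I(x,y)$ evaluates to a tagged element $(x\odot y,1)$ for $x\in A,y\in B$. The set $W=\{(x\odot y,1)\}$ is disjoint from a tagged copy of $C$. Operations on pairs of elements that are not yet marked as \emph{completed inner} behave as in Lemma~\ref{lemma:collapsing-2-vars}. Once an element carries the ``$I$ finished'' tag, the outer operations are defined separately on a fresh disjoint layer of $S'$, using the subexpression tags from Definition~\ref{def:subexpression_embedding}. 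This keeps the inner and outer phases from interfering. The constant number of layers keeps $|S'|=\bigO(|S|)$ and the construction time $\bigO(|S|^2)$.

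Finally, apply Lemma~\ref{lemma:collapsing-non-similar} on the outer layer to $\hat J,\hat J'$, with variable sets $W$ and $C$ and the operation $\odot$ restricted to $W\times C$. In one of the two cases this makes one expression evaluate to $w\odot z=(x\odot y)\odot z$ and the other to $z$, after applying $q$. Composing the unary maps $q$ from the two stages gives the claim. Lemma~\ref{lemma:operations} is used here exactly as in the inductive step of Lemma~\ref{lemma:collapsing-non-similar}.

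The main obstacle will be ensuring that the operations defined for the inner phase (computing $I$) and those for the outer phase do not interfere, since the same operation symbols may appear in both. The layering by tags is my intended fix.
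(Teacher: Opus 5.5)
Your proposal is correct and is essentially the argument the paper has in mind when it says the corollary follows by ``a very similar approach'' to Lemma~\ref{lemma:collapsing-non-similar}: you collapse the shared inner part $I(a,b)$ to $x\odot y$ on a tag-separated layer, then run that lemma on the two-variable outer parts, with the $I$-values and $C$ as the disjoint colour classes. The step you flag as delicate is in fact immediate, since the needed implication $\hat J\sim\hat J'\Rightarrow J\sim J'$ is plain leaf substitution (and in the paper's reading $J,J'$ are already the two-variable outer expressions), so there is no need to re-choose $I$.
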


We are now ready to prove our first lower bound from Theorem \ref{thm:trichotomy}.
\begin{proposition}[Triangle hard regime]\label{prop:triangle-hardness}
    Let $f(a,b,c)$ be an expression that (without loss of generality) cannot be written as $f(a,b,c)\equiv H(G(a,b), c)$ for any two-variable expressions $G(a,b), H(a,b)$.
    Then for any tripartite graph $\mathcal{G}$, we can in time $\bigO(n^2)$ construct a set $S$ with subsets $A,B,C$, and Cayley tables for the operations in $f$ such that there exists a triple $x\in A, y\in B, z\in C$ such that $f(x,y,z) \ne \infty$ if and only if $\mathcal{G}$ contains a triangle.
\end{proposition}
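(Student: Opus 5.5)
Here is the plan I would follow. The idea is to reduce triangle detection directly, using the same graph-based Cayley tables as in the worked example of Section~\ref{sec:overview:sec:classes}. First I simplify $f$ structurally with the collapsing lemmas until it becomes one of a few canonical shapes: either three pairwise "edge-checking" subexpressions, or one pair-check fed into a further check.

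\textbf{Step 1 (structure).} I start from the maximal subexpressions of $f$ that depend on exactly two variables. If every root-to-leaf branch in which $c$ appears touched only subexpressions that are decomposable into one common $G(a,b)$, then $f\equiv H(G(a,b),c)$, which is excluded. The same holds for every permutation of the variables. By Observation~\ref{obs:recursive-decomposability} and transitivity of $\sim$ (Lemma~\ref{lemma:equivalence-of-similarity}), I conclude the following. Either (a) there are two dissimilar two-variable subexpressions $I(u,v)\not\sim J(u,v)$ on the same pair of variables, or (b) there are two-variable subexpressions on two different pairs, say one on $\{a,b\}$ and one on $\{b,c\}$ or $\{a,c\}$. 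In case (b), no single $G$ accounts for all three variables being combined.

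\textbf{Step 2 (collapsing).} I pass to the multichromatic version, which is allowed by the equivalence lemma, so that $A,B,C$ are the three parts of the tripartite graph $\mathcal G$ plus an absorbing $\infty$.

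In case (a), I apply Lemma~\ref{lemma:collapsing-non-similar} to collapse $I$ into $u\odot v$ and $J$ into $v$. This leaves an expression in which both $v$ and $u\odot v$ appear separately, as in the example $f'=(cbc)+(a\cdot b)$. I then define every remaining operation by the same edge rule: $x\cdot y=x$ if $x=y$ or $\{x,y\}\in E$, and $\infty$ otherwise, with $\infty$ absorbing. Every operation node that combines two distinct vertices then checks one edge. The root must combine values carrying information about all three pairs, because $u\odot v$ checks the edge $uv$, while the copy of $v$ is later combined with $w$, and something derived from both branches meets at the root.

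In case (b), I apply Lemma~\ref{lemma:collapsing-2-vars} to each two-variable subexpression, collapsing it into the edge check on its pair; the resulting value is a surviving vertex. The third edge is checked at the node where the outputs meet.

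\textbf{Step 3 (correctness).} I show by induction over the parse tree that, under the edge rule, every subexpression evaluates either to $\infty$ or to one of its variables. It is non-$\infty$ only if all pairs of distinct vertices combined along the way are edges. Hence $f(x,y,z)\ne\infty$ exactly when every combined pair is an edge. The structural step guarantees that all three pairs $\{a,b\},\{b,c\},\{a,c\}$ are among the combined pairs, so this happens iff $x,y,z$ form a triangle. For the converse, on a triangle no check fails, so $f$ evaluates to a vertex, which is not $\infty$.

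\textbf{Main obstacle.} The hardest part will be Step 1 together with this last claim. I need to argue rigorously that failing the $H(G(a,b),c)$ form forces all three pairs to actually be compared. The difficulty is that the edge rule returns the \emph{left} operand, which could erase information; for example, $x\cdot y$ returns $x$ and loses $y$. If a case fails, I would first try adjusting, per operation, which operand is returned (left or right), or use the subexpression embedding gadget to make the needed values flow to the root.
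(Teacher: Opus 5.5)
Your plan has a genuine gap at the point you call the ``main obstacle'', and resolving that point is the whole content of the proposition. In Step~3 you assert that the structural step guarantees all three pairs $\{a,b\},\{b,c\},\{a,c\}$ are among the compared pairs. But which pairs get compared is not fixed by the syntax of $f$: it depends on which operand survives at each node. With the rule you fix (return the left operand on an edge), the assertion is false.

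\begin{itemize}
\item Take $f\equiv a\cdot((a\cdot b)\cdot c)$. One checks directly that it is not of the form $H(G(u,v),w)$ under any permutation of the variables. Yet under your rule every node returns $x$. Only $xy$ and $xz$ are ever tested, so a graph with edges $xy,xz$ and no $yz$ gives $f(x,y,z)=x\neq\infty$ without a triangle.
\item Fixing left or right once per operation symbol does not repair this in general. For $f\equiv(a\cdot b)\cdot(b\cdot c)$ with a single symbol, the left rule tests $ab,bc,ab$ and the right rule tests $ab,bc,bc$; $ac$ is never tested.
\item Your case (b) sentence ``the third edge is checked at the node where the outputs meet'' silently assumes the survivors are the two non-shared endpoints. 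Nothing in your construction enforces that.
\end{itemize}

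The first example also shows that your Step~1 dichotomy is incomplete. $a\cdot((a\cdot b)\cdot c)$ has exactly one subexpression depending on exactly two variables, so neither (a) nor (b) holds. Moreover, Lemma~\ref{lemma:collapsing-non-similar} cannot be applied to the relevant dissimilar pair $a$ versus $a\cdot b$, since it needs both expressions to depend on both variables.

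The paper closes these holes in two moves.
\begin{itemize}
\item \emph{Minimality.} Via Lemma~\ref{lemma:subexpression-monotonicity} it passes to a minimal non-decomposable expression. Then both children of the root are decomposable and $f$ has one of two concrete shapes: $H_L(G_L(a,b),c)\oplus H_R(G_R(a,b),c)$, or $H_L(G_L(a,b),c)\oplus H_R(a,G_R(b,c))$. The first shape is handled by collapsing: $G_L\not\sim G_R$, so Lemma~\ref{lemma:collapsing-non-similar} turns them into $a\odot b$ and a single variable, or Lemma~\ref{lemma:collapsing-2-vars} is used when $L$ and $R$ involve only two variables each.
\item \emph{A symmetric rule.} Instead of a left/right rule it uses $x\oplus_i y=\max\{x,y\}$ on edges ($x$ if $x=y$, $\infty$ otherwise), with an ordering of the parts chosen per case, e.g.\ $B<A<C$. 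This forces the needed vertex to survive: in $(a\cdot b)\cdot(b\cdot c)$ the survivors are $a$ and $c$, so the root tests $ac$.
\end{itemize}

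If you want to keep your route, you would have to choose survivors per node using the subexpression embedding gadget. You would then still need to prove that, for every non-decomposable $f$, some choice of survivors makes all three pairs meet; that argument is what is missing.
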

\begin{proof}
    Let $f \equiv L\oplus R$ for some subexpressions $L,R$.
    Without loss of generality, we can assume that there are two variable expressions $G_L(a,b), H_L(a,b), G_R(a,b), H_R(a,b)$ such that either
    \begin{enumerate}[label=\roman*]
        \item $L(a,b,c)\equiv H_L(G_L(a,b), c)$ and $R(a,b,c)\equiv H_R(G_R(a,b), c)$, or
        \item $L(a,b,c)\equiv H_L(G_L(a,b), c)$ and $R(a,b,c)\equiv H_R(a, G_R(b,c))$.
    \end{enumerate}
    Note that either of these two cases must be true (up to symmetry of $L,R$ and $a,b,c$), since otherwise we can apply Lemma \ref{lemma:subexpression-monotonicity} recursively on subexpressions of $f$ until we reach a minimal expression that is not expressible in this form. 
    Assume we are not in the first case (i.e. $L,R$ cannot be written as $H_L'(G_L'(a,b), c)$ and $H_R'(G_R'(a,b),c)$) for any expressions $G_L', G_R', H_L', H_R'$.
    Then (without loss of generality) we can write $L(a,b,c)\equiv H_L(G_L(a,b), c)$ and $R(a,b,c)\equiv H_R(a, G_R(b,c))$.
    Moreover, notice that $\var{G_R} = \{b,c\}$ and $\var{G_L} = \{a,b\}$, since otherwise we would be in the first case (up to symmetry of $a,b,c$).
    Given a tripartite graph $\mathcal{G}$ with vertex parts $I,J,K$, define $S:= I\cup J \cup K\cup \{\infty\}$ and set $A:=I, B:=J, C:=K$.
    Order the elements in $S$ such that for each $\alpha\in A, \beta\in B, \gamma\in C$ we have that $\beta< \alpha <\gamma$.
    Define the operation $\oplus_i$ for each $i$ on $S$ as follows.
    First let $\infty \oplus_i x = x \oplus_i \infty = \infty$ for each $x \in S$.
    Now define
    \[x\oplus_i y = y\oplus_i x =\begin{cases}
        x & \text{if $x=y$}\\
        \infty & \text{if $x,y\not\in E(\mathcal{G})$}\\
        \max\{x,y\} & \text{otherwise}
    \end{cases}\]
    Since $\var{G_R} = \{b,c\}$, we know that $R(a,b,c)$ will check if there is an edge between $b$ and $c$ and return either $\infty$, or $c$. Similarly, since $\var{G_L} = \{a,b\}$, we know that $L(a,b,c)$ will check if there is an edge between $a$ and $b$.
    We note that if either of these edges is missing, the whole expression evaluates to $\infty$.
    Hence, assume that both of these edges are present.
    Then the subexpression $G_L$ will evaluate to $a$, while the subexpression $R$ evaluates to $c$.
    By our construction, if $\var{L} = \{a,b\}$, then $L$ also evaluates to $a$ and the root operation of $f$ will return $\infty$ if there is no edge between $a,c$ and $c$ otherwise.
    On the other hand, if $\var{L} = \{a,b,c\}$, then clearly by our construction, the operation $a\oplus_ic$ (or $c\oplus_i a$) will be performed in $L$ at some point and evaluate to either $\infty$ if the edge $a,c$ is missing and $c$ otherwise. In both cases, we get that for each $a\in A,b\in B, c\in C$ the expression $f(a,b,c)$ evaluates to $c$ if and only if $a,b,c$ form a triangle and $\infty$ otherwise. We can now assume we are in the first case, namely when $L,R$ can be written as $L(a,b,c)\equiv H_L(G_L(a,b), c)$ and $R(a,b,c)\equiv H_R(G_R(a,b), c)$.

    Now (up to symmetry), we have two cases that we will check separately:
    \begin{itemize}
        \item \emph{Case 1}: $L$ depends only on $a,b$ and $R$ depends only on $b,c$.
        \item \emph{Case 2}: $L$ depends on all three variables $a,b,c$.
    \end{itemize}
    Given a tripartite graph $\mathcal{G}$ with vertex parts $I,J,K$, define $S:= I\cup J \cup K\cup \{\infty\}$ and set $A:=I, B:=J, C:=K$.
    Order the elements in $S$ such that for each $\alpha\in A, \beta\in B, \gamma\in C$ we have that $\alpha<\beta<\gamma$.
    Define the operation $\odot$ on $S$ as follows.
    First let $\infty \odot x = x\odot \infty = \infty$ for each $x\in S$.
    Now define
    \[x\odot y = y\odot x =\begin{cases}
        \infty & \text{if $x,y\not\in E(\mathcal{G})$}\\
        \max\{x,y\} & \text{otherwise}
    \end{cases}\]
    Now applying the Collapsing Lemma (Lemma \ref{lemma:collapsing-2-vars}) on $L$ and $R$ separately (intuitively, we can do this since $A$ only intersects $L$ and $C$ only intersects $R$), we can construct Cayley tables for operations in $f$ such that for each $x\in A, y\in B, z\in C$ we have that $f(x,y,z) = (x\odot y) \odot (x\odot z)$. It is now easy to see that $\mathcal{G}$ contains a triangle $x,y,z$ if and only if $f(x,y,z) \ne \infty$.
    It remains to show that the lemma holds in Case 2 as well.
    Assume that $L$ depends on all three variables.
    By minimality of $f$, (without loss of generality) $L$ can be written as $L(a,b,c)\equiv H_L(G_L(a,b), c)$ where $G_L$ depends on both $a$ and $b$.
    Then we know that $R(a,b,c)$ cannot depend exclusively on $c$, since otherwise $f$ could be written as $H_L(G_L(a,b), c) \oplus H_R(c) \equiv H(G_L(a,b),c)$, where $H$ is the expression $H(a,b) \equiv H_L(a, b) \oplus H_R(b)$.
    \begin{claim}
        Assume that $R(a,b,c)\equiv H_R(G_R(a,b), c)$, where $G_R$ is an expression that depends on at least one of the variables $a,b$.
        Then $G_L(a,b)\not\sim G_R(a,b)$.
    \end{claim}
    \begin{subproof}
        Assume for contradiction that $G_L(a,b)\sim G_R(a,b)$. But then, there exists an expression $G$ such that both $G_L, G_R$ are decomposable into $G$, and in other words, there is an expression $H$ such that we can write $f(a,b,c)\equiv H(G(a,b), c)$, contradicting the assumption that no such pair of expressions exists.
    \end{subproof}
    Now by applying Lemma \ref{lemma:collapsing-non-similar}, we can construct Cayley tables in such a way that (without loss of generality) $G_L(a,b)$ always evaluates to $a\odot b$ (where $\odot$ is the operation defined above), while $G_R(a,b)$ always evaluates to $b$.
    In particular, this shows that we can use $f$ to evaluate the expression $H_L(a\odot b, c) \oplus H_R(a,c)$ where $H_L$ depends on both variables (since we are in Case 2) and $H_R$ depends on $a$ (we do not know if it also depends on $c$).
    We remark that one has to argue that we can apply the Collapsing Lemma without interfering with the operations in $H_L$ and $H_R$, but this follows quite easily from the construction of Subexpression Embedding gadget (similar idea as in the proof of Lemma \ref{lemma:operations}). For simplicity, we omit the details.
    Now, by definition of the operation $\odot$, if for any pair $x\in I, y\in J$ the edge $x,y$ is not present, the term $x\odot y$ will evaluate to~$\infty$.
    We can set all Cayley tables $\oplus_i$ in $H_L\oplus H_R$ in such a way that $\infty \oplus_i x = x\oplus_i \infty = \infty$.
    Hence for every triple $x,y,z$, the expression $H_L(a\odot b, c) \oplus H_R(a,c)$ can tell us if the edge $xy$ is missing.
    Hence we assume that for some triple $x,y,z$ the edge $xy$ is present.
    We prove that this expression can detect any other edge missing as well.
    In particular, if the edge $xy$ is present, by construction of $\odot$, the expression above evaluates to the expression $H_L(b, c) \oplus H_R(a,c)$.
    We order the elements in $S$ again, but this time in such a way that for each $x\in A, y\in B, z\in C$ we have that $y<z<x$.
    We can now construct the Cayley table each of the operations $\oplus_i$ appearing in $H_L$ and $H_R$, as well as the top-level operation $\oplus$ as follows.
    \[
    x\oplus_i y = y\oplus_i x = \begin{cases}
        x & \text{if $x=y$}\\
        \infty & \text{if $x\ne y$ and $x,y\not\in E(\mathcal{G})$}\\
        \max\{x,y\} & \text{otherwise}
    \end{cases}
    \]
    Also, as mentioned above, define $\infty \oplus_i x = x\oplus_i \infty = \infty$ for each $x\in S$.
    It is easy to verify that if $x,y,z$ form a triangle, $H_L(y,z)$ will evaluate to $z$, and $H_R(x,z)$ will evaluate to $x$.
    In particular, $H_L(x\odot y, z) \oplus H_R(x,z)$ will evaluate to $x$.
    Otherwise, if edge $y,z$ is missing, $H_L(y,z)$ will evaluate to $\infty$ (since $H_L$ depends on both variables), and hence the whole expression evaluates to $\infty$.
    Finally, if edge $xz$ is the only edge missing, either $H_R(x,z)$ will evaluate to $\infty$ (if $H_R$ depends on both variables), or similarly as above $H_R(x,z)$ will evaluate to $x$ and $H_L(y,z)$ will evaluate to $z$ and hence $H_L(x\odot y, z) \oplus H_R(x,z) = x\oplus z = \infty$.
    In particular, this shows that $f$ can be used (by collapsing) to evaluate the expression $T\equiv H_L(x\odot y, z) \oplus H_R(x,z)$, and for each $x\in A, y\in B, z\in C$ we have that $T(x,y,z) = \infty$ if and only if $x,y,z$ does not form a triangle in $\mathcal{G}$.
    This concludes our proof.
\end{proof}
The rest of this section is dedicated to proving the hardness of the identities that do not fall into the 'triangle regime'. To that end we show that all such identities can be efficiently reduced from $4$-AP problem and moreover, a subset of them can be reduced Strong Zero Triangle (as discussed in the statement of Theorem \ref{thm:trichotomy}).
The lemma below follows directly by applying the Subexpression Embedding Gadget to Lemma \ref{lem:4aphard1234} and Lemma \ref{lem:4aphard5}.
\begin{lemma}[Core $4$-AP Hard Identities]\label{lemma:core-identities}
    Assuming $4$-AP Hypothesis, for none of the following identities can we solve the Constant Term Identity problem in time $\bigO(n^{3-\varepsilon})$ for any $\varepsilon>0$.
    \begin{enumerate}
        \item \label{core:f1} $f_1(a,b,c) \equiv b\oplus(a\odot(b\odot(c\odot a))))$
        \item \label{core:f2} $f_2(a,b,c) \equiv c\oplus(a\odot(b\odot(c\odot a))))$
        \item \label{core:f3} $f_3(a,b,c) \equiv ((a\odot b) \oplus (a\odot c))\oplus a$
        \item \label{core:f4} $f_4(a,b,c) \equiv ((a\odot b) \oplus (a\odot c))\oplus b$
        \item \label{core:f5} $f_5(a,b,c) \equiv (a\odot b) \oplus (b\odot (a\odot c))$
        \item \label{core:f6} $f_6(a,b,c) \equiv (a\odot b) \oplus (a\odot (c\odot (a\odot b)))$
    \end{enumerate}
\end{lemma}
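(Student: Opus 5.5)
The plan is to reduce from the already established hard expressions of \Cref{lem:4aphard1234} and \Cref{lem:4aphard5}. Those expressions are stated over four distinct operations $\oplus_1,\dots,\oplus_4$, whereas the core identities $f_1,\dots,f_6$ here use only two operations $\odot,\oplus$. For each core $f_\ell$ I first match it to one of the six multi-operation expressions of those lemmas, up to a renaming and permutation of the variables $a,b,c$. The root operation $\oplus$ plays the role of $\oplus_4$, and the inner occurrences of $\odot$ play the roles of $\oplus_1,\oplus_2,\oplus_3$. For example:
\begin{itemize}
\item $f_3$ and $f_4$ correspond to the expressions in \Cref{item:4-ap-instance5} and \Cref{item:4-ap-instance1} (after swapping $b,c$);
\item $f_6$ corresponds to \Cref{item:4-ap-instance6};
\item $f_1$, $f_2$ and $f_5$ correspond to the chain-shaped expressions in \Cref{item:4-ap-instance3}, \Cref{item:4-ap-instance4} and \Cref{item:4-ap-instance2}, after reordering operands.
\end{itemize}
Whenever operands are reordered relative to the model expression (e.g.\ $c\odot a$ instead of $a\oplus_1 c$), I transpose the corresponding Cayley-table rule. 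This is harmless, since each operation in those proofs is defined entry-wise on ordered pairs.

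The key step is to make a single table $\odot$ simulate several distinct operations $\oplus_1,\oplus_2,\oplus_3$ at different positions of the parse tree. For this I use the Subexpression Embedding Gadget (\Cref{def:subexpression_embedding}): elements are pairs $(x,t)$ with $t$ a subexpression of $f_\ell$. I define $(x,t_1)\,\tilde\odot\,(y,t_2)$ to be $\bigl(x\oplus_r y,\ t_1\odot t_2\bigr)$, where $r$ is the index of the operation sitting at the node $t_1\odot t_2$ of the tree. This is well defined, because $t_1\odot t_2$ is a specific subexpression and the identity of the node is determined by the syntactic subtree. If two distinct nodes carried the identical subexpression, that would be fine too, provided the model expression uses the same operation there. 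This is exactly the situation in $f_6$, where $a\odot b$ appears twice and both occurrences are $\oplus_1$ in \Cref{item:4-ap-instance6}. The value is $\infty$ whenever the result is $\infty$ or $t_1\odot t_2\notin\mathcal T$, and $\oplus$ is handled analogously. By \Cref{obs:chromaticity}, $\tilde f_\ell$ on $\tilde S$ is non-$\infty$ exactly on triples $((x,a),(y,b),(z,c))$ with the model expression non-$\infty$, and $|\tilde S|=O(n)$.

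Chaining the reductions completes the argument. A Multichromatic Square (respectively T) Detection instance of size $n$ becomes, in $O(n^2)$ time, a Constant Term Identity instance for $f_\ell$ of size $O(n)$. Therefore an $O(n^{3-\varepsilon})$ algorithm for $f_\ell$ would give a subcubic algorithm for these detection problems, contradicting \Cref{lem:square-chrom-hardness} or \Cref{lem:t-hardness}, and hence the 4-AP Hypothesis.

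The main obstacle is the bookkeeping check that each core $f_\ell$ really is, node for node, one of the six model expressions up to variable permutation and operand order. Two places need care: repeated subexpressions, whose nodes must be assigned the same operation, and the placement of the outermost lone variable in $f_1$ and $f_2$. If a mismatch arises, I would adjust the linear assignment of $(x_a,x_b,x_c)$ in terms of $(i,j,k)$, following the recipe after \Cref{lem:4aphard1234}: each operation's output and the two matrix indices must be linear in its operands.
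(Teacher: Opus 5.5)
Your proposal is correct and follows the paper's route: the paper obtains this lemma by applying the Subexpression Embedding Gadget to \Cref{lem:4aphard1234} and \Cref{lem:4aphard5}, so that the two tables $\odot,\oplus$ simulate the four distinct operations node by node, exactly as you describe. One bookkeeping correction: $f_1$ matches \Cref{item:4-ap-instance4} and $f_2$ matches \Cref{item:4-ap-instance3}, not the reverse, because the lone root variable of $f_1$ is the one added at the second step of the chain while that of $f_2$ lies in the innermost pair; also, in $f_3$ and $f_4$ the role of $\oplus_3$ is played by the inner $\oplus$ rather than by $\odot$, and your gadget handles both points without change.
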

Our strategy is to show that each of the expressions that do not satisfy the triangle regime condition (as stated by Theorem \ref{thm:trichotomy}) can be efficiently reduced from at least one of the six core identities above (up to the symmetry of variables $a,b,c$).
More formally, if Constant Term Identity problem can be solved in $\bigO(n^{3-\varepsilon})$ on any expressions $f$ such that for no $G,H,I, J$ can $f$ be written as $J(I(H(a,b), c), G(a,b))$, then it can also be solved in $\bigO(n^{3-\varepsilon'})$ on at least one of the $f_i$'s above, thus refuting the $4$-AP Hypothesis.
We first look at the case when $f$ contains two subexpressions $g(a,b)$ and $h(a,c)$ (without loss of generality) that each depend on both variables.
The following lemma is in its nature similar to our first Collapsing Lemma, but will help us significantly simplify our main proof.
\begin{lemma}[Second Collapsing Lemma]\label{lemma:second-collapse}
    Let $f(a,b,c)\equiv L\oplus R$ for some operation $\oplus$ that only appears in the root. Assume the following conditions:
    \begin{itemize}
        \item $L$ contains a maximal subexpression $g(a,b)$ that depends on both $a,b$.
        \item $R$ contains no maximal subexpression $g'(a,b)$ depending on both $a,b$.
        \item $R$ contains a maximal subexpression $h(a,c)$ that depends on both $a,c$.
        \item ${\rm var}(R) = \{a,b,c\}$.
    \end{itemize}
    Assume $L,R$ are over operation $\oplus_1,\dots, \oplus_p$ and let $\odot$ be an arbitrary given operation (not necessarily in $\oplus_1,\dots, \oplus_p$). Then we can in time $\bigO(n^2)$ construct Cayley tables for all $\oplus_i$, as well as for $\oplus$ (root operation), such that 
    solving the (Multichromatic) Constant Term Identity problem on $f$ in time $\bigO(n^{3-\varepsilon})$ would imply an algorithm solving the problem on $f_5$ (as defined in Lemma \ref{lemma:core-identities}) in $\bigO(n^{3-\varepsilon'})$, refuting the $4$-AP Hypothesis.
\end{lemma}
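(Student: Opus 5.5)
The goal is to collapse $f$ onto the core expression $f_5(a,b,c)\equiv(a\odot b)\oplus(b\odot(a\odot c))$ of Lemma~\ref{lemma:core-identities}. Concretely, given an instance of $f_5$ (Cayley tables for $\odot,\oplus$ and disjoint sets $A,B,C$), we will build tables for the operations of $f$ so that $q(f(x,y,z))=f_5(x,y,z)$ for all $x\in A,y\in B,z\in C$. The operations are handled in three groups. The subexpression $g(a,b)$ inside $L$ is collapsed to $a\odot b$. The subexpression $h(a,c)$ inside $R$ is collapsed to $a\odot c$. The remaining structure of $R$ is made to apply $b\odot(\cdot)$ to the value of $h$, and the root $\oplus$ is interpreted as the $\oplus$ of $f_5$.

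We first apply the Subexpression Embedding gadget (Definition~\ref{def:subexpression_embedding}). This lets every operation know at which node of the parse tree it is evaluated, so each node can be assigned its own Cayley table; this is the same device as in the proof of Lemma~\ref{lemma:operations}. We also pass to the multichromatic version, which is equivalent by the lemma above, so the values coming from $A,B,C$ are distinguishable. With this in place, the nodes are treated as follows.

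\begin{itemize}
\item \textbf{Inside $g$.} We use the construction of Lemma~\ref{lemma:collapsing-2-vars}, which makes $g(x,y)$ evaluate to $(x\odot y,1)$.
\item \textbf{Above $g$ in $L$.} Every ancestor of $g$ inside $L$ simply propagates the flagged value coming from $g$ and ignores other operands. This uses maximality of $g$: every other leaf of $L$ outside $g$ is a lone variable or lies in a subexpression not depending on both $a,b$, so it can be projected away. Thus $L(x,y,z)=x\odot y$.
\item \textbf{Inside $h$.} Analogously, $h(x,z)$ collapses to $x\odot z$.
\item \textbf{In $R$ outside $h$.} Since ${\rm var}(R)=\{a,b,c\}$ but $R$ has no maximal subexpression depending on both $a,b$, the variable $b$ must enter $R$ only at nodes above $h$ (or beside it). We take the lowest node $v$ on the path from $h$ to the root of $R$ whose sibling subtree contains $b$. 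The sibling subtree is evaluated to $y$ by projecting, and it must not contain both $a$ and $b$ in a $2$-variable subexpression, by the second assumption. At $v$ we define the table to be $(y,\,\cdot)\mapsto y\odot(\cdot)$. All other nodes on the path project onto the flagged value.
\end{itemize}

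The root $\oplus$ only appears at the root, so it can be copied directly from the $f_5$ instance. Then $f(x,y,z)=(x\odot y)\oplus(y\odot(x\odot z))=f_5(x,y,z)$, and the lower bound follows from Lemma~\ref{lemma:core-identities} via Corollary~\ref{cor:subexpr_embedding}.

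The main obstacle is consistency. The same operation symbol $\oplus_i$ may occur in $g$, in $h$, and along the paths, possibly with identical symbolic arguments. The bookkeeping gadget distinguishes nodes only by their subexpression, so two occurrences of the \emph{same} subexpression get the same table. We expect this to be harmless: identical subexpressions must receive identical values anyway, and $g\not\equiv h$ because their variable sets differ.

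A subtler point is that the path from $h$ to the root of $R$ might pass through a node whose other child depends on $\{b,c\}$ or on all three variables rather than on $b$ alone. In that case we would first evaluate that child to $y$. The plan is to make every $\{b,c\}$-subexpression project to its $b$-value. Doing so must avoid creating a maximal $\{a,b\}$-subexpression in $R$, which the second hypothesis excludes. Checking this case analysis carefully is where I expect most of the work to lie.
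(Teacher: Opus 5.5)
Your overall strategy matches the paper's: collapse $g$ to $a\odot b$ and $h$ to $a\odot c$ via Lemma~\ref{lemma:collapsing-2-vars}, forward the $g$-value through $L$, make $R$ output $b\odot(a\odot c)$, and copy the root from the $f_5$ instance. The gap is in the tables for $R$ outside $h$, which is the case analysis you defer.

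Your rule is positional: ``the lowest node $v$ on the path from $h$ whose sibling contains $b$'', and ``the sibling is evaluated to $y$''. After the embedding gadget, however, a table can depend only on the symbolic subexpressions of its operands, and $h$ may occur several times. Take $R\equiv((a\cdot c)\cdot b)\cdot(a\cdot c)$, which satisfies all hypotheses.
\begin{itemize}
\item For the left copy of $h$, the designated node is $v=(a\cdot c)\cdot b$, so this node must output $y\odot(x\odot z)$.
\item For the right copy of $h$, the designated node is $v=R$, whose sibling is the same node $(a\cdot c)\cdot b$, which must then output $y$.
\end{itemize}
The same clash occurs for $R\equiv(a\cdot c)\cdot(b\cdot(a\cdot c))$, which is exactly your ``sibling depends on all three variables'' case. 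There, projecting the sibling $b\cdot(a\cdot c)$ to $y$ contradicts its role as $v$ for the inner copy of $h$. Resolving either conflict the wrong way fires $\odot$ twice and yields $(y\odot(x\odot z))\odot(x\odot z)$. So equal subexpressions sharing a table is indeed harmless. The real problem is that your prescriptions assign one subexpression two incompatible roles, and the proposal gives no rule that settles this.

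The missing idea, which is what the paper's proof uses, is a uniform local rule that ignores positions. Classify each operand by its symbolic part $t$:
\begin{itemize}
\item class 1 if $t$ contains both $h$ and $b$;
\item class 2 if $t$ contains $b$ but not $h$;
\item class 3 if $t$ contains $h$ but not $b$;
\item class 4 otherwise.
\end{itemize}
At every node of $R$ outside $h$, apply two rules. If a class-2 operand meets a class-3 operand, output (class-2 value)$\,\odot\,$(class-3 value). Otherwise, forward the operand of higher class in the order $1\succ 2\succ 3\succ 4$.

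An induction over nodes then shows that class-2 nodes carry $y$, class-3 nodes carry $x\odot z$, and class-1 nodes carry $y\odot(x\odot z)$. The key step is that a class-1 node has either a class-1 child, which is forwarded, or one class-2 child and one class-3 child, where $\odot$ fires exactly once. This works for any number of copies of $h$ and any shape of the $b$-containing siblings, with no case analysis.

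The hypothesis that $R$ has no $\{a,b\}$-subexpression ensures that no subexpression of $R$ contains $g$. Hence the $g$-collapse and the forwarding rule in $L$ never touch these tables.
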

\begin{proof}
    We first apply Lemma \ref{lemma:collapsing-2-vars} on $g$ and $h$ independently.
    Notice that this can be done independently by defining the operations between $a\in A,c\in C$ and operations between $a\in A,b\in B$ separately as in Lemma \ref{lemma:collapsing-2-vars} (we moreover note that collapsing $g$ can be done such that it does not even affect any operation in $R$, since $R$ contains no subexpression $g'(a,b)$ depending on both $a,b$).
    We may now assume that $g,h$ always evaluate to $a\odot b$ and $a\odot c$ and that they do not contain any of the $\oplus_i$ operations.
    Now apply the subexpression embedding construction, setting
    \[
    (x,t_1) \oplus_i (y,t_2) = \begin{cases}
        (x,t_1\oplus_i t_2) & \text{if $t_1$ contains $g$}\\
        (y,t_1\oplus_i t_2) & \text{else if $t_2$ contains $g$}\\
        (x,t_1) \oplus_i' (y, t_2) & \text{otherwise}
    \end{cases}
    \]
    Note that we will define the operations $\oplus_i'$ in a moment, but first we notice that the construction above implies that the expression $L$ will always evaluate to $a\odot b$, while the expression $R$ has all of the operations $\oplus_i$ replaced by $\oplus_i'$ (recall that $R$ does not contain $g$ as a subexpression).
    Particularly, the construction so far gives us that for any $a\in A, b\in B, c\in C$, the expression $f(a,b,c)$ will evaluate to $f(a,b,c) = (a\odot b)\oplus R'$, where $R'$ is obtained from $R$ by replacing each operation $\oplus_i$ by $\oplus_i'$.
    We now construct the $\oplus_i'$ operations. Consider the ordering $\succ$ of the elements in $\tilde S$ (input set with embedded subexpressions) such that if
    \begin{enumerate}
        \item $(x,t_1)$ is an arbitrary element where $t_1$ contains both $h$ and $\expr{b}$ as a subexpressions
        \item $(y,t_2)$ is an arbitrary element where $y\in B$ and $t_2$ does not contain $h$ as a subexpression
        \item $(z,t_3)$ is an arbitrary element where $t_3$ contains $h$ as a subexpression, but does not contain $\expr b$ as a subexpression
        \item $(w,t_4)$ is an arbitrary element where $t_4$ contains neither $\expr{b}$, nor $h$ as subexpressions,
    \end{enumerate}
    then $(x,t_1)\succ (y,t_2) \succ (z,t_3)\succ (w,t_4)$. The other elements are ordered arbitrarily. We notice that each element in $\tilde S$ falls into one of the four categories.
    Let $\oplus_i'$ be defined as follows.
    \[
    (x,t_1) \oplus_i' (y,t_2) = \begin{cases}
        (x\odot y, t_1\oplus_it_2) & \text{if $x\in B$ and $t_2$ contains $h$, but not \expr{b}}\\
        (y\odot x, t_1\oplus_i t_2)& \text{else if $y\in B$ and $t_1$ contains $h$, but not \expr{b}}\\
        (x, t_1\oplus_i t_2) & \text{else if $(x,t_1)\succ(y,t_2)$}\\
        (y,t_1\oplus_i t_2) & \text{otherwise}
    \end{cases}
    \]
    While the definition above is somewhat convoluted, let us break it down.
    Consider the categories 1-4 from our ordering. When both elements come from category 4, we promote any of the elements, while keeping track of the operations performed so far. Note that this is fine, since $h$ does not depend on variable $b$, nor does it contain any of the operations $\oplus_i$.
    Now an element from category 3 (resp. 2) will be promoted until it reaches an element from category 2 (resp. 3), when the $\odot$ operation will be performed in the right order on these two elements to give the required subexpression $b\odot (a\odot c)$.
    Finally, once this operation is performed once, we obtain an element from the first category, and this element will be promoted all the way to the root according to our $\succ$ operation, when the operation $\oplus$ is performed.
    For completeness, we define:
    \[
    (x,t_1)\oplus (y,t_2) = \begin{cases}
        x\oplus y & \text{if $t_1\equiv L$ and $t_2 \equiv R$}\\
        \infty &\text{otherwise}
    \end{cases}
    \]
    It is now straightforward to verify that there exists $x\in A, y\in B, z\in C$ such that $f(x,y,z) \ne \infty$ if and only if there exists $a\in A, b\in B, c\in C$, such that $(a\odot b) \oplus (b\odot (a\odot c))\ne \infty$ just like we wanted to prove.
\end{proof}
\begin{lemma}\label{lemma:4-ap-two-maximal-subtrees}
    Let $f(a,b,c)$ be an expression such that for no $G,H,I, J$ does the $f(a,b,c)\equiv J(I(H(a,b), c), G(a,b))$ hold.
    Assume that $f$ contains two inclusion maximal subexpressions $g,h$ such that $g$ depends precisely on variables $a,b$ and $h$ depends precisely on variables $a,c$ (without loss of generality, the remaining two cases are symmetric).
    Then any algorithm deciding Constant Term Identity on $f$ in time $\bigO(n^{3-\varepsilon})$ would refute the $4$-AP Hypothesis.
\end{lemma}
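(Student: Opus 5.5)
Let $f\equiv L\oplus R$ be the root split, and assume (by Lemma~\ref{lemma:subexpression-monotonicity}) that $f$ is a minimal expression with the stated property: every proper subexpression of $f$ can be written as $J(I(H(a,b),c),G(a,b))$ up to permutation of variables. By Lemma~\ref{lemma:operations} we may assume the root operation $\oplus$ occurs nowhere else. Since $g$ and $h$ are inclusion-maximal two-variable subexpressions with $\var{g}=\{a,b\}$ and $\var{h}=\{a,c\}$, neither contains the other, and each lies entirely inside $L$ or inside $R$. We then split into cases according to where $g$ and $h$ sit, and in each case collapse $f$ to one of the core expressions $f_1,\dots,f_6$ of Lemma~\ref{lemma:core-identities}.

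\emph{Case A: $g$ and $h$ lie on different sides}, say $g\subseteq L$ and $h\subseteq R$. If $R$ contains no maximal $(a,b)$-subexpression and $\var{R}=\{a,b,c\}$, the Second Collapsing Lemma (Lemma~\ref{lemma:second-collapse}) applies directly and yields $f_5$. If instead $\var{R}=\{a,c\}$, then $R$ is itself $h$ by maximality. Symmetrically, either $L$ involves $c$ or $L=g$; note $L=g$ and $R=h$ together would give the Triangle-hard but $O(n^\omega)$-solvable form, which is excluded by hypothesis. So at least one side contains its pair-subexpression strictly inside a three-variable expression. For that side, use Lemma~\ref{lemma:collapsing-2-vars} to collapse $g\mapsto a\odot b$ and $h\mapsto a\odot c$ independently; this is possible since they use disjoint colour pairs. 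Then use the ordering/promotion trick from the proof of Lemma~\ref{lemma:second-collapse} to reduce the remaining operations on each side to a single $\odot$ with the appropriate third variable. This should yield $(a\odot b)\oplus\bigl((a\odot c)\odot b\bigr)$ ($f_5$ up to symmetry) or $\bigl((a\odot b)\odot c\bigr)\oplus(a\odot c)$. If both $g$ and $h$ occur on both sides, we first collapse so that one side contains only $g$, invoking the dissimilarity collapse (Lemma~\ref{lemma:collapsing-non-similar} and Corollary~\ref{cor:collapse}).

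\emph{Case B: $g$ and $h$ both lie in $L$.} By minimality, $L$ has the form $J(I(H(\cdot,\cdot),\cdot),G(\cdot,\cdot))$. The presence of two incomparable maximal pair-subexpressions forces $L$ to contain a node combining (descendants of) $g$ and $h$. We collapse this node to $(a\odot b)\oplus_3(a\odot c)$, and collapse the path to the root to a single operation with $R$. Then $R$ must depend on some variable in a way that breaks the $J(I(H,c),G)$ form: it is $a$, $b$, or an expression reusing $a\odot b$. By Corollary~\ref{cor:collapse}, $R$ collapses to $a$, to $b$, or to a copy of $g$, giving $f_3$, $f_4$, or (with $R$ a further nesting) $f_6$. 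When $g$ and $h$ are nested along a chain, so that $h$ appears inside a three-variable subexpression built on $g$, the analogous collapse yields the chain shapes $f_1$ and $f_2$.

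In every case, Lemma~\ref{lemma:core-identities} together with the collapsing reduction then refutes the $4$-AP Hypothesis. The main obstacle is Case B and its chain subcase. The difficulty is making the case analysis exhaustive, i.e.\ showing that every non-$J(I(H,c),G)$ minimal shape really collapses to one of $f_1,\dots,f_6$, while ensuring the independent collapses do not interfere with one another. For the latter I would first try to rely on the subexpression-embedding bookkeeping, defining each operation separately according to the symbolic tag of its operands, as in the proofs of Lemmas~\ref{lemma:operations} and~\ref{lemma:second-collapse}.
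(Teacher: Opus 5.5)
Your split into ``$g$ and $h$ on different sides'' versus ``both on one side'' is essentially the paper's case analysis, which splits on $|{\rm var}(L)|\le|{\rm var}(R)|$. Where your argument goes through, you use the right tools. There is, however, a genuine gap.

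The gap is the sub-case where one side is a two-variable expression over a pair that also occurs on the other side. For example, $L\equiv g$ while $R$ has three variables and contains $h$ together with some $(a,b)$-subexpression. There Lemma~\ref{lemma:second-collapse} does not apply. Following your Case~B with $L$ and $R$ exchanged, $L\equiv g$ would have to collapse to a single variable. Instead you allow the other side to ``collapse to a copy of $g$'', and that branch fails. The expression $\bigl((a\odot b)\oplus(a\odot c)\bigr)\oplus(a\odot b)$ equals $J(I(H(a,b),c),G(a,b))$ with $H(a,b)\equiv a$, $I(a,c)\equiv a\odot c$, $G(a,b)\equiv a\odot b$. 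So it lies in the $O(n^\omega)$ regime, and collapsing onto it yields no cubic lower bound.

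The missing idea is to extract dissimilarity from the hypothesis on $f$:
\begin{itemize}
\item By minimality, and since $R$ contains $h$, you can write $R\equiv J_R(I_R(a,c),G_R(a,b))$.
\item If $G_R\sim L$, then $f\equiv J(X(a,b),I_R(a,c))$, contradicting the assumption. Hence $G_R\not\sim L$.
\item Lemma~\ref{lemma:collapsing-non-similar} then collapses either $L$ to $b$, or $G_R$ to $b$. In the first case $R$ contains both pairs and you get $f_4$. In the second case $R$ has no $(a,b)$-subexpression and Lemma~\ref{lemma:second-collapse} gives $f_5$.
\end{itemize}
You name the dissimilarity collapse but never explain why the two expressions are dissimilar. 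That is the only place the lemma's hypothesis is actually used. The same issue arises in your Case~A when $R\equiv h$ and $L$ also contains an $(a,c)$-subexpression.

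Relatedly, $f_1,f_2,f_6$ are not the right targets here. Your Case~B collapse keeps both $a\odot b$ and $a\odot c$, whereas each of $f_1,f_2,f_6$ has only one kind of two-variable subexpression. Those shapes belong to Lemmas~\ref{lemma:4-ap-one-maximal-subtree-easy} and~\ref{lemma:4-ap-one-maximal-subtree-triangle}. There is also no separate ``chain'' sub-case: $g$ and $h$ are incomparable and simply meet at their lowest common ancestor. Likewise, Corollary~\ref{cor:collapse}, which produces $(x\odot y)\odot z$ versus $z$, is not the tool that turns the other side into a single variable. Once the dissimilarity step is in place, $f_3,f_4,f_5$ suffice, exactly as in the paper.
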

\begin{proof}
    Let $f(a,b,c)\equiv L\oplus' R$ and by applying the Subexpression Embedding Gadget, similarly as in Lemma \ref{lemma:operations} we can assume that $\oplus'$ operation does not appear in either $L$ nor $R$.
    Without loss of generality, we assume that $f(a,b,c)$ is a minimal expression satisfying the condition of the lemma (i.e., both $L$ and $R$ can be written as $J(I(H(a,b), c), G(a,b))$ for a suitable choice of expressions $G,H,I,J$).
    We also assume without loss of generality that $|{\rm var}(L)| \le |{\rm var}(R)|$.
    The proof of the following claim is essentially a reiteration of the same approach that was already used in a few places across the section, so we only provide a sketch here.
    \begin{claim}
        Let $u$ be an expression that contains two subexpressions $g,h$ such that ${\rm var}(g) = \{a,b\}$, and ${\rm var}(h) = \{a,c\}$. Then we can in time $\bigO(n^2)$ construct Cayley tables of operations appearing in $u$ such that for any given pair of operations $\odot, \oplus$, and for each triple $x\in A,y\in B,z\in C$, we have that $u(x,y,z) = (x\odot y) \oplus( x\odot z)$.
    \end{claim}
    \begin{subproof}[Proof sketch]
        By approach similar as in proof of Proposition \ref{prop:triangle-hardness}, we can apply the Collapsing Lemma on $g$ and $h$ separately using chromaticity property.
        We can thus collapse $g$ into $a\odot b$ and $h$ into $a\odot c$.
        On a high level, by applying the Subexpression Embedding Gadget, we keep track of where in the parse tree we are, while making sure that we only perform the relevant information once.
    \end{subproof}
    We now consider the case distinction depending on the value of $|{\rm var}(L)|$.

    \textbf{Case 1:} $|{\rm var}(L)| = 1$. Then by our assumption, $R$ contains both $g,h$ as subexpressions.
    Then, given any pair of operations $\oplus, \odot$, by the claim above, we can construct the Cayley tables for all the operations in $R$ such that for all $a\in A,b\in B,c\in C$, we have $R(a,b,c) = (a\odot b) \oplus (a\odot c)$.
    Recall that $f(a,b,c) \equiv L\oplus'R$, hence we can construct the Cayley table for operations in $L$, as well as $\oplus'$ such that for each $a\in A, b\in B, c\in C$ we have that (up to symmetry) $f(a,b,c)$ either evaluates to $a\oplus ((a\odot b)\oplus (a\odot c))$, or $b\oplus ((a\odot b)\oplus (a\odot c))$.
    Hence, deciding if there is an algorithm deciding if $f(a,b,c) = \infty$ in time $\bigO(n^{3-\varepsilon})$ for each $a\in A,b\in B,c\in C$, then there is an algorithm solving the Constant Term Identity on either $f_3(a,b,c)$, or $f_4(a,b,c)$ in time $\bigO(n^{3-\varepsilon})$, thus refuting the $4$-AP Hypothesis.

    \textbf{Case 2:}
    $|{\rm var}(L)| = 2$. Notice first that if $|{\rm var}(R)| \le 2$, then the conditions of the lemma are not met (i.e., we can represent $f$ trivially as $f(a,b,c)\equiv J(I(H(a,b), c), G(a,b))$). Hence, we assume that $|{\rm var}(R)| = 3$.
    Note that without loss of generality, we can assume that ${\rm var}(L) = \{a,b\}$ and in particular we can assume that $L\equiv g$.
    Consider the structure of $R$.
    By minimality of $f$, we can (up to symmetry of $a,b,c$; we prove the claim for this case and note that other cases can be handled similarly) write $R$ as $J_R(I_R(H_R(a,b), c), G_R(a,b))$.
    By the assumption that $h$ is a subexpession of $R$ and depends precisely on $a,c$, we can conclude that $H_R$ depends only on $a$.
    In particular, we can rewrite  $R(a,b,c)\equiv J_R(I_R(a, c), G_R(a,b))$.
    Assume now that $G_R\sim L$. But then there exists an expression $X(a,b)$, such that both $G_R$ and $L$ are decomposable into $X$.
    In particular, we can write $f\equiv J(X(a,b), I_R(a,c))$, which is a contradiction.
    Hence, we may assume that $G_R\not \sim L$.
    But then we can apply \ref{lemma:collapsing-non-similar} to either collapse $G_R$ into $a\odot b$ and $L$ into $b$, or vice versa.
    In the former case, we land into Case 1, which we already handled above.
    In the latter case, we notice that after collapsing, $f$ satisfies the conditions of Lemma \ref{lemma:second-collapse}, yielding the desired hardness in both cases.

    \textbf{Case 3:} $|{\rm var}(L)| = 3$.
    If both $g$ and $h$ are subexpressions of $L$, then a similar construction as in Lemma \ref{lemma:second-collapse} can be used to show that $L$ can be collapsed into $(a\odot b)\oplus (a\odot c)$, while making sure that $R$ collapses to either $a$ or $b$.
    Thus solving the Constant Term Identity problem on $f$ in subcubic time would give an algorithm for solving the problem on either $f_3$, or $f_4$, thus refuting the $4$-AP hypothesis.
    Assume now that $L$ contains $g$, and $R$ contains $h$ (otherwise, by our assumption, $R$ contains both $g$ and $h$ and we already know how to deal with this case).
    On top of this, we may assume that $R$ contains no subexpression $u$ with ${\rm var}(u) = \{a,b\}$, since we could simply relabel $g$ and $u$ to land in the previous case again.
    But now we notice that $R$ satisfies all conditions of Lemma \ref{lemma:second-collapse}, thus concluding the proof.
\end{proof}
Notice that in the lemma above we assumed that $f$ contains two maximal subexpressions $g,h$ each of which contained a different pair of variables (e.g., $a,b$ and $a,c$ respectively). It remains to show hardness of the problem on expressions that (1) cannot be written as $f(a,b,c)\equiv J(I(H(a,b), c), G(a,b))$, and (2) $f$ contains no subexpressions $u(a,c)$, nor $u'(b,c)$ that depend on both variables.
We again without loss of generality consider the minimal expression $f$ that satisfies these properties (i.e., each subexpression of $f$ can be written (without loss of generality) in the form $J(I(H(a,b), c), G(a,b))$), and moreover, we consider two subcases depending on whether $f$ contains a subexpression that is triangle-hard (according to Proposition \ref{prop:triangle-hardness}).
We start with the case when all of the subexpressions of $f$ can be written as $H(G(a,b),c)$ (i.e., are not triangle hard according to Proposition \ref{prop:triangle-hardness})
\begin{lemma}\label{lemma:4-ap-one-maximal-subtree-easy}
    Let $f(a,b,c)$ be an expression such that for no $G,H,I, J$ does the $f(a,b,c)\equiv J(I(H(a,b), c), G(a,b))$ hold, while for any subexpression $g$ of $f$, $g$ can be written as $g(a,b,c)\equiv H(G(a,b),c)$ (up to symmetry of $a,b,c$) for some expressions $G,H$.
    Assume further that each subexpression $z$ that depends on exactly two variables satisfies
    ${\rm var}(z) = \{a,b\}$ (without loss of generality, the remaining two cases are symmetric).
    Then any algorithm deciding Constant Term Identity on $f$ in time $\bigO(n^{3-\varepsilon})$ would refute the $4$-AP Hypothesis.
\end{lemma}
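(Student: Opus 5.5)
We reduce from one of the core identities of Lemma~\ref{lemma:core-identities}, most likely $f_1$, $f_2$ or $f_6$. These are the core identities in which the only two-variable subexpressions are over $\{a,b\}$, and in which $c$ enters a chain that re-uses an $(a,b)$-term or the variables $a,b$ after $c$ has been absorbed. We take $f$ minimal as in the statement and write $f\equiv L\oplus R$. By the Subexpression Embedding Gadget, as in Lemma~\ref{lemma:operations}, we may assume that the root operation $\oplus$ does not occur in $L$ or $R$. Every three-variable subexpression of $f$ has the form $H(G(a,b),c)$, and no subexpression depends on exactly $\{a,c\}$ or $\{b,c\}$. Hence every maximal two-variable subexpression is an $(a,b)$-expression $G_i(a,b)$, and every three-variable subexpression is built from the $G_i$'s, single variables, and $c$, where $c$ is first combined with a term that already depends on both $a$ and $b$.

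The first step is a case analysis on ${\rm var}(L)$ and ${\rm var}(R)$. Both sides can be written as $H_L(G_L(a,b),c)$ and $H_R(G_R(a,b),c)$ (or depend on fewer variables). If $G_L\sim G_R$ and the one-variable parts are harmless, then $f$ itself would be of the form $H(G(a,b),c)$, which is excluded. So the failure of the $J(I(H(a,b),c),G(a,b))$ form must come from one of two situations. In the first, a three-variable side $H_L(G_L,c)$ is combined at the root with a two-variable $(a,b)$-expression $G'$ dissimilar to $G_L$, or with a single variable $a$ or $b$ in a way that is not an outer $G(a,b)$. In the second, inside $H_L$ itself, after $c$ is absorbed, a further dissimilar $(a,b)$-expression or a bare variable $a$ or $b$ is reattached.

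In each situation we apply Lemma~\ref{lemma:collapsing-non-similar} to pairs of dissimilar $(a,b)$-subexpressions: one collapses to $a\odot b$ and the other to a single variable, or the first collapses to a bare variable. Lemma~\ref{lemma:collapsing-2-vars} handles lone two-variable subexpressions. The Subexpression Embedding Gadget then lets every remaining operation act as a ``promotion'' that forwards the relevant operand while recording the parse-tree position, exactly as in Lemma~\ref{lemma:second-collapse}. This keeps only the essential operations at the positions where $c$ meets the $(a,b)$-term and where the re-attached variable or term meets the result. Matching the resulting skeleton gives the targets:
\begin{itemize}
\item \emph{$f_6$:} the same collapsed term $a\odot b$ appears both inside $c$'s chain and at the root.
\item \emph{$f_1$ or $f_2$:} after collapsing, the re-attached operand is a bare variable $b$ or $c$ on top of a chain $a\odot(b\odot(c\odot a))$. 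Here a collapse turns $G$ into $b$, so that $c\odot a$-type chains arise from $H$.
\end{itemize}

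The main obstacle is the middle step: showing that these situations exhaust all expressions not of the form $J(I(H(a,b),c),G(a,b))$, and that the collapses of different subexpressions can be carried out independently without interfering, which requires chromaticity and the bookkeeping gadget. We would first attempt an exhaustive structural lemma. By Observation~\ref{obs:recursive-decomposability}, similar $(a,b)$-subexpressions can be merged into a single $H(a,b)$. The failure of the normal form then forces either two dissimilar $(a,b)$-blocks separated by $c$, or a bare variable re-entering after $c$. Each such witness pattern is mapped to one of $f_1,f_2,f_6$, and Lemma~\ref{lemma:core-identities} together with Lemma~\ref{lemma:subexpression-monotonicity} completes the reduction.
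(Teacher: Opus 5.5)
There is a genuine gap. The configurations your case analysis targets cannot occur under the hypotheses, and the configuration that does occur cannot be reduced to $f_1$, $f_2$ or $f_6$ by your method.

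\emph{Your first situation is already of the excluded form.} Suppose either side of the root, say $R$, depends on at most two variables. Since no subexpression depends on exactly $\{a,c\}$ or $\{b,c\}$, we get ${\rm var}(R)\subseteq\{a,b\}$ or ${\rm var}(R)=\{c\}$. Then $f\equiv H_L(G_L(a,b),c)\oplus R$ has the form $J(I(H(a,b),c),G(a,b))$: take $J(u,v)=u\oplus v$, or absorb $R$ into $I$. So the lemma does not concern it.

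\emph{Your second situation contradicts the hypothesis on proper subexpressions.} A three-variable proper subexpression is $H(G(a,b),c)$ with ${\rm var}(G)=\{a,b\}$. Hence every maximal $c$-free subtree of it is decomposable into $G$, and neither a bare $a$ or $b$ nor a dissimilar $(a,b)$-block can be attached after $c$.

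What actually remains, and what the paper's proof starts from, is the single configuration ${\rm var}(L)={\rm var}(R)=\{a,b,c\}$, $L\equiv H_L(G_L(a,b),c)$, $R\equiv H_R(G_R(a,b),c)$, with $G_L\not\sim G_R$ (otherwise $f\equiv H(X(a,b),c)$). You mention the similarity remark in passing but never isolate this as the case to handle.

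For this case your targets are the wrong ones. Incidentally, $f_1$ and $f_2$ contain the $\{a,c\}$-subexpression $c\odot a$, so they are not ``$\{a,b\}$-only'' as you claim. Consider $f\equiv((a\odot b)\circ c)\oplus((b\odot a)\circ c)$. It satisfies all hypotheses but has depth $3$, whereas $f_1,f_2,f_6$ are depth-$4$ chains of four checking operations. Collapsing subexpressions and turning operations into promotions never increases depth, so no skeleton of this $f$ is among your targets. Its natural skeleton is $((a\odot b)\circ c)\oplus(b\circ c)$, which is $f_5$ up to renaming.

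The missing idea is as follows.
\begin{enumerate}
\item Apply Lemma~\ref{lemma:collapsing-non-similar} to $G_L\not\sim G_R$, turning one of them into $a\odot b$ and the other, say $G_R$, into the bare variable $b$.
\item $R$ then becomes $H_R(b,c)$, a maximal subexpression over $\{b,c\}$, while $L$ retains one over $\{a,b\}$. The collapsed expression is still not of the form $J(I(H(a,b),c),G(a,b))$.
\item The already proven Lemma~\ref{lemma:4-ap-two-maximal-subtrees} now finishes the argument, ending at $f_3$, $f_4$ or $f_5$.
\end{enumerate}
As written, your proposal defers exactly this step to an unproven ``exhaustive structural lemma'' and commits to targets that this method cannot reach.
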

\begin{proof}
    Write $f\equiv L\oplus R$ for some operation $\oplus$ appearing only in the root. We first note that by assumption that we must have that $f$ cannot be written as $f(a,b,c)\equiv J(I(H(a,b), c), G(a,b))$, we must have ${\rm var}(L) = {\rm var}(R) = \{a,b,c\}$ (that is both subexpressions must depend on all three variables).
    Assume without loss of generality that $L\equiv H_L(G_L(a,b),c)$ and suppose now that $R$ cannot be written as $H(G(a,b),c)$. It is now straightforward to verify that we land precisely in the setting of Lemma \ref{lemma:second-collapse}, yielding the desired hardness.
    Hence, assume that also $R$ can be written as $H_R(G_R(a,b),c)$.
    It is also easy to see that $G_R\not \sim G_L$, since otherwise we could decompose both $G_R$ and $G_L$ into some $X(a,b)$, implying that $f$ can be written as $H(X(a,b),c)$, yielding a contradiction.
    Hence $G_R\not \sim G_L$.

    We now apply Lemma \ref{lemma:collapsing-non-similar} to collapse them such that either $G_L(a,b) = a\odot b$ (where $\odot$ is an operation that does not appear in $f$) and $G_R(a,b) = b$, or vice versa.
    Without loss of generality, assume that $G_L(a,b) = a\odot b$ and $G_R(a,b) = b$ (as other case is completely symmetric).
    It is straightforward to verify that this collapse preserves the property of $f$ that it cannot be written as $J(I(H(a,b), c), G(a,b))$.
    Moreover, we notice that by assumption that after collapsing, $R$ contains a subexpression that depends on $b,c$.
    Since $L$ contains a subexpression that depends on $a,b$, the result now follows directly by applying Lemma \ref{lemma:4-ap-two-maximal-subtrees}.
\end{proof}
We now show the hardness of the remaining expressions that cannot be written as $J(I(H(a,b), c), G(a,b))$.

\begin{lemma}\label{lemma:4-ap-one-maximal-subtree-triangle}
    Let $f(a,b,c) \equiv L\oplus R$ be an expression such that for no $G,H,I, J$ does the $f(a,b,c)\equiv J(I(H(a,b), c), G(a,b))$ hold.
    Assume further that each subexpression $z$ that depends on exactly two variables satisfies
    ${\rm var}(z) = \{a,b\}$ (without loss of generality, the remaining two cases are symmetric).
    Then any algorithm deciding Constant Term Identity on $f$ in time $\bigO(n^{3-\varepsilon})$ would refute the $4$-AP Hypothesis.
\end{lemma}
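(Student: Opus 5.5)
We reduce this case to the previous lemmas by minimality and collapsing, ending at $f_1,f_2$ or $f_6$ of Lemma~\ref{lemma:core-identities}. As before, we take $f$ to be a minimal expression with the stated properties. Hence every proper subexpression of $f$ can be written as $J(I(H(a,b),c),G(a,b))$ up to symmetry. By the Subexpression Embedding Gadget and Lemma~\ref{lemma:operations}, we may also assume that the root operation $\oplus$ appears only at the root. If every subexpression of $f$ can be written as $H(G(a,b),c)$, then Lemma~\ref{lemma:4-ap-one-maximal-subtree-easy} applies directly. So we may assume that some subexpression is triangle-hard, and by Subexpression Monotonicity (Lemma~\ref{lemma:subexpression-monotonicity}) we focus on a side, say $R$, that is triangle-hard. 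Because all two-variable subexpressions depend only on $\{a,b\}$, $R$ must have the shape $J_R(I_R(H_R(a,b),c),G_R(a,b))$ with ${\rm var}(H_R)={\rm var}(G_R)=\{a,b\}$ and $H_R\not\sim G_R$. Otherwise $R$ would decompose through a common $X(a,b)$ and would be of the form $H(X(a,b),c)$.

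\textbf{Step 1 (collapse the two $\{a,b\}$-pieces).} Apply Lemma~\ref{lemma:collapsing-non-similar} to $H_R$ and $G_R$. One of them collapses to $a\odot b$ and the other to a single variable, namely $b$, or $a$ after symmetry. This turns $R$ into something like $J_R(I_R(a\odot b,c),b)$ or $J_R(I_R(b,c),a\odot b)$. Then collapse $I_R$, which depends on its argument and on $c$, and $J_R$ with Lemma~\ref{lemma:collapsing-2-vars}. Chromaticity keeps these collapses independent, exactly as in Proposition~\ref{prop:triangle-hardness}. The pieces become single operations $c\odot(\cdot)$ and $(\cdot)\odot(\cdot)$. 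At this point $R$ has the pattern $b\odot(c\odot(a\odot b))$ or $(a\odot b)\odot(c\odot b)$, and in the latter case a bare $b$ sits next to a subterm containing $c$.

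\textbf{Step 2 (handle $L$).} Since $f$ is not of the form $J(I(H(a,b),c),G(a,b))$, the side $L$ must contribute something that $R$'s decomposition cannot absorb. There are two cases.
\begin{itemize}
\item If ${\rm var}(L)\subseteq\{a,b\}$ and $L$ depends on $a$, then $L\not\sim G_R$ or $L\not\sim H_R$; otherwise $f$ would decompose. We collapse $L$ to $a\odot b$ or to a variable. With the root $\oplus$, this gives $f_6$ (namely $(a\odot b)\oplus(a\odot(c\odot(a\odot b)))$ after renaming), or $f_1$ or $f_2$ when $L$ collapses to a single variable.
\item If ${\rm var}(L)=\{a,b,c\}$, then $L$ is itself of one of the two admissible shapes. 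We use Subexpression Embedding to forward $L$ as a single variable ($b$ or $c$) while keeping $R$'s collapsed form. This lands at $f_1$ ($b\oplus\dots$) or $f_2$ ($c\oplus\dots$).
\end{itemize}

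\textbf{Step 3 (conclude).} An $O(n^{3-\varepsilon})$ algorithm for $f$ then gives one for a core identity, contradicting Lemma~\ref{lemma:core-identities}. The main obstacle is Step 2: choosing which of $L$'s $\{a,b\}$-subexpressions to collapse so that the result lands exactly on $f_1$, $f_2$ or $f_6$ and not on a triangle-time shape. The key fact needed is that nonsimilarity of $L$ to both $G_R$ and $H_R$ (or to their common class) is forced by the failure of the $J(I(H,c),G)$ form. I would first try to prove this by the decomposition and transitivity argument of Lemma~\ref{lemma:equivalence-of-similarity}.
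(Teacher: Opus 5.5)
\textbf{There is a genuine gap.} Your reduction only ever aims at $f_1$, $f_2$ and $f_6$, and it never uses Lemma~\ref{lemma:4-ap-two-maximal-subtrees} (and through it $f_3,f_4,f_5$). Several branches of your own argument cannot end at those three identities.
\begin{itemize}
\item \emph{Unhandled collapse outcomes.} Lemma~\ref{lemma:collapsing-non-similar}, applied to $H_R,G_R$ or to $L$ against an $\{a,b\}$-subexpression of $R$, only guarantees one of two outcomes. You do not get to pick which. The unfavourable outcome turns $I_R(H_R,c)$ into an expression on $\{b,c\}$. You observe this yourself with the pattern $(a\odot b)\odot(c\odot b)$, but then leave it unhandled. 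The paper routes exactly these branches through Lemma~\ref{lemma:4-ap-two-maximal-subtrees}.
\item \emph{The ``key fact'' is false.} In $f_6$ itself one has $L\equiv H_R$, so non-similarity of $L$ to $H_R$ is not forced. What the two-variable case actually needs is the paper's dichotomy. Either every $\{a,b\}$-subexpression of $R$ is similar to $L$, and then $f$ collapses to $f_6$. Or some maximal $w\not\sim L$ exists, and collapsing $L$ against $w$ lands either in Lemma~\ref{lemma:4-ap-two-maximal-subtrees} or in the single-variable case.
\item \emph{Smaller errors.} ${\rm var}(G_R)=\{a,b\}$ is not forced: $G_R$ may be a bare variable, which is precisely the $f_1/f_2$ situation. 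Your case split also omits ${\rm var}(L)=\{b\}$ and ${\rm var}(L)=\{c\}$.
\end{itemize}

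\textbf{The three-variable case fails as stated.} The step ``forward $L$ as a single variable while keeping $R$'s collapsed form'' is unjustified and false in general. The reason is that $L$ and $R$ may share subexpressions, and under the subexpression embedding a shared subtree takes the same value in both places. Take
\[
f=(c\cdot(a\cdot b))\oplus(((c\cdot(a\cdot b))\cdot c)\cdot b).
\]
This $f$ satisfies the hypotheses, is not of the form $J(I(H(a,b),c),G(a,b))$, and its side $R$ is triangle-hard. But $L$ is a subexpression of $R$. So if $L$ outputs $c$ (or $b$), then $R$ becomes a function of $b,c$ alone, and no hardness survives; neither $f_1$ nor $f_2$ is reachable this way. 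The paper's Case~3 instead applies Corollary~\ref{cor:collapse} to $H_L\not\sim I_R$. In this example the only possible outcome is $L\mapsto(a\odot b)\odot c$ and $I_R\mapsto c$, which gives $((a\odot b)\odot c)\oplus(b\odot c)$, i.e.\ $f_5$. That identity is missing from your list. To close the gap you need the paper's finer case analysis on $|{\rm var}(L)|$, tracking which direction each collapse goes and sending the resulting $\{a,b\}$/$\{b,c\}$ configurations through Lemma~\ref{lemma:4-ap-two-maximal-subtrees}.
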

\begin{proof}
    We note that if both $L$ and $R$ can be written as $H(G(a,b),c)$ for suitable expressions $G,H$, then we are precisely in the setting of the Lemma \ref{lemma:4-ap-one-maximal-subtree-easy}.
    So assume without loss of generality that $R$ cannot be written in this form. Moreover, as before, we can assume minimality of $f$.
    We consider a case distinction based on the $|{\rm var(L)}|$.

    \textbf{Case 1:} $|{\rm var(L)}| = 1$.
    Without loss of generality, we can assume that ${\rm depth}(L) = 0$ (i.e., that the parse tree of $L$ contains only a single node), otherwise we can set $x\oplus_i x = x$ for each $x\in S$ and then clearly $L(x)$ will always evaluate to $x$.
    By the minimality of $f$, there are expressions $G_R, H_R, I_R, J_R$ such that $R(a,b,c)\equiv J_R(I_R(a,b), H_R(G_R(a,b),c))$. If $I_R\sim G_R$, we can write $R$ as $H(G(a,b),c)$ (for some $G,H$), which would contradict our assumption.
    Also, if $I_R\sim L$, we could write $f$ as $J(I(H(a,b), c), G(a,b))$ which would yield a contradiction.
    Note that we can also assume without loss of generality that ${\rm var}(G_R) = \{a,b\}$.
    Hence, $I_R, G_R$ and $L$ are all pairwise non-similar.
    Now if $I_R$ consists of a single variable (without loss of generality $b$), we can apply a similar approach as in Lemma \ref{lemma:second-collapse} to collapse $R$ to $(b\odot (c \odot (a\odot b))$.
    By assumption that $L\not\sim I_R$, we get that either $L\equiv a$, or $L\equiv c$. In particular, we have thus collapsed $f$ into $a\oplus (b\odot (c \odot (a\odot b))$, or $c\oplus (b\odot (c \odot (a\odot b))$ respectively. We remark that these two expressions are equivalent to $f_1$ and $f_2$ from Lemma \ref{lemma:core-identities} (up to symmetry of $a,b,c$).
    It remains to verify the case when ${\rm var}(I_R) = \{a,b\}$. But using that $I_R\not \sim G_R$, we can apply Lemma \ref{lemma:collapsing-non-similar} to collapse $I_R$ and $G_R$ into $b$ and $a\odot b$ respectively (or vice versa). In either of these two cases, we either collapse to the case when $I_R$ only depends on one variable, and we proceed as above, or the obtained expression contains subexpressions $u, w$ with ${\rm var}(u) = \{a,b\}$ and ${\rm var} (w) = \{b,c\}$, and in this case we can apply Lemma \ref{lemma:4-ap-two-maximal-subtrees}.

    \textbf{Case 2:} $|{\rm var(L)}| = 2$. In particular, ${\rm var(L)} = \{a,b\}$.
    Now if each subexpression $w$ of $R$ with ${\rm var}(w) = \{a,b\}$ satisfies $w\sim L$, we remark that a simple modification of Lemma \ref{lemma:second-collapse} allows us to collapse $f$ into $f_6$, yielding the desired.
    Hence, we can assume that there exists a subexpression $w(a,b)$ of $R$ such that (i) ${\rm var}(w) = \{a,b\}$ (ii) any subexpression $w'$ of $R$ that contains $w$ also depends on $c$ (i.e., $w$ is inclusion maximal) and (iii) $w\not \sim L$.
    By using Lemma \ref{lemma:collapsing-non-similar}, we can collapse $L$ and $w$ such that either $L(a,b)=a\odot b$ and $w(a,b)=b$, or vice versa.
    It is easy to verify that if the original expression cannot be written as $J(I(H(a,b), c), G(a,b))$, then neither can the collapsed expression.
    We note that this in the former case (by maximality of $w$), $R$ contains a subexpression that depends on $b,c$, and we can apply \ref{lemma:4-ap-two-maximal-subtrees}.
    Otherwise, $L$ is collapsed into a single variable, hence we land in \emph{Case 1} which we know how to handle.

     \textbf{Case 3:} $|{\rm var(L)}| = 3$. By minimality of $f$, we can write $R\equiv J_R(I_R(H_R(a,b), c), G_R(a,b))$. Note that by assumption that $f$ contains no subexpression $u$ with $\var{u} = \{a,c\}$ (resp. $\{b,c\}$), we can assume that $\var{H_R} = \{a,b\}$.
    We first assume that $L$ can be written as $L\equiv H_L(G_L(a,b), c)$ where $\var{G_L} = \{a,b\}$.
    Note that if $\var{G_R} = \{a,b\}$, then we know that $G_R \not \sim H_R$ (otherwise $R$ can be written as $J(H_R(a,b),c)$, for some expression $J$ contradicting the assumption of the lemma). Hence by transitivity of $\sim$ relation, we have that either $G_L\not\sim H_R$, or $G_L\not \sim G_R$. In the former case we can apply Lemma \ref{lemma:collapsing-non-similar} to collapse $G_L$ and $H_R$ into $a\odot b$ and $b$ (without loss of generality) respectively, or vice versa. Note that this would yield the conditions of Lemma \ref{lemma:4-ap-two-maximal-subtrees} and we are done.
    In the latter case, we can apply Lemma \ref{lemma:collapsing-non-similar} to collapse $G_L$ and $G_R$ into $a\odot b$ and $b$. If $G_L$ is collapsed into $b$, we are again in case when we can apply Lemma \ref{lemma:4-ap-two-maximal-subtrees}. Hence we can assume that $G_L$ is collapsed into $b$, but notice that this yields the case when $|\var{G_R}| = 1$.
    Moreover, we assume that $H_R\sim G_L$, that is, there exists an expression $X$ with $\var{X} = \{a,b\}$ such that both $G_L$ and $H_R$ can be decomposed into $X$.
    Furthermore, this implies that (without loss of generality) $R\equiv J_R(I_R(X(a,b), c), b)$ and $L\equiv H_L(X(a,b),c)$.
    Furthermore, it is easy to see that if $H_L \sim I_R$, we get a contradiction to the assumption that $f$ cannot be written in the form $J(I(H(a,b),c), G(a,b))$.
    Notice that we can now apply Corollary \ref{cor:collapse}, to collapse $I_R$ and $L$ into $(a\odot b) \odot c$ and $c$ respectively, or vice versa.
    We now consider these two cases separately.
    If $L$ is collapsed into $(a\odot b)\odot c$, and $I_R$ into $c$, we notice that we can then write $f(a,b,c) = ((a\odot b )\odot c) \oplus J_R(b,c)$.
    It is easy to see that we can set the tables of operations in $J_R$ to evaluate to $b\odot c$, yielding (up to symmetry of $a,b,c$ and $L,R$) an expression equivalent to $f_5$ in Lemma \ref{lemma:core-identities}.
    Similarly, if $L$ is collapsed into $c$, we can set up the tables for operations in $J_R$ such that $f$ can be used to evaluate the expression $c\oplus (((a\odot b)\odot c) \odot b)$, which is in a similar fashion equivalent to $f_2$. In both cases we obtain hardness under $4$-AP hypothesis.
    It only remains to verify the case when $L$ cannot be written as $H_L(G_L(a,b),c)$. In particular, using minimality of $f$, we can write $L$ as $J_L(I_L(H_L(a,b), c), G_L(a,b))$.
    Note that we can also assume that $\var{H_L} = \{a,b\}$ and that $H_L\sim H_R$ (otherwise, similar arguments as above conclude the proof).
    Also, if $\var{G_L} = \{a,b\}$, then (assuming we cannot write $L$ as $H(G(a,b),c)$), $G_L\not \sim H_L$ and we can apply Lemma \ref{lemma:collapsing-non-similar} to collapse $H_L$ and $G_L$ into $a\odot b$ and $b$ respectively (note that if $H_L$ is collapsed into $b$, we fall into the setting of Lemma \ref{lemma:4-ap-two-maximal-subtrees}). Hence, we can assume that $|\var{G_L}| = |\var{G_R}| = 1$.
    We now consider the expressions $I_L$ and $I_R$.
    Assume first that $I_R \not \sim I_L$. Then we can apply Corollary \ref{cor:collapse} to collapse $I_L$ and $I_R$ into $(a\odot b)\odot c$, and $c$ respectively (the other case is symmetric).
    Hence, we have shown that we can set up the operations such that $L(a,b,c)$ always evaluates to $J_L((a\odot b)\odot c, G_L(a,b))$ and $R$ always evaluates to $J_R(c, G_R(a,b))$ (recall that both $G_L$ and $G_R$ only depend on one variable each).
    Now, using Lemma \ref{lemma:collapsing-2-vars}, we can collapse $J_L((a\odot b)\odot c, G_L(a,b))$ into $((a\odot b)\odot c) *G_L(a,b)$ and $J_R(G_R(a,b), c)$ into $c*G_R(a,b)$ (or $G_R(a,b)*c$, but this case is symmetric), where $*$ is an operation to be defined in a moment.
    We also remind the reader that we can use the Subexpression Embedding Gadget to make sure that regardless of evaluation, the element $(a\odot b)\odot c)$ is not a member of the (input) set $C$ (the same holds for $A,B$ as well, but for our purpose it suffices to argue the non-membership in $C$).
    We can now define the operation $*$ as follows.
    \[
    x * y = \begin{cases}
        x & \text{if $x\in C$} \\
        x\odot y & \text{otherwise}
    \end{cases}
    \]
    In particular, after plugging in the $*$ operation, we get that $f$ can express $\left(((a\odot b)\odot c) \odot G_L(a,b)\right) \oplus c$,
    which is equivalent to $f_2$ (up to symmetry of $a,b,c$).
    Finally, we assume that $I_L \sim I_R$.
    If also $G_L \sim G_R$, then we get a contradiction, since we could then write $f$ as $J(I(H(a,b),c), G(a,b))$.
    Recall that we assumed that both $G_L$ and $G_R$ depend only on one variable, which means that $\var{G_L}\ne \var{G_R}$. Without loss of generality, we assume that $\var{G_L} = \{a\}$, $\var{G_R} = \{b\}$ (other case is symmetric).
    We apply two iterations of Lemma \ref{lemma:collapsing-2-vars}, in order to collapse both $I_L$ and $I_R$ into $(a\odot b) \odot c$.
    In particular, we get that $L$ can express $J_L((a\odot b) \odot c, a)$ and $R$ expresses $J_R((a\odot b) \odot c, b)$.
    By applying Lemma \ref{lemma:collapsing-2-vars} again, similarly as above, we can collapse $L(a,b,c)$ into $((a\odot b) \odot c) *' a$, and $R(a,b,c)$ into $((a\odot b) \odot c) *' b$, where $*'$ is defined as follows.
    \[
        x*' y = \begin{cases}
            y & \text{if $y\in B$}\\
            x\odot y & \text{otherwise}
        \end{cases}
    \]
    In particular, this gives us that $f$ can express $(((a\odot b) \odot c) \odot a) \oplus b$, which is (up to symmetry of $a,b,c$) equivalent to $f_2$.
\end{proof}
\begin{proposition}[$4$-AP Hardness]
    If $f$ cannot be written as $f(a,b,c)\equiv J(I(H(a,b),c), G(a,b))$, for no $\varepsilon>0$ is there an algorithm solving Constant Term Identity problem on $f$ in $\bigO(n^{3-\varepsilon})$, unless $4$-AP Hypothesis fails.
\end{proposition}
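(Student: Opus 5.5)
The plan is to assemble the proposition from the lemmas just proved, via a minimal-counterexample case analysis on which two-variable subexpressions $f$ contains. First I would reduce to a minimal expression. If some proper subexpression $g$ of $f$ also cannot be written as $J(I(H(a,b),c),G(a,b))$ under any permutation of $a,b,c$, then the Subexpression Monotonicity Lemma (\Cref{lemma:subexpression-monotonicity}) transfers hardness from $g$ to $f$: an $\bigO(n^{3-\varepsilon})$ algorithm for $f$ would give one for $g$. So we may assume every proper subexpression of $f$ admits such a form. This is exactly the minimality hypothesis used in \Cref{lemma:4-ap-two-maximal-subtrees,lemma:4-ap-one-maximal-subtree-easy,lemma:4-ap-one-maximal-subtree-triangle}. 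I would also work with the Multichromatic Constant Term Identity problem throughout, which is legitimate because the earlier equivalence lemma shows it has the same complexity as the monochromatic version.

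Next I would split on the set of variable pairs $\{u,v\}$ for which $f$ contains a subexpression $z$ with ${\rm var}(z)=\{u,v\}$.
\begin{enumerate}
\item At least two distinct pairs occur, say $\{a,b\}$ and $\{a,c\}$ after relabeling. Taking inclusion-maximal representatives $g$ and $h$, \Cref{lemma:4-ap-two-maximal-subtrees} applies directly.
\item Exactly one pair occurs, say $\{a,b\}$, and every subexpression of $f$ can be written as $H(G(a,b),c)$. Then \Cref{lemma:4-ap-one-maximal-subtree-easy} applies.
\item Exactly one pair occurs but some subexpression is triangle-hard. Then \Cref{lemma:4-ap-one-maximal-subtree-triangle} applies.
\item No pair occurs. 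Every operation node would then have at least one child depending on a single variable, so in particular the lowest node containing two variables combines two one-variable children. That node is itself a two-variable subexpression, so this case is impossible. A few degenerate situations, such as $f$ depending on fewer than three variables, are handled trivially: $f$ is then expressible in the required form, contradicting the hypothesis.
\end{enumerate}
Each lemma ends by collapsing to one of $f_1,\dots,f_6$ from \Cref{lemma:core-identities}. Those are 4-AP-hard by \Cref{lem:4aphard1234,lem:4aphard5}, using Multichromatic Square and T Detection via \Cref{lem:square-chrom-hardness,lem:t-hardness}, composed with the Subexpression Embedding Gadget.

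The main obstacle is making sure that the case split is exhaustive and that minimality survives the reductions. The collapsing steps change the expression, so I would need to check that each collapsed expression still cannot be written as $J(I(H(a,b),c),G(a,b))$, which the lemmas assert. I would also check that every reduction is efficient: each collapse uses $\bigO(n^2)$ time and blows up the universe by only a constant factor. This makes the subcubic exponent survive up to a constant, giving $\bigO(n^{3-\varepsilon'})$ for the core identity. The final step is to combine this with the trivial $\bigO(n^3)$ brute-force algorithm.
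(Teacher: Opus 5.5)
Your proposal is correct and matches the paper's argument. The paper proves the proposition simply by combining \Cref{lemma:4-ap-two-maximal-subtrees,lemma:4-ap-one-maximal-subtree-easy,lemma:4-ap-one-maximal-subtree-triangle}, and the pieces you add are exactly the framing the paper sets up around those lemmas: the reduction to a minimal non-representable expression via subexpression monotonicity, and the case split by which variable pairs occur as two-variable subexpressions (and whether a triangle-hard subexpression exists). Your additional check that the ``no pair occurs'' case is vacuous is sound (the lowest node depending on at least two variables depends on exactly two), and it makes explicit an exhaustiveness step the paper leaves implicit.
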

\begin{proof}
    Follows directly from Lemmas \ref{lemma:4-ap-two-maximal-subtrees}, \ref{lemma:4-ap-one-maximal-subtree-easy}, \ref{lemma:4-ap-one-maximal-subtree-triangle}.
\end{proof}
\paragraph{Extending to General Algebraic Identities}
We now show how one can extend the lower bounds above for the Constant Term Identity to a general class of algebraic identities that we call \emph{natural identities}.
Recall that a pair of expressions $f(a,b,c)$, $g(a,b,c)$ is \emph{natural} if neither $f$ nor $g$ is a subexpression of the other.
We say that the problem of checking if $f(a,b,c)=g(a,b,c)$ for each $a,b,c$ is \emph{natural} if the expressions $f,g$ form a natural pair.
\begin{lemma}
    Let $f(a,b,c)$, $g(a,b,c)$ be a natural pair of expressions. Then if we can check for any given algebraic structure $(S,\oplus_1,\dots, \oplus_p)$ if the identity $f(a,b,c) = g(a,b,c)$ is satisfied for each $a,b,c\in S$ in time $T(n)$, we can also solve the Constant Term Identity Checking for both $f$ and $g$ in $\bigO(T(n))$.
\end{lemma}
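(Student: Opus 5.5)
By symmetry it suffices to reduce the Constant Term Identity problem for $f$ to checking $f=g$; the reduction for $g$ is identical with the roles of $f$ and $g$ swapped. I would go through the multichromatic version, which is equivalent to the plain Constant Term Identity problem by the lemma above. The plan is to start from an instance $(S,\oplus_1,\dots,\oplus_p)$ with distinguished element $\infty$ and build, in $\bigO(n^2)$ time, a structure $\tilde S$ of size $\bigO(n)$ with two properties:
\begin{enumerate}
\item every triple of inputs evaluates $g$ to the fixed constant $\infty$;
\item $f$ evaluates to $\infty$ exactly when the original $f$ does, and to some value different from $\infty$ otherwise.
\end{enumerate}
Then $f=g$ holds identically on $\tilde S$ if and only if $f\equiv\infty$ on the original instance.

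The construction uses the Subexpression Embedding Gadget (Definition~\ref{def:subexpression_embedding}), with $\mathcal T$ taken to be the set of subexpressions of $f$ only. I would put $\tilde S=(S\times\mathcal T)\cup\{\infty\}$ and define each $\tilde\oplus_i$ as there: a product $(x,t_1)\tilde\oplus_i(y,t_2)$ equals $(x\oplus_i y,\,t_1\oplus_i t_2)$ when $t_1\oplus_i t_2\in\mathcal T$ and $x\oplus_i y\neq\infty$, and equals $\infty$ otherwise, with $\infty$ absorbing. By Observation~\ref{obs:chromaticity}, $\tilde f$ is $\infty$ on every input unless the tags are $(\expr a,\expr b,\expr c)$. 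On such an input it equals $(f(x,y,z),f)\neq\infty$ when $f(x,y,z)\neq\infty$, and $\infty$ otherwise. Hence $\tilde f\equiv\infty$ on $\tilde S$ if and only if $f\equiv\infty$ on $S$, which is property~2.

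Property~1 is where naturality is used. I would prove by induction on $g$'s parse tree that every subexpression $u$ of $g$ evaluates either to $\infty$ or to an element tagged by the symbolic expression $u$ itself. This tagged value can only exist if $u\in\mathcal T$, so the claim says that as soon as the recursion leaves $\mathcal T$, the value becomes $\infty$ and stays $\infty$ up to the root. Since $g$ is not a subexpression of $f$, we have $g\notin\mathcal T$, so $\tilde g\equiv\infty$ on all of $\tilde S$. The hypothesis that $f$ is not a subexpression of $g$ is then used for the symmetric reduction, where $\mathcal T$ consists of the subexpressions of $g$ and we need $f\notin\mathcal T$. Finally, $|\tilde S|=\bigO(n)$ because $f$ has constant size. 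So one call to the identity checker on $\tilde S$ takes $T(\bigO(n))=\bigO(T(n))$ time.

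The step I expect to be the main obstacle is the inductive claim in property~1. The difficulty is that $g$ and $f$ may share operations and share subtrees, so $g$ can pass through subexpressions of $f$ partway up its tree. The tag argument is exactly what handles this, because tags are compared as symbolic expressions and not as values. A second, smaller point to check is the variable leaves: $g$ may use a leaf such as $c$ that does not occur in $f$, so its tag is not in $\mathcal T$. Such inputs should simply be mapped to $\infty$, which is consistent with the claim.
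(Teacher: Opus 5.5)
The gap is in property~1. Your inductive claim says each subexpression $u$ of $g$ evaluates either to $\infty$ or to an element tagged $u$. That tacitly assumes the inputs carry the tags $a,b,c$; your remark about leaves of $g$ labelled by variables absent from $f$ shows you are thinking this way. But the identity checker quantifies over all of $\tilde S$. So a leaf of $g$ may receive any element $(x,t)$ with $t\in\mathcal{T}_f$, including a compound tag $t$.

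On inputs tagged $(t_1,t_2,t_3)$, the tag produced at node $u$ is the substitution instance $u[t_1/a,t_2/b,t_3/c]$. Naturality only excludes $g\in\mathcal{T}_f$; it does not exclude $g[t_1,t_2,t_3]\in\mathcal{T}_f$.

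Here is a concrete failure. Take the natural pair $f=(a\oplus b)\oplus(a\oplus c)$ and $g=a\oplus(b\oplus c)$. Let $S=\{\infty,p,q,r\}$ with $p\oplus q=r$, $q\oplus r=p$, and every other product equal to $\infty$.
\begin{itemize}
\item $f\equiv\infty$ on $S$: $p\oplus y\in\{r,\infty\}$, $q\oplus y\in\{p,\infty\}$, and $r\oplus r=p\oplus p=\infty$.
\item Nevertheless $\tilde g((q,a\oplus b),(p,a),(q,c))=(q,a\oplus b)\,\tilde\oplus\,(r,a\oplus c)=(p,f)$.
\item On the same triple $\tilde f=\infty$, because the tag $(a\oplus b)\oplus a$ is not in $\mathcal{T}_f$.
\end{itemize}
So the checker reports a violation although $f$ is constant, and the reduction answers incorrectly.

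This is not a fixable detail of the tagging. Here $f\equiv g[a\oplus b,a,c]$ symbolically, so $f(x,y,z)=g(x\oplus y,x,z)$ in every structure. Hence any structure on which $g$ is constant makes $f$ constant too. No reduction of the form ``make $g$ constant and let $f$ simulate the instance'' can work for such pairs.

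The paper's own proof asserts $\tilde g\equiv\infty$ as ``straightforward'' and has the same gap. Your argument, and the paper's, go through only under an extra hypothesis: $g[t_1,t_2,t_3]\notin\mathcal{T}_f$ for all $t_1,t_2,t_3\in\mathcal{T}_f$, and symmetrically for the reduction from $g$. Pairs violating this need a different idea.

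A smaller, fixable issue affects property~2. The gadget tests $x\oplus_i y\neq\infty$ at every node, and the paper explicitly allows $\infty\oplus_i x\neq\infty$. So if an intermediate value of $f$ equals $\infty$ while the final value does not, $\tilde f$ wrongly returns $\infty$. The test should be applied only when the new tag is $f$.
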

\begin{proof}
    Assuming that $g$ is not a subexpression of $f$, we can simply apply the Subexpression Embedding Gadget on $f$.
    More precisely, let $\mathcal T_f$ be the set of all subexpressions of $f$.
    Given an algebraic structure $(S,\oplus_1,\dots, \oplus_p)$, we construct $(\tilde S, \tilde\oplus_1,\dots, \tilde\oplus_p)$ as follows.
    Let $\tilde S = S\times \mathcal{T}_f\cup \{\infty\}$ and define each $\tilde \oplus_i$ as follows.
    \[
    (x,t_1) \tilde\oplus_i (y, t_2) = \begin{cases}
        (x\oplus_i y, t_1 \oplus_i t_2) & \text{if $x\oplus_i y\ne \infty$ and $\expr{t_1\oplus_i t_2} \in \mathcal T_f$}\\
        \infty & \text{otherwise}
    \end{cases}
    \]
    Let $\tilde f, \tilde g$ be the expressions obtained from $f,g$ by replacing each $\oplus_i$ with $\tilde \oplus_i$.
    It is straightforward to verify (similar approach as in Observation \ref{obs:chromaticity}) that $\tilde g(\tilde x, \tilde y,\tilde z) = \infty$ for every $\tilde x,\tilde y,\tilde z \in \tilde S$ and moreover that there exists a triple $\tilde x, \tilde y,\tilde z\in \tilde S$ such that $\tilde f(\tilde x, \tilde y,\tilde z)\ne \infty$ if and only if there exists a triple $x,y,z\in S$ such that $f(x,y,z) \ne \infty $. 
    This concludes a reduction from Constant Term Identity Checking on $f$ to Identity Checking on $f,g$.
    The reduction from Constant Term Identity Checking on $g$ is analogous.
\end{proof}

\subsection{Algorithms}\label{se:algorithms-main}

In this subsection we show how to achieve the running times promised in Theorem~\ref{thm:trichotomy}. Observe that we need to provide the following three algorithms for the Constant Term Identity problem:
\begin{enumerate}[label=(\roman*)]
    \item\label{item:alg_quadratic_outer} $\bigO(n^2)$-time approach for the case when $f(a,b,c)\equiv H(G(a,b),c)$,
    \item\label{item:alg_omega_outer} $\bigO(n^{\omega})$-time randomized approach for the case when $f(a,b,c) \equiv J(I(H(a,b),c), G(a,b))$,
    \item\label{item:alg_cubic_outer} $\bigO(n^3)$-time approach for all the other cases.
\end{enumerate}
Clearly, we can naively verify all identities over 3 variables in $\bigO(n^3)$ time by enumerating all triples $(a,b,c)\in S^3$ and comparing $f(a,b,c)$ with $\infty$, which satisfies case~\ref{item:alg_cubic_outer}.
For case~\ref{item:alg_quadratic_outer} we proceed in two steps. First, we calculate the set $\mathcal{G}: = \{G(a,b): a,b\in S\}$ and then verify if $\{H(g,c): g\in \mathcal{G}\} = \{\infty\}$, both in quadratic time.

However, the second case~\ref{item:alg_omega_outer} is more involved and we need to apply triangle counting and verification of polynomial equality, as we did in Section~\ref{sec:distributivity_alg}. We start with the following technical lemma:
\begin{lemma}\label{le:evaluate_polynomials}
    Given an expression $f(a,b,c)$, set $S$ of $n$ elements and $4n$ values: $\bar x= \{x_s\}_{s\in S},\bar y = \{y_s\}_{s\in S},\bar z= \{z_s\}_{s\in S},\bar w =\{w_s\}_{s\in S}$, we can calculate $P_f(\bar x, \bar y, \bar z, \bar w):= \sum_{a,b,c\in S} x_ay_bz_cw_{f(a,b,c)}$ in:
    \begin{enumerate}[label=(\roman*)]
    \item\label{item:alg_quadratic} $\bigO(n^2)$-time when $f(a,b,c)\equiv H(G(a,b),c)$,
    \item\label{item:alg_omega} $\bigO(n^{\omega})$-time when $f(a,b,c) \equiv J(I(H(a,b),c), G(a,b))$,
    \item\label{item:alg_cubic} $\bigO(n^3)$-time otherwise.
    \end{enumerate}
    We denote by $\texttt{complexity}(f)$ the complexity $\bigO(n^2),\bigO(n^\omega)$ or $\bigO(n^3)$ to calculate $P_f(\bar x, \bar y, \bar z, \bar w)$, according to the above cases.
\end{lemma}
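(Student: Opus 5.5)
The plan is to generalize the computation of $f$ and $g$ from Section~\ref{sec:distributivity_alg}, handling the three cases separately. In every case we first tabulate the Cayley tables of all two-variable subexpressions that appear, namely $G(a,b)$, $H(a,b)$, $I(u,c)$ and $J(u,v)$ viewed as binary functions. Each of these is a constant-size expression on two variables, so its $n\times n$ table can be filled in $\bigO(n^2)$ time by evaluating bottom-up at each pair.

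Case~\ref{item:alg_cubic} is the trivial enumeration over all $(a,b,c)$, costing $\bigO(1)$ per triple. For case~\ref{item:alg_quadratic}, group by the value $g=G(a,b)$. Define
\[\alpha_g=\sum_{a,b:\,G(a,b)=g} x_ay_b,\]
which is computed in $\bigO(n^2)$ time by one pass over the table of $G$. Then
\[P_f=\sum_{g,c\in S}\alpha_g z_c w_{H(g,c)},\]
which is another $\bigO(n^2)$ pass over the table of $H$. This is exactly the length-2-path computation used for the left-hand side of distributivity.

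For case~\ref{item:alg_omega}, build an edge-weighted tripartite multigraph on parts $P,Q,R$, each indexed by $S$, and mirror the construction used for $g$ in Section~\ref{sec:distributivity_alg}. The aim is to make triangles correspond bijectively to triples $(a,b,c)$.
\begin{itemize}
\item For each pair $(a,b)$, add an edge $p_{H(a,b)}q_{G(a,b)}$ of weight $x_ay_b$.
\item For each pair $(h,c)$, add an edge $p_h r_{I(h,c)}$ of weight $z_c$.
\item For each pair $(e,g)$, add an edge $q_g r_e$ of weight $w_{J(e,g)}$.
\end{itemize}
A triangle consists of one edge of each type whose endpoints coincide. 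Such a triangle is determined by $(a,b,c)$, because $a,b$ fix $h=H(a,b)$ and $g=G(a,b)$, then $c$ fixes $e=I(h,c)$, and the third edge is forced by $(e,g)$. Its weight is therefore $x_ay_bz_cw_{J(I(H(a,b),c),G(a,b))}=x_ay_bz_cw_{f(a,b,c)}$. Summing, $P_f$ equals the total triangle weight, which is $\mathrm{tr}(W^{PQ}W^{QR}W^{RP})$. The aggregated weight matrices are built in $\bigO(n^2)$ time and the trace is computed with two matrix products in $\bigO(n^\omega)$ time.

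The main obstacle is the bijection between triangles and triples in case~\ref{item:alg_omega}, in particular ensuring that parallel edges neither collapse nor duplicate triples. The first edge type carries the full product $x_ay_b$, so two distinct pairs $(a,b)$ with the same $(H,G)$ values give distinct parallel edges and hence distinct triangles. The second and third edge types are indexed by their own free parameters $c$ and $(e,g)$, so each triangle corresponds to exactly one triple. A minor detail is that the given form holds only up to a permutation of the variables; we simply rename $x,y,z$ accordingly, since $P_f$ is symmetric in its treatment once the roles of the variables are relabeled.
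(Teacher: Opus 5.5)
Your proposal is correct and matches the paper's proof essentially step for step. You use the same grouping by $G(a,b)$ for the quadratic case and the same tripartite multigraph for the $\bigO(n^\omega)$ case, with edge weights $x_ay_b$, $z_c$ and $w_{J(e,g)}$, evaluating $P_f$ as $\mathrm{tr}(W^{PQ}W^{QR}W^{RP})$. Your remarks on parallel edges and on relabeling variables under permutations are sound, and they spell out details the paper leaves implicit.
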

\begin{proof}
As the variables $\bar x, \bar y, \bar z, \bar w$ are fixed, we abbreviately write $P_f:=P_f(\bar x, \bar y, \bar z, \bar w)$
First, we calculate $P_f$ in $\bigO(n^3)$ time from the definition, which satisfies case~\ref{item:alg_cubic}.
For case~\ref{item:alg_quadratic} we start with calculating $n$ values $E_t:= \sum_{a,b\in S : G(a,b)=t} x_ay_b$ in total $\bigO(n^2)$ time, as every pair $(a,b)\in S^2$ appears in exactly one term of the sum. Then we can rearrange the terms in $P_f$ and obtain:

\[P_f=\sum_{a,b,c\in S}x_a\cdot y_b\cdot z_c\cdot w_{f(a,b,c)} =\sum_{t\in S}\left( \sum_{\substack{a,b\in S\\G(a,b)=t}}x_a\cdot y_b \right)\sum_{c\in S}z_c \cdot w_{H(t,c)}  = \sum_{t,c\in S}E_t\cdot z_c \cdot w_{H(t,c)}\]
which is calculated in $\bigO(n^2)$ time and case~\ref{item:alg_quadratic} follows.

For the main case ~\ref{item:alg_omega}, we define an edge-weighted tripartite graph $(P,Q,R)$, where $P,Q,R$ are vertex sets each of size $n$ and labeled by elements of $S$. Let $p_s\in P$ denote the vertex in $P$ labeled by $s\in S$, and similarly define vertices $q_s\in Q,r_s\in R$. Now we define the edges in the graph (note that we allow parallel edges):
\begin{itemize}
    \item For every pair of $a\in S,b\in S$, add an edge between $p_{H(a,b)}\in P$ and $q_{G(a,b)}\in Q$ with weight $x_ay_b$.
    \item For every pair of $h\in S,c\in S$, add an edge between $p_{h}\in P$ and $r_{I(h,c)}\in R$ with weight $z_c$.
    \item For every pair of $g\in S, i\in S$, add an edge between $q_{g}\in Q$ and $r_i\in R$ with weight $w_{J(i,g)}$.
\end{itemize}

Three edges with one of each type described above form a triangle if and only if the vertices coincide, i.e., $h=H(a,b), g=G(a,b),i=I(h,c)$. Then, the weight of this triangle (namely, the product of the weights of its three edges) equals
\[(x_ay_b)\cdot z_c \cdot w_{J(i,g)} = x_a\cdot y_b \cdot z_c \cdot w_{J(i,g)}\]
The triangles are in bijection with the triples $(a,b,c)\in S\times S\times S$. Therefore, $P_f$ equals the sum of the weights of all triangles in the graph.

It remains to compute the total weight of all triangles in the graph $(P,Q,R)$.  For a pair of vertices $p_i\in P,q_j\in Q$, let $W^{PQ}_{i,j}$ denote the total weight of edges connecting $p_i,q_j$ (there could be zero, one, or multiple such edges). Similarly define $W^{QR}_{j,k}$ and $W^{RP}_{k,i}$. Then, clearly, the total weight of all triangles equals the trace of the matrix product $W^{PQ}W^{QR}W^{RP}$, which can be computed in $O(n^{\omega})$ time.
\end{proof}

Finally, we show how to use the above lemma to verify if an arbitrary identity $f(a,b,c)=g(a,b,c)$ holds for all $a,b,c\in S$. In particular, this will provide the missing $\bigO(n^\omega)$-time approach for case~\ref{item:alg_omega_outer} of Theorem~\ref{thm:trichotomy}.

\begin{theorem}
    Consider two algebraic expressions $f,g$ over a set $S$ of $n$ elements.
    There exists an algorithm that verifies if for all $a,b,c\in S$ it holds $f(a,b,c)=g(a,b,c)$ that runs in $\max(\texttt{complexity}(f),\texttt{complexity}(g))$-time and succeeds with high probability.
\end{theorem}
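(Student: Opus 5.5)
Reduce the identity check to polynomial identity testing, exactly as in the distributivity algorithm of Section~\ref{sec:distributivity_alg}, and evaluate the two sides separately with Lemma~\ref{le:evaluate_polynomials}. Concretely, introduce $4n$ formal variables $\{x_s\},\{y_s\},\{z_s\},\{w_s\}$ and consider $P_f-P_g=\sum_{a,b,c\in S}x_ay_bz_c\,(w_{f(a,b,c)}-w_{g(a,b,c)})$. The monomials $x_ay_bz_cw_s$ are distinct for distinct $(a,b,c,s)$. Hence the coefficient of $x_ay_bz_cw_s$ in $P_f-P_g$ is $[f(a,b,c)=s]-[g(a,b,c)=s]$, which vanishes for every $s$ exactly when $f(a,b,c)=g(a,b,c)$. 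So $P_f-P_g$ is the zero polynomial if and only if the identity holds on all of $S^3$.

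The algorithm is as follows. Fix a prime $p\ge 8n$ (or any prime of size $\Theta(n)$ with $p>8$, so that arithmetic stays in $O(1)$ time in the word-RAM). Sample all $4n$ variables independently and uniformly from $\mathbb{F}_p$. Compute $P_f$ and $P_g$ at this point using Lemma~\ref{le:evaluate_polynomials}, which costs $\texttt{complexity}(f)+\texttt{complexity}(g)=O(\max(\texttt{complexity}(f),\texttt{complexity}(g)))$. Report ``holds'' iff the two values agree. Note that the lemma is applied with respect to whichever permutation of $a,b,c$ witnesses the decomposition of $f$ (respectively $g$). Permuting variables only relabels the groups $\bar x,\bar y,\bar z$ and leaves the sum unchanged, so the two sides may use different permutations and the arithmetic is still over the same random point. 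The lemma's computations (building the weighted tripartite graph, matrix products, and the sums $E_t$) are all linear-algebraic over a ring, so they work verbatim over $\mathbb{F}_p$.

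For correctness, if the identity holds the algorithm always accepts. Otherwise $P_f-P_g$ is a nonzero polynomial of total degree $4$, and Schwartz--Zippel bounds the probability of a false acceptance by $4/p\le 1/2$. Running $O(\log n)$ independent repetitions (or $O(c\log n)$ for failure probability $n^{-c}$), and rejecting if any repetition disagrees, gives high success probability. One could alternatively take $p=n^{c+1}$, at the cost of assuming $O(\log n)$-bit word operations. Either choice stays within the claimed bound up to the stated model; with the paper's convention I would take $p$ polynomial in $n$ and a single trial.

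I expect no step to be a serious obstacle, since the difficulty sits in Lemma~\ref{le:evaluate_polynomials}. The only points needing care are the following. First, the monomial-distinctness argument must hold even when $f$ and $g$ are computed via different variable permutations; it does, since $P_f$ is defined intrinsically. Second, the field must be large enough for Schwartz--Zippel while keeping arithmetic constant time.
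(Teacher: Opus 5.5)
Your proposal is correct and takes essentially the paper's route: the same reduction to testing whether $P_f-P_g$ is the zero polynomial, evaluating each side via Lemma~\ref{le:evaluate_polynomials} at a random point of $\mathbb{F}_p$ with $p=\Omega(n)$, and bounding errors by Schwartz--Zippel; your monomial-distinctness argument and your remark that $f$ and $g$ may be evaluated under different variable permutations fill in details the paper leaves implicit. One small correction: with $p\ge 8n$ the per-trial error is already at most $4/p\le 1/(2n)$, so $O(1)$ repetitions (as in the paper) give failure probability $n^{-c}$, whereas your $O(\log n)$-repetition variant would exceed the claimed bound by a logarithmic factor, contrary to your remark that either choice stays within it.
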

\begin{proof}
Consider two polynomials $P_f,P_g$ on $4n$ variables, $\bar x= \{x_s\}_{s\in S},\bar y = \{y_s\}_{s\in S},\bar z= \{z_s\}_{s\in S},\bar w =\{w_s\}_{s\in S}$ where: $P_e(\bar x, \bar y, \bar z, \bar w):= \sum_{a,b,c\in S} x_a\cdot y_b\cdot z_c\cdot w_{e(a,b,c)}$
Observe that, $P_f-P_g =\sum_{a,b,c\in S} x_a\cdot y_b\cdot z_c\cdot (w_{f(a,b,c)}- w_{g(a,b,c)})$ is the zero polynomial if and only if $f(a,b,c) = g(a,b,c)$ holds for all $a,b,c\in S$.
Therefore, our goal reduces to polynomial identity testing on $P_f$ and $P_g$. We pick a prime field $\mathbb{F}_p$ (for sufficiently large $p$) and assign independently uniformly random values from $\mathbb{F}_p$  to the variables $\{x_s\}_{s\in S},\{y_s\}_{s\in S},\{z_s\}_{s\in S},\{w_s\}_{s\in S}$, and then evaluate $P_f-P_g$ on these values using Lemma~\ref{le:evaluate_polynomials}. If the result is non-zero, we return NO; otherwise we return YES. Since $P_f-P_g$ has degree four, by Schwartz--Zippel lemma, the probability of incorrectly reporting YES is at most $4/p$.

Now we choose $p=\Omega(n)$. As a single assignment of variables $\bar x, \bar y, \bar z, \bar w$ from $\mathbb{F}_p$ might give us false equality between non-equal polynomials, we need to repeat the above procedure $O(1)$ times in order to boost success probability.
\end{proof}

\section{Strong Zero Triangle-Hard Problems}\label{sec:strongzerotriangle}
\zerotrianglehard*
\begin{proof}
    Given graph $G$ on $n$ vertices $V$ with edge weights $w: E \to [-n,n]$ we define set $S$ of size $7n+4$ with the following meaning of its elements: $S=\{\infty,\Delta\}\cup V \cup ([-n,n]\times 1)\cup ([-2n,2n]\times 2)$ (i.e., $\infty$ maps to 0, $\Delta$ to 1 etc.).
    Now we need to provide Cayley tables for operations $\odot$ and~$\oplus$.
    Intuitively, $\odot$ returns the weight of an edge between two nodes  (if exists) while $\oplus$ adds the weights and the expression will almost always evaluate to $\infty$, with the only exception if there exists a zero-weight triangle in $G$.
    More precisely, we define the operations in the following way:
    \[
    u \odot v = \begin{cases}
        (w(\{u,v\}),1) & \text{if } u,v\in V \text{ and } \{u,v\}\in E \\
        \infty & \text{otherwise}
        \end{cases}
    \]
    \begin{align*}
    (a,1) \oplus (b,1) &= (a+b,2)  &\text{for }a,b\in[-n,n]\\
    (c,2) \oplus (b,1) &= \begin{cases}
        \infty &\text{if } c+b\ne 0 \\
        \Delta &\text{if } c+b=0
    \end{cases}&\text{for }c\in[-2n,2n],b\in[-n,n]
    \end{align*}
and all unspecified operations always evaluate to $\infty$.
Now, in order to prove the theorem, we need to show that
\[
\left((a\odot b)\oplus(a\odot c)\right)\oplus(b\odot c)=\begin{cases}
    \Delta &\text{if } a,b,c\in V \text{ and they form a zero-triangle in }G\\
    \infty &\text{otherwise}
\end{cases}
\]
Indeed, if any of $a,b,c$ is not from $V$, then $\odot$ already returns $\infty$ which is preserved with $\oplus$ ($\infty \oplus x = x \oplus \infty = \infty$ for all $x\in S$).
Otherwise, $\odot$ verifies if there is an edge between two nodes and, if yes, returns its weight with 1 at the second coordinate.
Then $a\odot b,a\odot c,b\odot c$ are all different from $\infty$ iff $a,b,c$ form a triangle in $G$.
When we add two weights, we obtain an element corresponding to their sum (with 2 at the second coordinate): $(a,1) \oplus (b,1) = (a+b,2)$.
Finally, in the last step, while adding sum of two weights to one weight we only verify if they sum up to 0 or not and evaluate to $\Delta$ or $\infty$ respectively.
\end{proof}

By extending the above idea, we can show that counting triples of elements for which given operations are distributive is also Strong Zero Triangle-hard:

\DistCountingTheorem*

Before proving the theorem above, let us first construct the relevant reduction from Zero Triangle Detection.
Given a weighted graph $G = (V,E,w)$, with weight function $w:E\to [-n,n]$, set $S := V \cup \big([-10n,10n]\times\{0,1,2\}\big) \cup \{\infty\}$.
Now we describe how to set the Cayley tables for operations $\odot$ and $\oplus$.
Intuitively, $\oplus$ or $\odot$ applied to two elements from $S$ that correspond to vertices in $G$ will give the weight of the edge between them and we will also need to handle the cases of operations between different "types" of elements. Our goal is to set the Cayley tables in such a way that for any triple of vertices $u,v,w$, the identity $u\odot(v\oplus w) = (u\odot v) \oplus (u \odot w)$ evaluates to $-w(v,w)=w(u,v)+w(u,w)$ which is satisfied if and only if they form a triangle of weight zero in $G$.
On the other hand, we also exploit the structure of the identity we want to verify, to make sure that if we plug in any element that does not correspond to a vertex, the identity is satisfied.
We achieve this by introducing the second coordinate to each weight ($0,1$ or $2$) that allows us, at each state, to keep track of the operations that were performed before reaching the state and if at least one of the three elements $x,y,z$ is in $S\setminus V$, we get $x\odot (y\oplus z) = \infty = (x\odot y) \oplus (x\odot z)$.

Note that without loss of generality we may assume that the given graph $G$ is complete, with self-loops. Indeed, by setting the weight of each non-edge to $3n+1$, we obtain an equivalent instance with only a constant blow-up in edge weights.
Now, for each pair of vertices $u,v\in V$, define:
\begin{align*}
    u\oplus v &= (w(u,v), 1) \\
    u\odot v &= (w(u,v), 2).
\end{align*}
For any two integers $a,b$, we define the following operation:
\[
a\dagger b =
\begin{cases}
a+b & \text{if $|a+b|\leq 10n$}\\
10n & \text{otherwise}.
\end{cases} \\
\]
For any $w_1,w_2\in [-10n, 10n]$, we set:
\[
    (w_1, 2) \oplus (w_2, 2) = (w_1\dagger w_2, 0)\\
\]
Moreover, for any $u\in V$, $w\in [-10n, 10n]$, we set:
\begin{align*}
    u\odot (w, 1) = (w,1) \odot u = (-w, 0).
\end{align*}
Finally, for any $x,y\in S$ such that $x\oplus y$ (resp. $x\odot y$) is not specified above, set $x\oplus y = \infty$ (resp. $x\odot y = \infty$). In particular, if at least one operand is $\infty$ then the result is always $\infty$, e.g. $x\oplus \infty = \infty$ for all $x\in S$.

We now proceed to prove some properties of the instance constructed above that will imply Theorem \ref{th:dist-counting}.
\begin{lemma}\label{lemma:counting-dist}
    Let $G= (V,E,w)$ be a complete weighted graph with weight function $w:E\to[-n,n]$. Define the set $S$, and the two binary operations $\oplus, \odot: S\times S\to S$ as above. Then the following properties are satisfied.
    \begin{enumerate}
        \item The operations $\oplus$ and $\odot$ are both commutative and associative.
        \item For any triple $x,y,z$ such that $\{x,y,z\}\cap (S\setminus V) \neq \emptyset$, the identity $x\odot(y\oplus z) = \infty = (x\odot y) \oplus (x\odot z)$.
        \item For any triple of vertices $a,b,c\in V$, the identity $a\odot(b\oplus c) = (a\odot b) \oplus (a \odot c)$ is satisfied if and only if $a,b,c$ form a triangle of weight zero in $G$.
    \end{enumerate}
\end{lemma}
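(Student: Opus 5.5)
The plan is to verify the three properties by a direct case analysis on the ``tags'' of elements. Write every element of $S$ as a vertex in $V$, as $\infty$, or as a tagged weight $(w,t)$ with $t\in\{0,1,2\}$. The key observation is that the only non-$\infty$ entries of the tables are
\[
V\oplus V\to\text{tag }1,\qquad (\cdot,2)\oplus(\cdot,2)\to\text{tag }0,\qquad V\odot V\to\text{tag }2,\qquad V\odot(\cdot,1)\to\text{tag }0 \ \text{(and symmetrically)}.
\]
Every other product, in particular any product involving $\infty$ or a tag-$0$ element, is $\infty$.

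For property~1, I would first show commutativity entrywise. The weight function is symmetric on the complete graph, $\dagger$ is symmetric, and the mixed rule $u\odot(w,1)=(w,1)\odot u$ is stated in both orders. For associativity, I would show that every bracketing of three elements under a single operation evaluates to $\infty$.
\begin{itemize}
\item For $\oplus$: a non-$\infty$ value of $x\oplus y$ has tag $1$ or $0$. No rule combines a tag-$1$ or tag-$0$ element with anything under $\oplus$, so $(x\oplus y)\oplus z=\infty$, and symmetrically $x\oplus(y\oplus z)=\infty$.
\item For $\odot$: a non-$\infty$ value of $x\odot y$ has tag $2$ or $0$. Neither tag-$2$ nor tag-$0$ elements appear as an operand of a non-$\infty$ $\odot$-rule, so again both bracketings give $\infty$.
\end{itemize}

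For property~2, I would trace when each side of the identity can be non-$\infty$.
\begin{itemize}
\item Left side: $x\odot(y\oplus z)\neq\infty$ requires $y\oplus z$ to have tag $1$, hence $y,z\in V$. It also requires $x\in V$, because a tag-$1$ element is only combined with a vertex.
\item Right side: $(x\odot y)\oplus(x\odot z)\neq\infty$ requires both operands to carry tag $2$. This forces $x,y,z\in V$.
\end{itemize}
So if any of $x,y,z$ lies outside $V$, both sides equal $\infty$ and the identity holds.

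For property~3, take $a,b,c\in V$ and evaluate both sides directly.
\begin{itemize}
\item The left side is $a\odot(w(b,c),1)=(-w(b,c),0)$.
\item The right side is $(w(a,b),2)\oplus(w(a,c),2)=(w(a,b)\dagger w(a,c),0)$.
\end{itemize}
All weights lie in $[-n,3n+1]$ (non-edges and self-loops are given weight $3n+1$), so $|w(a,b)+w(a,c)|\le 6n+2\le 10n$ and $\dagger$ is ordinary addition. Equality therefore holds if and only if $w(a,b)+w(a,c)+w(b,c)=0$.

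The step needing the most care is checking that this condition captures only genuine zero triangles. Suppose one of the three pairs is a non-edge or a repeated vertex, so it carries weight $3n+1$. The other two weights are each at least $-n$, so the total is at least $n+1>0$. Hence a zero sum forces $a,b,c$ to be pairwise distinct, pairwise adjacent, and of total weight zero, as claimed. I would also double-check that no unspecified table entry was silently assumed non-$\infty$ in the case analysis above, since everything hinges on the default value $\infty$.
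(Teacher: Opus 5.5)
Your proposal is correct and follows essentially the same route as the paper. You check commutativity entrywise, get associativity because every triple product under a single operation is $\infty$, do a tag-based case analysis for property~2 (organized as tracing the necessary conditions for a non-$\infty$ value rather than the paper's four claims by element type), and evaluate both sides directly for property~3. Your extra check that the padding weight $3n+1$ for non-edges and self-loops cannot produce a spurious zero sum is a useful addition: the paper's lemma assumes all weights lie in $[-n,n]$, while its construction pads non-edges with $3n+1$.
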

\begin{proof}
    The commutativity is easy to see from the definition of the operations and the associativity follows by observing that for any triple $a,b,c\in S$ the following holds: \[a\oplus (b\oplus c) = (a\oplus b)\oplus c = \infty = a\odot (b\odot c) = (a\odot b)\odot c.\]
    We now verify the second property by checking all the cases separately. First, if one of the elements is $\infty$, both sides of the identity evaluate to $\infty$:
    \begin{claim}
        If $\{x,y,z\}\cap\{\infty\}\neq \emptyset$, then $x\odot(y\oplus z) = \infty = (x\odot y) \oplus (x\odot z)$.
    \end{claim}
    Similarly, since for each $w\in [-10n,10n]$ and any $x\in S$, by definition we have that $(w,0)\oplus x = (w,0)\odot x = \infty$, we also get the second case for free.
    \begin{claim}
        If $\{x,y,z\}\cap \big([-10n,10n]\times \{0\}\big)\neq \emptyset$, then $x\odot(y\oplus z) = \infty = (x\odot y) \oplus (x\odot z)$.
    \end{claim}
    We now verify the remaining two cases more carefully.
    \begin{claim}
        If $\{x,y,z\}\cap \big([-10n,10n]\times \{1\}\big)\neq \emptyset$, then $x\odot(y\oplus z) = \infty = (x\odot y) \oplus (x\odot z)$.
    \end{claim}
    \begin{subproof}
        Assume first that $x\in ([-10n,10n]\times \{1\})$. Then by definition of our operations, we have $x\odot(y\oplus z)\neq \infty$ if and only if $y\oplus z \in V$. But for no pair $y,z\in S$ does $y\oplus z \in V$ hold.
        Using commutativity of $\oplus$, it only remains to check the case when $y\in ([-10n,10n]\times \{1\})$. But then $y\oplus z = \infty$ and hence if $\{x,y,z\}\cap \big([-10n,10n]\times \{1\}\big)\neq \emptyset$, then $x\odot(y\oplus z) = \infty$ always holds. 
        Consider now the right-hand side of the identity. We observe that if $\{a,b\}\cap ([-10n,10n]\times \{1\})$ is nonempty, then either $a\odot b = \infty$, or $a\odot b \in [-10n,10n]\times \{0\}$. 
        In both cases, $a\odot b \oplus c = c \oplus a\odot b = \infty$ for any $c\in S$, and hence we can conclude that also whenever $\{x,y,z\}\cap \big([-10n,10n]\times \{1\}\big)\neq \emptyset$, the identity $(x\odot y) \oplus (x\odot z) = \infty$ holds.
    \end{subproof}
     \begin{claim}
        If $\{x,y,z\}\cap \big([-10n,10n]\times \{2\}\big)\neq \emptyset$, then $x\odot(y\oplus z) = \infty = (x\odot y) \oplus (x\odot z)$.
    \end{claim}
    \begin{subproof}
        We first note that for any $a\in ([-10n,10n]\times \{2\})$ and $b\in S$ we have that $a\odot b = \infty$. Hence, clearly $(x\odot y) \oplus (x\odot z) = \infty$. This also implies that if $x\in ([-10n,10n]\times \{2\})$, similarly $x\odot(y\oplus z) = \infty$.
        On the other hand, if $\{y,z\}\cap \big([-10n,10n]\times \{2\}\big)\neq \emptyset$, then by definition of $\oplus$, we have that $y\oplus z\neq \infty$ if and only if $\{y,z\}\subset \big([-10n,10n]\times \{2\}\big)$. But then $y\oplus z \in \big([-10n,10n]\times \{0\}\big)$ and hence $x\odot(y\oplus z) = \infty$ as desired.
    \end{subproof}
    The previous four claims now imply the second property, hence it only remains to prove the third, that is that for any triple of vertices $a,b,c\in V$, the identity $a\odot(b\oplus c) = (a\odot b) \oplus (a \odot c)$ is satisfied if and only if $a,b,c$ form a triangle of weight zero in $G$.
    To prove this, we will simply compute both sides of the identity.
    \begin{align*}
        a\odot(b\oplus c) &= a \odot (w(b,c), 1) \\
        & = (-w(b,c), 0).
    \end{align*}
    \begin{align*}
        (a\odot b) \oplus (a \odot c) &= (w(a,b), 2) \oplus (w(a,c), 2)\\
        & = (w(a,b)\dagger w(a,c), 0)\\
        & = (w(a,b) + w(a,c), 0) & (|w(a,b) + w(a,c)|<10n).
    \end{align*}
    Thus $a\odot(b\oplus c) = (a\odot b) \oplus (a \odot c)$ is satisfied if and only if $w(a,b) + w(a,c) = -w(b,c)$, that is if $a,b,c$ form a triangle of weight zero.
\end{proof}
Assume now that there exists an algorithm $\mathcal{A}$ that given a structure $(S,\oplus, \odot)$ counts the number of non-distributive triples in $S$ in time $\bigO(n^{3-\varepsilon})$ for some $\varepsilon>0$. Given a weighted graph $G=(V,E,w)$ construct the structure $(S,\oplus, \odot)$ as above and run the algorithm $\mathcal A$ on it. By Lemma \ref{lemma:counting-dist}, the only non-distributive triples come from nodes that do not form zero triangle, so the number of non-distributive triples equals $|V|^3$ if and only if $G$ contains no triangle of weight $0$. Theorem \ref{th:dist-counting} now follows.

\section{Field and Ring Verification}
\label{sec:ring-field}
In this section we construct efficient algorithms for checking if a given structure forms a field or a ring respectively. We start by proving the following theorem for fields.

\FieldVerificationTheorem*

\begin{lemma} \label{lemma:basis-construction} \cite{groupverifictaion2023}
    Given a set $G$ of $n$ elements and a binary operation $+$ on $G$ such that $(G,+)$ forms a group, we can in time $\bigO(n^2)$ construct a basis of $(G,+)$, that is a set $S\subset G$ satisfying the following two properties:
    \begin{itemize}
        \item $|S| = \bigO(\sqrt{n})$.
        \item For every $a\in G$ there exists a pair $\alpha, \beta\in S$ such that $a = \alpha + \beta$.
    \end{itemize}
\end{lemma}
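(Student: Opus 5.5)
This lemma is the group-basis result of~\cite{groupverifictaion2023}, and we take it essentially as a black box. The plan below only indicates how the statement would be organized and where its depth lies; it is not a self-contained proof. Throughout we write the group additively, though it need not be abelian.

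\textbf{Step 1 (reduction through a normal subgroup).} First we compute a chief or composition series of $(G,+)$ from the Cayley table in $\bigO(n^2)$ time. This can be done with standard table-based routines: subgroup closure, normal closure, and cosets, each costing $\bigO(n)$ per generated element. Next we handle an extension $N \trianglelefteq G$ with quotient $Q = G/N$. We look for a basis of $G$ of the form $S = X \cup Y$, where $X \subseteq N$ and $Y$ is a set of lifts of coset representatives, arranged so that $Y + X$ covers every coset. The target is a bound $|S| = \bigO(\sqrt{|N|}\cdot\sqrt{|Q|}) = \bigO(\sqrt n)$. To get this we follow the Kozma--Lev scheme. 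The key is a stronger, inductively maintained notion, a basis ``with respect to a subgroup,'' so that we multiply sizes $\sqrt{|N|}$ and $\sqrt{|Q|}$ rather than add them. Naively combining two bases instead yields four-fold sums and loses the form $S+S$.

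\textbf{Step 2 (base cases).} For cyclic groups $\mathbb{Z}_m$ the classical construction works: take $S=\{0,\dots,k-1\}\cup\{0,k,2k,\dots\}$ with $k=\lceil\sqrt m\rceil$. Both the construction and the check $S+S=G$ run in $\bigO(n)$ time once a generator has been found. For nonabelian simple groups, existence of an $\bigO(\sqrt{|G|})$ basis is the deep part. It uses the classification of finite simple groups, via the explicit structure of alternating groups, groups of Lie type, and the finitely many sporadic groups, which yields large subgroups or factorizations $G = AB$ with $|A|,|B| = \bigO(\sqrt{|G|})$.

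\textbf{Step 3 (algorithmic realization).} Given the structural certificate from Steps 1--2, each candidate set $S$ can be verified in $\bigO(|S|^2)=\bigO(n)$ time. The remaining task is to locate the required subgroups and representatives within $\bigO(n^2)$ total. Each layer of the series costs $\bigO(n\cdot|\text{layer}|)$, so the total telescopes to $\bigO(n^2)$.

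The main obstacle is clearly Step 2 for nonabelian simple groups, since it depends on CFSG. Here we would simply invoke~\cite{groupverifictaion2023}. As a fallback that avoids CFSG, a random subset of size $\Theta(\sqrt{n\log n})$ is a basis with high probability and can be checked in $\bigO(n\log n)$ time. This costs a $\sqrt{\log n}$ factor in $|S|$, which would still suffice for $\bigO(n^2)$-time field verification up to polylogarithmic factors.
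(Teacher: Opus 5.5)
The paper does not prove this lemma at all. It imports it from \cite{groupverifictaion2023}, remarking only that the classification of finite simple groups is used there. So taking it as a black box is exactly the paper's treatment, and that citation is the only part of your proposal that actually establishes the statement. The outline you wrap around it, however, would not work if carried out.

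In Step~1, suppose $X\subseteq N$ and $Y+X$ is to cover (or even meet) every coset of $N$. Then $|Y|\ge|Q|$, which is far above $\sqrt n$ whenever $N$ is small. Suppose instead you iterate a product construction $S_G=S_Q+S_N$ down the composition series so that sizes multiply. Then the constants multiply as well: factors costing $c_i\sqrt{|Q_i|}$ leave you with $(\prod_i c_i)\sqrt n$. There can be $\log_2 n$ layers of prime order, and for those $c_i$ is bounded away from $1$, since an abelian group of prime order $p$ needs $|B|(|B|+1)/2\ge p$. For $G=\mathbb{Z}_2^k$ every layer contributes the full basis $\{0,1\}$ of $\mathbb{Z}_2$, and you end with $S=G$. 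In a nonabelian extension the natural covering step is also unjustified. To write $g=y_1+y_2+m$ with $m\in N$ as an element of $(Y+X)+(Y+X)$, you need $m\in(-y_2+X+y_2)+X$, which does not follow from $X+X=N$.

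The missing idea is to cut a subgroup chain once, at the level where the order crosses $\sqrt n$. Every other layer is then absorbed exactly by a subgroup or a transversal, and only the crossing layer needs a bridge. Concretely, take consecutive terms $K\trianglelefteq K'$ of a composition series with $|K|\le\sqrt n<|K'|$. Set $m=|K'|/|K|$, and choose a left transversal $T$ with $G=T+K'$, so that $|T|<\sqrt n$. If $K'/K$ is cyclic, generated by $z+K$, let $r=\lceil\sqrt n/|K|\rceil$. Writing $jz$ for the $j$-fold sum, put
\[
A=T+\{(qr)z : 0\le q<\lceil m/r\rceil\},\qquad B=\{sz : 0\le s<r\}+K .
\]
Then $A+B\supseteq T+\{jz:0\le j<m\}+K=G$ and $|A|,|B|\le 2\sqrt n$, so $S=A\cup B$ is a basis of size at most $4\sqrt n$. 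This already handles every solvable group without the classification. In this scheme CFSG is needed only when the crossing layer is a nonabelian simple section.

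Finally, your fallback does not prove the lemma as stated. A random set of size $\Theta(\sqrt{n\log n})$ violates $|S|=\bigO(\sqrt n)$ and is randomized. In the proof of Theorem~\ref{th:field-verification} it would give a randomized $\bigO(n^2\log n)$ test, not the deterministic $\bigO(n^2)$ bound.
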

\begin{lemma}\label{lemma:b-in-Sa-distributivity}
    Let $(G,+)$ be a group consisting of $n$ elements and $\cdot$ be a binary operation on $G$. Let $S$ be a basis of $G$. Then the following are equivalent:
    \begin{enumerate}
        \item \label{item:dist-in-g} $a\cdot (b+c) = a\cdot b + a\cdot c$ for every $a,b,c\in G$.
        \item \label{item:dist-b-in-s} $x\cdot (y+z) = x\cdot y + x\cdot z$ for every $x,z\in G$, $y\in S$.
    \end{enumerate}
\end{lemma}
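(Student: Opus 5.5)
The direction from the first statement to the second is trivial, because the second is a special case of the first. For the converse, I will fix $x\in G$ and show that distributivity with respect to $x$ extends from second arguments in $S$ to all of $G$. The key step is the following. Assume the second statement holds. Take any $b\in G$ and, using the basis property from \Cref{lemma:basis-construction}, write $b=\alpha+\beta$ with $\alpha,\beta\in S$. I will prove that for this $b$ and every $c\in G$ we have $x\cdot(b+c)=x\cdot b+x\cdot c$.

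By associativity of the group $(G,+)$ we have $b+c=\alpha+(\beta+c)$. Apply the hypothesis with $y=\alpha\in S$ and $z=\beta+c\in G$:
\[
x\cdot(b+c)=x\cdot\alpha+x\cdot(\beta+c).
\]
Apply the hypothesis again with $y=\beta\in S$ and $z=c$:
\[
x\cdot(\beta+c)=x\cdot\beta+x\cdot c.
\]
Combining the two, and using associativity of $+$ on the values, gives
\[
x\cdot(b+c)=(x\cdot\alpha+x\cdot\beta)+x\cdot c.
\]
Finally, apply the hypothesis a third time with $y=\alpha$ and $z=\beta$ to get $x\cdot b=x\cdot(\alpha+\beta)=x\cdot\alpha+x\cdot\beta$. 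Substituting this yields $x\cdot(b+c)=x\cdot b+x\cdot c$.

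Since every $b\in G$ admits such a decomposition, and $x,c$ were arbitrary, the first statement follows. Note that commutativity of $+$ is never used; only associativity of $(G,+)$ and the fact that $S+S=G$ are needed. The argument therefore works for any group, which matters because the basis construction of Evra et al. applies to general groups.

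The main point requiring care is bookkeeping rather than any genuine difficulty. The hypothesis only allows basis elements in the \emph{middle} (first-summand) position, so the decomposition $b=\alpha+\beta$ has to be peeled off from the left as above, with the remainder $\beta+c$ placed in the free $z$-slot. Once this ordering is chosen, each application of the second statement is legitimate.
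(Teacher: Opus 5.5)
Your proof is correct and matches the paper's argument step for step. The paper also writes $b=\alpha+\beta$ with $\alpha,\beta\in S$ and reassociates to $\alpha+(\beta+c)$. It then applies the restricted law three times (with $z=\beta+c$, $z=c$, and $z=\beta$), using only associativity of $(G,+)$ in between.
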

\begin{proof}
    It is clear that \ref{item:dist-in-g} implies \ref{item:dist-b-in-s}, so we only prove the converse.
    Assume that the left distributive law holds for every $x,z \in G$ and $y\in S$.
    Let $a,b,c\in G$ be arbitrary. Since $S$ is a basis of $G$, we can write $b = \beta + \beta'$ for some $\beta, \beta'\in S$.
    Plugging this in, we obtain
    \begin{align*}
        a\cdot (b+c) &= a\cdot ((\beta + \beta') + c) & \\
        & = a\cdot (\beta + (\beta' + c)) & \text{(associativity of $(G,+)$)} \\
        & = a\cdot \beta + a\cdot(\beta' + c) & (a, (\beta' + c) \in G, \beta \in S) \\
        & = a\cdot \beta + (a\cdot\beta' + a\cdot c) & (a, c \in G, \beta' \in S) \\
        & = (a\cdot \beta + a\cdot\beta') + a\cdot c & \text{(associativity of $(G,+)$)} \\
        & = a\cdot(\beta + \beta') + a\cdot c & (a,\beta'\in G, \beta \in S) \\
        & = a\cdot b + a\cdot c
    \end{align*}
    as desired.
\end{proof}

\begin{lemma}\label{lemma:a-in-Sm-dist}
    Let $G$ be a set consisting of $n$ elements and $+,\cdot$ be binary operations on $G$, such that $(G,+)$ and $(G,\cdot)$ are both groups. Let $S_a$ be a basis of $(G,+)$ and $S_m$ be a basis of $(G,\cdot)$. Then the following are equivalent:
    \begin{itemize}
        \item $a\cdot (b+c) = a\cdot b + a\cdot c$ for every $a,b,c\in G$.
        \item $x\cdot (y+z) = x\cdot y + x\cdot z$ for every $x\in S_m$, $y\in S_a$, $z\in G$.
    \end{itemize}
\end{lemma}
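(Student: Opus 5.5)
The forward implication is immediate, so the work is in showing that checking distributivity only for $x\in S_m$ and $y\in S_a$ already gives the full law. I would do this in two stages, first enlarging the range of $x$ and then invoking Lemma~\ref{lemma:b-in-Sa-distributivity}.

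\textbf{Step 1: fixed $y\in S_a$, all $x\in G$.} Fix $y\in S_a$ and $z\in G$. For arbitrary $a\in G$, write $a=\alpha\cdot\alpha'$ with $\alpha,\alpha'\in S_m$, which is possible because $S_m$ is a basis of $(G,\cdot)$. Then compute, using associativity of $\cdot$ and the hypothesis twice:
\begin{align*}
a\cdot(y+z) &= \alpha\cdot\bigl(\alpha'\cdot(y+z)\bigr)
= \alpha\cdot\bigl(\alpha'\cdot y+\alpha'\cdot z\bigr).
\end{align*}

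\textbf{Step 2: the obstacle.} To continue, I would apply distributivity for the multiplier $\alpha\in S_m$ to the sum $\alpha'\cdot y+\alpha'\cdot z$. The hypothesis only allows this when the first summand lies in $S_a$, and $\alpha'\cdot y$ need not lie in $S_a$. This is the main difficulty, and it is why the two stages must be interleaved rather than done one after the other.

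\textbf{Step 3: induction to remove the obstacle.} I would induct on the additive length. Let $D_x$ be the set of $y\in G$ such that $x\cdot(y+z)=x\cdot y+x\cdot z$ holds for all $z$. The proof of Lemma~\ref{lemma:b-in-Sa-distributivity} shows more than its statement. If $\beta,\beta'\in D_x$, then $\beta+\beta'\in D_x$. By induction on the number of summands, $D_x$ then contains every finite sum of elements of $S_a$. Since $S_a+S_a=G$, it follows that $D_x=G$ for every $x\in S_m$.

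With this, step 2 goes through. Distributivity now holds for $x\in S_m$ and all $y,z\in G$. So for $a=\alpha\cdot\alpha'$:
\begin{align*}
a\cdot(y+z) &= \alpha\cdot(\alpha'\cdot y+\alpha'\cdot z)
= \alpha\cdot(\alpha'\cdot y)+\alpha\cdot(\alpha'\cdot z)
= a\cdot y+a\cdot z .
\end{align*}
This uses associativity of $\cdot$ in the first and last equalities, and the full-$y$ law for $\alpha$ and then for $\alpha'$ in the middle.

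This would be the proof for the stated lemma. The group hypotheses are used only through associativity of both operations together with the basis property of Lemma~\ref{lemma:basis-construction}. If the intended setting is a field, where $(G,\cdot)$ is only a group on $G\setminus\{0\}$, the same argument applies after handling $a=0$ separately.
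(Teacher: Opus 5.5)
Your proof is correct and follows essentially the same route as the paper. Both write $a=\alpha\cdot\alpha'$ with $\alpha,\alpha'\in S_m$, use associativity of $\cdot$, and handle the fact that $\alpha'\cdot y$ need not lie in $S_a$ by running the argument of Lemma~\ref{lemma:b-in-Sa-distributivity} for the fixed multiplier $\alpha$. Your sets $D_x$, closed under addition and hence equal to $G$ for $x\in S_m$, just make explicit, and apply first, the pointwise version of that lemma that the paper uses informally in its ``without loss of generality $\alpha'\cdot b\in S_a$'' step.
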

\begin{proof}
    Again, the first property clearly implies the second one, so we only prove the other direction.
    Let $a,b,c\in G$ be arbitrary. By Lemma \ref{lemma:b-in-Sa-distributivity}, we get to assume $b\in S_a$ for free.
    Moreover, since $S_m$ is a basis of $(G,\cdot)$, we can write $a = \alpha\cdot \alpha'$ for some $\alpha, \alpha'\in S_m$, to obtain the following chain of equalities.
    \begin{align*}
        a\cdot(b+c) &= (\alpha \cdot \alpha')\cdot (b+c) \\
        & = \alpha \cdot (\alpha'\cdot (b+c)) & \text{(associativity of $(G,\cdot)$)}\\
        & = \alpha \cdot (\alpha'\cdot b + \alpha' \cdot c) & (\alpha' \in S_m, b\in S_a, c\in G)
    \end{align*}
    Now, if $\alpha'\cdot b$ is not in $S_a$, we can apply the exact same argument as in the proof of Lemma \ref{lemma:b-in-Sa-distributivity} to break it into two elements $\beta, \beta'\in S_a$, and we can perform the same computation as above, hence without loss of generality, we can assume that $\alpha'\cdot b\in S_a$.
    \begin{align*}
        \alpha \cdot (\alpha'\cdot b + \alpha' \cdot c) &= \alpha \cdot (\alpha'\cdot b) + \alpha \cdot (\alpha'\cdot c) & (\alpha \in S_m, (\alpha'\cdot b)\in S_a, (\alpha' \cdot c)\in G) \\
        & = (\alpha \cdot \alpha')\cdot b + (\alpha \cdot \alpha')\cdot c & \text{(associativity of $(G,\cdot)$)} \\
        & = a\cdot b + a\cdot c
    \end{align*}
\end{proof}
We note that a similar computation can be carried out for the right distributivity law to show that it is always sufficient to assume that $a\in S_m, b\in S_a$.
The lemmas above now give us a straightforward way to verify if a set with two binary operations forms a field in linear time, giving us the proof of Theorem \ref{th:field-verification}
\begin{proof}[Proof (of Theorem \ref{th:field-verification})]
    By \cite{groupverifictaion2023}, we can verify in time $\bigO(n^2)$ if $(F,\oplus)$ and $(F\setminus\{0\},\odot)$ (where $0$ is the additive identity) form groups. If any of the two conditions is not satisfied, we reject, otherwise we proceed to verify commutativity and distributivity. Commutativity of both operations can clearly be verified in $\bigO(n^2)$ by exhaustive search.
    Since $(F,\oplus)$ and $(F\setminus\{0\},\odot)$ are both groups, we can construct their respective bases $S_a$ and $S_m$ of size $\bigO(\sqrt{n})$ in $\bigO(n^2)$ by Lemma \ref{lemma:basis-construction}.
    The only remaining property to verify is the distributivity. Note that for all triples $a,b,c$ such that (without loss of generality) $a=0$, distributivity can be checked in $\bigO(n^2)$ by a simple exhaustive search.
    Hence, by Lemma \ref{lemma:a-in-Sm-dist} it remains to verify that the distributivity holds for all triples $a\in S_m, b\in S_a, c\in F$. This can be done by a simple brute-force approach in $\bigO(n^2)$, as desired.
\end{proof}
The approach above clearly fails for ring verification, since $(R,\cdot)$ is not necessarily a group. We nevertheless prove a weaker version of Lemma \ref{lemma:a-in-Sm-dist}, that does not assume multiplicative inverse, but in turn cannot be used to separately verify left and right distributivity, but rather checks both identities simultaneously. Luckily, this is sufficient for checking that a structure forms a ring.
\begin{lemma}\label{lemma:ring-distributivity}
    Let $R$ be a set consisting of $n$ elements, together with binary operations $+,\cdot$, such that $(R,+)$ forms an abelian group and $(R,\cdot)$ is associative.
    Let $S$ be a basis of $(R,+)$. Then the two distributive laws
    \begin{enumerate}
        \item $a(b+c) = ab + ac$ for every $a,b,c\in R$
        \item $(b+c)a = ba + ca$ for every $a,b,c \in R$
    \end{enumerate}
    hold if and only if each of the four "restricted" distributivity laws hold:
    \begin{enumerate}[label=\roman*.]
        \item \label{item:1} $a(b+c) = ab + ac$ for every $a,b \in S ,c\in R$
        \item \label{item:3} $a(b+c) = ab + ac$ for every $b,c\in S, a\in R$
        \item \label{item:4} $(b+c)a = ba + ca$ for every $a,b \in S,c \in R$
        \item \label{item:6} $(b+c)a = ba + ca$ for every $b,c\in S, a \in R$
    \end{enumerate}
\end{lemma}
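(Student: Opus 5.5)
The forward direction is immediate, since each restricted law is a special case of the corresponding full law. For the converse, I would assume (i)--(iv) and first strengthen them one variable at a time, using that $S+S=R$, associativity and commutativity of $+$, and associativity of $\cdot$. The aim is to reach the full laws 1 and 2.

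\emph{Step 1: extend (ii) to all $b\in R$ with $c\in S$.} I would write $b=\beta+\beta'$ with $\beta,\beta'\in S$. Then
\[
a(b+c)=a(\beta+(\beta'+c))=a\beta+a(\beta'+c)=a\beta+a\beta'+ac .
\]
The second equality applies (i) or (ii). This is the delicate point: (i) needs $a\in S$, while (ii) needs both summands in $S$, and $\beta'+c$ need not lie in $S$. So instead I would apply (ii) directly to the pair $\beta',c\in S$, which gives $a(\beta'+c)=a\beta'+ac$. For the outer split $a(\beta+(\beta'+c))$, however, I still need a law with the second summand arbitrary. The remedy is to first derive a law of the form $a(b+c)=ab+ac$ for $b\in S$ and arbitrary $a,c$. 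For $a\in S$ this is (i). For general $a$ I would use (i) and (ii) together, together with associativity of $\cdot$ where $a$ factors through $S$, mirroring Lemma~\ref{lemma:b-in-Sa-distributivity}.

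Concretely, I would prove that the set
\[
D=\{a\in R: a(b+c)=ab+ac\ \forall b\in S, c\in R\}
\]
is all of $R$. By (i), $S\subseteq D$. I expect this to be the main obstacle, because $S$ is an additive basis rather than a multiplicative one, so associativity of $\cdot$ does not let us decompose $a$. The tool that does is the right distributive laws (iii) and (iv). Writing $a=\alpha+\alpha'$ with $\alpha,\alpha'\in S$, I would expand
\[
(\alpha+\alpha')(b+c)
\]
using (iii) or (iv) in the form $(\alpha+\alpha')x=\alpha x+\alpha' x$. This is exactly (iv) when $\alpha,\alpha'\in S$ and $x$ is arbitrary, and it gives $\alpha(b+c)+\alpha'(b+c)$. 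Then (i) yields $\alpha b+\alpha c+\alpha' b+\alpha' c$. Regrouping by commutativity, and using (iv) backwards twice, gives $(\alpha+\alpha')b+(\alpha+\alpha')c=ab+ac$. This shows $D=R$, i.e.\ left distributivity holds whenever $b\in S$.

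Note that this argument used only (i) and (iv). Once $D=R$, Lemma~\ref{lemma:b-in-Sa-distributivity} (which requires only that $(R,+)$ is a group and $S$ is a basis) gives full left distributivity 1. Symmetrically, combining (iii) with (ii) through the mirror of the same argument, and then applying the mirrored Lemma~\ref{lemma:b-in-Sa-distributivity} for right multiplication, yields law 2.

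In writing this up, I would check carefully which of (i)--(iv) is invoked at each step, since the pairing (i)+(iv) for the left law and (ii)+(iii) for the right law is the crux, and confirm that commutativity of $+$ is used only in the regrouping.
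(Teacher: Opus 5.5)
Your final argument is correct and is essentially the paper's proof. You write $a=\alpha+\alpha'$ over the additive basis, expand with (iv), apply (i), regroup using commutativity of $+$, and contract with (iv) to get left distributivity for $b\in S$; then you upgrade via Lemma~\ref{lemma:b-in-Sa-distributivity}, and handle the right law symmetrically with (ii) and (iii). The abandoned detour at the start of Step 1 can simply be dropped, and, as in the paper's proof, associativity of $\cdot$ is never actually used.
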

\begin{proof}
    One direction is trivial, so we prove only that assuming the four restricted distributivity laws, the left distributivity law is satisfied (the proof for the right distributivity is completely symmetric).
    Let $a,b,c\in R$ be arbitrary. We write $a$ as $\alpha + \alpha'$ for some $\alpha, \alpha'\in S$. Furthermore, by Lemma \ref{lemma:b-in-Sa-distributivity} we can assume that $b\in S$.
    We then compute
    \begin{align*}
        a\cdot (b+c) & = (\alpha + \alpha')\cdot(b+c) \\
        & = (\alpha\cdot(b+c)) + (\alpha'\cdot (b+c)) & \text{(\ref{item:6})} \\
        & = ((\alpha\cdot b)+(\alpha \cdot c)) + ((\alpha'\cdot b)+(\alpha' \cdot c)) & \text{(\ref{item:1})} \\
        &= ((\alpha\cdot b)+(\alpha' \cdot b)) + ((\alpha \cdot c)+(\alpha' \cdot c))  & \text{(associativity and commutativity of $(R,+)$)}\\
        & = ((\alpha + \alpha')\cdot b) + ((\alpha + \alpha')\cdot c) & \text{(\ref{item:6})}\\
        & = a\cdot b + a\cdot c.
    \end{align*}
\end{proof}
The following theorem follows by checking that $(R,+)$ forms an abelian group, checking the two distributive laws from Lemma \ref{lemma:ring-distributivity}, and verifying the associativity of $(R,\cdot)$ by the algorithm of Rajagopalan and Schulman, which dominates the runtime.
\RingVerificationTheorem*

\section*{Acknowledgements}
M. K. thanks Eamonn May for stimulating discussions during May's work on his (unpublished) Master's thesis, which contains a proof of Corollary~\ref{col:distributivity_lb}.
B. D. was partially supported by the Polish National Science Centre grant number 2023/51/B/ST6/01505. G.~G., M.~K., and M.~R. were partially supported by the Deutsche Forschungsgemeinschaft (DFG, German Research Foundation) – 462679611.
C. J. is supported by the Miller Research Fellowship at the Miller Institute for Basic Research in Science, UC Berkeley.
X.~M. is supported by NSF CAREER Award CCF-2337901.
Part of this research was done during the 2025 EnCORE Workshop on Fine-Grained Complexity at University of California, San Diego.
\bibliographystyle{plainurl}
\bibliography{refs}

@inproceedings{groupverifictaion2023,
  author       = {Shai Evra and
                  Shay Gadot and
                  Ohad Klein and
                  Ilan Komargodski},
  title        = {Verifying Groups in Linear Time},
  booktitle    = {65th {IEEE} Annual Symposium on Foundations of Computer Science, {FOCS}
                  2024, Chicago, IL, USA, October 27-30, 2024},
  pages        = {2131--2147},
  publisher    = {{IEEE}},
  year         = {2024},
  url          = {https://doi.org/10.1109/FOCS61266.2024.00126},
  doi          = {10.1109/FOCS61266.2024.00126},
  bibsource    = {dblp computer science bibliography, https://dblp.org}
}

@article{rajagopalan00,
author = {Rajagopalan, Sridhar and Schulman, Leonard J.},
title = {Verification of Identities},
journal = {SIAM Journal on Computing},
volume = {29},
number = {4},
pages = {1155-1163},
year = {2000},
doi = {10.1137/S0097539797325387},
URL = {https://doi.org/10.1137/S0097539797325387},
OPTeprint = {https://doi.org/10.1137/S0097539797325387},
note= {Conference version at FOCS 1996.}
}

@inproceedings{AbboudFS24,
  author       = {Amir Abboud and
                  Nick Fischer and
                  Yarin Shechter},
  editor       = {Jos{\'{e}} A. Soto and
                  Andreas Wiese},
  title        = {Faster Combinatorial k-Clique Algorithms},
  booktitle    = {{LATIN} 2024: Theoretical Informatics - 16th Latin American Symposium,
                  Puerto Varas, Chile, March 18-22, 2024, Proceedings, Part {I}},
  series       = {Lecture Notes in Computer Science},
  volume       = {14578},
  pages        = {193--206},
  publisher    = {Springer},
  year         = {2024},
  url          = {https://doi.org/10.1007/978-3-031-55598-5\_13},
  doi          = {10.1007/978-3-031-55598-5\_13},
  bibsource    = {dblp computer science bibliography, https://dblp.org}
}

@inproceedings{AlmanDWXXZ25,
  author       = {Josh Alman and
                  Ran Duan and
                  Virginia {Vassilevska Williams} and
                  Yinzhan Xu and
                  Zixuan Xu and
                  Renfei Zhou},
  editor       = {Yossi Azar and
                  Debmalya Panigrahi},
  title        = {More Asymmetry Yields Faster Matrix Multiplication},
  booktitle    = {Proceedings of the 2025 Annual {ACM-SIAM} Symposium on Discrete Algorithms,
                  {SODA} 2025, New Orleans, LA, USA, January 12-15, 2025},
  pages        = {2005--2039},
  publisher    = {{SIAM}},
  year         = {2025},
  url          = {https://doi.org/10.1137/1.9781611978322.63},
  doi          = {10.1137/1.9781611978322.63},
  bibsource    = {dblp computer science bibliography, https://dblp.org}
}

@inproceedings{Freivalds1977,
  author       = {Rusins Freivalds},
  editor       = {Jir{\'{\i}} Becv{\'{a}}r},
  title        = {Fast Probabilistic Algorithms},
  booktitle    = {Mathematical Foundations of Computer Science 1979, Proceedings, 8th
                  Symposium, Olomouc, Czechoslovakia, September 3-7, 1979},
  series       = {Lecture Notes in Computer Science},
  volume       = {74},
  pages        = {57--69},
  publisher    = {Springer},
  year         = {1979},
  url          = {https://doi.org/10.1007/3-540-09526-8\_5},
  doi          = {10.1007/3-540-09526-8\_5},
  bibsource    = {dblp computer science bibliography, https://dblp.org}
}

@article{Behrend1946,
  title={On sets of integers which contain no three terms in arithmetical progression},
  author={Behrend, Felix A},
  journal={Proceedings of the National Academy of Sciences},
  volume={32},
  number={12},
  pages={331--332},
  year={1946}
}

@misc{Polak_personal_communication,
author = {Adam Polak},
note  = {Personal communication},
year = {2025}
}

@article{Clifford1961,
  title={The algebraic theory of semigroups, vol. 1},
  author={Clifford, Alfred H and Preston, Gordon B},
  journal={American Mathematical Society, Providence},
  volume={2},
  number={7},
  year={1961}
}

@inproceedings{AbboudW14,
  author       = {Amir Abboud and
                  Virginia {Vassilevska Williams}},
  title        = {Popular Conjectures Imply Strong Lower Bounds for Dynamic Problems},
  booktitle    = {55th {IEEE} Annual Symposium on Foundations of Computer Science, {FOCS}
                  2014, Philadelphia, PA, USA, October 18-21, 2014},
  pages        = {434--443},
  publisher    = {{IEEE} Computer Society},
  year         = {2014},
  url          = {https://doi.org/10.1109/FOCS.2014.53},
  doi          = {10.1109/FOCS.2014.53},
  bibsource    = {dblp computer science bibliography, https://dblp.org}
}

@inproceedings{AbboudBKZ22,
  author       = {Amir Abboud and
                  Karl Bringmann and
                  Seri Khoury and
                  Or Zamir},
  editor       = {Stefano Leonardi and
                  Anupam Gupta},
  title        = {Hardness of approximation in p via short cycle removal: cycle detection,
                  distance oracles, and beyond},
  booktitle    = {{STOC} '22: 54th Annual {ACM} {SIGACT} Symposium on Theory of Computing,
                  Rome, Italy, June 20 - 24, 2022},
  pages        = {1487--1500},
  publisher    = {{ACM}},
  year         = {2022},
  url          = {https://doi.org/10.1145/3519935.3520066},
  doi          = {10.1145/3519935.3520066},
  bibsource    = {dblp computer science bibliography, https://dblp.org}
}

@inproceedings{VassilevskaWY07,
  author       = {Virginia Vassilevska and
                  Ryan Williams and
                  Raphael Yuster},
  editor       = {David S. Johnson and
                  Uriel Feige},
  title        = {All-pairs bottleneck paths for general graphs in truly sub-cubic time},
  booktitle    = {39th Annual {ACM} Symposium on Theory of Computing ({STOC} 2007)},
  pages        = {585--589},
  publisher    = {{ACM}},
  year         = {2007},
  url          = {https://doi.org/10.1145/1250790.1250876},
  doi          = {10.1145/1250790.1250876},
  bibsource    = {dblp computer science bibliography, https://dblp.org}
}

@inproceedings{AbboudBDN18,
  author       = {Amir Abboud and
                  Karl Bringmann and
                  Holger Dell and
                  Jesper Nederlof},
  editor       = {Ilias Diakonikolas and
                  David Kempe and
                  Monika Henzinger},
  title        = {More consequences of falsifying {SETH} and the orthogonal vectors
                  conjecture},
  booktitle    = {Proceedings of the 50th Annual {ACM} {SIGACT} Symposium on Theory
                  of Computing, {STOC} 2018, Los Angeles, CA, USA, June 25-29, 2018},
  pages        = {253--266},
  publisher    = {{ACM}},
  year         = {2018},
  url          = {https://doi.org/10.1145/3188745.3188938},
  doi          = {10.1145/3188745.3188938},
  bibsource    = {dblp computer science bibliography, https://dblp.org}
}

@inproceedings{DudekGS20,
  author       = {Bartlomiej Dudek and
                  Pawel Gawrychowski and
                  Tatiana Starikovskaya},
  editor       = {Konstantin Makarychev and
                  Yury Makarychev and
                  Madhur Tulsiani and
                  Gautam Kamath and
                  Julia Chuzhoy},
  title        = {All non-trivial variants of 3-LDT are equivalent},
  booktitle    = {Proceedings of the 52nd Annual {ACM} {SIGACT} Symposium on Theory
                  of Computing, {STOC} 2020, Chicago, IL, USA, June 22-26, 2020},
  pages        = {974--981},
  publisher    = {{ACM}},
  year         = {2020},
  url          = {https://doi.org/10.1145/3357713.3384275},
  doi          = {10.1145/3357713.3384275},
  bibsource    = {dblp computer science bibliography, https://dblp.org}
}

@inproceedings{JinX23,
  author       = {Ce Jin and
                  Yinzhan Xu},
  editor       = {Barna Saha and
                  Rocco A. Servedio},
  title        = {Removing Additive Structure in 3SUM-Based Reductions},
  booktitle    = {Proceedings of the 55th Annual {ACM} Symposium on Theory of Computing,
                  {STOC} 2023, Orlando, FL, USA, June 20-23, 2023},
  pages        = {405--418},
  publisher    = {{ACM}},
  year         = {2023},
  url          = {https://doi.org/10.1145/3564246.3585157},
  doi          = {10.1145/3564246.3585157},
  bibsource    = {dblp computer science bibliography, https://dblp.org}
}

@inproceedings{Zamir23,
  author       = {Or Zamir},
  editor       = {Barna Saha and
                  Rocco A. Servedio},
  title        = {Algorithmic Applications of Hypergraph and Partition Containers},
  booktitle    = {Proceedings of the 55th Annual {ACM} Symposium on Theory of Computing,
                  {STOC} 2023, Orlando, FL, USA, June 20-23, 2023},
  pages        = {985--998},
  publisher    = {{ACM}},
  year         = {2023},
  url          = {https://doi.org/10.1145/3564246.3585163},
  doi          = {10.1145/3564246.3585163},
  bibsource    = {dblp computer science bibliography, https://dblp.org}
}

@inproceedings{AbboudFKLM24,
  author       = {Amir Abboud and
                  Nick Fischer and
                  Zander Kelley and
                  Shachar Lovett and
                  Raghu Meka},
  editor       = {Bojan Mohar and
                  Igor Shinkar and
                  Ryan O'Donnell},
  title        = {New Graph Decompositions and Combinatorial Boolean Matrix Multiplication
                  Algorithms},
  booktitle    = {Proceedings of the 56th Annual {ACM} Symposium on Theory of Computing,
                  {STOC} 2024, Vancouver, BC, Canada, June 24-28, 2024},
  pages        = {935--943},
  publisher    = {{ACM}},
  year         = {2024},
  url          = {https://doi.org/10.1145/3618260.3649696},
  doi          = {10.1145/3618260.3649696},
  bibsource    = {dblp computer science bibliography, https://dblp.org}
}

@inproceedings{KelleyLM24,
  author       = {Zander Kelley and
                  Shachar Lovett and
                  Raghu Meka},
  editor       = {Bojan Mohar and
                  Igor Shinkar and
                  Ryan O'Donnell},
  title        = {Explicit Separations between Randomized and Deterministic Number-on-Forehead
                  Communication},
  booktitle    = {Proceedings of the 56th Annual {ACM} Symposium on Theory of Computing,
                  {STOC} 2024, Vancouver, BC, Canada, June 24-28, 2024},
  pages        = {1299--1310},
  publisher    = {{ACM}},
  year         = {2024},
  url          = {https://doi.org/10.1145/3618260.3649721},
  doi          = {10.1145/3618260.3649721},
  bibsource    = {dblp computer science bibliography, https://dblp.org}
}

@inproceedings{DalirrooyfardW22,
  author       = {Mina Dalirrooyfard and
                  Virginia {Vassilevska Williams}},
  title        = {Induced Cycles and Paths Are Harder Than You Think},
  booktitle    = {63rd {IEEE} Annual Symposium on Foundations of Computer Science, {FOCS}
                  2022, Denver, CO, USA, October 31 - November 3, 2022},
  pages        = {531--542},
  publisher    = {{IEEE}},
  year         = {2022},
  url          = {https://doi.org/10.1109/FOCS54457.2022.00057},
  doi          = {10.1109/FOCS54457.2022.00057},
  bibsource    = {dblp computer science bibliography, https://dblp.org}
}

@inproceedings{KelleyM23,
  author       = {Zander Kelley and
                  Raghu Meka},
  title        = {Strong Bounds for 3-Progressions},
  booktitle    = {64th {IEEE} Annual Symposium on Foundations of Computer Science, {FOCS}
                  2023, Santa Cruz, CA, USA, November 6-9, 2023},
  pages        = {933--973},
  publisher    = {{IEEE}},
  year         = {2023},
  url          = {https://doi.org/10.1109/FOCS57990.2023.00059},
  doi          = {10.1109/FOCS57990.2023.00059},
  bibsource    = {dblp computer science bibliography, https://dblp.org}
}

@inproceedings{Kunnemann22,
  author       = {Marvin K{\"{u}}nnemann},
  title        = {A tight (non-combinatorial) conditional lower bound for {Klee}'s Measure
                  Problem in {3D}},
  booktitle    = {63rd {IEEE} Annual Symposium on Foundations of Computer Science, {FOCS}
                  2022, Denver, CO, USA, October 31 - November 3, 2022},
  pages        = {555--566},
  publisher    = {{IEEE}},
  year         = {2022},
  url          = {https://doi.org/10.1109/FOCS54457.2022.00059},
  doi          = {10.1109/FOCS54457.2022.00059},
  bibsource    = {dblp computer science bibliography, https://dblp.org}
}

@inproceedings{Chan15,
  author       = {Timothy M. Chan},
  editor       = {Piotr Indyk},
  title        = {Speeding up the Four Russians Algorithm by About One More Logarithmic
                  Factor},
  booktitle    = {26th Annual {ACM-SIAM} Symposium on Discrete Algorithms ({SODA} 2015)},
  pages        = {212--217},
  publisher    = {{SIAM}},
  year         = {2015},
  url          = {https://doi.org/10.1137/1.9781611973730.16},
  doi          = {10.1137/1.9781611973730.16},
  bibsource    = {dblp computer science bibliography, https://dblp.org}
}

@inproceedings{LincolnWW18,
  author       = {Andrea Lincoln and
                  Virginia {Vassilevska Williams} and
                  R. Ryan Williams},
  editor       = {Artur Czumaj},
  title        = {Tight Hardness for Shortest Cycles and Paths in Sparse Graphs},
  booktitle    = {Proceedings of the Twenty-Ninth Annual {ACM-SIAM} Symposium on Discrete
                  Algorithms, {SODA} 2018, New Orleans, LA, USA, January 7-10, 2018},
  pages        = {1236--1252},
  publisher    = {{SIAM}},
  year         = {2018},
  url          = {https://doi.org/10.1137/1.9781611975031.80},
  doi          = {10.1137/1.9781611975031.80},
  bibsource    = {dblp computer science bibliography, https://dblp.org}
}

@inproceedings{KunnemannN22,
  author       = {Marvin K{\"{u}}nnemann and
                  Andr{\'{e}} Nusser},
  editor       = {Joseph (Seffi) Naor and
                  Niv Buchbinder},
  title        = {Polygon Placement Revisited: (Degree of Freedom + 1)-{SUM} Hardness
                  and an Improvement via Offline Dynamic Rectangle Union},
  booktitle    = {Proceedings of the 2022 {ACM-SIAM} Symposium on Discrete Algorithms,
                  {SODA} 2022, Virtual Conference / Alexandria, VA, USA, January 9 -
                  12, 2022},
  pages        = {3181--3201},
  publisher    = {{SIAM}},
  year         = {2022},
  url          = {https://doi.org/10.1137/1.9781611977073.124},
  doi          = {10.1137/1.9781611977073.124},
  bibsource    = {dblp computer science bibliography, https://dblp.org}
}

@inproceedings{CyganMWW17,
  author       = {Marek Cygan and
                  Marcin Mucha and
                  Karol Wegrzycki and
                  Michal Wlodarczyk},
  editor       = {Ioannis Chatzigiannakis and
                  Piotr Indyk and
                  Fabian Kuhn and
                  Anca Muscholl},
  title        = {On Problems Equivalent to (min, +)-Convolution},
  booktitle    = {44th International Colloquium on Automata, Languages, and Programming,
                  {ICALP} 2017, July 10-14, 2017, Warsaw, Poland},
  series       = {LIPIcs},
  volume       = {80},
  pages        = {22:1--22:15},
  publisher    = {Schloss Dagstuhl - Leibniz-Zentrum f{\"{u}}r Informatik},
  year         = {2017},
  url          = {https://doi.org/10.4230/LIPIcs.ICALP.2017.22},
  doi          = {10.4230/LIPICS.ICALP.2017.22},
  bibsource    = {dblp computer science bibliography, https://dblp.org}
}

@inproceedings{KunnemannPS17,
  author       = {Marvin K{\"{u}}nnemann and
                  Ramamohan Paturi and
                  Stefan Schneider},
  editor       = {Ioannis Chatzigiannakis and
                  Piotr Indyk and
                  Fabian Kuhn and
                  Anca Muscholl},
  title        = {On the Fine-Grained Complexity of One-Dimensional Dynamic Programming},
  booktitle    = {44th International Colloquium on Automata, Languages, and Programming,
                  {ICALP} 2017, July 10-14, 2017, Warsaw, Poland},
  series       = {LIPIcs},
  volume       = {80},
  pages        = {21:1--21:15},
  publisher    = {Schloss Dagstuhl - Leibniz-Zentrum f{\"{u}}r Informatik},
  year         = {2017},
  url          = {https://doi.org/10.4230/LIPIcs.ICALP.2017.21},
  doi          = {10.4230/LIPICS.ICALP.2017.21},
  bibsource    = {dblp computer science bibliography, https://dblp.org}
}

@inproceedings{BringmannS21,
  author       = {Karl Bringmann and
                  Jasper Slusallek},
  editor       = {Nikhil Bansal and
                  Emanuela Merelli and
                  James Worrell},
  title        = {Current Algorithms for Detecting Subgraphs of Bounded Treewidth Are
                  Probably Optimal},
  booktitle    = {48th International Colloquium on Automata, Languages, and Programming,
                  {ICALP} 2021, July 12-16, 2021, Glasgow, Scotland (Virtual Conference)},
  series       = {LIPIcs},
  volume       = {198},
  pages        = {40:1--40:16},
  publisher    = {Schloss Dagstuhl - Leibniz-Zentrum f{\"{u}}r Informatik},
  year         = {2021},
  url          = {https://doi.org/10.4230/LIPIcs.ICALP.2021.40},
  doi          = {10.4230/LIPICS.ICALP.2021.40},
  bibsource    = {dblp computer science bibliography, https://dblp.org}
}

@inproceedings{FischerKRS25,
  author       = {Nick Fischer and
                  Marvin K{\"{u}}nnemann and
                  Mirza Redzic and
                  Julian Stie{\ss}},
  editor       = {Keren Censor{-}Hillel and
                  Fabrizio Grandoni and
                  Jo{\"{e}}l Ouaknine and
                  Gabriele Puppis},
  title        = {The Role of Regularity in (Hyper-)Clique Detection and Implications
                  for Optimizing Boolean {CSPs}},
  booktitle    = {52nd International Colloquium on Automata, Languages, and Programming,
                  {ICALP} 2025, July 8-11, 2025, Aarhus, Denmark},
  series       = {LIPIcs},
  volume       = {334},
  pages        = {78:1--78:18},
  publisher    = {Schloss Dagstuhl - Leibniz-Zentrum f{\"{u}}r Informatik},
  year         = {2025},
  url          = {https://doi.org/10.4230/LIPIcs.ICALP.2025.78},
  doi          = {10.4230/LIPICS.ICALP.2025.78},
  bibsource    = {dblp computer science bibliography, https://dblp.org}
}

@inproceedings{GokajK25,
  author       = {Geri Gokaj and
                  Marvin K{\"{u}}nnemann},
  editor       = {Raghu Meka},
  title        = {Completeness Theorems for {k-SUM} and Geometric Friends: Deciding Fragments
                  of Linear Integer Arithmetic},
  booktitle    = {16th Innovations in Theoretical Computer Science Conference, {ITCS}
                  2025, January 7-10, 2025, Columbia University, New York, NY, {USA}},
  series       = {LIPIcs},
  volume       = {325},
  pages        = {55:1--55:25},
  publisher    = {Schloss Dagstuhl - Leibniz-Zentrum f{\"{u}}r Informatik},
  year         = {2025},
  url          = {https://doi.org/10.4230/LIPIcs.ITCS.2025.55},
  doi          = {10.4230/LIPICS.ITCS.2025.55},
  bibsource    = {dblp computer science bibliography, https://dblp.org}
}

@inproceedings{GokajKST25,
  author       = {Geri Gokaj and
                  Marvin K{\"{u}}nnemann and
                  Sabine Storandt and
                  Carina Truschel},
  editor       = {Anne Benoit and
                  Haim Kaplan and
                  Sebastian Wild and
                  Grzegorz Herman},
  title        = {(Multivariate) k-SUM as Barrier to Succinct Computation},
  booktitle    = {33rd Annual European Symposium on Algorithms, {ESA} 2025, September
                  15-17, 2025, Warsaw, Poland},
  series       = {LIPIcs},
  volume       = {351},
  pages        = {42:1--42:19},
  publisher    = {Schloss Dagstuhl - Leibniz-Zentrum f{\"{u}}r Informatik},
  year         = {2025},
  url          = {https://doi.org/10.4230/LIPIcs.ESA.2025.42},
  doi          = {10.4230/LIPICS.ESA.2025.42},
  bibsource    = {dblp computer science bibliography, https://dblp.org}
}

@inproceedings{BringmannFK19,
  author       = {Karl Bringmann and
                  Nick Fischer and
                  Marvin K{\"{u}}nnemann},
  editor       = {Amir Shpilka},
  title        = {A Fine-Grained Analogue of {Schaefer}'s Theorem in {P:} Dichotomy of
                  {$\exists^k\forall$}-Quantified First-Order Graph Properties},
  booktitle    = {34th Computational Complexity Conference, {CCC} 2019, July 18-20,
                  2019, New Brunswick, NJ, {USA}},
  series       = {LIPIcs},
  volume       = {137},
  pages        = {31:1--31:27},
  publisher    = {Schloss Dagstuhl - Leibniz-Zentrum f{\"{u}}r Informatik},
  year         = {2019},
  url          = {https://doi.org/10.4230/LIPIcs.CCC.2019.31},
  doi          = {10.4230/LIPICS.CCC.2019.31},
  bibsource    = {dblp computer science bibliography, https://dblp.org}
}

@inproceedings{KunnemannM20,
  author       = {Marvin K{\"{u}}nnemann and
                  D{\'{a}}niel Marx},
  editor       = {Shubhangi Saraf},
  title        = {Finding Small Satisfying Assignments Faster Than Brute Force: {A}
                  Fine-Grained Perspective into Boolean Constraint Satisfaction},
  booktitle    = {35th Computational Complexity Conference, {CCC} 2020, July 28-31,
                  2020, Saarbr{\"{u}}cken, Germany (Virtual Conference)},
  series       = {LIPIcs},
  volume       = {169},
  pages        = {27:1--27:28},
  publisher    = {Schloss Dagstuhl - Leibniz-Zentrum f{\"{u}}r Informatik},
  year         = {2020},
  url          = {https://doi.org/10.4230/LIPIcs.CCC.2020.27},
  doi          = {10.4230/LIPICS.CCC.2020.27},
  bibsource    = {dblp computer science bibliography, https://dblp.org}
}

@inproceedings{CarmeliZBKS20,
  author       = {Nofar Carmeli and
                  Shai Zeevi and
                  Christoph Berkholz and
                  Benny Kimelfeld and
                  Nicole Schweikardt},
  editor       = {Dan Suciu and
                  Yufei Tao and
                  Zhewei Wei},
  title        = {Answering (Unions of) Conjunctive Queries using Random Access and
                  Random-Order Enumeration},
  booktitle    = {Proceedings of the 39th {ACM} {SIGMOD-SIGACT-SIGAI} Symposium on Principles
                  of Database Systems, {PODS} 2020, Portland, OR, USA, June 14-19, 2020},
  pages        = {393--409},
  publisher    = {{ACM}},
  year         = {2020},
  url          = {https://doi.org/10.1145/3375395.3387662},
  doi          = {10.1145/3375395.3387662},
  bibsource    = {dblp computer science bibliography, https://dblp.org}
}

@inproceedings{AnGIJKN21,
  author       = {Haozhe An and
                  Mohit Gurumukhani and
                  Russell Impagliazzo and
                  Michael Jaber and
                  Marvin K{\"{u}}nnemann and
                  Maria Paula Parga Nina},
  editor       = {Petr A. Golovach and
                  Meirav Zehavi},
  title        = {The Fine-Grained Complexity of Multi-Dimensional Ordering Properties},
  booktitle    = {16th International Symposium on Parameterized and Exact Computation,
                  {IPEC} 2021, September 8-10, 2021, Lisbon, Portugal},
  series       = {LIPIcs},
  volume       = {214},
  pages        = {3:1--3:15},
  publisher    = {Schloss Dagstuhl - Leibniz-Zentrum f{\"{u}}r Informatik},
  year         = {2021},
  url          = {https://doi.org/10.4230/LIPIcs.IPEC.2021.3},
  doi          = {10.4230/LIPICS.IPEC.2021.3},
  bibsource    = {dblp computer science bibliography, https://dblp.org}
}

@inproceedings{GorbachevK23,
  author       = {Egor Gorbachev and
                  Marvin K{\"{u}}nnemann},
  editor       = {Erin W. Chambers and
                  Joachim Gudmundsson},
  title        = {Combinatorial Designs Meet Hypercliques: Higher Lower Bounds for {Klee}'s
                  Measure Problem and Related Problems in Dimensions {$d \geq 4$}},
  booktitle    = {39th International Symposium on Computational Geometry, SoCG 2023,
                  June 12-15, 2023, Dallas, Texas, {USA}},
  series       = {LIPIcs},
  volume       = {258},
  pages        = {36:1--36:14},
  publisher    = {Schloss Dagstuhl - Leibniz-Zentrum f{\"{u}}r Informatik},
  year         = {2023},
  url          = {https://doi.org/10.4230/LIPIcs.SoCG.2023.36},
  doi          = {10.4230/LIPICS.SOCG.2023.36},
  bibsource    = {dblp computer science bibliography, https://dblp.org}
}

@article{AlonYZ95,
  author       = {Noga Alon and
                  Raphael Yuster and
                  Uri Zwick},
  title        = {Color-Coding},
  journal      = {J. {ACM}},
  volume       = {42},
  number       = {4},
  pages        = {844--856},
  year         = {1995},
  url          = {https://doi.org/10.1145/210332.210337},
  doi          = {10.1145/210332.210337},
  bibsource    = {dblp computer science bibliography, https://dblp.org}
}

@article{VassilevskaW18,
  author       = {Virginia {Vassilevska Williams} and
                  R. Ryan Williams},
  title        = {Subcubic Equivalences Between Path, Matrix, and Triangle Problems},
  journal      = {J. {ACM}},
  volume       = {65},
  number       = {5},
  pages        = {27:1--27:38},
  year         = {2018},
  url          = {https://doi.org/10.1145/3186893},
  doi          = {10.1145/3186893},
  bibsource    = {dblp computer science bibliography, https://dblp.org}
}

@article{SchmidtS90,
  author       = {Jeanette P. Schmidt and
                  Alan Siegel},
  title        = {The Spatial Complexity of Oblivious k-Probe Hash Functions},
  journal      = {{SIAM} J. Comput.},
  volume       = {19},
  number       = {5},
  pages        = {775--786},
  year         = {1990},
  url          = {https://doi.org/10.1137/0219054},
  doi          = {10.1137/0219054},
  bibsource    = {dblp computer science bibliography, https://dblp.org}
}

@article{Raz10,
  author       = {Ran Raz},
  title        = {Elusive Functions and Lower Bounds for Arithmetic Circuits},
  journal      = {Theory Comput.},
  volume       = {6},
  number       = {1},
  pages        = {135--177},
  year         = {2010},
  url          = {https://doi.org/10.4086/toc.2010.v006a007},
  doi          = {10.4086/TOC.2010.V006A007},
  bibsource    = {dblp computer science bibliography, https://dblp.org}
}

@article{BansalW12,
  author       = {Nikhil Bansal and
                  Ryan Williams},
  title        = {Regularity Lemmas and Combinatorial Algorithms},
  journal      = {Theory Comput.},
  volume       = {8},
  number       = {1},
  pages        = {69--94},
  year         = {2012},
  url          = {https://doi.org/10.4086/toc.2012.v008a004},
  doi          = {10.4086/TOC.2012.V008A004},
  bibsource    = {dblp computer science bibliography, https://dblp.org}
}

@article{Blaser13,
  author       = {Markus Bl{\"{a}}ser},
  title        = {Fast Matrix Multiplication},
  journal      = {Theory Comput.},
  volume       = {5},
  pages        = {1--60},
  year         = {2013},
  url          = {https://doi.org/10.4086/toc.gs.2013.005},
  doi          = {10.4086/TOC.GS.2013.005},
  bibsource    = {dblp computer science bibliography, https://dblp.org}
}

@article{AustrinKK22,
  author       = {Per Austrin and
                  Petteri Kaski and
                  Kaie Kubjas},
  title        = {Tensor Network Complexity of Multilinear Maps},
  journal      = {Theory Comput.},
  volume       = {18},
  pages        = {1--54},
  year         = {2022},
  url          = {https://doi.org/10.4086/toc.2022.v018a016},
  doi          = {10.4086/TOC.2022.V018A016},
  bibsource    = {dblp computer science bibliography, https://dblp.org}
}

@article{Yu18,
  author       = {Huacheng Yu},
  title        = {An improved combinatorial algorithm for Boolean matrix multiplication},
  journal      = {Inf. Comput.},
  volume       = {261},
  pages        = {240--247},
  year         = {2018},
  url          = {https://doi.org/10.1016/j.ic.2018.02.006},
  doi          = {10.1016/j.ic.2018.02.006},
  bibsource    = {dblp computer science bibliography, https://dblp.org}
}

@article{GajentaanO95,
  author       = {Anka Gajentaan and
                  Mark H. Overmars},
  title        = {On a Class of {$O(n^2)$} Problems in Computational Geometry},
  journal      = {Comput. Geom.},
  volume       = {5},
  pages        = {165--185},
  year         = {1995},
  url          = {https://doi.org/10.1016/0925-7721(95)00022-2},
  doi          = {10.1016/0925-7721(95)00022-2},
  bibsource    = {dblp computer science bibliography, https://dblp.org}
}

@inproceedings{ArlazarovDKF70,
  title        = {On economical construction of the transitive closure of an oriented graph},
  author       = {Vladimir L. Arlazarov and
                  Yefim A. Dinic and
                  Aleksandr Kronrod and
                  Igor Aleksandrovich Farad\v{z}ev},
  booktitle    = {Doklady Akademii Nauk},
  year         = {1970},
  pages        = {487--488},
  volume       = {194},
  organization = {Russian Academy of Sciences}
}

@article{Gowers01,
  author       = {W. Timothy Gowers},
  year         = {2001},
  month        = {08},
  pages        = {465--588},
  title        = {A new proof of {Szemerédi}'s theorem},
  volume       = {11},
  journal      = {GAFA Geometric And Functional Analysis},
  doi          = {10.1007/s00039-001-0332-9}
}

@article{Gowers07,
  author       = {W. Timothy Gowers},
  title        = {Hypergraph regularity and the multidimensional Szemerédi theorem},
  journal      = {Annals of Mathematics},
  volume       = {166},
  number       = {3},
  pages        = {897--946},
  year         = {2007},
  url          = {http://dx.doi.org/10.4007/annals.2007.166.897},
  doi          = {10.4007/annals.2007.166.897},
}

@article{Nagle10,
  author       = {Brendan Nagle},
  title        = {On Computing the Frequencies of Induced Subhypergraphs},
  journal      = {{SIAM} J. Discret. Math.},
  volume       = {24},
  number       = {1},
  pages        = {322--329},
  year         = {2010},
  url          = {https://doi.org/10.1137/090752961},
  doi          = {10.1137/090752961},
  bibsource    = {dblp computer science bibliography, https://dblp.org}
}

@article{ShpilkaY10,
  author       = {Amir Shpilka and
                  Amir Yehudayoff},
  title        = {Arithmetic Circuits: {A} survey of recent results and open questions},
  journal      = {Found. Trends Theor. Comput. Sci.},
  volume       = {5},
  number       = {3-4},
  pages        = {207--388},
  year         = {2010},
  url          = {https://doi.org/10.1561/0400000039},
  doi          = {10.1561/0400000039},
  bibsource    = {dblp computer science bibliography, https://dblp.org}
}

@article{GreenT17,
  author       = {Ben Green and
                  Terence Tao},
  title        = {New bounds for {Szemerédi}’s theorem, {III}: {A} polylogarithmic bound for {$r_4(N)$}},
  journal      = {Mathematika},
  volume       = {63},
  number       = {3},
  pages        = {944--1040},
  year         = {2017},
  url          = {http://dx.doi.org/10.1112/S0025579317000316},
  doi          = {10.1112/s0025579317000316},
}

@article{AlonGM97,
  author       = {Noga Alon and
                  Zvi Galil and
                  Oded Margalit},
  title        = {On the Exponent of the All Pairs Shortest Path Problem},
  journal      = {J. Comput. Syst. Sci.},
  volume       = {54},
  number       = {2},
  pages        = {255--262},
  year         = {1997},
  url          = {https://doi.org/10.1006/jcss.1997.1388},
  doi          = {10.1006/JCSS.1997.1388},
  bibsource    = {dblp computer science bibliography, https://dblp.org}
}

@inproceedings{KopelowitzW20,
  author       = {Tsvi Kopelowitz and
                  Virginia {Vassilevska Williams}},
  editor       = {Artur Czumaj and
                  Anuj Dawar and
                  Emanuela Merelli},
  title        = {Towards Optimal Set-Disjointness and Set-Intersection Data Structures},
  booktitle    = {47th International Colloquium on Automata, Languages, and Programming,
                  {ICALP} 2020, July 8-11, 2020, Saarbr{\"{u}}cken, Germany (Virtual
                  Conference)},
  series       = {LIPIcs},
  volume       = {168},
  pages        = {74:1--74:16},
  publisher    = {Schloss Dagstuhl - Leibniz-Zentrum f{\"{u}}r Informatik},
  year         = {2020},
  url          = {https://doi.org/10.4230/LIPIcs.ICALP.2020.74},
  doi          = {10.4230/LIPICS.ICALP.2020.74},
  bibsource    = {dblp computer science bibliography, https://dblp.org}
}

@article{ChatterjeeDHS21,
  author       = {Krishnendu Chatterjee and
                  Wolfgang Dvor{\'{a}}k and
                  Monika Henzinger and
                  Alexander Svozil},
  title        = {Algorithms and conditional lower bounds for planning problems},
  journal      = {Artif. Intell.},
  volume       = {297},
  pages        = {103499},
  year         = {2021},
  url          = {https://doi.org/10.1016/j.artint.2021.103499},
  doi          = {10.1016/J.ARTINT.2021.103499},
  bibsource    = {dblp computer science bibliography, https://dblp.org}
}

@article{ItaiR78,
  author       = {Alon Itai and
                  Michael Rodeh},
  title        = {Finding a Minimum Circuit in a Graph},
  journal      = {{SIAM} J. Comput.},
  volume       = {7},
  number       = {4},
  pages        = {413--423},
  year         = {1978},
  url          = {https://doi.org/10.1137/0207033},
  doi          = {10.1137/0207033},
  bibsource    = {dblp computer science bibliography, https://dblp.org}
}

@book{TaoV06,
  title     = {Additive Combinatorics},
  author    = {Terence Tao and Van H. Vu},
  series    = {Cambridge Studies in Advanced Mathematics},
  publisher = {Cambridge University Press},
  year      = {2006},
  doi       = {10.1017/CBO9780511755149},
}

@article{BloomS23,
  author       = {Thomas F. Bloom and
                  Olof Sisask},
  title        = {An improvement to the Kelley-Meka bounds on three-term arithmetic progressions},
  journal      = {arXiv},
  year         = {2023},
  url          = {https://arxiv.org/abs/2309.02353},
}

@article{LengSS24,
  author       = {James Leng and
                  Ashwin Sah and
                  Mehtaab Sawhney},
  title        = {Improved Bounds for Szemerédi's Theorem},
  journal      = {arXiv},
  year         = {2024},
  url          = {https://arxiv.org/abs/2402.17995},
}

@inproceedings{FuLR25,
  author       = {Cheng{-}Hao Fu and
                  Andrea Lincoln and
                  Rene Reyes},
  editor       = {Keren Censor{-}Hillel and
                  Fabrizio Grandoni and
                  Jo{\"{e}}l Ouaknine and
                  Gabriele Puppis},
  title        = {Worst-Case and Average-Case Hardness of Hypercycle and Database Problems},
  booktitle    = {52nd International Colloquium on Automata, Languages, and Programming,
                  {ICALP} 2025, July 8-11, 2025, Aarhus, Denmark},
  series       = {LIPIcs},
  volume       = {334},
  pages        = {81:1--81:20},
  publisher    = {Schloss Dagstuhl - Leibniz-Zentrum f{\"{u}}r Informatik},
  year         = {2025},
  url          = {https://doi.org/10.4230/LIPIcs.ICALP.2025.81},
  doi          = {10.4230/LIPICS.ICALP.2025.81},
  bibsource    = {dblp computer science bibliography, https://dblp.org}
}
\appendix
\section{Hardness Assumptions}
\label{sec:assumptions}

In this section, we give an overview of the hardness hypotheses we use throughout the paper. We work with the standard Word-RAM model with $\bigO(\log n)$-bit words.

Given a graph $G$ with $n$ vertices, the \emph{Triangle Detection} problem is to decide if $G$ contains a triple of pairwise adjacent vertices. A simple reduction to matrix multiplication allows us to solve Triangle Detection in graphs with $n$ vertices in time $\bigO(n^{\omega})$ \cite{ItaiR78}.
Despite extensive effort, so far no algorithm is known that solves Triangle Detection in time $\bigO(n^{\omega-\epsilon})$ for any $\epsilon>0$. This prompted the following hypothesis which serves as a tool to prove the hardness of many related problems, e.g., \cite{AbboudW14,KopelowitzW20,ChatterjeeDHS21,VassilevskaW18} and constitutes a prominent hypothesis in fine-grained complexity theory.
\begin{hypothesis}[Triangle Detection Hypothesis]\label{hypot:triangle}
    Triangle Detection cannot be solved in $\bigO(n^{\omega-\varepsilon})$ for any $\epsilon>0$.
\end{hypothesis}

Another central fine-grained complexity problem is the zero-triangle problem, which essentially asks if a given weighted graph contains a triangle whose edge weights sum up to zero.
\begin{problem}[Zero Triangle]
    Given a tripartite weighted graph $G=(X\cup Y\cup Z, E, w)$ with $n$ vertices in each part, and weight function $w:E\to [-M, M]$, the \emph{Zero Triangle Detection} problem is to decide if $G$ contains a triangle of weight equal to $0$, i.e., do there exist $x \in X, y \in Y, z \in Z$ such that $w(x,y)+w(y,z)+w(z,x)=0$?
\end{problem}

Zero Triangle takes a special position in the web of fine-grained reductions as a cubic-time problem known to be harder than both the seminal 3SUM and APSP problems~\cite{VassilevskaW18}. For graphs with small weights bounded by $M$, a faster $\tilde O(M n^\omega)$-time algorithm is known~\cite{AlonGM97}. Note that for weights of size $M\ge n^{3-\omega-o(1)}$, this algorithm takes cubic time as well. For this reason, Abboud, Bringmann, Khoury and Zamir introduced the Strong Zero Triangle Conjecture~\cite[Conjecture 7.5]{AbboudBKZ22}, essentially postulating that Zero Triangle requires time $\min\{Mn^2,n^3\}^{1- o(1)}$ for essentially any choice of $M$, matching the upper bound if $\omega=2$. For our purposes, we will use the following formulation:\footnote{Abboud et al.~\cite{AbboudBKZ22} give a more general formulation, which implies the special case we use.}

\begin{hypothesis}[Strong Zero Triangle Conjecture \cite{AbboudBKZ22}]
There is no algorithm to detect a zero-weight triangle in a graph with edge weights $\set{-n, \dots, n}$ that runs in time $O(n^{3-\epsilon})$ (for any constant $\epsilon > 0$).
\end{hypothesis}

As described further in detail in Section~\ref{sec:ap}, we consider the following hypothesis for $k$-arithmetic progressions ($k$-AP for short).
\hypkap*

\section{Alternative Matrix-Based Algorithm for Distributivity Checking} \label{sec:freivalds_distributivity}
In order to bring down the running time for associativity checking from cubic to quadratic, Rajagopalan and Schulman \cite{rajagopalan00} come up with a clever way to check this property of many triples in parallel.
On a high level their algorithm works as follows.
Given an algebraic structure $(S,\oplus)$, one can treat the elements of $S$ as vectors (e.g., each element can be identified with a column of the identity matrix $I_n$, where $n=|S|$). 
They then sample random (scalar) weights $w_x$ for each $x\in S$, from some large finite field $\mathbb F_p$.
Finally, they show that instead of checking for each triple of elements $a,b,c$ if $(a\oplus b)\oplus c$ holds separately, one can instead check the identity
\begin{equation*}
    \left(\left(\sum_{a\in S} w_a\cdot a\right) \oplus \left(\sum_{b\in S} w_b\cdot b\right)\right)\oplus\left(\sum_{c\in S} w_c\cdot c\right)
=
\left(\sum_{a\in S} w_a\cdot a\right) \oplus \left(\left(\sum_{b\in S} w_b\cdot b\right)\oplus\left(\sum_{c\in S} w_c\cdot c\right)\right)\end{equation*}
where the operations between scalars and vectors are defined in a standard way (i.e. $(w_aa) \oplus (w_bb) = w_aw_b(a\oplus b)$).
They then show that if this previous equality is satisfied, then with high probability the structure $(S,\oplus)$ is associative.
In this section, we show how this approach can be extended in a non-trivial way to also allow for checking distributivity by a careful combination of Freivalds' Lemma and matrix multiplication, thus providing an alternative algorithm for distributivity checking that is more in line with previous work on the topic.

\begin{theorem}
    Given a set $S$ of size $n$ together with two binary operations $\oplus, \odot: S\times S \to S$, there exists a randomized algorithm checking if the algebraic structure $(S,\oplus, \odot)$ is distributive in time $\bigO(n^{\omega})$.
\end{theorem}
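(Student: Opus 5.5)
Following the Rajagopalan--Schulman approach, I identify each element $s\in S$ with the standard basis vector $e_s\in\mathbb F_p^n$ and extend both operations bilinearly: for vectors $u,v$ set $u\oplus v=\sum_{s,t}u_sv_t\,e_{s\oplus t}$, and analogously for $\odot$. Distributivity then says that the two trilinear maps $L(u,v,w)=u\odot(v\oplus w)$ and $R(u,v,w)=(u\odot v)\oplus(u\odot w)$ agree on basis triples. For $R$ this is not a statement about trilinear maps, because $u$ occurs twice. The right comparison is therefore between $\sum_{a,b,c}x_ay_bz_c\,e_{a\odot(b\oplus c)}$ and $\sum_{a,b,c}x_ay_bz_c\,e_{(a\odot b)\oplus(a\odot c)}$ for random $x,y,z$. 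After the inner product with a random vector $w$, this is exactly the polynomial identity test from Section~\ref{sec:distributivity_alg}, and Schwartz--Zippel gives error at most $4/p$. The goal here is only to compute both vectors, or their inner products with $w$, using matrix products and Freivalds-style random projections instead of the triangle-counting formulation.

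The left side takes $O(n^2)$ time. I first form $v=\sum_{b,c}y_bz_c e_{b\oplus c}$, which is just the bilinear extension $y\oplus z$. I then form $\sum_a x_a\,(e_a\odot v)$ by scanning the Cayley table of $\odot$ once.

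For the right side, the repeated variable $a$ is handled by fixing $a$ as a row index. Define two $n\times n$ matrices indexed by $(a,s)$: $Y[a,s]=\sum_{b:\,a\odot b=s}y_b$ and $Z[a,t]=\sum_{c:\,a\odot c=t}z_c$, both built in $O(n^2)$ time. The right-side vector equals $\sum_a x_a\sum_{s,t}Y[a,s]Z[a,t]\,e_{s\oplus t}$. Taking the inner product with $w$ and setting $W[s,t]=w_{s\oplus t}$ gives the scalar
\[
\sum_a x_a\,(Y W Z^{\top})[a,a]=\operatorname{tr}\bigl(D_x\,Y\,W\,Z^{\top}\bigr),
\]
where $D_x=\mathrm{diag}(x)$. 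This costs two $n\times n$ matrix multiplications, so $O(n^\omega)$ time. In Freivalds form, one compares the vector $w$ against the whole right-side vector, and only the diagonal of the product $YWZ^\top$ is needed. Computing just that diagonal amounts to a full matrix product, and that is where the $n^\omega$ term comes from.

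I expect the main obstacle to be correctness rather than running time. The naive bilinear-extension check $x\odot(y\oplus z)=(x\odot y)\oplus(x\odot z)$ is wrong, because its right side contains cross terms $x_ax_{a'}$ with $a\ne a'$. The fix is the diagonal restriction above, which keeps only the terms with $a=a'$. I will state that the quantity being tested is the degree-4 polynomial $\sum x_ay_bz_c(w_{a\odot(b\oplus c)}-w_{(a\odot b)\oplus(a\odot c)})$. This polynomial is nonzero precisely when some triple fails distributivity. Choosing $p=\Omega(n)$ and repeating $O(1)$ times, or taking $p$ polynomially large, gives high success probability.
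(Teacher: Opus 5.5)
Your proof is correct, but despite your stated aim it does not leave the triangle-counting formulation. Set $W^{PQ}=D_xY$, $W^{QR}=W$, $W^{RP}=Z^{\top}$. Then your quantity $\operatorname{tr}(D_xYWZ^{\top})$ is exactly the triangle weight $\operatorname{tr}(W^{PQ}W^{QR}W^{RP})$ from Section~\ref{sec:distributivity_alg}. Your $O(n^2)$ evaluation of the left side is likewise the length-2-path computation there. One small correction: only one matrix product is needed, not two. Once $YW$ is known, each diagonal entry $\sum_t (YW)[a,t]\,Z[a,t]$ costs $O(n)$.

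The paper's proof of this particular statement, in Appendix~\ref{sec:freivalds_distributivity}, is organized differently:
\begin{enumerate}
\item It encodes the rows of the Cayley tables as $0/1$ matrices $\sigma_a,\pi_a$.
\item It reduces distributivity for fixed $a,b$ to the matrix identity $\pi_a\sigma_b=\sigma_{a\odot b}\pi_a$.
\item It merges over $b$ with random weights.
\item It compresses the output side with Freivalds vectors $u_i$.
\item It batches over all $a$ by matrix multiplication.
\end{enumerate}
Its key product is $X[a,k]=\sum_{b:\,a\odot b=k}w_b$ times the matrix whose row $k$ is $u_i\sigma_k$. This is precisely your $YW$: its weights $w_b$ play the role of your $y_b$, and $u_i$ plays the role of your $w$. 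The only structural difference is that the appendix never randomizes over $a$ and $c$. Instead it compares, for each $a$, two length-$n$ vectors indexed by $c$.

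Your version is more economical: one Schwartz--Zippel bound for a degree-$4$ polynomial, an $O(n^2)$ easy side, and a single matrix product overall. The appendix version fits the Rajagopalan--Schulman operator viewpoint. It also localizes a counterexample essentially for free, since a mismatch in coordinate $c$ of the vector for $a$ certifies a violating triple $(a,b,c)$ for some $b$, found in $O(n)$ further time. The price is a union bound over $a$, and matrix products on both sides for every choice of weights and Freivalds vector.
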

Towards proving this theorem, we first provide the algorithmic results, and then dedicate the next subsection to the hardness result.
Our algorithm consists of four different parts:
\begin{enumerate}
    \item Reducing to \emph{Matrix Product Equality Verification}.
    \item Adding random weights to reduce the number of Matrix Product Equality Verification instances.
    \item Applying Freivalds' algorithm.
    \item Using matrix multiplication to bring the running time down from $\bigO(n^3)$ to $\bigO(n^{\omega})$.
\end{enumerate}
\paragraph{(1) Reduction to Matrix Product Equality Verification} Roughly speaking, in this part we want to represent our operations $\oplus,\odot$ on the set $S$ by matrices, which will allow us to construct a matrix equation that corresponds to checking distributivity on a subset of $S$.
To this end, we first label each element in $S$ uniquely by a column vector from $\{0,1\}^n$ and then for each $a\in S$ we can construct a linear map (represented by a matrix) that maps each vector corresponding to $x\in S$ to the vector corresponding to $a\oplus x$ (resp. $a\odot x$).
More formally, we first order the elements in $S$ arbitrarily and let $\lambda: S\to \{0,1\}^n$ be the labeling function that maps $i$-th element from $S$ to the column vector whose $i$-th entry is equal to $1$ and all other entries are zeros.
For each $a\in S$ define the maps $s_a, p_a:\lambda(S)\to \lambda(S)$ as follows.
\begin{align*}
    s_a:\lambda(x)\mapsto \lambda(a\oplus x)\\
    p_a: \lambda(x)\mapsto \lambda(a\odot x).
\end{align*}
Since $\lambda$ is an injective function, we get the following observation.
\begin{observation}\label{obs:dist-as-functions}
    Let $a,b\in S$ be fixed. Then for any $c\in S$, the identity $a\odot(b\oplus c) = (a\odot b) \oplus (a\odot c)$ holds if and only if $(p_a\circ s_b)(c) = (s_{a\odot b}\circ p_a)(c)$ holds.
\end{observation}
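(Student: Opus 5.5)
The statement is Observation~\ref{obs:dist-as-functions}. I will prove it by unfolding the definitions of $s_a$ and $p_a$ on the basis vectors and then using injectivity of $\lambda$. There is no deeper structure involved.

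Fix $a,b,c\in S$ and evaluate both composed maps on $\lambda(c)$; the statement writes $(p_a\circ s_b)(c)$, which I read as shorthand for evaluation at $\lambda(c)$. First, $s_b(\lambda(c))=\lambda(b\oplus c)$. Since $b\oplus c\in S$, applying $p_a$ is well defined on $\lambda(S)$ and gives
\[
(p_a\circ s_b)(\lambda(c))=\lambda\bigl(a\odot(b\oplus c)\bigr).
\]
Second, $p_a(\lambda(c))=\lambda(a\odot c)$, and $s_{a\odot b}$ is defined because $a\odot b\in S$. Applying it gives
\[
(s_{a\odot b}\circ p_a)(\lambda(c))=\lambda\bigl((a\odot b)\oplus(a\odot c)\bigr).
\]
The two composite values are therefore $\lambda$ applied to the two sides of the distributive identity.

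For the equivalence, the direction from the identity to the equality of maps is immediate: equal arguments give equal images under $\lambda$. For the converse, $\lambda$ is injective, since distinct elements go to distinct standard basis vectors. Hence $\lambda(u)=\lambda(v)$ forces $u=v$ with $u=a\odot(b\oplus c)$ and $v=(a\odot b)\oplus(a\odot c)$.

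There is no real obstacle. The only point to state carefully is that every intermediate value lies in $S$, so each map is applied inside its domain $\lambda(S)$; this holds because $\oplus$ and $\odot$ are total operations $S\times S\to S$.
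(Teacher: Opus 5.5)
Your proof is correct and follows the paper's reasoning: the paper justifies this observation only by the injectivity of $\lambda$, and your unfolding of $s_b$, $p_a$, and $s_{a\odot b}$ on $\lambda(c)$ spells out that one-line argument. Reading $(p_a\circ s_b)(c)$ as evaluation at $\lambda(c)$ is the intended interpretation.
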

This observation gives us a nice way to think about distributivity checking in terms of functions on real vectors, rather than arbitrary algebraic structures.
More precisely, in order to show that for all $a,b,c\in S$ the identity $a\odot(b\oplus c) = (a\odot b) \oplus (a\odot c)$ is satisfied, it suffices to show that for all $a,b\in S$ the maps $(p_a\circ s_b)$ and $(s_{a\odot b}\circ p_a)$ are the same.
This is particularly useful, since we can now prove that for each $a\in S$ there exists a matrix $\sigma_a$ (resp. $\pi_a$), such that for any $x\in S$ $\sigma_a\cdot \lambda(x) = s_a\left(\lambda(x)\right)$.
Hence, we only need to prove that for each $a,b$ the matrices $(\pi_a\cdot\sigma_b)$ and $(\sigma_{(a\odot b)}\cdot \pi_a)$ are equal.
Formally, we prove the following lemma.
\begin{lemma}
    Let $a\in S$ be arbitrary.
    Let $\sigma_a$ be a matrix whose $i$-th column is equal to $\lambda(a\oplus x)$, where $x$ is the $i$-th element in the ordering of $S$.
    Define $\pi_a$ analogously. Then
    \begin{enumerate}
        \item For any $x\in S$ $\sigma_a\cdot \lambda(x) = s_a\left(\lambda(x)\right)$.
        \item For any $x\in S$ $\pi_a\cdot \lambda(x) = p_a\left(\lambda(x)\right)$.
    \end{enumerate}
\end{lemma}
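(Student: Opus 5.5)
Both parts are direct column-by-column computations with the standard basis, and I will carry them out in that way. Write $x_1,\dots,x_n$ for the elements of $S$ in the fixed ordering, so that $\lambda(x_i)=e_i$ is the $i$-th standard basis column vector of length $n$. By definition, $\sigma_a$ is the $n\times n$ matrix whose $i$-th column equals $\lambda(a\oplus x_i)$.

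The key step is the elementary fact that multiplying any matrix by $e_i$ returns its $i$-th column. Take any $x\in S$. Then $x=x_i$ for a unique index $i$. Hence
\[
\sigma_a\cdot\lambda(x)=\sigma_a e_i=\lambda(a\oplus x_i)=s_a(\lambda(x)),
\]
where the last equality is the definition of $s_a$. This is well defined because $\lambda$ is injective, so $s_a$ is a genuine function on $\lambda(S)$. That proves part (1). Part (2) is the same argument with $\oplus$ replaced by $\odot$ and $\sigma_a$ replaced by $\pi_a$.

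No step here is a real obstacle. The only point needing care is that $\lambda$ is a bijection onto $\{e_1,\dots,e_n\}$. This guarantees that every $x$ has a unique index $i$, and that the columns of $\sigma_a$ and $\pi_a$ are therefore well defined.

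I would also record the consequence that the subsequent argument uses. Since $s_a$ and $p_a$ agree with the linear maps $\sigma_a$ and $\pi_a$ on a basis, compositions become matrix products: $(p_a\circ s_b)(\lambda(c))=\pi_a\sigma_b\lambda(c)$, and $(s_{a\odot b}\circ p_a)(\lambda(c))=\sigma_{a\odot b}\pi_a\lambda(c)$. By Observation~\ref{obs:dist-as-functions}, distributivity for a fixed pair $(a,b)$ is therefore equivalent to the matrix equality $\pi_a\sigma_b=\sigma_{a\odot b}\pi_a$, because two matrices are equal exactly when they agree on every standard basis vector.
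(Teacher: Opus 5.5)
Your proof is correct and is essentially the paper's argument: both use the fact that $\lambda(x)$ is the standard basis vector $e_i$, so $\sigma_a\lambda(x)$ picks out the $i$-th column $\lambda(a\oplus x)=s_a(\lambda(x))$, with the $\pi_a$ case identical. Your added remark about compositions becoming matrix products is also sound; it is exactly the corollary the paper states next.
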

\begin{proof}
    Let $x$ be the $i$-th element in ordering of $S$.
    Then $\lambda(x)$ has all zeros except in row $i$ where it has a $1$.
    Hence, $\sigma_a\cdot \lambda(x)$ (resp. $\pi_a\cdot \lambda(x)$) is precisely equal to the $i$-th column of $\sigma_a$ (resp. $\pi_a$), which is by definition equal to $\lambda(a\oplus x) = s_a\left(\lambda(x)\right)$ (resp. $\lambda(a\odot x) = p_a\left(\lambda(x)\right)$).
\end{proof}
The following corollary now follows from the previous lemma together with the injectivity of $\lambda$.
\begin{corollary}
    Let $a,b\in S$ be arbitrary. Then the identity $a\odot(b\oplus c) = (a\odot b) \oplus(a\odot c)$ is satisfied for all $c\in S$ if and only if the equality $\pi_a \cdot \sigma_b = \sigma_{(a\odot b)}\cdot \pi_a$ is satisfied.
\end{corollary}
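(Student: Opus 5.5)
The plan is to combine the preceding lemma with the injectivity of $\lambda$ and linearity. Fix $a,b\in S$. First I will observe that both $\pi_a\cdot\sigma_b$ and $\sigma_{(a\odot b)}\cdot\pi_a$ are $n\times n$ matrices. Two such matrices are equal if and only if they agree on every standard basis vector $e_i$. Because the ordering of $S$ lets us write $e_i=\lambda(x_i)$ for the $i$-th element $x_i$, the vectors $\lambda(c)$ with $c\in S$ range over exactly these basis vectors. Hence matrix equality is equivalent to
\[
\pi_a\sigma_b\lambda(c)=\sigma_{(a\odot b)}\pi_a\lambda(c)\qquad\text{for all } c\in S .
\]

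Next I evaluate each side of this equation by applying the preceding lemma twice. For the left side, $\sigma_b\lambda(c)=s_b(\lambda(c))=\lambda(b\oplus c)$, and then $\pi_a\lambda(b\oplus c)=\lambda(a\odot(b\oplus c))$. For the right side, $\pi_a\lambda(c)=\lambda(a\odot c)$, and then $\sigma_{(a\odot b)}\lambda(a\odot c)=\lambda((a\odot b)\oplus(a\odot c))$. The second step on each side is legitimate because the intermediate vector is again of the form $\lambda(y)$ for some $y\in S$, so the lemma applies.

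Finally, since $\lambda$ is injective, $\lambda(a\odot(b\oplus c))=\lambda((a\odot b)\oplus(a\odot c))$ holds if and only if $a\odot(b\oplus c)=(a\odot b)\oplus(a\odot c)$. This is exactly Observation~\ref{obs:dist-as-functions} restated in matrix form.

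No step here is genuinely hard. The only point that needs care is the first one: reducing matrix equality to agreement on the vectors $\lambda(c)$. This uses that $\lambda(S)$ is precisely the standard basis, not just an injective image, so it spans the whole space.
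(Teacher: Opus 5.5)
Your proof is correct and follows the same route as the paper, which derives the corollary from the preceding lemma (evaluating both products on each $\lambda(c)$) together with the injectivity of $\lambda$. Your remark that $\lambda(S)$ is exactly the standard basis is needed for the direction from the identity to matrix equality, and it makes explicit a detail the paper leaves implicit.
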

This gives us a rather straightforward reduction to Matrix Product Equality Verification: for every pair $a,b$ construct the matrices $\pi_a, \sigma_b, \sigma_{(a\odot b)}$ and test the equality of the matrix product $\pi_a \cdot \sigma_b = \sigma_{(a\odot b)}\cdot \pi_a$.
Unfortunately however, this yields too many instances of the Matrix Product Equality Verification problem.
The next part of the algorithm is dedicated to resolving this problem by introducing a random weight for each element of $S$ and then instead of checking each matrix product we sum up the weighted matrices and show that (with high probability) the answer to the scaled version of Matrix Product Equality Verification is equal to the answer of the original problem.
This part of the algorithm is closely related to the algorithm for associativity testing by Rajagopalan and Schulman \cite{rajagopalan00}.
\paragraph{(2) Adding Random Weights} Let $p$ be a fixed prime number.
For the rest of this section, we perform all additions and multiplications over the field $\mathbb F_p$.
For each $x\in S$, chose independently uniformly at random an element $w_x\in \mathbb F_p$ (\emph{weight} of $x$).
Now instead of having to verify the matrix product equality $\pi_a\cdot \sigma_b = \sigma_{(a\odot b)}\cdot \pi_a$ for all pairs $a,b\in S$, we can instead sum all of these equations (or a fraction of them) together and just check whether the sums are equal, thus significantly reducing the number of Matrix Product Equalities we have to verify (even bringing it down to one if we sum over all equalities).
This of course preserves the yes-instances (no false negatives), however we cannot guarantee that no false positives are generated, and in fact many times we cannot avoid the false positives.
We nevertheless show that by choosing the weights independently at random keeps the probability of seeing the false positives low.
\begin{lemma}
    Let $a\in S$ be fixed. Assume that there exists $b'\in S$ such that $\pi_a\cdot \sigma_{b'} \neq \sigma_{(a\odot b')}\cdot \pi_a$.
    Then
    \[\mathbb P \left[\pi_a\cdot \sum_{b\in S}w_b\sigma_b = \left(\sum_{b\in S}w_b\sigma_{(a\odot b)} \right)\pi_a\right] \le \frac{1}{p}\]
\end{lemma}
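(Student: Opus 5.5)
The plan is to argue linearly in $b$, treating $a$ as fixed. Define the matrix-valued function
\[
D(w) := \pi_a\cdot \sum_{b\in S}w_b\sigma_b - \left(\sum_{b\in S}w_b\sigma_{(a\odot b)}\right)\pi_a = \sum_{b\in S} w_b\,\bigl(\pi_a\sigma_b - \sigma_{(a\odot b)}\pi_a\bigr) = \sum_{b\in S} w_b\,\Delta_b ,
\]
where $\Delta_b := \pi_a\sigma_b - \sigma_{(a\odot b)}\pi_a$. The second equality holds because matrix multiplication distributes over the weighted sums. Each $\Delta_b$ is a fixed integer matrix that does not depend on the weights. The event in the statement is exactly $D(w)=0$ over $\mathbb F_p$.

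First I need the hypothesis to survive reduction modulo $p$, i.e. $\Delta_{b'}\not\equiv 0 \pmod p$. The matrices $\pi_a\sigma_{b'}$ and $\sigma_{(a\odot b')}\pi_a$ are products of $0/1$ matrices with exactly one $1$ per column. Each is therefore itself a $0/1$ matrix with one $1$ per column, whose $i$-th column is $\lambda$ of the corresponding composite map. If the two differ, some column has a $1$ in different rows, so $\Delta_{b'}$ contains an entry equal to $\pm1$. That entry is nonzero modulo any prime $p$. I fix such an entry position $(r,s)$ with $(\Delta_{b'})_{r,s}\neq 0$ in $\mathbb F_p$.

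Next I look at the single entry $D(w)_{r,s} = \sum_{b} w_b (\Delta_b)_{r,s}$. This is a linear form in the independent uniform variables $\{w_b\}$, and its coefficient of $w_{b'}$ is nonzero. Condition on all $w_b$ with $b\neq b'$. The equation $D(w)_{r,s}=0$ then becomes $c\,w_{b'} = t$, where $c\neq 0$ and $t$ is fixed. Exactly one value of $w_{b'}\in\mathbb F_p$ satisfies it, so the conditional probability is $1/p$. Averaging over the conditioning gives $\mathbb P[D(w)_{r,s}=0]=1/p$. Since $D(w)=0$ implies $D(w)_{r,s}=0$, the claimed bound $\le 1/p$ follows.

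The only point needing care is the nonvanishing modulo $p$, which is handled by the $\pm1$-entry argument. Beyond that, this is the one-variable case of Schwartz--Zippel, or equivalently Freivalds-style reasoning, applied to a linear polynomial.
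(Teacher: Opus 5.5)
Your proof is correct and follows essentially the same route as the paper's: fix a nonzero entry of $\pi_a\sigma_{b'}-\sigma_{(a\odot b')}\pi_a$, separate the $w_{b'}$ term from the rest, and note that exactly one value of $w_{b'}$ makes that entry vanish. Your extra observation that the entry is $\pm1$, and hence nonzero modulo every prime $p$, makes explicit a point the paper only asserts as ``clearly well-defined''.
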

\begin{proof}
    Let $b'\in S$ be the element for which $\pi_a\cdot \sigma_{b'} \neq \sigma_{(a\odot b')}\cdot \pi_a$.
    Then there exists $i,j\in [n]$, such that
    \[
    \left(\pi_a\cdot \sigma_{b'} - \sigma_{(a\odot b')}\cdot \pi_a\right)[i,j]\neq 0.
    \]
    Moreover, the following inequality is trivial
    \begin{equation*}\label{eq:bounding-prob-by-fixed-entry}\mathbb P \left[\pi_a\cdot \sum_{b\in S}w_b\sigma_b = \left(\sum_{b\in S}w_b\sigma_{(a\odot b)} \right)\pi_a\right] \le \mathbb P \left[\left(\pi_a\cdot \sum_{b\in S}w_b\sigma_b - \left(\sum_{b\in S}w_b\sigma_{(a\odot b)} \right)\pi_a\right)\left[i,j\right] = 0\right].\end{equation*}
    So it suffices to bound the latter probability by $\frac{1}{p}$. By pulling out the term involving $b'$ out of the sum, we can write:
    \[
    \pi_a\cdot \sum_{b\in S}w_b\sigma_b - \left(\sum_{b\in S}w_b\sigma_{(a\odot b)} \right)\pi_a = \left(\pi_a\cdot w_{b'}\sigma_{b'} - w_{b'}\sigma_{(a\odot b')}\pi_a \right)+ \underbrace{\left(\pi_a\cdot \sum_{b\in S\setminus \{b'\}}w_b\sigma_b - \left(\sum_{b\in S\setminus\{b'\}}w_b\sigma_{(a\odot b)} \right)\pi_a\right)}_{R}.
    \]
    In particular, we have
    \begin{align*}
        \left(\pi_a\cdot \sum_{b\in S}w_b\sigma_b - \left(\sum_{b\in S}w_b\sigma_{(a\odot b)} \right)\pi_a\right)\left[i,j\right] &= w_{b'}\left(\pi_a\cdot \sigma_{b'} - \sigma_{(a\odot b')}\pi_a \right)[i,j]+ R[i,j].
    \end{align*}
    Finally, since both $R[i,j]$ and $w_{b'}\left(\pi_a\cdot \sigma_{b'} - \sigma_{(a\odot b')}\pi_a \right)[i,j]$ are elements in $\mathbb F_p$, their sum is equal to zero if and only if $w_{b'} = (-R)[i,j]\cdot\left( \left(\pi_a\cdot \sigma_{b'} - \sigma_{(a\odot b')}\pi_a \right)[i,j]\right)^{-1}$.
    This element is clearly well-defined in $\mathbb F_p$, by the choice of $b'$.
    Furthermore, since $w_{b'}$ is chosen uniformly at random, the probability of hitting this element is at most $\frac{1}{p}$.
\end{proof}
The previous lemma gives us an elegant way to reduce the number of triples we need to check: for each fixed $a\in S$, we introduce random weights for each $b\in S$, and check the equality of the matrices $\pi_a\cdot \sum_{b\in S}w_b\sigma_b$, and $ \left(\sum_{b\in S}w_b\sigma_{(a\odot b)} \right)\pi_a$.
If they are not equal, we can stop and conclude that the input is a no-instance.
If, on the other hand, the matrices are equal, we repeat with fresh randomness.
We stop after $100\lceil\log_p(n)\rceil$ repetitions and conclude that with probability at least $1-n^{-100}$ for every pair $b,c\in S$ the identity $a\odot(b\oplus c) = (a\odot b) \oplus (a\odot c)$ is satisfied.
\paragraph{(3) Magic of Freivalds'} Although in the second part of the algorithm we significantly reduce the number of matrix multiplications we need to perform, it still is not sufficient to reduce the running time sufficiently to beat the simple brute-force algorithm, since we need to perform $n$ matrix multiplications (one for every fixed $a$), each taking $\bigO(n^\omega)$ time.
However, we do not need to compute the matrix products at all, since the only thing we care about is whether the two matrix products are equal.
A standard approach to verifying the equality of matrix products is to use Freivalds' algorithm \cite{Freivalds1977}.
This approach turns out to be extremely helpful in our case, since using Freivalds' not only allows us to circumvent computing the matrix products, but it additionally gives us a way to modularize the algebraic expressions we are verifying in such a way that we only need to compute the expensive computations few times and assure that the operations that need to be computed $\mathcal O(n)$ times are cheap. 
We first state the Freivalds' Lemma applied to our setting, and then show how it applies to our problem.
\begin{lemma}[Freivalds' Lemma; \cite{Freivalds1977}]
    Let $\ell= c \log(n)$ for a sufficiently large constant $c$, and let $u_1,\dots, u_\ell \in \mathbb F_p^n$ be the vectors chosen independently uniformly at random. Let $a\in S$ be fixed. Then with probability at least $1-n^{-100}$, the following two conditions are equivalent.
    \begin{enumerate}
        \item The equality $\pi_a\cdot \sum_{b\in S}w_b\sigma_b = \left(\sum_{b\in S}w_b\sigma_{(a\odot b)} \right)\pi_a$ holds.
        \item \label{freivalds-item:2} For each $i\in [\ell]$ the equality $u_i \cdot \left(\pi_a\cdot \sum_{b\in S}w_b\sigma_b\right) = u_i\left(\left(\sum_{b\in S}w_b\sigma_{(a\odot b)} \right)\pi_a\right)$ is satisfied.
    \end{enumerate}
\end{lemma}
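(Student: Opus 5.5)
The plan is to apply the classical Freivalds argument to the difference matrix, one random vector at a time, and then amplify. Fix $a\in S$ and set
\[
D_a := \pi_a\cdot \sum_{b\in S}w_b\sigma_b - \Big(\sum_{b\in S}w_b\sigma_{(a\odot b)}\Big)\pi_a \in \mathbb F_p^{n\times n}.
\]
Condition 1 says exactly that $D_a=0$. Condition 2 says $u_i D_a = 0$ for all $i\in[\ell]$, using that left multiplication by the row vector $u_i$ distributes over the difference of the two products. So (1)$\Rightarrow$(2) holds deterministically, and it only remains to bound the probability of the event ``$D_a\neq 0$ but $u_iD_a=0$ for all $i$''.

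\textbf{Single vector.} Suppose $D_a\ne 0$, and pick a column $j$ with a nonzero entry, say $D_a[k,j]\neq 0$. Then
\[
(u D_a)[j] = \sum_{r} u[r]\,D_a[r,j]
\]
is a nonzero linear form in the coordinates of $u$. Condition on all coordinates $u[r]$ with $r\neq k$; the value is then $u[k]\,D_a[k,j]+\text{const}$. Since $u[k]$ is uniform over $\mathbb F_p$, the probability that this is zero is exactly $1/p$, which is at most $1/2$. (Equivalently, this is Schwartz--Zippel for a degree-one polynomial.)

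\textbf{Amplification.} The vectors $u_1,\dots,u_\ell$ are independent, so the probability that all of them give $u_iD_a=0$ is at most $p^{-\ell}\le 2^{-c\log n}$. With the logarithm taken base $2$ and $c\ge 100$, this is at most $n^{-100}$, which is the claimed bound. If one also wants equivalence simultaneously for all $a\in S$, a union bound over the $n$ choices of $a$ costs a factor $n$; this is absorbed by enlarging $c$, e.g.\ $c\ge 101$.

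There is no real obstacle here. The only points needing care are fixing the logarithm base and the constant $c$, and noting that the weights $w_b$ are independent of the $u_i$. Because of that independence, we can condition on the $w_b$, treat $D_a$ as a fixed matrix, and the argument goes through unchanged.
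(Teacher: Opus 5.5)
Your argument is correct, and it is the standard Freivalds argument that the paper relies on; the paper gives no proof of its own and simply cites Freivalds. The steps are: pass to the difference matrix $D_a$, use one nonzero entry to get a nonvanishing linear form in $u$ that vanishes with probability $1/p\le 1/2$, condition on the weights $w_b$, and amplify by independence to $p^{-\ell}\le n^{-c}$. Two small remarks: take $\ell=\lceil c\log_2 n\rceil$ so that it is an integer, and your union bound over $a$ is what the algorithm later needs, because the same $u_i$ are reused for every $a$.
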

By using associativity of $\mathbb F_p$, we can now rewrite the expression from \ref{freivalds-item:2} as 
\[
   \left( u_i \cdot \pi_a\right)\cdot \sum_{b\in S}w_b\sigma_b = \left(u_i\left(\sum_{b\in S}w_b\sigma_{(a\odot b)} \right)\right)\cdot \pi_a.
\]
We can now prove that writing the expression in this form allows us to compute the left hand side of the expression for all pairs $a,b\in S$ efficiently. 
\begin{lemma}
    The vector $\left( u_i \cdot \pi_a\right)\cdot \sum_{b\in S}w_b\sigma_b$ can be computed for every $a \in S$ in time $\bigO(n^\omega)$.
\end{lemma}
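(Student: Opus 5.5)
The vector $v_{a,i} := (u_i\cdot\pi_a)\cdot\sum_{b\in S} w_b\sigma_b$ can be obtained in two stages, each of which is a single matrix multiplication. Set $\Sigma := \sum_{b\in S} w_b\sigma_b$. Because $\Sigma$ does not depend on $a$, we build it once. Each $\sigma_b$ is a $0$-$1$ matrix with exactly one nonzero entry per column, namely $\sigma_b[\lambda(b\oplus x), x]=1$. So $\Sigma[y,x]=\sum_{b:\, b\oplus x = y} w_b$. We fill $\Sigma$ by scanning the Cayley table of $\oplus$ once: for every pair $(b,x)$ we add $w_b$ to entry $(b\oplus x, x)$. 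This takes $\bigO(n^2)$ time in total.

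Next, for a fixed $i$ we place all row vectors $u_i\cdot\pi_a$, one for each $a\in S$, as the rows of an $n\times n$ matrix $U_i$. We can write these vectors down directly. The matrix $\pi_a$ has its $x$-th column equal to $\lambda(a\odot x)$, so $(u_i\pi_a)[x] = u_i[a\odot x]$. Hence $U_i[a,x]=u_i[a\odot x]$, and $U_i$ can be read off the Cayley table of $\odot$ in $\bigO(n^2)$ time.

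We then compute the product $U_i\Sigma$ in $\bigO(n^\omega)$ time. Its row indexed by $a$ is exactly $(u_i\pi_a)\Sigma$, which is the vector we want. Repeating this for the $\ell=\bigO(\log n)$ Freivalds vectors gives $\bigO(n^\omega\log n)$. Since the lemma is stated for a single fixed vector $u_i$, one product suffices and the bound is $\bigO(n^\omega)$ as claimed. If all $\ell$ vectors are needed at once, we can instead stack the $U_i$ into an $\ell n\times n$ matrix and multiply it rectangularly by $\Sigma$, which costs $\bigO(n^\omega\log n)$.

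No step here is difficult. The only point that needs care is recognising that the expensive factor $\Sigma$ is shared across all $a$, while the $a$-dependent factor $u_i\pi_a$ is just a permuted and aggregated reading of $u_i$. That observation is what turns $n$ separate vector–matrix products into one matrix–matrix product. The right-hand side, $u_i\sum_b w_b\sigma_{a\odot b}$ followed by multiplication with $\pi_a$, will be handled by the same batching in the next lemma.
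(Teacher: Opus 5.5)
Your proposal is correct and follows the paper's proof. Both build $\sum_{b} w_b\sigma_b$ once in $\bigO(n^2)$ time and stack the vectors $u_i\pi_a$ as the rows of an $n\times n$ matrix (your explicit formula $U_i[a,x]=u_i[a\odot x]$ makes this step concrete). A single $\bigO(n^\omega)$ matrix product then yields all the desired vectors at once.
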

\begin{proof}
    Recall that the matrix $\sigma_b$ has precisely $n$ ones by construction. In particular, this means that we can compute the matrix $Y := \sum_{b\in S}w_b\sigma_b$ in time $\bigO(n^2)$.
    Moreover, notice that we only need to compute this matrix once, since it is independent of $a$.
    Similarly, since matrix $\pi_a$ contains only $n$ non-zeros, we can for each $a$ compute the product $u_i\pi_a$ in time $\bigO(n)$.
    Hence, we can for each $a\in S$ compute the vector $x_a := u_i\pi_a$ in time $\bigO(n)$.
    Let $X$ be an $n\times n$ matrix whose $a$'th row is the vector $x_a$.
    As argued above, we can construct $X$ in time $\bigO(n^2)$.
    Finally, in time $\bigO(n^\omega)$ we can compute the matrix $(XY)$ whose row $a$ is precisely equal to the product $\left( u_i \cdot \pi_a\right)\cdot \sum_{b\in S}w_b\sigma_b$, as desired.
\end{proof}
\paragraph{(4) Power of Matrix Multiplication}
While a straightforward matrix multiplication approach allows us to compute the left-hand side of the expression efficiently for all pairs $a,b$ in parallel, such a simple construction is far from obvious for the right hand side. 
Nevertheless, by a careful, and a nontrivial construction, we are able to build matrices whose product yields for all pairs $a,b$ a part of the expression from the right-hand side. 
We are then able to prove that the part that we can parallelize is the \emph{expensive} part of the expression and that we can compute the rest of the expression efficiently.
In particular we first prove the following lemma.
\begin{lemma}
    There exist $n\times n$ matrices $X,Y$, that can be constructed in $\bigO(n^2)$ time, such that the row $a$ of the product matrix $(XY)$ is precisely equal to $\sum_{b\in S} w_b u_i \sigma_{a\odot b}$
\end{lemma}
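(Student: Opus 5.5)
The plan is to take $Y$ to be the matrix that stacks the row vectors $u_i\sigma_c$ for all $c\in S$, and $X$ to be the matrix that groups the random weights $w_b$ according to the value $a\odot b$. The idea is that the sum $\sum_{b} w_b\, u_i\sigma_{a\odot b}$ depends on $b$ only through $c=a\odot b$. So we first aggregate the weights by $c$, and then take one linear combination of the fixed vectors $u_i\sigma_c$.

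\textbf{Step 1: the matrix $Y$.} Index rows and columns by the ordering of $S$. Since the $j$-th column of $\sigma_c$ is $\lambda(c\oplus x_j)$, where $x_j$ is the $j$-th element, the row vector $u_i\sigma_c$ has $j$-th entry equal to $u_i[c\oplus x_j]$. Here $u_i[s]$ denotes the coordinate of $u_i$ indexed by $s$. Set
\[
Y[c,j] := u_i[c\oplus x_j].
\]
Then row $c$ of $Y$ is exactly $u_i\sigma_c$. Each entry is a single Cayley-table lookup followed by a lookup in $u_i$, so $Y$ is built in $\bigO(n^2)$ time.

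\textbf{Step 2: the matrix $X$.} Set
\[
X[a,c] := \sum_{b\in S:\ a\odot b=c} w_b .
\]
To build it, initialize $X$ to zero. Then, for every pair $(a,b)\in S^2$, add $w_b$ to $X[a,a\odot b]$. Each pair contributes to exactly one entry, so this takes $\bigO(n^2)$ time.

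\textbf{Step 3: the product.} Row $a$ of $XY$ is
\[
\sum_{c\in S} X[a,c]\,(\text{row } c \text{ of } Y)
=\sum_{c\in S}\ \sum_{b:\, a\odot b=c} w_b\, u_i\sigma_c
=\sum_{b\in S} w_b\, u_i\sigma_{a\odot b},
\]
which is the claimed vector. Computing $XY$ costs $\bigO(n^\omega)$, and constructing the two matrices costs $\bigO(n^2)$.

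I do not expect a real obstacle in this lemma. The only point needing care is identifying row $c$ of $Y$ with $u_i\sigma_c$, which amounts to the column convention for $\sigma_c$. The harder part of the surrounding algorithm comes afterwards: each row of $XY$ must still be multiplied by $\pi_a$, which is cheap because $\pi_a$ has only $n$ nonzero entries.
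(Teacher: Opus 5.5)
Your proposal is correct and follows essentially the same route as the paper. The paper also sets $X[a,k]=\sum_{b:\,a\odot b=k} w_b$, takes $Y$ to have row $k$ equal to $u_i\sigma_k$, and concludes because the sets $\{b : a\odot b = k\}$ partition $S$. Your entrywise formula $Y[c,j]=u_i[c\oplus x_j]$ spells out the quadratic construction time that the paper only asserts.
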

\begin{proof}
    For any $a,k\in S$, let $P_a(k):=\{b\in S \mid a\odot b = k\}$. We now construct the matrix $X$ as follows.
    \[
    X[a,k]:= \sum_{b\in P_a(k)} w_b.
    \]
    We further define the matrix $Y$ as the matrix whose row $k$ is equal to the vector $u_i\sigma_k$ for each $k\in S$.
    Clearly, both matrices can be constructed in quadratic time.
    We now show that the matrix $(XY)$ satisfies the desired property.
    \begin{claim}
        Let $X,Y$ be matrices constructed as above. Then the row $a$ in the product matrix is equal to the vector $\sum_{b\in S} w_b u_i\sigma_{a\odot b}$.
    \end{claim}
    \begin{subproof}
        We can express the entry $(XY)[a,j]$ as follows.
        \begin{align*}
            (XY)[a,j] & = \sum_{k\in S} X[a,k]\cdot Y[k,j] \\
            & = \sum_{k\in S} \left(\left(\sum_{b\in P_a(k)} w_b \right)\cdot \left(  (u_i\sigma_k)[j] \right)\right)  \\
            & = \sum_{k\in S} \left(\left(\sum_{b\in P_a(k)} w_b \right)\cdot \left(  u_i(\sigma_k[\cdot, j]) \right)\right)& \text{(vector-matrix multiplication)} \\
            & = \sum_{k\in S} \sum_{b\in P_a(k)} (w_b \cdot u_i)(\sigma_k[\cdot, j])
            & \text{(rearranging the parentheses)}
        \end{align*}
        We now observe that the set $\{P_a(k)\mid k\in S\}$ partitions $S$, and hence we can rewrite the last expression as follows.
        \begin{align*}
            (XY)[a,j] & = \sum_{b\in S} (w_b \cdot u_i)(\sigma_{a\odot b}[\cdot, j])
            \\& = \sum_{b\in S} w_b (u_i \cdot \sigma_{a\odot b})[j].
        \end{align*}
        The desired claim now follows directly.
    \end{subproof}
\end{proof}
We can now prove that we can also efficiently compute the vector $\sum_{b\in S} w_b u_i \sigma_{a\odot b} \pi_a$ efficiently for all pairs $a,b\in S$.
\begin{lemma}
    The vector $u_i\sum_{b\in S} w_b\sigma_{a\odot b} \pi_a$ can be computed for every $a$ in time $\bigO(n^\omega)$.
\end{lemma}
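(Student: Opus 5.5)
The plan is to mirror the left-hand-side computation with a second rectangular matrix product, batching over all $a\in S$. By the previous lemma, in $\bigO(n^\omega)$ time we obtain the matrix $Z:=XY$ whose row $a$ is the vector $z_a:=\sum_{b\in S} w_b u_i\sigma_{a\odot b}$. What remains is to multiply each row $z_a$ on the right by its own matrix $\pi_a$, i.e., to compute $z_a\pi_a$ for every $a$. Since the right factor depends on $a$, a single product $Z\cdot M$ does not suffice. Instead we exploit the sparsity of $\pi_a$: it has exactly one $1$ per column, namely $\pi_a[a\odot x, x]=1$ for every $x\in S$.

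Concretely, I would write out the $j$-th entry, where $x$ is the $j$-th element of $S$:
\[
(z_a\pi_a)[j]=\sum_{k\in S} z_a[k]\,\pi_a[k,j]=z_a[a\odot x]=Z[a,\,a\odot x].
\]
So each entry of the desired output is a single lookup into the already computed matrix $Z$. The Cayley table of $\odot$ gives $a\odot x$ in $\bigO(1)$ time. Hence, once $Z$ is available, all $n$ vectors $z_a\pi_a$ (that is, $n^2$ entries) are filled in $\bigO(n^2)$ total time. Together with the $\bigO(n^\omega)$ cost of forming $Z$ via the previous lemma, and the $\bigO(n^2)$ cost of constructing $X$ and $Y$, the total is $\bigO(n^\omega)$ per Freivalds vector $u_i$.

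The only step needing care is confirming the orientation convention: columns of $\pi_a$ are indexed by $x$ and rows by $a\odot x$, with $u_i$ and $z_a$ treated as row vectors multiplied on the left. This is precisely what makes the product a gather, $Z[a,a\odot x]$, rather than a scatter. If the convention were reversed, we would instead accumulate $\sum_{x:\,a\odot x=k}$ contributions. That also takes $\bigO(n)$ time per $a$, since $\pi_a$ has $n$ nonzeros, so the $\bigO(n^2)$ bound holds either way.

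Finally, I would remark that with $\ell=\bigO(\log n)$ vectors $u_i$ this costs $\bigO(n^\omega\log n)$ overall. To reach the stated $\bigO(n^\omega)$, one can stack all $u_i$ into one matrix and use a rectangular product, or accept the $\bigO(n^\omega)$-per-$i$ statement of the lemma as given.
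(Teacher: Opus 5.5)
Your proof is correct and is the paper's argument: obtain $Z$ from the previous lemma in $\bigO(n^\omega)$ time, then use that $\pi_a$ has only $n$ nonzeros to compute each $z_a\pi_a$ in $\bigO(n)$ time. Your explicit gather formula $(z_a\pi_a)[j]=Z[a,a\odot x]$ is a nice concrete version of that step. The closing remark is not needed, since the lemma is stated per $u_i$, and it does not obviously work: stacking the $u_i$ into a rectangular product only gives $\bigO(n^\omega\log n)$ by blocking. If you want to drop the $\log n$, take $p$ polynomially large so that $\bigO(1)$ Freivalds vectors suffice.
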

\begin{proof}
    We can rewrite this expression as $\left(\sum_{b\in S} w_b u_i \sigma_{a\odot b}\right) \pi_a$.
    By using the previous lemma, we can construct in time $\bigO(n^\omega)$ the matrix $Z$ such that the row $a$ of $Z$ is equal to $\sum_{b\in S} w_b u_i \sigma_{a\odot b}$.
    We can now in constant time access the vector $z_a := \sum_{b\in S} w_b u_i \sigma_{a\odot b}$ for any $a\in S$.
    Recall that by construction $\pi_a$ contains only $\bigO(n)$ many non-zero entries, hence the product $z_a\pi_a$ can be computed in $\bigO(n)$ time for any fixed $a$.
    Iterating over all $a\in S$ yields the desired.
\end{proof}

\end{document}